\renewcommand\paragraph{\@startsection{paragraph}{4}{\z@}%
            {-2.5ex\@plus -1ex \@minus -.25ex}%
            {1.25ex \@plus .25ex}%
            {\normalfont\normalsize\bfseries}}
\newtheorem{assumption}{Assumption}[section]
\newtheorem{result}{Result}[section]
\newtheorem{lemma}{Lemma}[section]
\newtheorem{corollary}{Corollary}[section]
\theoremstyle{remark}
\newtheorem{note}{Note}[section] 
\newtheoremstyle{AssumptionCopy}
        {\topsep}{\topsep}              
        {\itshape}                      
        {-0.5pt}                              
        {\bfseries}                     
        {.}                             
        { }                             
        {\thmname{#1}\thmnote{ \bfseries #3}}
\theoremstyle{AssumptionCopy}
\DeclareMathOperator*{\argmax}{argmax}
\DeclareMathOperator*{\Var}{Var}
\DeclareMathOperator*{\VecM}{Vec}
\DeclareMathOperator*{\tr}{tr}
\DeclareMathOperator*{\wpa1}{w.p.a \ 1}
\DeclareMathOperator*{\Int}{Int}
\DeclareMathOperator*{\Span}{Span}
\DeclareMathOperator*{\rank}{rank}
\newcommand{\R}{\mathbb{R}}
\newcommand{\1}{\mathbbm{1}}
\newcommand{\N}{\mathbb{N}}
\newcommand{\E}{\mathbb{E}}
\newcommand{\indep}{\raisebox{0.05em}{\rotatebox[origin=c]{90}{$\models$}}}
\newcommand{\oP}{o_{\mathbb{P}}}
\newcommand{\OP}{O_{\mathbb{P}}}
\newcommand{\overbar}[1]{\mkern 1.5mu\overline{\mkern-1.5mu#1\mkern-1.5mu}\mkern 1.5mu}
\begin{document}

\title{\fontsize{25}{30}\selectfont \vspace{-50pt} A Correlated Random Coefficient Panel Model  \\ with Time-Varying Endogeneity \vspace{5pt}}
\author{\fontsize{15}{19}\selectfont Louise Laage\thanks{louise.laage@georgetown.edu, Department of Economics, Georgetown University.
		\newline
		 I am grateful to Donald W. K. Andrews, Xiaohong Chen and especially Yuichi Kitamura for their guidance and support. I  thank Anna Bykhovskaya, Philip A. Haile, Yu Jung Hwang, John Eric Humphries, Ivana Komunjer, Rosa Matzkin, Patrick Moran, Peter C. B. Phillips, Alexandre Poirier, Pedro Sant'Anna, Masayuki Sawada, Francis Vella, Edward Vytlacil as well as participants at the Yale econometrics seminar for helpful conversations and comments on this project. I thank an associate editor and two anonymous referees for suggestions that greatly improved the paper.  I thank the Toulouse School of Economics for hosting me during the academic year 2020-2021 and acknowledge financial support from the grant ERC POEMH 337665, from the ANR under grant ANR-17-EURE-0010 (Investissements d’Avenir program), and from the IAAE travel grant. All errors are mine.}  \vspace{5pt}}

\date{This version : June 12, 2022}

\maketitle

\begin{abstract}
This paper studies a class of linear panel models with random coefficients. We do not restrict the joint distribution of the time-invariant unobserved heterogeneity and the covariates. We investigate identification of the average partial effect (APE) when fixed-effect techniques cannot be used to control for the correlation between the regressors and the time-varying disturbances. Relying on control variables, we develop a constructive two-step identification argument. The first step identifies nonparametrically the conditional expectation of the disturbances given the regressors and the control variables, and the second step uses ``between-group'' variations, correcting for endogeneity, to identify the APE. We propose a natural semiparametric estimator of the APE, show its $\sqrt{n}$ asymptotic normality and compute its asymptotic variance. The estimator is computationally easy to implement, and Monte Carlo simulations show favorable finite sample properties. Control variables arise in various economic and econometric models, and we propose applications of our argument  in several models. As an empirical illustration, we estimate the average elasticity of intertemporal substitution in a labor supply model with random coefficients.
\end{abstract}

\newpage

\section{Introduction}\label{sec:intro}

Correlated Random Coefficient (CRC) models are linear models with random coefficients where the joint distribution of the random coefficients and the regressors is left unspecified. With panel data, panel CRC models relate an outcome variable $y_{it}$ to  regressors $x_{it}$ through the following equation\footnote{We purposefully do not allow for the constant term to be included in $x_{it}$. More details are provided in Section \ref{sec:id_model}.}
\begin{align}\label{eq:model}
	y_{it}  = \, x_{it}' \, \mu_i   + \alpha_i + \epsilon_{it}, \qquad \ i \leq n, \ t \leq T,
\end{align}
where $T$ is treated as fixed, $\epsilon_{it}$ is a time-varying residual and the unobserved random coefficients $(\mu_i', \alpha_i)$ can be arbitrarily related to the regressors $x_{it}$.  An important and empirically relevant task is to recover properties of the random coefficients and the literature has focused on this question while imposing strict exogeneity of the regressors, that is,  $\E(\epsilon_{it}|X_i)= 0$ where $X_i=(x_{i1}',...,x_{iT}')$ \citep{ch92,ab12}. This paper instead examines identification of $\E(\mu_i', \alpha_i)$ in situations where the regressors and the residuals are correlated.

Linear random coefficient models have many empirical applications, for instance to model heterogeneous returns to schooling \citep{c01},  heterogeneous production functions (\cite{mg88}, \cite{gu22}, or see \cite{s11} for an example of heterogeneous technology adoption),  heterogeneous taxable income responses to tax changes \citep{kl20}, or in medical studies  \citep{lw82}. With panel data, the absence of restrictions on the joint distribution of the random coefficients and the regressors corresponds to the \textit{fixed effect approach}. It depicts situations in which the researcher does not know which factors drive the heterogeneity in the impact of $x_{it}$, or does not observe them. Loosely speaking, the fixed-effect approach together with the strict exogeneity condition imply that the endogeneity of the model can be ``captured by a fixed effect''.
However strict exogeneity fails to hold in many cases, for instance whenever omitted variables correlated with the regressors $x_{it}$ vary over time, e.g., unobserved determinants of productivity in a production function. This leads to contemporaneous endogeneity. 
Since strict exogeneity conditions on past and future values of the regressors, it also fails when the regressors are \textit{predetermined}, i.e., correlated with residuals of prior periods, a well-known phenomenon in the panel data literature. 
To allow for such prevalent correlations, this paper focuses on situations in which there is \textit{time-varying endogeneity}. A rigorous definition of time-varying endogeneity here is that there is no time-invariant random vector $c_i$ such that the correlation between the regressors $x_{it}$ and the entire vector of unobserved heterogeneity $(\mu_i', \alpha_i,\epsilon_{it})'$ can be written as the correlation between $x_{it}$ and $c_i$. 

We adopt a \textit{control function approach} (CFA) and assume that instruments $Z_i = (z_{i1}',...,z_{iT}')$ and potentially time-varying control variables  are available such that once control variables are conditioned on, the residual is mean independent of the regressors.
The main contribution of this paper is to  develop a two-step method which identifies $\E(\mu_i)$, the average partial effect\footnote{As defined in \cite{w05}, the APE is $\E_q[ \partial \E(y_t|x_t,q_t) / \partial x|_{x_t = \bar{x}}]$ where the outer expectation is over the vector of unobservables $q_t = (\mu, \alpha, \epsilon_t)$.} (APE),  in (\ref{eq:model}) under time-varying endogeneity when the data satisfy such a control function approach assumption.
The identification argument proceeds as follows. The CFA implies that the conditional mean of $\epsilon_{it}$ given the collection of regressors and control variables at all periods is in fact a function of the control variables only. 
Therefore the first step identifies nonparametrically the first differences of these control functions using the residuals of individual-specific linear regressions of the vector of differenced outcomes on differenced regressors. 
An \textit{invertibility} assumption must be imposed to recover the control functions from these residuals. Once the control functions are known, individual-specific linear regressions are run and a cross-section average identifies the APE. Additional steps extend the argument to identify  $\E(\alpha_i)$ as well as higher order moments of the random coefficients, using existing results in the literature.

Since this identification argument relies on several key assumptions, we study them in details and connect them to existing conditions in the literature. The most important and least standard ones are the control function approach and the invertibility assumptions.  Under the CFA, control variables control for the correlation between the residual at period $t$ and the collection of regressors at all time periods: we 
argue that it can be used to capture contemporaneous endogeneity as well as predeterminedness. Additionally, CFA does not restrict the joint distribution of the instruments and the random coefficients but only that of $(X_i,Z_i,(\epsilon_{i \, t})_{t \leq T})$: it remains comparable to exogeneity restrictions in models without random coefficients. The second assumption is a high-level assumption   related to both support and rank conditions, as is common in the CFA literature \citep{in09}. However we show with examples that sufficient conditions include cases where the instrument has small support, as in \cite{fhmv08}, and even cases where the instrument is discretely distributed. We provide an example  where a regressor following a Markov process is not strictly exogenous but predetermined and show that all assumptions hold.

The proof of identification of $\E(\mu_i)$ is constructive and  suggests a natural multi-step estimator, taking sample analogs of each of the identification steps after estimating the control variables. The second  contribution of this paper is the analysis of  this estimator under a particular specification of the control variables. We show that it is consistent and asymptotically normal. The challenge in deriving the asymptotic properties of the estimator comes from the nonparametric regression estimators constructed using nonparametrically estimated regressors. 

The results presented in this paper have various limitations. 
First throughout the paper we maintain  $T \geq d_x + 2$, where $d_x$ is the dimension of the regressor $x_{it}$ (without the constant term). To identify the APE when the regressor is scalar, the panel must therefore include 3 periods. As explained below, this is common in the analysis of CRC panel models \citep{ab12, ch92}, however an approach developed in \cite{gp12} may be extended so as to obtain identification when $T = d_x +1$. The associated estimation procedure   being outside the scope of this paper, we leave it for future work. Furthermore, a crucial implication from a modeling perspective comes from the use of the control function approach.  Indeed,
first-stage equations for which we can identify valid control variables typically have scalar unobserved heterogeneity when they model a scalar endogenous regressor. This highlights an underlying imbalance in the degree of heterogeneity between the outcome equation and the implicit selection (first-stage) equation. We further comment on this at the end of Section \ref{sec:CFAcomment}. 

\medskip

\textit{\textbf{Related literature.}} 
This paper directly contributes to the literature on CRC panel  models such as (\ref{eq:model}) placing no restrictions on the distribution of $(X_i,\mu_i', \alpha_i)$. A seminal paper in this literature  is \cite{ch92} studying a model similar to  (\ref{eq:model}) in a fixed $T$ setting with an additional additively separable parametric term. Under the condition\footnote{The discussion on the number of periods is for (\ref{eq:model}) where we impose the inclusion of a constant regressor.} $ T \geq d_x + 2$, the author derives the semiparametric efficiency bound for $\E(\mu_i', \alpha_i)$ and provides an efficient estimator. Important recent papers  studying comparable models include \cite{ab12} and \cite{gp12}. \cite{ab12}  focus on identifying the conditional variance and distribution of $(\mu_i', \alpha_i)'$ under various restrictions on the serial dependence of $\epsilon_{it}$. \cite{gp12} relax a regularity assumption in \cite{ch92} which imposed sufficient variations of the regressors over time, and, as mentioned above, allow for $T=d_x+1$. More recent contributions are \cite{v20} and \cite{su21}. We differ from all these papers in that we allow for time-varying endogeneity. 

A wide panel data literature studies models with time-invariant unobserved heterogeneity $(\mu_i', \alpha_i)$ and time-varying residuals $\epsilon_{it}$, allowing for time-varying endogeneity and adopting a fixed effect approach. 
A well-known example is a linear regression model with an additive fixed effect, in which case the unobserved heterogeneity is scalar. One may use the fixed-effect instrumental variable estimator to consistently estimate the slope, see \cite{w05c}. An example using the control function approach when there is sample selection is \cite{w95}. Unlike these papers, we allow for the slope to be heterogeneous as well.
More recently, results have been developed for identification of nonseparable models in panel data with time-invariant unobserved heterogeneity and time-varying residuals, see, e.g., \cite{ab16}, \cite{abb17}.  Using results from  \cite{hs08}, these papers recover the distribution of the unobserved heterogeneity. The number of time periods required for identification is typically higher than we need: for instance \cite{ab16} need 5 time periods to identify the distribution of a bivariate fixed effect.  Focusing on the particular functional specification of the model (\ref{eq:model}), we need fewer time periods and our identification argument suggests a computationally simpler estimator. See also \cite{fglv22}.

An existing  alternative approach for  panel models with time-invariant unobserved heterogeneity  restricts its joint distribution with the regressors. This corresponds to the  \textit{correlated random effect approach} (CRE).
It has been used in analyses of linear random coefficients models, see \cite{w05b}, but also of nonseparable panel data models. For instance \cite{am05} identify the local average response function assuming that the vector of both time-invariant unobserved heterogeneity and residual at time $t$ is independent of the regressor at time $t$ conditional on an instrument. 
Some papers have developed a CRE approach in linear random coefficient models with time-varying endogeneity as well. For instance, \cite{mw08} study a linear panel random coefficient model and assume that the random coefficients are conditionally mean independent of the detrended instrument to show that the fixed-effect instrumental variables estimator is consistent to the average partial effect. See also \cite{mw16} and \cite{l21}.
Other recent examples of the CRE approach include \cite{bh09}, 
\cite{mkv11}, 
\cite{ghpp18}.
In contrast to these papers, we adopt a fixed-effect approach and do not restrict the joint distribution of the unobserved heterogeneity $(\mu_i', \alpha_i)$, the regressors and the instruments. The multiplicative specification in (\ref{eq:model}) allows us to ``difference out'' $(\mu_i', \alpha_i)$ entirely in one step and  recover the term capturing the time-varying endogeneity.

An alternative approach to identification of the APE in (\ref{eq:model}) would be to consider each cross-section separately and to apply existing results on identification of cross-section random coefficient models with endogenous regressors. For instance, \cite{w97}, \cite{w03} and \cite{hv98} identify the average partial effect imposing an exclusion restriction on the random coefficients and  a first-stage equation with homogeneous impact of the instruments on the regressors. More recently, \cite{mt16} use a nonseparable first-stage equation similar to \cite{in09} 
and retrieve the conditional APE  imposing a control function approach on the entire vector of random coefficients: in particular this vector is assumed independent of the instruments. A similar approach can be  found in \cite{ns21} who study a set of regressions which encompasses a random coefficient model where the control function approach is applied on the random coefficients. 
In these papers, the control function approach restricts the joint distribution of all random coefficients, the regressors and the instruments. In contrast, using panel data and imposing the random coefficients multiplying the regressors to be time-invariant allows us to obtain identification while imposing the control function approach on the residual $\epsilon_{i \, t}$ only. 

	
\medskip

The structure of the paper is as follows. Section \ref{sec:id_model} defines  the model, explains briefly the two-step argument proving identification of $\E(\mu_i)$ and introduces our assumptions. Section \ref{sec:assn} discusses at length these assumptions and their empirical content.
Section \ref{sec:identification} 
provides the formal identification result and discusses other moments of the random coefficients. 
In Section \ref{sec:estim}, we define our proposed estimator and provide its asymptotic properties.
Finally, Section \ref{sec:illus} turns to an empirical illustration of a labor supply model with random coefficients. We give conditions under which our main assumptions hold and with a dataset from \cite{z97}  estimate the average elasticity of intertemporal substitution.
The proofs for all sections are in the Appendix, as well as some Monte Carlo simulations showing favorable finite sample properties of the estimator, and additional comments.

\section{Model \& Intuition}\label{sec:id_model}

We first introduce some notations used throughout the paper. Random variables are indexed with $i$ or $it$. We denote with $\mathcal{S}_{A_i}$ the support of a random variable $A_i$, $\mathcal{S}_{A_i|B_i=B}$ the support of $A_i$ given that the random variable $B_i$ is equal to $B$. $I$ is the identity matrix and its dimension is usually clear from the context.

For a sample of units indexed by $i$, for $i = 1,..,n$, the outcome variable $y_{it} \in \R$ in period $t=1,..,T$ is given by
\begin{equation}
		y_{it}  = \, x_{it}' \, \mu_i   + \alpha_i + \epsilon_{it}, \quad \E(\epsilon_{it}) = 0, \tag{\ref{eq:model}}
\end{equation}
where $x_{it} \in \R^{d_x}$ is a vector of observed variables which does not include the constant regressor, $\epsilon_{it}$ is a time-varying disturbance, and $(\alpha_i,\mu_i')' \in \R^{d_x+1}$ is a time-invariant vector capturing individual unobserved heterogeneity. The reason why the constant term is not included in $x_{it}$ is discussed at the beginning of Section \ref{sec:FD}. We focus on short panels, where $T$ is fixed and $n$ large. The constraint $T \geq d_x + 2$ will be maintained throughout the paper. More details on this condition are given in Section \ref{sec:assn_CRC_matrix}. Denoting by $y_i = (y_{i1},..,y_{iT})'$ the vector of outcomes of unit $i$, $X_i = (x_{i1},..,x_{iT})'$ the matrix of regressors, and $\epsilon_i = (\epsilon_{i1},..,\epsilon_{iT})'$ the vector of error terms, we can rewrite (\ref{eq:model}) as 
$$y_i = X_i \mu_i + \alpha_i 1_{T} + \epsilon_i,$$ 
where $1_{T}$ is the vector of size $T$ where each component is equal to $1$. The parameters of interest we focus on are the average effects $\E(\mu_i)$ and $\E(\alpha_i)$.

A standard assumption in the panel CRC literature is strict exogeneity of the regressors, that is, $\E(\epsilon_{i \, t}|X_i, \alpha_i, \mu_i) = 0$. 
As pointed out in the introduction, this assumption does not allow for the presence of time-varying omitted variables or other time-varying sources of correlation between the regressors and the residual $\epsilon_{it}$. We seek to relax this condition.
Consider for instance an education production function with school class size as  input. The impact of teacher's attention, thus of  class size, on students' achievements is likely heterogeneous as it potentially depends on teacher's quality. For data observed at the class level across schools and cohorts, many omitted variables are correlated with class size, e.g., 
through school budget, such as students, parents and community characteristics. These omitted factors vary over cohort and school and thus cannot be captured by an additive fixed effect.
This framework also applies when $t$ captures alternative  group structures. To study the impact of mother smoking habit on infant birth weight with a panel of mothers with multiple births  (see, e.g., \cite{a06}), data indexed by $it$ describe the $t^{\text{th}}$ child of mother $i$. The impact of a mother's level of smoking is likely heterogeneous across mothers but moreover some omitted variables related to a mother's health behavior may impact the birthweight and vary across children with the smoking level.

In both examples, omitted variables are correlated with the regressor of interest yet can not be captured by additive fixed effects. To relax the strict exogeneity condition and allow for such correlations, we assume instead that there exist control variables $v_{it}\in\R^{d_v}$ which are known or identified functions of regressors and  instrumental variables $z_{it} \in \R^{d_z}$, and satisfy a conditional mean independence condition in line with the control function approach. This condition is our first assumption, Assumption \ref{assn:idCFA}, and will be formally stated in Section \ref{sec:CFAcomment} together with discussions of the serial correlations it allows for and of how to construct control variables. This assumption implies that for $V_i = (v_{i1}',.. v_{iT}')'$, for all $0\leq t \leq T$ there exists a  random variable $u_{it} $ such that
\begin{equation}\label{eq:cfa_bf_assn}
	\epsilon_{it} = f_t(V_i) + u_{it}, \text{ for some function } f_t  \text{ where } \E(u_{i \, t}|X_i, V_i) = 0 \text{ and } \E(f_t(V_i)) = 0.
\end{equation}
To identify the average effects $\E(\mu_i)$ and $\E(\alpha_i)$, we suggest a stepwise procedure which will isolate the unobserved heterogeneity from the regressor and the control variables. We describe this procedure below and define $f(V_i)\, = \, (f_1(V_i),.. \, , f_T(V_i))'$ and $Z_i = (z_{i1},...,z_{iT})$.

\subsection{First-Differencing}\label{sec:FD}

In (\ref{eq:model}), we deliberately treat the additive fixed effect $\alpha_i$ separately, that is, the constant regressor is not included in $x_{it}$.
Two alternative approaches could be used instead, or, specifically, the model could have been written in two other ways.
The first is to change (\ref{eq:model}) to write instead $y_{it}  = \, \tilde{x}_{it}' \, \tilde{\mu}_i + \epsilon_{it}$ where $\tilde{\mu}_i = (\alpha_i,\mu_i')'$ and $\tilde{x}_{it} = (1,x_{it}')'$. However one of the main assumptions needed for identification, Assumption \ref{assn:id_M_GLn}, does not hold if the regressor is replaced with $\tilde{x}_{it}$. This assumption could be rewritten using $\tilde{x}_{it}$ so as to guarantee identification when it holds, but its formulation and analysis would be more complex. We thus prefer to treat the constant regressor separately from the other regressors.
Note that with our use of the control function approach, Model (\ref{eq:model}) can be written $y_{it} = x_{it}' \, \mu_i  + \alpha_i  + f_t(V_i) + u_{it}.$ Consequently, a second approach would be to define instead $\tilde{\alpha} = \E(\alpha_i)$, $\tilde{f}_t(V_i)= \E(\alpha_i|V_i) - \E(\alpha_i) + f_t(V_i)$ and $\tilde{u}_{it} = [\alpha_i - \E(\alpha_i|V_i)] +  u_{it}$
, so as to obtain  $y_{it} = x_{it}' \, \mu_i  + \tilde{\alpha}  + \tilde{f}_t(V_i) + \tilde{u}_{it}$. However, the control function approach mentioned below imposes the condition  $\E(\alpha_{i}|X_i, V_i) =  \E(\alpha_i|V_i) $, which is conflicting with the fixed effect approach we adopt.  We thus treat $\alpha_i$ separately from the other unobserved heterogeneity terms $f_t(V_i)$ and $u_{it}$, i.e., from $\epsilon_{it}$.
It implies that the functions $f(V_i)$ and $\E(\alpha_i|V_i)$ are not separately identifiable. But the normalization $\E(f(V_i)) \, = \, 0$  can be leveraged to identify $\E(\alpha_i)$ once we identify first differences of $f$, as will be clear in Section \ref{sec:maindid}.

Exploiting the time invariance of $\alpha_i$, we take time differences to eliminate this term and focus on identification of $\E(\mu_i)$. 
Thus we now focus on a first-differencing transformation of the model,
\begin{equation}
\dot{y}_{i t}   = \, \dot{x}_{it}^{\, \prime} \, \mu_i  + \, g_t(V_i) + \, \dot{u}_{it}, \label{eq:modelwTdiff}
\end{equation}
with 
$\dot{y}_{i t} = y_{i \, t + 1} - y_{i t}$, $\dot{x}_{it} = x_{i \, t + 1} - x_{i t}$, $g_t(V_i) = f_{t+1}(V_i) - f_{t}(V_i)$, and $\dot{u}_{it} = u_{i \, t+1} - u_{it}$ for $ t \leq T-1$.
In vector form, define the $(T-1) \, \times d_x$ matrix $\dot{X}_i = ( \dot{x}_{i1},.. , \dot{x}_{i \, T-1})'$ and the $(T-1) \times 1$ vectors $\dot{y}_i = (\dot{y}_{i1},.. \, , \dot{y}_{i \, T-1})'$, $g(V_i)\, = \, (g_1(V_i),.. \, , g_{T-1}(V_i))'$ and $\dot{u}_i = (\dot{u}_{i1},.. \, , \dot{u}_{i \, T-1})'$. By assumption, $\E(\dot{u}_i | X_i , V_i) =0$ and
Equation (\ref{eq:modelwTdiff}) can then be rewritten 

\begin{equation}\label{eq:modelwTdiffVector}
\dot{y}_i = \,\dot{X}_i  \mu_i + \, g(V_i)  \, + \, \dot{u}_i.
\end{equation}


\subsection{Two-Step Identification}\label{sec:intuition_id}

We now discuss identification of $\E(\mu_i)$ and introduce additional assumptions. Formal results are provided in Section \ref{sec:identification}. Since the random coefficients $\mu_i$ are heterogeneous, standard estimators such as TSLS are not consistent without further restrictions. 
However an intuitive idea is to consider each unit $i$ separately: for each we observe several time periods and can run a unit-specific linear regression of differenced outcomes $\dot{y}_{it} =  y_{i \, t + 1} - y_{i t}$ on differenced regressors $\dot x_{it} =  x_{i \, t + 1} - x_{i t}$ over observations $t=1, .., T-1$. This produces a unit-specific estimator $\beta_{i}^{OLS}$.
first-differencing eliminates the additive fixed effects $\alpha_i$ and if the regressors are strictly exogenous, the average of  $\beta_{i}^{OLS}$ across units  is $\E(\mu_i)$. Note that to run such a regression, we need more observations than regressors, i.e., $T-1 \geq d_x$  as otherwise, if the matrix of observations $\dot X_i$ is of full rank, there is not a unique solution to the least squares optimization problem. 

If there is time-varying endogeneity and (\ref{eq:cfa_bf_assn}) holds, 
the argument described above does not identify $\E(\mu_i)$. 
In fact, since the  residual $\epsilon_{it}$  is composed of the two terms $u_{it}$ and $f_t(V_i)$, the unit-specific $\beta_{i}^{OLS}$  is the sum of three terms: $\mu_i$, the individual slope, $\beta_{i}^{OLS, \, g}$, the coefficient of the regression of $g_t(V_i)$ on  $\dot x_{it}$ and  $\beta_{i}^{OLS, \, u}$,  the coefficient of the regression of $\dot u_{it}$ on $\dot x_{it}$. The regressors are strictly exogenous for the new residual $u_{it}$, thus $\beta_{i}^{OLS, \, u}$  averages to $0$ across units. However this does not hold for  $\beta_{i}^{OLS, \, g}$. 
Thus to recover $\E(\mu_i)$, we  suggest a two-step identification argument. It is not uncommon to use two-step approaches to handle endogeneity, as in the construction of the TSLS estimator for a linear regression with endogenous regressors. Our two steps are more involved due to $\mu_i$ varying across units and the nonparametric take on the endogeneity. The first step will focus on recovering the functions $g_t = f_{t+1} - f_t$ and the second on recovering $\E(\mu_i)$. We now provide  an intuitive explanation of what these steps are.

\textbf{\textit{Step 1:}} To isolate $g$, we use variations across units. Let us first fix a unit $i$ and compute the residuals of the unit-specific linear regression of  $\dot y_{it} $ on $\dot x_{it}$. Denote the vector of these residuals with $R_i$, it is of size $T-1$ and obtained by multiplication of the vector $\dot{y}_i$ by the residual-maker matrix. There are two components in this residual: $R_{i}^u$, the vector of residuals of the regression of $\dot u_{it}$ on $\dot x_{it}$, and $R_{i}^g$, the vector of residuals of the regression of $g_t(V_i)$ on $\dot x_{it}$. We have
$$R_i = R_{i}^g+ R_{i}^u.$$
Note that $T-1 >d_x$ is now needed. Indeed, if $T-1=d_x$  and $\dot X_i$ is of full rank, the regression fit is perfect and $R_i = 0$ independently of $g$ and $u$. There is no residual variation left that we can exploit because any unit-specific vector is a linear combination of the regressors.

Both $g_t(V_i)$ and $\dot u_{it}$ vary over time, i.e., across observations. However $g_t$ is a function of $V_i$ while $\dot u_{it}$ is mean independent of $X_i$ given $V_i$.  
Thus focusing on the units  $i$ such that $V_i = V$ with $V$ a fixed value, by mean independence the residuals of the regression of $\dot u_{it}$, that is, $R_{i}^u$, average to $0$ in this sub-population.
On the other hand, the average over this same sub-population of the vector of residuals of the regression of $g_t(V_i)$ on $\dot x_{it}$, that is, of $R_{i}^g$, is an average of regression residuals over regressions which all have the same vector of outcomes $g(V)$, but have different values of regressors  thus  of the residual-maker matrices. Therefore, this vector of outcomes $g(V)$ can be isolated by multiplication of the average of $R_{i}^g$ in the subpopulation of interest, by the inverse of the average of the residual-maker matrices in the same population. Lastly, this is also true if we replace $R_{i}^g$ with the observed $R_{i}$ since  $R_{i}^u$ averages to $0$.

\textbf{\textit{Step 2:}} Once the function $g$ is identified on the support of $V_i$, we can recover $\E(\mu_i)$ using the intuition described at the beginning of this section. Recall that the difference between $\beta_{i}^{OLS}$ and $\mu_i$ is $\beta_{i}^{OLS, \, g} + \beta_{i}^{OLS, \, u}$. Since $\beta_{i}^{OLS, \, g}$ is the coefficient of the regression of $g_t(V_i)$ on  $\dot x_{it}$ for unit $i$ and $g$ is identified, it is known. Since $\beta_{i}^{OLS, \, u}$ averages to $0$ across units, we consider once again cross-section averages:  $\E(\mu_i)$  is the average of the difference between $\beta_{i}^{OLS}$ and $\beta_{i}^{OLS, \, g}$  across units.

\medskip

We now formalize the steps in the argument described above. Crucial to this argument is the use of two matrices constructed with the regressors: the unit-specific regression matrix $Q_i= (\dot{X}_i' \dot{X}_i)^{-1} \dot{X}_i'$, and the unit-specific residual-maker matrix $M_i = I_{T-1} - \dot{X}_i (\dot{X}_i' \dot{X}_i)^{-1} \dot{X}_i'$  if $\dot{X}_i$ is of full rank or $ M_i = I - \dot{X}_i  \dot{X}_i^{ +}$ if not, where $\dot{X}_i^{ +}$ is the Moore Penrose inverse. 
The matrix $Q_i$ is defined only if $\dot{X}_i$ has full column rank and if so, is of size $d_x \times (T-1)$ while $M_i$ is of size $(T-1) \times (T-1)$. 
These unit-specific matrices  have been used in analysis of CRC panel models with short $T$, see e.g \cite{ab12}, \cite{gp12}, but also by a literature focusing more on the large $T$ case, see e.g \cite{s70,ps95,h14}. Note that $R_i = M_i \dot y_i$ and $\beta_{i}^{OLS} = Q_i \dot y_i$.
The identification argument relies on averages of quantities functions of these matrices: we assume that such moments exist and this corresponds to our second assumption, Assumption \ref{assn:id_finiteE}. See more details in Section \ref{sec:assn_CRC_matrix}.

\textbf{\textit{Step 1:}}  Left multiplication of (\ref{eq:modelwTdiffVector}) with $M_i$  gives
\begin{align}
	R_i = M_i \dot{y}_i \,  =  M_i g(V_i) + \, M_i \dot{u}_i , \qquad  \E( M_i \dot{u}_i | X_i , V_i) =0, \label{eq:model_id_g} 
\end{align}
where the conditional expectations are $0$ because $M_i$ is a function of $\dot{X}_i$. The discussion above suggests focusing on the subpopulation with the same value of $V_i$, which gives
\begin{equation}\label{eq:almost_closedform_g}
	\E(M_i \dot{y}_i |  V_i = V) \, = \, \E( \, M_i | V_i = V) \, g(V) \, = \, \mathcal{M}(V) g(V),
\end{equation}
where we write $\mathcal{M}(V) = \E( M_i | V_i = V)$.
If $\mathcal{M}(V)$ is invertible for a given value $V$ on the support of $V_i$, then (\ref{eq:almost_closedform_g}) gives the following closed-form expression, $g(V) = \mathcal{M}(V)^{-1}	\E(M_i \dot{y}_i |  V_i = V)$, that is, we retrieve $g(V) = \mathcal{M}(V)^{-1}	\E(R_i |  V_i = V)$. Thus our last assumption, Assumption \ref{assn:id_M_GLn} (see Section \ref{sec:IA}), will be invertibility of $\mathcal{M}(V_i)$ almost surely. Note that writing $\E(M_i \dot{y}_i | X_i , V_i) = M_i g(V_i)$  does not identify $g(V_i)$ because $M_i$ is a projection matrix and  singular whenever $\dot X_i\neq 0$.

\textbf{\textit{Step 2:}}  Left multiplication of (\ref{eq:modelwTdiffVector}) with $Q_i$  gives 
\begin{align}
	&\beta_{i}^{OLS} = Q_i \dot{y}_i  =   \mu_i + \, Q_i g(V_i) +  Q_i  \dot{u}_i, \text{ and }\E( Q_i  \dot{u}_i | X_i , V_i)  =0, \label{eq:model_id_mu} \\
	& \Rightarrow  \E(\mu_i)  = \E\left(Q_i \dot{y}_i  -  Q_i g(V_i) \right), \label{eq:intuitionmu}
\end{align}
which identifies $\E(\mu_i)$. Note that (\ref{eq:intuitionmu}) can be rewritten $\E(\mu_i) = \E(\beta_{i}^{OLS} - \beta_{i}^{OLS, \, g})$.

This short exposition gave the main intuition of the identification strategy. We postpone the formal statement of the  results to Section \ref{sec:identification}  and first discuss the assumptions we introduced. 

\section{Empirical Content of the Assumptions}\label{sec:assn}


This section discusses the main assumptions on which the identification results are based. As some of these assumptions are nonstandard, we provide intuition and introduce situations in which they might or might not hold.  We start by discussing our use of the control function approach.

\subsection{Control Function Approach}\label{sec:CFAcomment}


To identify structural parameters of models with endogenous regressors, two well-known approaches are the instrumental variable approach and the control function approach. The assumption we use is in the tradition of the control function approach (CFA) literature.  Under CFA assumptions, conditional on the control variables, the endogenous regressor is either independent or mean independent of the residual. In nonlinear models this has been leveraged to identify various parameters of interest, 
see e.g \cite{npv99}, \cite{in09}, \cite{fhmv08}, \cite{df15}, \cite{t15}, etc. 
The identification results of this paper similarly rely on control variables to control for the dependence between the regressors and the residual but the exact assumption that we maintain, 
although it remains a conditional mean independence assumption,
differs from usual versions of CFA assumptions due to the panel aspect of the data. 
It is formally stated below in Assumption \ref{assn:idCFA} (\ref{assn:idCFA_CFA}). 

\begin{assumption}\label{assn:idCFA}
	\
	\begin{enumerate}[noitemsep,nolistsep]
		\item \label{assn:idCFA_continuity}
		$\left(X_i, Z_i, \epsilon_{i}, \mu_i, \alpha_i \right)$ is i.i.d.\ across $i$, and (\ref{eq:model}) holds with $\E(\epsilon_{it})=0$,
		
		\item \label{assn:idCFA_CFA}
		There exist scalar-valued functions $(f_t)_{t \leq T}$ and identified functions $(C_t)_{t \leq T}$ such that, defining $v_{
			it} = C_t(X_i, Z_i) \in \R^{d_v}$,
		\vspace{-14pt}
		\begin{equation}\label{eq:assnCFA}
			\E(\epsilon_{it} \, | \, x_{i1},.. x_{iT}, v_{i1},.. v_{iT}) = f_t(v_{i1},.. v_{iT})
		\end{equation}
	\end{enumerate}
\end{assumption}

Note that (\ref{eq:assnCFA}) can be rewritten $\E(\epsilon_{it}|X_i, V_i) = f_t(V_i)$. It can be compared in particular to an assumption maintained in \cite{npv99}, which studies a triangular simultaneous equation model in a cross-sectional setting where the regressors $x_{it}$ in the outcome equation include endogenous regressors $x_{it}^{en}$ and exogenous ones $x_{it}^{ex}$. Using our notations and applying their framework to a fixed period $t$ for the sake of the comparison, the control variable $v_{it}$ used by the authors is the residual of a nonparametric reduced form first-stage regression of  $x_{it}^{en}$ on instruments $z_{it}$ and  $x_{it}^{ex}$.  
They assume that $\E(\epsilon_{it}  | x_{it}^{en}, z_{it},v_{it})= f_t(v_{it})$, which we call CS-CFA. CS-CFA implies $\E(\epsilon_{it}  | x_{it},v_{it})= f_t(v_{it})$. The main difference between this equality and (\ref{eq:assnCFA}) is that the latter conditional expectation conditions on the vector $(X_i,V_i)$ and not $(x_{it},v_{it})$: we discuss later how this brings additional flexibility and describe now how the practitioner can construct control variables. 

In the CFA literature, control variables are typically provided by a first-step selection equation. Although we do not rely on a particular specification for the control variables, a natural approach is to construct,  for each period, control variables studied in the cross-section CFA literature using cross-sectional data for this period so as to obtain $v_{it} = C_t(x_{it}, z_{it})$. For instance, following \cite{npv99} 
with a linear first stage, we  obtain the following two-equation model,

\vspace{-25pt}

\begin{align}
	y_{it} & = \, x_{it}^{ex \ \prime} \, \mu_i^{ex} + x_{it}^{en \ \prime} \, \mu_i^{en} + \epsilon_{it}, \label{eq:modelX1X2} \\
	x_{it}^{en} & =  \pi_t x_{it}^{ex} + \gamma_t z_{it} + v_{it}, \qquad \E(v_{it}| x_{it}^{ex},  z_{it} )=0 \label{eq:modelX1X2lin1stage}  
\end{align}
where $ x_{it}^{ex} \in \R^{d_1}$ is a vector of exogenous regressors and $x_{it}^{en} \in \R^{d_2}$ a vector of potentially endogenous regressors. Define  $x_{it}=(x_{it}^{ex \ \prime},x_{it}^{en \ \prime})'$, $d_x = d_1+d_2$. If the instrument satisfies a natural extension of CS-CFA, i.e, $\E(\epsilon_{it} | Z_i, X_i^{ex}, V_i) = h_t(v_{it})$, then (\ref{eq:assnCFA}) holds with $C_t(x,z) = x^{en} - \E(x_{it}^{en}|x_{it}^{ex} = x^{ex},z_{it} = z)$ and $f_t(V_i)=h_t(v_{it})$.
To study the impact of class size on students' achievements in a panel of class level data, a well-known instrument is  random variations in cohort size, see \cite{h00}, and control variables are the vector of residuals of a first-stage regression of class size or its logarithm on an instrument.
\cite{er99} use state cigarette taxes on birth data to instrument for mother level of smoking in an analysis of its impact on birth weight: the instrument can be used   on a panel of mothers with multiple births to construct $v_{it}$ as the residual of the first-stage linear regression.
For nonseparable first stages $x_{it}^{en} = h(x_{it}^{ex}, z_{it},\eta_{it}) $ where $(\epsilon_{it},\eta_{it}) \indep z_{it}$ and $\eta_{it}$ is scalar-valued, another well-known control variable studied in \cite{bp03} and \cite{in09}  is  $v_{it} = F_{x^{en}|x^{ex},z}(x_{it}^{en}|x_{it}^{ex}, z_{it})$.
Aside from first-stage equations, the panel data literature has suggested various assumptions under which variables can be constructed without  instrumental variables yet satisfy conditional independence conditions similar to (\ref{eq:assnCFA}). Exchangeability conditional on individual-level heterogeneity is one such condition, see  \cite{am05}, or the use of sufficient statistics, as in \cite{ai19}. See also \cite{lps21} for a discussion of these conditions. 
Assuming exchangeability in (\ref{eq:model}) would amount to assume that $\alpha_i,\epsilon_{it}|(x_{i1},...,x_{iT}) $ has the same distribution as $\alpha_i,\epsilon_{it}|(x_{i\tau(1)},...,x_{i\tau(T)})$ for any permutation $\tau$: this in general is contrary to our goal of addressing time-varying endogeneity where timing matters. However this motivates  strengthening  Assumption \ref{assn:idCFA} (\ref{assn:idCFA_CFA}) to $\E(\epsilon_{it}|X_i, V_i) = f_t(s(V_i))$ where $s(V_i)$ is scalar. For instance, one could have $s(V_i) = \frac{1}{T} \sum_{t=1}^T v_{it}$. An obvious benefit of this assumption is that it will alleviate the curse of dimensionality as all nonparametric estimations will be conditional  on $s(V_i)$ instead of $V_i$, a vector of dimension $d_V \times T$.

One important difference between (\ref{eq:assnCFA}) and  well-known applications of the CFA in nonseparable cross-sectional models is that we do not impose restrictions on the distribution of $(\mu_i, \alpha_i, (\epsilon_{it})_{t \leq T}, Z_i, X_i)$ but only on that of  $((\epsilon_{it})_{t \leq T}, Z_i,X_i)$. The CFA typically includes the entire vector of unobserved heterogeneity, see e.g \cite{npv99}, \cite{in09}, \cite{mt16}, etc.
Under Assumption \ref{assn:idCFA} (\ref{assn:idCFA_CFA}), the instrument need not be independent of the impact of the regressor $\mu_i$, and of $\alpha_i$.
Despite the added dimensions of unobserved heterogeneity in (\ref{eq:model}) when compared to linear panel data model with constant slope, additive fixed effects and time-varying endogeneity, the requirement on the instrument is only on its correlation with $\epsilon_{it}$ thus remains comparable to exogeneity assumptions of the CFA type in the latter model. Thus the practitioner, when choosing instruments, need not worry about additional exogeneity requirements.

We now explore the serial correlations allowed in (\ref{eq:assnCFA}), noting that $\E(\epsilon_{it}|X_i,V_i)$ can depend on past and future values of $v$. 
We consider the conceptually easier case where $v_{it} = C_t(x_{it}, z_{it})$. In the extension of CS-CFA mentioned above, i.e, $\E(\epsilon_{it} | Z_i, X_i^{ex}, V_i)= h_t(v_{it})$, the control variables are strictly exogenous with respect to the residual $\epsilon_{it} - h_t(v_{it})$. This approach is used for instance in \cite{w95}, which studies a linear panel data model with additive unobserved heterogeneity and sample selection issues. 
However letting the conditional expectation $\E(\epsilon_{it}|X_i, V_i)$ take the exact form $f_t(V_i)$  allows for the error terms $\epsilon_{it}$ to be correlated with  past and future values of the regressors as long as this dependence is captured by $V_i$.  Of particular relevance in panel data is the correlation between contemporaneous error terms and future regressors, also called \textit{feedback}. To allow for a feedback effect, the assumption of strict exogeneity is often replaced with \textit{sequential exogeneity}. Sequentially exogenous or \textit{predetemined} regressors are uncorrelated with present and future values of the error term $\epsilon$ but may be correlated with past values of the error term, see e.g \cite{ah01}, \cite{a03}. In Section \ref{sec:relax_SE}, 
we show that we can formally leverage the flexibility of the functional form $f_t(V_i)$ to identify $\E(\mu_i)$ in a model with contemporaneously exogenous but predetermined regressors, which to the best of our knowledge is an open question in CRC panel models.
Moreover feedback can also occur when there is contemporaneous time-varying endogeneity. In the context of evaluating the impact of smoking on birthweight, a mother may change her smoking level after giving birth to a low birthweight infant ``$t$''. This can be captured by Assumption \ref{assn:idCFA} since in (\ref{eq:assnCFA})  $\epsilon_{it}$ is allowed to be correlated with the residuals of the first-stage regressions for the following births. Another example is in Section \ref{sec:illus}.
Note also that assuming $f_t(V_i) = h_t(v_{it})$ would not simplify the identification argument of Section \ref{sec:id_model} and cannot alleviate the curse of dimensionality as the left multiplication with $M_i$ in (\ref{eq:almost_closedform_g}) involves linear combinations of all periods. 

A drawback of using the control function approach is that recovering valid control variables from first-stage equations is not always possible if there are additional unobserved sources of heterogeneity. Looking at a specific selection equation with two-dimensional unobserved heterogeneity, \cite{i07} shows that the conditional quantile of the regressor given the instrument is not a valid control variable anymore. There is consequently an apparent imbalance in the degree of heterogeneity between the implicit first stage and the main model equation (\ref{eq:model}).
Such imbalance can be found in other analyses of random coefficient models with endogenous regressors, see e.g \cite{hv98}, \citeauthor{w97} (\citeyear{w97}, \citeyear{w03}). As in \cite{mt16}, our approach allows for nonlinear first-stage equations which guarantee a certain degree of heterogeneity as partial effects vary across agents. It may be possible to use panel techniques to identify control variables in first-stage equations with more heterogeneity: for instance if a first stage is $x_{it} = m(z_{it})+ \xi_i + \eta_{it}$ where $\E(\eta_{it}|z_{it})=0$ then $v_{it} = \xi_i + \eta_{it}$ is identified if $T \geq 2$ and may be a valid control variable. However generalizing this approach to more complex models is outside the scope of this paper and is left for future research.


\subsection{Within and Between Matrices}\label{sec:assn_CRC_matrix}

To separately identify the function $g$ from $\mu_i$, the identification argument relies on the matrices 
$ M_i = I_{T-1} - \dot{X}_i (\dot{X}_i' \dot{X}_i)^{-1} \dot{X}_i' \, $ and  $Q_i = (\dot{X}_i' \dot{X}_i)^{-1} \dot{X}_i'$, called within- and between-matrices respectively, and considers expectations of vectors left-multiplied with these matrices. First, we require $\dot{X}_i$ to have full column rank with probability $1$. This section explains how the use of $M_i$ and $Q_i$ further imposes constraints on the data. 
Note that if $T = d_x + 1$ then $M_i = 0$: as discussed in Section \ref{sec:intuition_id}, if $\dot{X}_i$ is of full rank then there is no variation left in the residual $R_i$ to identify the function $g$. This fact motivates the constraint $T \geq d_x + 2$,  standard in the analysis of CRC panel models, see \cite{ch92,ab12}. 
A second issue is that the matrix $Q_i$ and functions of it may not have finite moments. Indeed the norm of $ (\dot{X}'\dot{X})^{-1}\dot{X}$ increases without bound as $\det(\dot{X}'\dot{X})$ approaches $0$, that is, as the columns of $\dot{X}$ are ``close to'' being linearly dependent. Thus if the regressors are continuous and for instance the density of $\dot{X}$ is positive at values such that $\det(\dot{X}'\dot{X}) = 0$, 
simple moments involving $Q_i$ such as $\E(||Q_i \dot{u}_{i}||)$ may not exist. 
For these reasons, we  assume that the needed moments do exist. The formal assumption is stated below. Note that this does not affect $M_i$: it is an orthogonal projection matrix and thus bounded. 
\begin{assumption}\label{assn:id_finiteE}
	\
	\begin{enumerate}[noitemsep,nolistsep]
		\item \label{assn:full_rk} $\dot{X}_i$ is of full column rank a.s,
		
		\item $\E(||\dot{u}_{i}||) < \infty$, $\E(||Q_i \dot{u}_{i}||) < \infty$, $\E(||Q_i g(V_i)||) < \infty$, and 
		$\E(||(\mu_i', \alpha_i)||) < \infty$.
	\end{enumerate}
\end{assumption}

Assumption \ref{assn:id_finiteE} implies that almost surely, units have strictly positive $\det(\dot{X}_i'\dot{X}_i)$, i.e., regressors variations must be linearly independent and thus each regressor must have variations over time of its first-differences. But Assumption \ref{assn:id_finiteE} also implies that the proportion of units with low values of $\det(\dot{X}_i'\dot{X}_i)$ must be sufficiently low.  Such units with high persistence are called \textit{stayers}. 
This issue also arises when there is no time-varying endogeneity and is discussed in details in \cite{gp12}. 
The authors illustrate it with an extreme example where  they consider the case $d_x = 1$ and $x_{it} = s_i w_{it}$, with $s_i \sim \mathcal{U}[a,b]$ and $w_{it} \sim \text{i.i.d }\mathcal{N}(0,1)$. They show that for  fixed $b>0$ and independently of the number of periods, the moment $\E\left((X_i'X_i)^{-1}\right)$ is not finite if $a\leq 0$, i.e., if regressors with perfect persistence (when $s_i = 0$) have positive density despite the matrix of regressors being of full rank almost surely.


If Assumption \ref{assn:id_finiteE} does not hold and there are not sufficient  regressor time variations, i.e., \textit{within}-variations of the regressors for all units in the sample, an alternative is to focus on units for which there is sufficient variation, i.e., such that $\det(\dot{X}_i' \dot{X}_i) > \delta$ for some threshold $\delta \geq 0$. We show in Section \ref{sec:maindid} that $\E(\mu_i | \det(\dot{X}_i' \dot{X}_i) > \delta)$ is also identified under standard assumptions and focus on an estimator of this parameter\footnote{This  approach is pursued in \cite{ab12}.} in Section \ref{sec:estim} since samples with substantial shares of stayers are more likely. To identify the APE for the entire population in cases where Assumption \ref{assn:id_finiteE} does not hold or where $T = d_x + 1$, \cite{gp12} provide an alternative procedure: letting the threshold $\delta$ go to $0$. 
Although it is beyond the scope of this paper, this approach can be used here.


\subsection{Invertibility Assumption}\label{sec:IA}

This last assumption is perhaps the least standard but is crucial for the derivation of the main results. For $\mathcal{M}(V) = \E( \, M_i \, | V_i = V) $, the assumption is stated below.
\begin{assumption}\label{assn:id_M_GLn}
	The matrix $\mathcal{M}(V_i)$ is invertible, $\mathbb{P}_{V_i} \, \text{a.s}$.
\end{assumption}

We call  Assumption \ref{assn:id_M_GLn} the invertibility assumption (IA). Imposing invertibility of a matrix for identification is not surprising: in the linear regression model in cross-section, $y_i = w_i'\beta+u_i$ with $w \in \R^{d_w}$,  a necessary condition for identification of $\beta$ is nonsingularity of $\E(w w')$. It may be surprising however that the expectation of the singular matrices $M_i$ can be assumed to be nonsingular, but taking the same linear regression example, $w_iw_i'$ is also singular for each unit $i$.


Importantly, IA is an assumption  on the conditional expectation of a function of  $\dot{X}_i$ given $V_i$, which  itself is  a function of $X_i$ and $Z_i$. Thus IA  is an assumption on the joint distribution of $(X_i,Z_i)$: it assesses relevance of the instrument. 
However due to its unusual form, what IA imposes on $Z_i$ is not immediate and not easily comparable with standard conditions. Therefore we now have a closer look at its empirical content and obtain conditions on the distribution of $(Z_i,X_i,V_i)$ that are sufficient for IA to hold.  Section \ref{sec:minsupport} provides a geometric intuition on why the instrument can have discrete support and derives the minimum number of points on its support.
Section \ref{sec:sufficientcondIA} specifies a generic first-stage equation and under various restrictions obtains sufficient conditions.
These conditions are not restrictive: if the instrument is continuous, we obtain a small support condition similar to \cite{fhmv08} and  we show that IA holds for a binary instrument $Z_i$ when $x_{it}$ is scalar even if $Z_i$ impacts the regressor at only one period. We note here that our discussions of IA mostly assume that  $\dot{X}_i$ is of full column rank $\mathbb{P}_{\dot{X}_i|V_i=V} \, \text{a.s}$, which relates with Assumption \ref{assn:id_finiteE}. We return to this point when discussing Result \ref{rslt:evdok_z}.  
We start by showing that IA is necessary and sufficient to identify $g$.

\subsubsection{A Necessary Condition} 

Under IA, if a function $\dot{h}: \mathcal{S}_{V_i} \mapsto \R^{T-1}$ is such that $\dot{h}(V_i) = \dot{X}_i \beta_i \text{ a.s}$, then $M_i \dot{h}(V_i) = 0 \Rightarrow \mathcal{M}(V_i) \dot{h}(V_i) = 0 \Rightarrow \dot{h}(V_i) = 0\text{ a.s}$. 
IA precludes any nonzero function $\dot{h}$ of $V_i$ from being of the form $\dot{X}_i \beta_i$ or, to rephrase it,  precludes any time-varying function $h: \mathcal{S}_{V_i} \mapsto \R^{T}$ of $V_i$ from being of the form $X_i \beta_i$. This separately identifies $g$ from $\dot{X}_i \mu_i $: we show that IA is a necessary and sufficient condition for identification of the function $g$.


\begin{result}\label{rslt:cns_g}
	 Let Assumptions \ref{assn:idCFA} and \ref{assn:id_finiteE} hold. If $V_i$ is continuously distributed, then $g$ is identified $\mathbb{P}_{V_i} \, \text{a.s}$  on $\mathcal{S}_{V_i}$ if and only if Assumption \ref{assn:id_M_GLn} holds.
\end{result}


\subsubsection{Minimum Number of Support Points for $Z$}\label{sec:minsupport}

Going back to the vocabulary used in Section \ref{sec:intuition_id}, identification of $g(V)$ focuses on 
the subpopulation with fixed $V_i=V$.
For this subpopulation we obtain the average vector of residuals of the regression of $g_t(V_i)$ on $\dot x_{it}$, i.e., $\E(R_i^g|V_i=V)$.
For a unit $i$ such that $V_i=V$, $g(V_i)$ cannot be separated from $\dot X_i \mu_i$ if $g_t(V_i)$ is linear in the regressors $\dot x_{it}$, i.e., if $g(V_i)$ is a linear combination of the columns in $\dot X_i$, as in the standard ``no perfect multicollinearity assumption''. 
Let $\Span \dot{X}_i$ denote the set spanned by the columns of $\dot{X}_i$. The argument above implies that we must have $g(V) \notin \Span \dot X_i$ for any $i$ such that $V_i=V$. Since $g(V)$ is nonparametrically specified and could take any value, we need to make sure that no nonzero vector can be in the intersection of all $\Span \dot X_i$ for  $i$ such that $V_i=V$.  Result \ref{rslt:equiv_span} below follows from this intuition and its interpretation is that within the subpopulation with fixed $V_i=V$, the matrices of regressor variations must vary sufficiently.

\begin{result}\label{rslt:equiv_span}
	Fix $V \in \mathcal{S}_{V_i}$, the matrix $\mathcal{M}(V)$ is invertible if and only if for all subsets $\tilde{\mathcal{S}}\subset \mathcal{S}_{X_i|V_i=V}$ such that $\mathbb{P}_{X_i|V_i=V}\left(\tilde{\mathcal{S}}\right)=1$, then $\ \bigcap_{X \in \tilde{\mathcal{S}}} \Span \dot{X} = \{0\} $.
\end{result}
In the subpopulation with fixed $V_i=V$, cross-sectional variations of the matrix of regressor variations are generated by cross-sectional variations of the instrument. Thus Result \ref{rslt:equiv_span} provides some evidence that IA can hold when the instrument has discrete support.  Consider the case where $d_x=1$  and  the instrument is discrete. We maintain that $\dot{X}_i$ is of full column rank $\mathbb{P}_{\dot{X}_i|V_i=V} \, \text{a.s}$. Then $ \bigcap_{X \in \tilde{\mathcal{S}}} \Span\dot{X} $ is an intersection of straight lines. Since variations of $\dot{X}_i$ holding $V_i$ fixed are generated by variations of $Z_i$ holding $V_i$ fixed, it is a finite intersection of lines. Such an intersection  is always the trivial set $ \{0\} $ unless these straight lines are all equal: thus IA imposes that for some two different values of the vector of instruments, the corresponding vectors of regressors $\dot{X}$ are not collinear. This example gives a better understanding of the empirical content of IA: variations of $Z_i$ across individuals need to create  time variations of the regressors that are sufficiently diverse in the population $V_i=V$. 
Now if $d_x =2$ and  $T=4$, then $\Span \dot{X}$ is a plane in a 3-dimensional set. The intersection of two planes is either a plane or a line. But the intersection of three planes can be $\{0\}$: for $\mathcal{M}(V)$ to be nonsingular for a given $V$, the vector of instruments must have at least 3 points on its support. This points to a tension between the dimension of the random coefficients $d_x$, the number of time periods $T$ and the number of points in the support of $Z_i$, which is formalized in the following result.
\begin{result}\label{rslt:discrete_IV}
	Fix $V \in \mathcal{S}_{V_i}$. Assume that $\dot{X}_i$ is of full column rank $\mathbb{P}_{\dot{X}_i|V_i=V} \, \text{a.s}$ and that $\mathcal{S}_{Z_i|V_i=V}$ has $k_V$ points. Then 
	\vspace{-5pt}
	\begin{equation}\label{eq:minZdiscr}
		\mathcal{M}(V) \text{is nonsingular } \Rightarrow k_V \geq  \frac{T-1}{T-1 - d_x} .
	\end{equation}
\end{result}

The minimum number of points on $\mathcal{S}_{Z_i|V_i=V}$  increases with $d_x$ and decreases with $T$. When $T$ is the minimum number of periods required for the framework of this paper to apply, i.e, $T=d_x+2$, then (\ref{eq:minZdiscr}) becomes $k_V \geq d_x+1 $, $d_x+1 $ being the dimension of the unobserved heterogeneity. 
Recent contributions show that discrete instruments can be used with the control function approach in cross-sectional data. For instance, in a nonseparable model with scalar unobserved heterogeneity,  \cite{df15} and \cite{t15} obtain nonparametric identification with a binary instrument. As the dimension of the unobserved heterogeneity increases, existing results document stricter requirements on the support of the instrument. To identify a nonseparable model  with unobserved heterogeneity of unknown dimension, \cite{in09} requires the instrument to be continuous and have large support. On the other hand, \cite{ns18} studies an outcome equation polynomial in an endogenous variable with random coefficients and shows that the number of points needed on the support of the instrument conditional on the control variable is at least as large as the dimension of the unobserved heterogeneity. 
See also \cite{ns21}, and \cite{mt16} for a similar result.
This is comparable to (\ref{eq:minZdiscr}) but for its dependence in $T$:  higher $T$ translates into more observations for the  same draw of $(\mu_i, \alpha_i)$, thus more information.
Note that the constraint (\ref{eq:minZdiscr})  is on the support of $Z_i$ and not $z_{it}$, and the cardinality of the support of $z_{it}$ can potentially be substantially lower.

\subsubsection{Sufficient Conditions for Triangular Systems}\label{sec:sufficientcondIA}

Although applications of the control function approach are often associated with large support conditions on the instruments, the previous discussion implies that IA does not require large support. We now look at specific first-stage equations and illustrate this point with various sufficient conditions on $Z_i$, when $\dot{X}_i$ is of full  rank $\mathbb{P}_{\dot{X}_i|V_i=V} \, \text{a.s}$.
We impose the following general equation,
\begin{equation}\label{eq:1ststagenp}
	x_{it} = m_t(Z_i,V_i) \in \R^{d_x}, \, \text{ or in vector form, } \, X_i = m(Z_i,V_i) \in \R^{(T-1) \times d_x},
\end{equation}
and its first-difference version $\dot{X}_i = \dot{m}(Z_i,V_i)$. 

The discussion preceding Result \ref{rslt:equiv_span} states that to recover $g(V)$, we must ensure that it is not possible for any nonzero vector to be linearly generated by the matrix of regressor time variations $\dot X_i$, for all units $i$ such that $V_i = V$. Thus the regressors variations must be sufficiently different from each other among these units. Let us consider a simple example.
Let us write the components of $x_{it}$ as $x_{it,k}$  where $k=1,..,d_x$ and define similarly $\dot x_{it,k}$. If one regressor $x_{it,k}$  has constant variations among  units $i$ such that $V_i = V$, i.e., if 
$$ \text{for } t=1,..,T-1, \text{ there exists } \bar{x}_{t,V} \text{ such that  } \dot x_{it,k} = \bar{x}_{t,V}, \, \mathbb{P}_{X_i|V_i=V}\text{ a.s.},$$
defining the vector $\bar x_V = (\bar{x}_{1,V},..,\bar{x}_{T-1,V})'$, then 
$\bar x_V \in \Span \dot X_i, \, \mathbb{P}_{X_i|V_i=V}\text{ a.s.}$ Result \ref{rslt:equiv_span} implies therefore  that $\mathcal{M}(V)$ is singular if for some $t$,  $\bar{x}_{t,V} \neq 0$. This counterexample shows that as $Z_i$ varies on $\mathcal{S}_{Z_i|V_i=V}$, it must generate variations of the regressor time variations in the cross-sectional dimension, that is, of the within-variations of the regressors.
But these cross-sectional variations need not be very rich.
To illustrate this, we consider (\ref{eq:1ststagenp}) in the simpler case $d_x = 1$, and
$Z_i | V_i = V$ has discrete distribution.
Result \ref{rslt:counterex_1cstt} shows that, conditional on  $V_i=V$, even when there is so little cross-sectional variation that the instrument is binary and impacts the regressor at only one period $t_0$,  $\mathcal{M}(V)$ is nonsingular  as long as one of the regressors is not of the form $(b,..b,a,b,..b)'$. 

\begin{result}\label{rslt:counterex_1cstt}
	Let (\ref{eq:1ststagenp}) hold with $d_x =1$, and fix $V \in \mathcal{S}_{V_i}$, $t_0 \leq T$. Assume that $\dot{X}_i$ is of full column rank $\mathbb{P}_{\dot{X}_i|V_i=V} \, \text{a.s}$, $\mathcal{S}_{Z_i|V_i=V}$ is finite and  
	there exist  $(Z^{(1)},Z^{(2)}) \in \mathcal{S}_{Z_i|V_i=V}$ such that $ m_{t_0}(Z^{(1)},V) \neq  m_{t_0}(Z^{(2)},V)$ and  $ m_{t}(Z^{(1)},V) = m_{t}(Z^{(2)},V)$ for $t \neq t_0$. Then $\mathcal{M}(V)$ is nonsingular if and only if 
	
	\vspace{-18pt}
	
	$$ \exists (t_1,t_2), \ t_1 \neq t_2, \, t_1 \neq t_0, t_2 \neq t_0, \ \text{such that }  m_{t_1}(Z^{(1)},V) \neq m_{t_2}(Z^{(1)},V). $$
\end{result}

This result combines variations in the population (at $t_0$), and   uniform-over-$i$ variations over time (between $t_1$ and $t_2$).
Thus for  $\mathcal{M}(V)$ to be nonsingular, if the instrument generates cross-sectional variations of the regressor at $t_0$ only as it  varies on $\mathcal{S}_{Z_i|V_i=V}$, then the regressor must vary over time at least between two periods that are not $t_0$.
Another evidence for the need of both variations over time and across units comes from looking at models where a binary instrument does not vary over time. 
Such an example with ``structural breaks'', i.e., time variations in the function $m_t$, is detailed  in Appendix \ref{sec:app_struc_break}. It
illustrates that one structural break is in this case not enough, but IA can hold when there are two. Another example with a time-invariant instrument will be given in Section \ref{sec:relax_SE}  below.

\vspace{8pt}

IA is a condition on the within-variations of $X_i$ and further insight can also be gained looking at units $i$ such that $x_{it+1}=x_{it}$. Let us call these units \textit{consecutive stayers}.
Fixing $V$, note that  Results  \ref{rslt:discrete_IV} and \ref{rslt:counterex_1cstt}   as well as the discussions around these results examined $\mathcal{M}(V)$ while imposing for $\dot{X}_i$ to be of full column rank $\mathbb{P}_{\dot{X}_i|V_i=V} \, \text{a.s}$. This simplified some of the proofs but is also particularly relevant since we impose Assumption \ref{assn:id_finiteE} for identification of $\E(\mu_i)$. Recall that the second step of the identification argument runs unit-specific linear regressions. 
Without this additional restriction, we can consider a dgp such that $\dot X_i = 0,$ $\mathbb{P}_{\dot{X}_i|V_i=V} \, \text{a.s}$. Then $M_i$ is the identity matrix $\mathbb{P}_{\dot{X}_i|V_i=V} \, \text{a.s}$ and  $\mathcal{M}(V)$ is invertible. 
It is easy to understand why $g(V)$ is identified with this dgp. Indeed $\dot X_i = 0$ implies that $ \forall t \leq T-1, \ x_{it+1}=x_{it}$:  thus for a fixed $ t \leq T-1 $,
$$
\E(\dot y_{it} | V_i=V) = \E((x_{it+1} - x_{it})'\mu_i + g_t(V_i) + \dot{u}_{it} | V_i=V) = g_t(V),
$$
which identifies $g_t(V)$. We cannot not consider such a dgp but consecutive stayers can  still be leveraged when regressor matrices are of full column rank and the instrument is discrete, as the following result states.
\begin{result}\label{rslt:evdok_z}
	Fix $V \in \mathcal{S}_{V_i}$. Assume that $\dot{X}_i$ is of full column rank $\mathbb{P}_{\dot{X}_i|V_i=V} \, \text{a.s}$, $\mathcal{S}_{Z_i|V_i=V}$ is finite and for all $t \leq T-1$, $x_{it} = m_t(z_{it},v_{it})$. 	
	Assume also that for all $t \leq T-1$, there exists $Z^{(t)} \in \mathcal{S}_{Z_i|V_i=V}$ such that $z_{t+1}^{(t)}=z_{t}^{(t)}$.  Then $\mathcal{M}(V)$ is invertible.
\end{result}

The full rank condition  implies that it is not possible that $z_{s}^{(t)}=z_{s'}^{(t)}$ for all $(s',s)$. There needs to be cross-sectional variation of the period at which consecutive stayers ``stay'',  as $Z_i$ varies on $\mathcal{S}_{Z_i|V_i=V}$. This generates cross-sectional variation of the within-variations.
%
Leveraging these consecutive stayers is common in the panel data literature and important papers using this approach are for instance \cite{hk00,evdo10,hw12,gp12}.

\vspace{8pt}

Although existence of consecutive stayers guarantees invertibility, it is surely not a necessary condition. Indeed, if the instrument is continuous, we obtain a sufficient condition comparable to a well-known result in \cite{fhmv08}. This paper studies a cross-section outcome equation which is polynomial  in a scalar endogenous regressor $w_i$ and with random coefficients. The authors impose a nonseparable first stage $w_i = m(z_i,v_i)$ with an instrument $z_i$ independent of the control variable $v_i$ and $m(z_i,v_i)$ assumed to be a continuous function of $v_i$. Then existence of an open interval contained in the support of $m(z_i,v_0)$ for all fixed $v_0$ is shown to be a sufficient condition for identifiability of some average effects. The condition we obtain is a panel version of this restriction, instead on the entire vector of regressors, and implies too that a large support is not needed.
\begin{result}\label{rslt:florenstype}
	Let (\ref{eq:1ststagenp}) hold and fix $V \in \mathcal{S}_{V_i}$, assume that $\dot{X}_i$ is of full column rank $\mathbb{P}_{\dot{X}_i|V_i=V}\text{a.s}$ and that the support of $m(Z_i,V)$  contains an open ball, as $Z_i$ varies conditional on $V_i=V$. 
	Then $\mathcal{M}(V)$ is invertible.
\end{result}

\vspace{-5pt}
IA and the rank condition require the vector of instruments to generate variations of the regressors both over time and across individuals. Result \ref{rslt:florenstype} obtains these variations with a stronger than needed condition that the instrument has $X_i$ vary ``in all directions''. 
Note that since we do not impose independence between $Z$ and $V$, the assumption is on the support of  $Z$ conditional on a fixed value of $V$.
A direct application of this result is to a linear first stage where the impact of the instrument does not vary with time. Write
\vspace{-5pt}
\begin{equation}\label{eq:lin1ststage_csttA}
 \forall \ t \leq T, \	x_{it} = A z_{it} + v_{it}, 
\end{equation}

\vspace{-7pt}
where $A$ is of size $d_x \times d_z$ and $\E(v_{it}|z_{it}) = 0$. 

\begin{corollary}\label{rslt:lin1ststage_csttA}
	Let (\ref{eq:lin1ststage_csttA}) hold, where $A$ is of full row rank and fix $V \in \mathcal{S}_{V_i}$. Assume that $\dot{X}_i$ is of full column rank $\mathbb{P}_{\dot{X}_i|V_i=V}\text{a.s}$ and that the support $\mathcal{S}_{Z_i|V_i=V} $ contains an open ball.  Then $\mathcal{M}(V)$ is invertible.
\end{corollary}

If $Z \indep V$, one can replace the support condition with ``$\mathcal{S}_{Z_i} $ contains an open ball''.
The condition that $A$ is of full row rank implies  that $d_z \geq d_x$.


\begin{note}\label{note:test}
	The invertibility assumption is a condition on observable and estimable random variables which implies that it can be tested. A formal test would require testing for the value of the function $V \mapsto\det(\mathcal{M}(V))$ being nonzero for all values of $V$ on $\mathcal{S}_{V_i}$.  One can consider for instance the minimum of this function over the support of $V_i$, that is, $\min_{V \in \mathcal{S}_{V_i}} \det(\mathcal{M}(V))$. A test can be constructed using the nonparametric estimator for $\mathcal{M}(V)$, however such a test is nonstandard and studying its test statistic is beyond the scope of this paper.
\end{note}

\subsection{Example: Sequential Exogeneity}\label{sec:relax_SE}

A particular failure of the strict exogeneity condition in  (\ref{eq:model}) is sequential exogeneity, which we separate from contemporaneous endogeneity. Sequentially exogenous regressors satisfy the assumption $\E(\epsilon_{i \, t}|x_{i1},..,x_{it}) = 0$ for all $t$. To the best of our knowledge, there does not exist a proof of identification of $\E(\mu_i)$ when only this assumption holds.
In this section, we show that the two-step identification argument obtains identification under certain conditions. In particular, we assume that the regressors follow a Markov process and show that Assumption \ref{assn:idCFA} holds. The instrument is then the regressor at the first period: an interesting question is whether Assumption \ref{assn:id_M_GLn} holds when there are so few variations. It will be shown in a particular example that it is the case, underlying the minimal variations the invertibility assumption imposes on the regressor as the instrument varies.

Equation (\ref{eq:model}) holds and we look at the case where  $x_{it}$ is a Markov process. We assume 
\vspace{-5pt}
\begin{align}
	 \text{for }1 \leq t \leq T-1, \ x_{i \, t+1} \, = \, m_t(x_{it}) + \, \eta_{i \, t+1}, \text{ and } (\epsilon_{it}, \eta_{i\, t+1},.. \, , \eta_{iT}) \indep (x_{i1}, \eta_{i2}, \, .. \, , \eta_{it}), \label{eq:markovmodel} 
\end{align}

\vspace{-5pt}
where $m_t$ is unknown. This implies that $(\epsilon_{it}, \eta_{i\, t+1},.. \, , \eta_{iT}) \indep (x_{i1}, \, .. \, , x_{it})$, thus sequential exogeneity holds but $x_{it+1}$ may be correlated with $\epsilon_{i \, t}$. 
We define $V_i \, = \, (\eta_{i2},..\, , \eta_{iT})$, it is identified as the vector of residuals of nonparametric regressions of $x_{it+1}$ on $x_{it}$. Then for all $t \leq T$,
\vspace{-5pt}
\begin{align*}
	\E(\epsilon_{it} \, | \, x_{i1},..\, , x_{iT}, \eta_{i2},..\, , \eta_{iT}) \, & = \, \E(\epsilon_{it}|x_{i1}, \eta_{i2},..\, , \eta_{iT}) \, = \, \E(\epsilon_{it}| \eta_{i\, t+1},..\, , \eta_{iT})   : = f_t(V_i).
\end{align*}

\vspace{-5pt}
Assumption \ref{assn:idCFA} holds\footnote{Note that
	independence is stronger than needed. Conditional mean independence of $\E(\epsilon_{it}|x_{i1}, \eta_{i2},..\, , \eta_{iT})$ with respect to $x_{i1}$ is sufficient.} and $V_i$ is a valid vector of control variables.
Note that one could alternatively consider $x_{i \, t+1} = m_{t+1}(x_{it},\eta_{i \, t+1})$ with $\eta_{i \, t+1}$ scalar and $m_t$ strictly monotonic in $\eta_{i \, t+1}$ and use the control variable suggested in \cite{in09}. Defining $M_i \,  = \, I - \dot{X}_i (\dot{X}_i' \dot{X}_i)^{-1} \dot{X}_i' \, $, $\mathcal{M}(V_i) \, = \, \E(M_i|V_i)$, $u_{it} = \epsilon_{it} - f_t(V_i)$ and $g_t(V_i) \, = \, f_{t+1}(V_i) - f_t(V_i)$,
\vspace{-5pt}
\begin{equation}\label{eq:idgwithSeqExog}
	\E(M_i \dot{y}_{i} | V_i) \,  = \mathcal{M}(V_i) g(V_i), \text{ and } \E(Q_i \dot{y}_i) = \E(\mu_i) + \E(Q_i g(V_i)).
\end{equation}

\vspace{-5pt}
It follows that the two-step identification of the main model applies here: a first step identifies the vector of functions $g$ and the second step identifies the average effect. We thus need to assume invertibility of $\mathcal{M}(V)$ on the support of $V$. This condition is nontrivial here because $V_i = (\eta_{i2},.. \eta_{iT})$ while $M_i$ is a function of the vector of variables $X_i$: this implies that the expectation of $M_i$ conditional on $V_i$ is an expectation over $x_{i1}$,
that is, the only source of variation once $V_i$ is fixed is $x_{i1}$. To show that Assumption \ref{assn:id_M_GLn} can hold with so little variation, let us look more closely at the case where
$x_{it}$ is a scalar AR(1) process.

\begin{assumption}\label{assn:relax_SE}
	\
	\begin{enumerate}[noitemsep,nolistsep]
		\item (\ref{eq:model}) and (\ref{eq:markovmodel})  hold, and $m_t(x) = \rho x$, $\rho \notin \{0,1\}$
		
		\item $(X_i, \mu_i, \alpha_i, \epsilon_i, \eta_i)$ is i.i.d across $i = 1..n$,
		\item Either $x_{i1}$ has a discrete distribution with at least two support points, or $x_{i1}$ is continuously distributed. 
		Moreover, $(\eta_{i2},.. ,\eta_{iT})$ is continuously distributed on $\R^{T-1}$.
	\end{enumerate}
\end{assumption}
In this case, the instrument $x_{i1}$ does not vary over time but its impact on each period $x_{it}$, i.e., $\rho^{t-1}$, creates sufficient time variation of the regressor to ensure that IA holds. We obtain the following result.
\begin{result}\label{rslt:relax_SE}
	Under Assumptions \ref{assn:id_finiteE} and \ref{assn:relax_SE}, IA holds and $\E(\mu_i', \alpha_i)$ is identified.
\end{result}
We note  that if there is contemporaneous endogeneity and if the control variables $v_{it}$ are identified by a cross-section regression of $x_{it}$ on $z_{it}$, the approach developed in this subsection can be applied to cases where $z_{it}$ is not strictly exogenous but sequentially exogenous, in particular if it follows a similar Markovian structure. 
The instrument would then be $z_{i1}$ instead of $Z_i$. We use this approach in Section \ref{sec:illus} and in a structural model developed in Section \ref{sec:app_struct_model} of the Appendix.
\begin{note}
	While the model exposed above allows for the regressor to be correlated with past disturbances, the assumptions are not compatible with a lagged dependent variable as regressor in the outcome equation (\ref{eq:model}). Indeed writing  $x_{i \, t+1} \, = \, m_t(x_{it}) + \, \eta_{i \, t+1}$ implicitly imposes a homogeneous dependence on the past value that is not consistent with the specification $y_{i \, t+1} = \mu_i y_{it} + \epsilon_{it+1}$ and a fixed-effect approach.  
\end{note}

\begin{note}
	The control variables introduced in Assumption \ref{assn:idCFA} may come from various types of first-stage equations: this is leveraged here to obtain identification. It can be done in other panel models with random coefficients similar to Model (\ref{eq:model}), where other types of violations of the strict exogeneity condition occur. For example, in a previous version of this paper, we applied the idea to a panel CRC model with sample selection where the propensity score is used as the control variable. This will now appear in a separate paper.
\end{note}

\section{Identification Results}\label{sec:identification} 

\subsection{Main Result}\label{sec:maindid}

Under Assumptions \ref{assn:idCFA}, \ref{assn:id_finiteE} and  \ref{assn:id_M_GLn}, we obtain 
\begin{equation}\label{eq:closedform_g}
g(V_i) = \mathcal{M}(V_i)^{-1} \E(M_i \dot{y}_i | V_i), \ \mathbb{P}_{V_i} \ \text{a.s}.
\end{equation}
Under Assumption \ref{assn:id_finiteE},  the matrix $Q_i$ is well-defined with probability $1$. 
Equation (\ref{eq:model_id_mu}) implies 
\begin{align}
	\mu_i  \, & =  \, Q_i \dot{y}_i \, - \, Q_i g(V_i) \, - \, \, Q_i  \dot{u}_i, \label{eq:idMiou}, \\
	 \E( \mu_i ) \, & = \, \E (Q_i \dot{y}_i   -  \, Q_i g(V_i) ), \label{eq:Miou}
\end{align}
where the second line holds since by the law of iterated expectations and Assumption \ref{assn:id_finiteE}, $\E(Q_i \dot{u}_i)=0$. Equation (\ref{eq:Miou}) identifies $\E(\mu)$ since all elements on the right hand side are observed. 
\begin{result}
Under Assumptions \ref{assn:idCFA}, \ref{assn:id_finiteE} and \ref{assn:id_M_GLn}, the average effect $\E(\mu_i)$ is identified and $\E( \mu_i )  =  \E (Q_i \dot{y}_i   -  \, Q_i g(V_i) )$.
\end{result}

As mentioned in Section \ref{sec:assn_CRC_matrix},  the conditions $\E(||Q_i \dot{u}_{i}||) < \infty$ and $\E(||Q_i g(V_i)||) < \infty$ of Assumption \ref{assn:id_finiteE} may not hold. In this case, one can instead focus on $\E( \mu | \delta) := \E(\mu_i | \det(\dot{X}_i' \dot{X}_i) > \delta)$.
Define $\delta_i = \1(\det(\dot{X}_i' \dot{X}_i) > \delta_0)$ and  $Q_i^{\delta} = \delta_i Q_i $. Then
\begin{equation}\label{eq:Miou_delta}
 \E( \mu | \delta) \, = \, \frac {\E (\delta_i Q_i \dot{y}_i   -  \, \delta_i Q_i g(V_i))}{\mathbb{P}(\det(\dot{X}_i' \dot{X}_i) > \delta_0)} \, = \, \frac {\E ( Q_i^\delta \dot{y}_i   -  \, Q_i^\delta g(V_i) )}{\mathbb{P}(\det(\dot{X}_i' \dot{X}_i) > \delta_0)},
\end{equation}
which identifies $\E( \mu | \delta) $ as all the terms on the right hand side of (\ref{eq:Miou_delta}) are identified. The considered expectations exist as long as $\E(||Q_i^\delta \dot{u}_{i}||) < \infty$ and $\E(||Q_i^\delta g(V_i)||) < \infty$, which holds under standard conditions: for instance if $X_i$ has bounded support, $\E(||\dot{u}_i||) < \infty$ and $\E(||g(V_i)||) < \infty$. Bounded support can be replaced with finite moments conditions.

It remains to identify $\E(\alpha_i)$, which we obtain using the variables in period $1$. We multiply (\ref{eq:idMiou}) with $x_{i1}$ and subtract $y_{i1}$, obtaining $y_{i1} - x_{i1}' \, \mu_i \,  = \, y_{i1} \, - \, x_{i1}' \left[ Q_i \dot{y}_i \, - \, Q_i g(V_i) \, - \,  Q_i  \dot{u}_i \right].$
Since $y_{i1} - x_{i1}' \, \mu_i = \alpha_i + \epsilon_{it}$ where $\E(\epsilon_{it}) \, = \, 0$, we obtain the following identifying equation for $\E(\alpha_i)$,
\vspace{-5pt}
$$\E(\alpha_i) = \E(y_{i1} \, - \, x_{i1}' \left[ Q_i \dot{y}_i \, - \, Q_i g(V_i)  \right] ).$$

\subsection{Average Effect of an Exogenous Intervention}\label{sec:exogIntervention}

Consider a policy intervention on $x_{it}$ for each unit $i$ in a given period $t$. The average effect of this exterior intervention is an object of interest and \cite{bp03} studies its identifiability in different models when the change in covariates is exogenous, i.e, independent of the unobservable error terms. The unobservables in the CRC model  are $(\mu_i, \alpha_i, (\epsilon_{it})_{t \leq T})$, and if the exogenous shift is  a variation $\Delta_{it}$ independent of $(\alpha_i, \mu_i, (\epsilon_{it})_{t \leq T})$, the average impact of the policy is $\E(\mu_i) \E(\Delta_{it})$ which is identified. However some policy interventions may impose a dependence between $\Delta_{it}$ is correlated with $x_{it}$, hence $\Delta_{it}$ will be correlated with $(\mu_i, \alpha_i)$ while being exogenous in the sense that it is independent of $(\epsilon_{it})_{t \leq T}$.
Consider an exogenous intervention that shifts $x_{it}$ to $l(x_{it})$, i.e, $\Delta_{it} = l(x_{it}) - x_{it}$. The average outcome after this intervention is $\E(l(x_{it})' \mu_i + \alpha_i + \epsilon_{it})$ and depends on the joint distribution of $(\mu_i,x_{it})$ where $\mu_i$ is unobservable and this joint distribution is unrestricted. Using Equation (\ref{eq:idMiou}) expressing $\mu_i$ as a function of the primitives,
the change in expected outcome is 
\vspace{-7pt}
\begin{align*}
\E(l(x_{it})' \mu_i + \alpha_i + \epsilon_{it} - [x_{it}'\mu_i + \alpha_i + \epsilon_{it}]) &\, = \, \E([l(x_{it}) - x_{it}]'\mu_i ), \\
&  \, = \, \E ( [l(x_{it}) - x_{it} ]' [Q_i y_i  -  \, Q_i g(V_i) - Q_i \dot{u}_i]),\\
&  \, = \, \E ( [l(x_{it}) - x_{it} ]' [Q_i y_i  -  \, Q_i g(V_i)] ),
\end{align*}
where the second equality holds by exogeneity of the change in regressors. All elements in the last expectation are identified,  thus so is the average change in outcome.

\subsection{Identifying Higher-Order Properties of $\mu_i$}\label{sec:mu_distrib}

The focus of this paper being on allowing for time-varying endogeneity in correlated random coefficient panel models, the parameter of interest is the simplest one, the average effect $\E(\mu_i)$. However more properties of the unobserved heterogeneity can be obtained. In particular, by $\E(\dot{u}_i |X_i)=0$ and (\ref{eq:idMiou})  we have $\E( \mu_i | X_i ) \, = \, \E (Q_i \dot{y}_i   -  \, Q_i g(V_i) |X_i)$.

In a model with strict exogeneity, \cite{ab12} extend the method in \cite{ch92} to identify the variance matrix and the distribution of $(\alpha_i,\mu_i)$ conditional on $X_i$ under various restrictions on the time-dependence of $\epsilon_{it}$ conditional on $X_i$ and on the joint distribution of $(\epsilon_i, \alpha_i, \mu_i)$ conditional on $X_i$. The argument first identifies the common parameters and subtracting the common part from the outcome variables, higher order moments of $\mu_i$ are separated from those of $\epsilon_{i}$ using the above-mentioned restrictions.
We note here that their argument can be combined with the assumptions made in the present paper so as to allow for endogeneity of the regressors. Indeed   $g(V_i)$ being recovered using the method described in Section \ref{sec:maindid}, the analysis of \cite{ab12} can be conducted on $y_i - g(V_i) \, = \, X_i \mu_i + u_{i}$ which takes the same form as in their paper. We refer to the paper for more details on the procedure to recover these moments.

\section{Estimation}\label{sec:estim}

The  proof of identification of $\E(\mu)$ is constructive as we obtained closed form expressions (\ref{eq:closedform_g}) and (\ref{eq:Miou}). The estimator we suggest follows the identification steps, replacing population moments with their sample analogs. Estimation is thus a multi-steps procedure. 
First, for all $(i,t)$, $v_{it}$ is estimated as $\hat{v}_{it} = \hat{C}_t(X_i, Z_i)$ for $\hat{C}_t$ an estimator of the function $C_t$.
Second, the conditional expectation functions $\mathcal{M}(V) = \E( \, M_i | V_i = V)$ and $k(V) = \E(M_i \dot{y}_i |  V_i = V)$ are estimated nonparametrically using the generated values $\hat{V}_i = (\hat{v}_{it})_{t \leq T}$ as regressors and the function $g = \mathcal{M}^{-1} k$ is estimated by plug in of the estimators $\hat{\mathcal{M}}(V)$ and $\hat{k}(V)$. Finally, the estimator for $\E(\mu | \delta)$ will be a sample analog of Equation (\ref{eq:Miou}), plugging in the estimator of $g$ and $V$.

The asymptotic properties of this estimator will depend on the definition of the control variables, that is, on $C_t$. We  thus choose focus on the following model
\vspace{-5pt}
\begin{align}
	& y_{it}  = \, x_{it}^{ex \prime} \, \mu_i^{ex} + x_{it}^{en \, \prime} \, \mu_i^{en}  + \alpha_i + \epsilon_{it}, \label{eq:model_for_estim}\\
	& x_{it}^{en}  = \, b_t(x_{it}^{ex}, z_{it}) + v_{it}, \qquad \E( v_{it} | x_{it}^{ex},z_{it}) = 0, \label{eq:modelCV_for_estim}
\end{align}
\vspace{-5pt}
where $x_{it}^{ex} \in \R^{d_1}$, $x_{it}^{en} \in \R^{d_2}$, $z_{it} \in \R^{d_z}$, and where Assumption \ref{assn:idCFA} holds. The regressors $x^{ex}$ are exogenous while $x^{en}$ can be endogenous. The control variables in this model are the residuals of the nonparametric regression of the endogenous regressors on the exogenous regressors and the instruments and will be estimated as residuals of the nonparametric regression estimation of $x_{it}^{en}$. All estimators of the nonparametric regressions will be series estimators.  As highlighted in Section \ref{sec:assn_CRC_matrix}, the condition $\E(||Q_i||^2) < \infty$ may not hold thus out of caution we use (\ref{eq:Miou_delta}) to estimate $\E(\mu | \delta)$, where we fix $\delta_0$ and define $\delta_i = \1(\det(\dot{X}_i' \dot{X}_i) > \delta_0)$.  We proceed   with an explicit definition of the estimators and a stepwise proof of asymptotic normality of $\hat{\mu}$. All proofs are in the Appendix, Section \ref{sec:appConsistency}.

\subsection{Definition of the estimators}\label{sec:def_estim}

We introduce some notations. For a vector $a \in \R^p$, $||a||$ is its Euclidean norm. We also denote by $||.||_F$ the Frobenius norm (the canonical norm) in the space of matrices $\mathcal{M}_p(\R)$, and $||.||_2$ the matrix norm induced by $||.||$ on $\R^p$ (the spectral norm). By an abuse of notation, in Section \ref{sec:estim} we drop the index $i$ for random variables whenever it does not hinder clarity.
For $g$ a vector of functions of $x \in \mathcal{S}_x \subset R^{k}$, $||g||_{\infty}$ is $ \sup_{\mathcal{S}_x} ||g(.)||$. For $l = (l_1,\, .. \, , l_{k}) \in \mathbb{N}^{k}$, we define $|l| = \sum_{j = 1}^{k} l_j $, and the partial derivative $\partial^{l} g(x) = \partial^{|l|} g(x) / \partial^{l_1}x_1 ... \partial^{l_{k}}$. We will use the norm $|g|_d = \max_{|l| \leq d} \sup_{x \in \mathcal{S}_x} || \partial^{l} g(x) ||$ when $g$ is $d$ times differentiable.
We denote by $\partial g(x)$ the Jacobian matrix $( \partial g(x) / \partial x_1, ... ,  \partial g(x) / \partial x_k )$. 
For a sequence $(c_n)_{n \in \N} \in \R^{\mathbb{N}}$, the notation $c_n \to 0$ should be understood as $c_n \to_{n \to \infty} 0$. For a random variable $x$, $f_x$ denotes its density.


By (\ref{eq:modelCV_for_estim}), we have $v_{it} = x_{it}^{en} - \E(x_{it}^{en} | x_{it}^{ex}, z_{it})$. 
We write $\xi_{it} = (x_{it}^{ex},z_{it})$. Consider the $L \times 1$ vector of approximating functions $r^L(\xi_t)=(r_{1L}(\xi_t),.. \, ,r_{LL}(\xi_t))'$ and $r_{it} = r^L(\xi_{it})$. We define the series estimators of the regression function $\E(x_{it}^{en} | \xi_{it}=\xi_t) = b_t(\xi_t)$ to be $ \hat{\beta}_t'  r^L(\xi_t)$ where $\hat{\beta}_t$ is $L \times d_2$, and
\begin{equation}\label{eq:estimV}
	\hat{\beta}_t \, =  \, (R_t R_t')^{-1} \sum_{i}  r^L(\xi_{it}) x_{it}^{en \, \prime} \, = \, (R_t R_t')^{-1} R_t X_t^{en \, \prime},
\end{equation} 
where $R_t  \, = \,  (r_{1t},.. \, ,r_{nt})$ is $L \times n$ and $X_t^{en} \, = \, (x_{1t}^{en},.. \, ,x_{nt}^{en})$ is $d_2 \times n$.
Write $r_{it} = r^L(\xi_{it})$ and $\hat{b}_{it} =  \hat{\beta}_t' \, r_{it}$, define $b_{it} = b_t(\xi_{it})$, $\hat{b}_t = \hat{\beta}_t'  r^L$ and the residuals $\tilde{v}_{it} \, = \, x_{it}^{en} - \hat{b}_{it} $ with $\tilde{V}_i = (\tilde{v}_{i1}',\, .. \, ,\tilde{v}_{iT}' )'$.
Later, the support $\mathcal{S}_V$ of $V$ will be assumed bounded. However, the values obtained using the estimated residuals might not be in $\mathcal{S}_V$: it will be convenient for the asymptotic analysis to introduce a transformation $\tau$ of the generated variables such that their transformed values lie in $\mathcal{S}_V$. Specifically, we assume that the support of $v_t$ is of the form $\bigtimes_{d=1}^{d_2} \, [\underline{v}_{td},\bar{v}_{td}]$ and that the support of $V$ is $\mathcal{S}_V = \bigtimes_{d \leq d_2, \, t \leq T} \,  [\underline{v}_{td},\bar{v}_{td}]$.
We define $\tau$ such that for $V = (v_1',\, .. \, ,v_T')' \in \R^{Td_2}$, then $\tau(V)$ is the projection of $V$ onto $\mathcal{S}_V$. 
The exact definition of $\tau$ is given in the Appendix \ref{ref:trimmingapp} by (\ref{eq:def_tau_consist}). Our estimator for $V_i$ will then be $\hat{V}_i = \tau(\tilde{V}_i)$.  Note that for all draws of $V_i$, $\tau(V_i)=V_i$ and $|| \hat{V}_i - V_i || \leq ||\tilde{V}_i - V_i||$.  We also assume that $\mathcal{S}_{\xi t}$ is of the form $\bigtimes_{d=1}^{d_\xi} \, [\underline{\xi}_{td};\bar{\xi}_{td}]$, where we use the notation $\xi_{td}$ for the d$^{th}$ component of $\xi_t$.  

Let $p^K(V)=(p_{1K}(V),.. \, ,p_{KK}(V))'$ denote a $K \times 1$ vector of approximating functions, $p_i = p^K(V_i)$ and $\hat{p}_i = p^K(\hat{V}_i)$. An estimator of $h^W(V) =  \E(w_i | V_i = V)$ for a generic scalar random variable $w_i$ using the generated $\hat{V}$ is $p^K(V)' \, \hat{\pi}^W$ where $\hat{\pi}^W$ is a vector of size $K$ given by
\begin{equation}\label{eq:def_2stepestim}
	\hat{\pi}^W \, =  \, (\hat{P} \hat{P}')^{-1} \sum_{i}  p^K(\hat{V}_i) \, W_i' \, = \, (\hat{P} \hat{P}')^{-1} \hat{P} W,
\end{equation}
where $\hat{P}  \, = \,  ( \hat{p}_1,.. \, , \hat{p}_n)$ is $K \times n$ and $W \, = \, (w_1,.. \, ,w_n)'$ is a vector of size $ n$.
Using this general definition, we construct component by component estimators $\hat{\mathcal{M}}$ and $\hat{k}$ for the matrix and vector valued functions $\mathcal{M}$ and $k$. We obtain $p^K(V)' \, \hat{\pi}^{M,st}$ an estimator of the $(s,t)$ component of the matrix $\mathcal{M}$, taking $w_i$ to be $(M_{i})_{s,t}$. Similarly, an estimator of the $s$th component of $k$ will be $p^K(V)' \, \hat{\pi}^{k,s}$, taking $w_i = (M_i \dot{y}_i)_s$.
By (\ref{eq:closedform_g}) and (\ref{eq:Miou_delta}),
a plug-in estimator of $g$ is 
$\hat{g}(V) \, = \,  \hat{\mathcal{M}}(V)^{-1} \, \hat{k}(V)$
and a plug-in estimator of  $\E(\mu|\delta)$ is
$$\hat{\mu}  \, = \, \frac{\sum_{i=1}^n \delta_i Q_i \, [ \dot{y}_i - \, \hat{g}(\hat{V_i}) ]}{\sum_{i=1}^n \delta_i} \, = \, \frac{\sum_{i=1}^n Q_i^{\delta} \, [ \dot{y}_i - \, \hat{g}(\hat{V_i}) ]}{\sum_{i=1}^n \delta_i}.$$

The multi-step estimation procedure only uses closed form expressions: its ease of implementation comes with a layered asymptotic analysis as each step needs to be analyzed one by one to eventually obtain the asymptotic behavior of $\hat{\mu}$. This type of asymptotic analysis is the subject of a wide literature on nonparametric and semiparametric estimation with generated covariates. Before laying out the main results of our asymptotic analysis, we give here a brief overview of this literature.
Papers studying asymptotic normality of semiparametric estimators, such as \cite{n94}, \cite{clvk03}, \cite{ac03} and \cite{hl10} among many other references, have a level of generality which encompasses the case where the regressors are themselves estimated. However, the conditions given in these papers are ``high-level'' conditions and are not easily applied to the composition of nonparametrically estimated infinite-dimensional nuisance parameters.
Examples of asymptotic derivations in specific models with generated regressors are papers already cited such as \cite{npv99}, \cite{dnv03}, and \cite{in09}, as the use of a nonparametric control function approach naturally suggests an estimator with generated covariates. Others are, e.g, \cite{ap93}, \cite{bp04}, \cite{n09} and \cite{ejl16}.
Moreover, recent contributions have focused on obtaining general asymptotic results for such semiparametric estimators.
Among important recent contributions, \cite{hr13,hr19} derive in the spirit of \cite{n94} a general formula of the asymptotic variance of semiparametric estimators using generated regressors. 
However they do not provide results on how to obtain asymptotic normality for particular classes of estimators. For estimators with generated regressors depending on a nonparametrically estimated function, this type of analysis can be found for instance in \cite{ejl14}, \cite{mrs16} and \cite{hlr18}. 
\cite{ejl14} obtain a uniform expansion of a weighted sample average of residuals obtained from kernel-estimated nonparametric regressions with generated covariates, which can  then be used to prove asymptotic normality of a class of semiparametric estimators.
\cite{mrs16} study the asymptotic normality of a general class of semiparametric GMM estimators depending on a nonparametric nuisance parameter, also constructed with generated covariates. Our estimator of the APE  $\hat{\mu}$ belongs to this class of estimators, although of a simpler form since it has a closed-form expression. Moreover we use series to construct the nonparametric estimates while the infinite dimensional nuisance parameter in \cite{mrs16} is a conditional expectation estimated with local polynomial estimator and they do not specify an estimator for the generated covariates. 
Estimators in \cite{hlr18} have a structure closer to that of $\hat{\mu}$: they study nonparametric two-step sieve M estimators, but focus on known functionals. They show asymptotic normality of their estimator when standardized by a finite sample variance and give a practical estimator of this variance. They do not however provide an explicit formula of the asymptotic variance.
The estimator we analyze in this section is instead an estimated functional of the two-step nonparametric estimators. Using a different type of proof techniques with lower level conditions on the primitives of a more specific class of models, we show asymptotic normality and obtain the asymptotic variance of a generic class of estimators to which ours belongs.
See, e.g, \cite{mrs16} for a literature review on semiparametric estimation with generated covariates and explanation on the specificity of this type of estimation.

\subsection{Consistency}\label{sec:consistency}

Consistency of $\hat{\mu}$ is obtained under the following assumptions. 

\begin{assumption}\label{assn:NPV1stStep0Dsty}
	There exists $\alpha_1 > 0$ and $\gamma_1 > 0$ such that $\sqrt{L/n} \  L^{\alpha_1 + 1} \xrightarrow[n \to \infty]{} 0$ and $\forall \, t \leq T$,
	\begin{enumerate}[noitemsep,nolistsep]
		\item $(x_{it}^{en},\xi_{it})$ is i.i.d over $i$, continuously distributed and $\Var(x_t^{en}|\xi_t)$ is bounded,
		
		\item $r^L(.)$ is the power series basis, and $\forall \, \xi_t \in \mathcal{S}_{\xi t}, \ f_{\xi t}(\xi_t) \, \geq \, \Pi_{d=1}^{d_\xi} (\xi_{dt} - \underline{\xi}_{dt})^{\alpha_1}(\bar{\xi}_{dt} - \xi_{dt})^{\alpha_1}$,
		
		\item \label{assn:NPV1stStep0Dsty_approx}  There exists $\beta_t^L$ such that $\sup_{\mathcal{S}_{\xi t}} || b_t(\xi_t) - \beta_t^{L \prime} \,  r^L(\xi_t)|| \leq C L^{- \gamma_1}$.
	\end{enumerate}
\end{assumption}

Assumption \ref{assn:NPV1stStep0Dsty} allows us to obtain a convergence rate for the sample mean-squared error of the generated covariates $\hat{V}_i$ following results in \cite{npv99}. Note that we use the power series basis so as to allow for the density of the regressors to be $0$ on the boundary of their support. A version of these results with arbitrary sieve basis but for a density bounded away from $0$ can be found in Appendix \ref{sec:appConsistency}.
If for instance $b_t$ is continuously differentiable up to order $p$, writing $d_{\xi} = d_1 + d_z$, then Assumption \ref{assn:NPV1stStep0Dsty} (\ref{assn:NPV1stStep0Dsty_approx}) holds with $\gamma_1 \, = \, p / d_{\xi}$ for different choices of sieve basis.

\begin{assumption}\label{assn:IN2stStep0Dsty_model}
	\
	\begin{enumerate}[noitemsep,nolistsep]
		\item $\E(\dot{u} |  X, Z)$ and $\Var(|| \dot{u}|| \, | X, Z)$ are bounded on $\mathcal{S}_{X,Z}$,
		
		\item  \label{assn:IN2stStep0Dsty_model_lipschitz} $\mathcal{M}$ and $k$ are Lipschitz and $g$ is bounded on $\mathcal{S}_{V}$,
		
		\item  \label{assn:IN2stStep0Dsty_model_density} $p^K(.)$ is the power series basis, and $\forall V \in \mathcal{S}_{V}, \ f_{V}(V) \, \geq \, \Pi_{d \leq k_2, \, t \leq T} \, (v_{t,d} - \underline{v}_{td})^{\alpha_2}(\bar{v}_{td} - v_{t,d})^{\alpha_2}$,
		
		\item \label{assn:IN2stStep0Dsty_model_approx} There exists $\gamma_2$ and $(\pi_{M,st}^{K})_{s,t}$ and $(\pi_{k,t}^{K})_{s,t}$ such that $\sup_{\mathcal{S}_{V}} | \mathcal{M}_{st}(V) -  \, \pi_{M,st}^{K \prime} p^K(V) | \leq C K^{- \gamma_2}$ and $\sup_{\mathcal{S}_{V}} | k_t(V) -  \, \pi_{k,t}^{K \prime} p^K(V) | \leq C K^{- \gamma_2}$ for all $s,t \leq T-1$, 
		
		\item $K^{\alpha_2 + 7/2} \Delta_n \to_{n \to \infty} 0$ and $  K^{\alpha_2 + 3/2} / \sqrt{n} \, \to_{n \to \infty} 0$. 
	\end{enumerate}
\end{assumption}

The second step of the proof of consistency is the derivation of a convergence rate for $\hat{\mathcal{M}}$ and $\hat{k}$ in sup norm and mean square norms, under Assumptions \ref{assn:IN2stStep0Dsty_model}. Convergence rates of nonparametric estimators of conditional expectations $\E(w_i| V_i=V) = h^{W}(V)$ using generated regressors are also derived in \cite{npv99}. However, they impose an orthogonality condition which by definition does not hold for our specific choices of $w_i$ and this rules out a direct application of their asymptotic results.
Specifically, writing $  e^W = w - h^W(V)$,
an additional assumption required to apply directly \cite{npv99} is $\E(e^W |X^{ex},V,Z)= 0$, that is, $e^W$ must be conditionally mean-independent of all variables involved in the first step, i.e, in the construction of the control variables.
This condition does not hold when $w_i$ is either a component of the matrix $ M= I - \dot{X} (\dot{X}' \dot{X})^{-1} \dot{X}' $ or of the vector $M \dot{y}$, because 
$\E(M|X^{ex},X^{en}, Z)= M \neq \E(M|V)=\mathcal{M}(V)$ and $\E(M \dot{y}|X^{ex},X^{en}, Z) =M g(V) + M \E(\dot{u}|X^{ex},X^{en}, Z) \neq  \E(M \dot{y}|V) = k(V) = \mathcal{M}(V) \, g(V)$.
To account for the difference $ \E(e^W|X, V) \neq \E(e^W|V)$, we write
\vspace{-5pt}
\begin{equation}\label{eq:reg_GalCase}
	w  = h^W(V) + \rho^W(X,Z) + e^{W*}, \ \E(e^{W*} | X,Z)  = 0,
\end{equation}

\vspace{-5pt}
where $\rho^W(X,Z) =  \E(e^W |X,Z) = \E(w | X,Z) - \E(w|V)$. For our choices of $w$, $\rho \neq 0$ and this will add an extra term to the convergence rate of the two-step estimator as well as, as will be clear in a later part of the paper, the asymptotic variance of a linear functional of this estimator.
This difference has been documented for instance in \cite{hr13} and \cite{mrs16}. 
\begin{assumption}\label{assn:cv_g}
	$\mathcal{M}$ and $g$ are continuous on $ \mathcal{S}_V$, $\mathcal{S}_V$ is a compact set and $\mathcal{M}(V)$ is invertible for all values $V \in \mathcal{S}_V$.
\end{assumption}
Recall that $\hat{g}(V) \, = \,  \hat{\mathcal{M}}(V)^{-1} \, \hat{k}(V).$
We use Assumption \ref{assn:cv_g} to obtain the rate of convergence of $\hat{g}(.)$ with continuity arguments. 
Note that while Assumption \ref{assn:id_M_GLn} requires the matrix to be invertible only $\mathbb{P}_V \, \text{a.s}$, under Assumption \ref{assn:cv_g} $\mathcal{M}(V)$ is invertible for all values of $V$ on the support $\mathcal{S}_V$. 
\begin{assumption}\label{assn:finiteE}
	Assume  $\E(||Q^{\delta}||) < \infty$ and $\E(||Q^{\delta} \, \dot{y}||) < \infty$.
\end{assumption}
Under the above assumptions, and defining $\gamma_n =  b_1(K) (K/n + K^{-2 \gamma_2} +  \Delta_n^2 b_2(K)^2)^{1/2}$, we obtain the following consistency result.
\begin{result}\label{rslt:consistency}
	Suppose Assumptions \ref{assn:NPV1stStep0Dsty}, \ref{assn:IN2stStep0Dsty_model}, \ref{assn:cv_g}, and \ref{assn:finiteE} hold. Assume also $\gamma_n \to 0$, $a_1(L) \Delta_n  \to 0$ and that $g$ is continuously differentiable on $\mathcal{S}_V$. Then $ \hat{\mu}\, \to_{\mathbb{P}} \, \E (\mu | \delta).$
\end{result}

\subsection{Asymptotic normality}\label{sec:asymptoticnorm}

\subsubsection{Definitions}\label{sec:trimming}

Recall that $\hat{V}_i = \tau(\tilde{V}_i)$, where $\tilde{V}_i =(\tilde{v}_{it})_{t \leq T}$ is the vector of residuals from the sieve regression of $x_{it}^{en}$ on $\xi_{it} = (x_{it}^{ex}, z_{it})$ and where $\tau$ projects onto $\mathcal{S}_V = \bigtimes_{d \leq d_2, \, t \leq T} \,  [\underline{v}_{td};\bar{v}_{td}]$.
The proof of asymptotic normality  will require $\tau$ to be twice differentiable. Thus we change the definition of $\tau$ from (\ref{eq:def_tau_consist}) to (\ref{eq:def_tau_asnorm}), defined in Appendix \ref{ref:trimmingapp}.
The new trimming function is twice differentiable and  projects onto a bounded superset of $\mathcal{S}_V$. For $\varsigma>0$, this superset is $\mathcal{S}_V^\varsigma := \bigtimes_{d \leq d_2, \, t \leq T} \, [\underline{v}_{td} - \varsigma ;\bar{v}_{td} + \varsigma]$. We will refer to $ \mathcal{S}_V^\varsigma $ as the ``extended support''. 
On this extended support, we  use extensions of the various regression functions. 
To be precise, let $p \in \N$ and a  twice-continuously differentiable function $m : \mathcal{S}_V \to \R^p$.
We define an extension of $m$ as $m^\varsigma : \mathcal{S}_V^\varsigma \to \R^p$ such that for all $V$ in $\mathcal{S}_V$, $m^\varsigma(V) = m(V)$, and $m^\varsigma$ is twice continuously differentiable on the extended support $ \mathcal{S}_V^\varsigma$.
We previously used, for $g$ a function of the variable $V$, the norm $|g|_d = \max_{|l| \leq d} \sup_{V \in \mathcal{S}_V} || \partial^{l} g(.) ||$. A corresponding norm for the extended functions  will change the supremum to a supremum over the extended support, i.e., $|g|_d^\varsigma = \max_{|l| \leq d} \sup_{V \in \mathcal{S}_V^\varsigma} || \partial^{l} g(.) ||$.
We now use bases of functions $p^K(.)$ defined on the extended support.
We point out here that unlike in Section \ref{sec:consistency}, we will not allow for the density of the regressors to go to zero on the boundaries of their support. We are thus silent on the choice of the basis.

\subsubsection{Linearization}\label{sec:lineariz}

To study asymptotic normality of $\sqrt{n} (\hat{\mu} - \E(\mu))$, we define
$
\hat{\mu}^\delta = \sum_{i=1}^n Q_i^{\delta} \, [ \dot{y}_i - \, \hat{g}(\hat{V_i}) ] /n,
$
then $\hat{\mu}  = \hat{\mu}^\delta / (\sum_{i=1}^n \delta_i/n)$. We thus focus first on $\hat{\mu}^\delta - \E(\mu \delta)$.  Define $W_i = (X_i, V_i, Z_i, u_i, \mu_i, \alpha_i)$ the vector of primitive variables where we now  write the variables as column vectors, e.g $V_i \, = \, (v_{i1}',... \, , v_{iT}')'$. Define also $\mathcal{G} \,  = \, ((b_t)_{t \leq T}, k, \mathcal{M})$  a vector of generic functions with $b_t : \mathcal{S}_{\xi t} \mapsto \R^{d_2}$, $k : \mathcal{S}_{V}^\varsigma \mapsto \R^{T-1}$ and $\mathcal{M} : \mathcal{S}_{V}^\varsigma \mapsto \ \mathcal{M}_{T-1}(\R)$. We now write $ \mathcal{G}_0 \, = \, ((b_{0t})_{t \leq T}, k_0, \mathcal{M}_0), $ for the true values of these functions, that is, for the nonparametric primitives of the model. Note that the functions we consider here are functions on the extended support. We dropped the exponent $\varsigma$ and will display it to avoid confusion whenever necessary.
We decompose
\begin{align}
	\sqrt{n} & (\hat{\mu}^\delta - \E(\mu_i \delta_i)) \,  = \  \frac{1}{\sqrt{n}}  \sum_{i=1}^n  Q_i^\delta \, [ \dot{y}_i - \ \hat{g}(\hat{V_i}) ] - \E(\mu_i \delta_i), \nonumber \\
	= \, & \frac{1}{\sqrt{n}}  \left[ \sum_{i=1}^n [ \delta_i \mu_i - \E(\mu_i \delta_i)]  + \, \sum_{i=1}^n  Q_i^\delta  \dot{u}_i + \, \frac{1}{\sqrt{n}} \sum_{i=1}^n [ Q_i^\delta g_0(V_i) - \E(Q^\delta g_0(V))]  -  \sum_{i=1}^n [ Q_i^\delta \hat{g}(\hat{V_i}) - \E(Q^\delta g_0(V))] \right] , \nonumber \\
	= \, & \frac{1}{\sqrt{n}}  \sum_{i=1}^n  [\delta_i \mu_i - \E(\mu_i \delta_i)]  + \, \frac{1}{\sqrt{n}}  \sum_{i=1}^n  Q_i^\delta  \dot{u}_i \, - \, \sqrt{n}  \big[ \mathcal{X}_n(\hat{\mathcal{G}}) -  \mathcal{X}_n(\mathcal{G}_0) \big], \label{eq:dvlptasNorm}
\end{align}
where we define
\begin{align*}
	\chi(W_i, \mathcal{G})    & =   Q_i^\delta \, \mathcal{M}\left( \tau \left[ (x_{it}^{en} - b_t(\xi_{it}))_{t \leq T} \right] \right)^{-1} \, k\left( \tau \left[ (x_{it}^{en} - b_t(\xi_{it}))_{t \leq T} \right] \right),\\
	\mathcal{X}_n(\mathcal{G})   & = \frac{1}{n} \sum_{i=1}^n [\chi(W_i, \mathcal{G}) - \E(\chi(W_i, \mathcal{G}_0))] \text{ and } \mathcal{X}(\mathcal{G}) = \E(\chi(W_i, \mathcal{G})) - \E(\chi(W_i, \mathcal{G}_0)),
\end{align*}
where $\tau$ 
ensures that the argument of $\mathcal{M}$ and $k$ lies in $\mathcal{S}_V^\varsigma$. 
Note that $\mathcal{X}(\mathcal{G}_0) = 0$.   
The use of generated covariates 
in place of the true value of the variables has a twofold impact on semiparametric estimators such as $\hat{\mu}$. First the nuisance parameters $\mathcal{M}$ and $k$ are estimated using the generated values. Second the estimators $\hat{\mathcal{M}}$ and $\hat{k}$ are evaluated at the generated values when plugged in in the sample average that defines $\hat{\mu}$. The dependence of $\mathcal{X}$ on $(b_t)_{t \leq T}$ highlights the latter aspect.

The two first terms in Equation (\ref{eq:dvlptasNorm}) are normalized sums of i.i.d random variables. Their asymptotic normality can be established by a standard CLT argument.
We focus on the last term, $\sqrt{n} \big[ \mathcal{X}_n(\hat{\mathcal{G}}) -  \mathcal{X}_n(\mathcal{G}_0) \big]$, which includes a composition of estimated infinite dimensional nuisance parameters. Typically, under restrictions detailed  in Appendix \ref{sec:lineq_app} its asymptotic distribution will be that of 
$ \sqrt{n} \,  \mathcal{X}_0^{(G)} [\hat{\mathcal{G}} - \mathcal{G}_0]$ where   $\mathcal{X}_0^{(G)}$ is the pathwise derivative of $\mathcal{X}$ at $\mathcal{G}_0$ and it is evaluated at $\hat{\mathcal{G}} - \mathcal{G}_0$. It is a linear functional of a vector of nonparametric estimators.
The structure of our asymptotic analysis is thus as follows. First we derive the asymptotic distribution of  $ \sqrt{n} \,  \mathcal{X}_0^{(G)} [\hat{\mathcal{G}} - \mathcal{G}_0]$ and obtain its influence function, and second we show that  $\sqrt{n} \big[ \mathcal{X}_n(\hat{\mathcal{G}}) -  \mathcal{X}_n(\mathcal{G}_0) \big]$ is asymptotically equivalent to its linearization. Finally these results are used together with (\ref{eq:dvlptasNorm}) to obtain the influence function of $\hat{\mu}^\delta $ and thus asymptotic normality of $\hat{\mu}$.

Focusing on the linearized term, the pathwise derivative applied to the estimators can be decomposed as the sum of $T + 2$ partial pathwise derivatives applied to each nonparametrically estimated function. We denote with  $v_t$  the t$^{th}$ component of $V$ and with $\frac{\partial g}{\partial v_t}(V_i)$  the Jacobian matrix of size $(T-1) \times d_2$. 
We define $\mathcal{X}_0^{(k)}[\tilde{k}]$ the partial pathwise derivative of $\mathcal{X}$ with respect to $k$ at $\mathcal{G}_0$ and evaluated at $\tilde{k}$, and similarly $\mathcal{X}_0^{(M)}[\tilde{\mathcal{M}}]$ and $(\mathcal{X}_0^{(bt)}[\tilde{b}_t])_{t \leq T}$. Assuming one can interchange expectation and differentiation, we follow \cite{mrs16} and write these derivatives as
\begin{align}
	\mathcal{X}_0^{(G)} [\mathcal{G} - \mathcal{G}_0] \,  = &  \, \mathcal{X}_0^{(k)} [k - k_0] + \mathcal{X}_0^{(M)}[\mathcal{M} - \mathcal{M}_0 ] +   \sum_{t = 1}^T \mathcal{X}_0^{(bt)}[b_t - b_{0,t}], \nonumber \\
	\,  = &\,  \int_V \left[ \lambda_{M}(v) \right.\, \left. \VecM \left((\mathcal{M} - \mathcal{M}_0)(v)\right) \right] d F_{V}(v) + \int_V \left[ \lambda_{k}(v) \, (k - k_0)(v) \right] d F_{V}(v) \nonumber \\
	& \  + \ \ \sum_{t = 1}^T \int_{\xi_t} \lambda_{bt}(\xi_t) \, (b_t - b_{0,t})(\xi_t) d F_{\xi_t}(\xi_t), \label{eq:linearX_int}
\end{align}
where the functions $\lambda_{.}$ are obtained by differentiation (see details in Appendix \ref{sec:lineq_app}),
\begin{align*}
	\lambda_{M}(v)   \, & = \, - \E \left( g(V_i)' \otimes (Q_i^\delta \mathcal{M}_0(V_i)^{-1}) \,  | \,  V_i = v \right) \, = \,  -  g(v)' \otimes [\E(Q_i^\delta | V_i = v) \mathcal{M}_0(v)^{-1}] ,\\
	\lambda_{k}(v) \, & = \, \E( Q_i^\delta \mathcal{M}_0(V_i)^{-1} \,  | \,  V_i = v )  \,  = \,  \E(Q_i^\delta \,  | \,  V_i = v ) \mathcal{M}_0(v)^{-1}, \\
	\lambda_{bt}(\xi_t)  \, & = \, - \E\left(Q_i^\delta \frac{ \partial g(V_i)}{\partial v_t}  \,  | \, \xi_{it} = \xi_t \right) . \vspace{-2pt}
\end{align*}


\subsubsection{Asymptotics of the linear part}\label{sec:asympt_lin_model}

The term $\sqrt{n} \mathcal{X}_0^{(G)} [\mathcal{G} - \mathcal{G}_0]$ is  a sum of linear functionals applied to the components of $\mathcal{G} -\mathcal{G}_0$, where $\mathcal{G} = ((b_{t})_{t \leq T}, k, \mathcal{M})$. To obtain its asymptotic distribution, we need to define the following random variables and functions,
\begin{align*}
	& e_i^M  = M_i - \E(M_i|V_i) = M_i - \mathcal{M}_0(V_i),\ e_i^{M*} = M_i - \E(M_i|X_i,Z_i) = 0, \\
	& \rho^M(X_i, Z_i)  = \E(M_i|X_i, Z_i) - \E(M_i|V_i) = M_i - \mathcal{M}_0(V_i),\\
	& e_i^{k}  = M_i \dot{y}_i - \E(M_i \dot{y}_i|V_i) = [M_i - \mathcal{M}_0(V_i)] g(V_i) + M_i \dot{u}_i,\\
	& e_i^{k*} = M_i \dot{y}_i - \E(M_i \dot{y}_i|X_i, Z_i) = M_i[\dot{u}_i  - \E(\dot{u}_i |X_i, Z_i)], \\ 
	& \rho^{k}(X_i, Z_i)  = \E(M_i \dot{y}_i|X_i, Z_i) - \E(M_i \dot{y}_i|V_i) = [M_i - \mathcal{M}_0(V_i)] g(V_i) + M_i \E(\dot{u}_i |X_i, Z_i).
\end{align*}
For a given $v$, $\lambda_M^j(v)$ is defined as the $j^{th}$ column of the matrix $\lambda_M(v)$ and
\begin{align*}
	\Lambda^M =  \int_v \left[ \, \lambda_M^1(v) p_1^K(v) , \ ... \ , \lambda_M^{1}(v) p_K^K(v) ,  \ ... \ , \lambda_M^{(T-1)^2}(v) p_1^K(v), \ ... \ ,  \lambda_M^{(T-1)^2}(v) p_K^K(v) \right]dF_V(v) ,
\end{align*}
of dimension $d_x \times K(T-1)^2$. Define similarly the matrix $\Lambda^{k}$. We will interchangeably index the columns of $\lambda_M$ as $\lambda_M^d$ with $d \leq (T-1)^2$ and as $\lambda_M^{st}$ with $1 \leq s,t \leq T-1$.
We also define
\begin{align*}
	H_t^M \, & = \, \E \left[ \frac{\partial \VecM( \mathcal{M}_0(V_i))}{\partial v_t} \otimes  p_i \otimes r_{it}' \right], \ H_t^{k} \,  = \, \E \left[\frac{\partial k_0(V_i)}{\partial v_t} \otimes p_i \otimes  r_{it}' \right], \\
	\text{d}P_t^M  & = \E \left[ \VecM(\rho_i^M)  \otimes \frac{\partial p^K (V_i)}{\partial v_t} \otimes  r_{it}' \right], \  \text{d}P_t^{k}   = \E \left[  \rho_i^{k}  \otimes \frac{\partial p^K (V_i)}{\partial v_t} \otimes r_{it}' \right].
\end{align*}
For a given $\xi_t$, $\lambda_{bt}^j(\xi_t)$ is defined as the $j^{th}$ column of the matrix $\lambda_{bt}(\xi_t)$ and for each $t$,
\begin{align*}
	\Lambda^{bt} \, = \,
	\int_{\xi_t} \left[ \, \lambda_{bt}^1(\xi_t) r_1^L(\xi_t) , \ ... \ ,  \lambda_{bt}^{1}(\xi_t) r_L^L(\xi_t) ,  \ ... \ , \lambda_{bt}^{T-1}(\xi_t) r_1^L(\xi_t) , \ ... \ ,  \lambda_{bt}^{T-1}(\xi_t) r_L^L(\xi_t)  \right]dF_{\xi_t}(\xi_t).
\end{align*}

\begin{assumption}\label{assn:LAE_model}
	\
	\begin{enumerate}[noitemsep,nolistsep]
		\item \label{assn:LAE_model_fctn} $\mathcal{M}_0, g_0$ are twice continuously differentiable with bounded first and second order derivatives,
		
		\item \label{assn:LAE_model_mmt} $\E(||Q_i^\delta||) < \infty$,  $\E(||\dot{u}|| \, | X, Z)$ is bounded on $\mathcal{S}_{X,Z}$, and $\mathcal{S}_{\xi t}$ for all $t \leq T$ and $\mathcal{S}_V$ are bounded,

		\item \label{assn:LAE_model_approx} There exists $\gamma_1$ and $\beta_{t}^L$ such that for all $t \leq T$, $\sup_{\mathcal{S}_{\xi_t}} || b_{0t}(\xi_t) - \beta_{t}^{L \, \prime} r^L(\xi_t) || \leq C L^{- \gamma_1}$. There exists $\gamma_2$, $\pi_{M}^K$ and $\pi_{k}^K$ such that $\sup_{\mathcal{S}_{V}^\varsigma} || \mathcal{M}_0^\varsigma(v) - p^K(v)' \, \pi_{M}^{K} || \leq C K^{- \gamma_2}$ and $\sup_{\mathcal{S}_{V}^\varsigma} || k_0^\varsigma(v) - p^K(v)' \, \pi_{k}^{K} || \leq C K^{- \gamma_2}$,
		
		\item \label{assn:LAE_model_sievebasis}   For all $t \leq T$, there exists $\Gamma_{1t}$, a $L \times L$ nonsingular matrix such that for $R_t^L(\xi_t) = \Gamma_{1t} r_t^L(\xi_t)$, $\E(R_t^L(\xi_t) R_t^L(\xi_t)')$ has smallest eigenvalue bounded away from $0$ uniformly in $L$.
		There exists $\Gamma_2$, a $K \times K$ nonsingular matrix such that for $P^K(V) = \Gamma_2 p^K(V)$, $\E(P^K(V) P^K(V)')$ has smallest eigenvalue bounded away from $0$ uniformly in $K$, 
		
		\item \label{assn:LAE_model_matrixbounded} $|| \Lambda^M||$, $|| \Lambda^{k}||$, and $|| \Lambda^{bt}||$ are bounded,
		
		\item \label{assn:LAE_model_rate}For $|R_t^L(\xi_t)|_0 \leq a_1(L)$, $|P^K(V)|_0^\varsigma \leq b_1(K)$, $|P^K(V)|_1^\varsigma \leq b_2(K)$, $|P^K(V)|_2^\varsigma \leq b_3(K)$, we have
		$\sqrt{n} K^{- \gamma_2} = o(1)$, $\max(\sqrt{K}, \sqrt{L} b_2(K)) \, a_1(L) \sqrt{L/n}  = o(1)$, 
		$b_2(K) \sqrt{n} L^{-\gamma_1} \allowbreak = o(1)$, $b_2(K)[\sqrt{L/n} + L^{- \gamma_1}][K + L]  = o(1)$, $ b_3(K) [L /\sqrt{n} + \sqrt{n} L^{- 2\gamma_1}] = o(1)$, $b_2(K)^2 \sqrt{K} [\sqrt{L/n} + L^{- \gamma_1}] = o(1)$,
		
		\item \label{assn:LAE_model_varbounded} \label{assn:MSE_model_extrammt} $\E(\dot{u} |X, Z) = 0$, and $\E(||v_{t}||^2|\xi_{t})$ and $\E(||\dot{u}||^2|X,Z)$ are bounded on $\mathcal{S}_{\xi t}$ and $\mathcal{S}_{X,Z}$ respectively.
	\end{enumerate}
\end{assumption}

\begin{result}\label{rslt:LAE_model}
	Under Assumptions \ref{assn:idCFA}, \ref{assn:cv_g} and \ref{assn:LAE_model},
	\begin{align}
		\ \sqrt{n}  & \mathcal{X}_0^{(G)}  [\hat{\mathcal{G}} - \mathcal{G}_0] \,  =  \, o_{\mathbb{P}}(1) \nonumber \\
		&  +  \ \frac{1}{\sqrt{n}}  \sum_{i=1}^n \Lambda^M (I_{(T-1)^2} \otimes \Theta) \,  \VecM(e_i^M)  \otimes p_i +  \frac{1}{\sqrt{n}}  \sum_{i=1}^n \Lambda^{k} (I_{T-1} \otimes \Theta) \, e_i^{k} \otimes p_i \label{eq:linearEq}\\
		&  +  \ \frac{1}{\sqrt{n}}  \sum_{i=1}^n  \sum_{t = 1}^T \left[\Lambda^M (I_{(T-1)^2} \otimes \Theta) (H_t^M - \text{d}P_t^M) + \Lambda^{k} (I_{T-1} \otimes \Theta) (H_t^{k} - \text{d}P_t^{k})  + \Lambda^{bt}\right] (I_{k_2} \otimes \Theta_1) \, v_{it} \otimes r_{it}, \nonumber
	\end{align}
	where we define $\Theta = \E(p_i p_i^{\prime})$ and $\Theta_1 = \E(r_i r_i^{\prime})$.
\end{result}

Result \ref{rslt:LAE_model} is obtained using Appendix \ref{app:lin_galcase}, where we look at Model (\ref{eq:reg_GalCase}) and obtain the influence function in the general case of a linear functional of nonparametric two-step series estimators, with some comments comparing our results to the existing literature. Under Assumption \ref{assn:LAE_model}, these results can be applied to the functionals in (\ref{eq:linearX_int}), choosing $w_i$ to be either a component of $M_i \dot{y}_i$ or of $M_i$. The regression functions $b_{0t}$ are estimated nonparametrically and the asymptotic distribution of functionals of such objects is studied in \cite{n97}.
Note that Assumption \ref{assn:LAE_model} (\ref{assn:MSE_model_extrammt}) is imposed to simplify computations. It amounts to strengthening the control function assumption, that is, Assumption \ref{assn:idCFA} (\ref{assn:idCFA_CFA}). 
Note also that in the influence function,  the terms $\text{d}P_t^M$ and $\text{d}P_t^k$ depend on $\rho^M$ and $\rho^k$. As discussed in Appendix \ref{app:lin_galcase}, this extra term compared to other applications of the CFA is due to the failure of the orthogonality condition discussed above Equation (\ref{eq:reg_GalCase}).

Under Result \ref{rslt:LAE_model}, we can  write $\frac{1}{\sqrt{n}}  \sum_{i=1}^n  [\delta_i \mu_i - \E(\mu \delta)]  + \, \frac{1}{\sqrt{n}}  \sum_{i=1}^n  Q_i^\delta  \dot{u}_i \, - \,  \sqrt{n}  \mathcal{X}_0^{(G)}  [\hat{\mathcal{G}} - \mathcal{G}_0]$ as $\frac{1}{\sqrt{n}} \sum_{i=1}^n s_{i,n} + o_{\mathbb{P}}(1)$ where
\begin{align*}
	s_{i,n} =  & [\delta_i \mu_i - \E(\mu \delta)] + Q_i^\delta  \dot{u}_i - \Lambda^M (I_{(T-1)^2} \otimes \Theta) \,  \VecM(e_i^M)  \otimes p_i -   \Lambda^{k} (I_{T-1} \otimes \Theta) \, e_i^{k} \otimes p_i  \\
	& + \sum_{t = 1}^T \left[\Lambda^M (I_{(T-1)^2} \otimes \Theta) (\text{d}P_t^M - H_t^M) + \Lambda^{k} (I_{T-1} \otimes \Theta) (\text{d}P_t^{k} - H_t^{k} )  - \Lambda^{bt}\right] (I_{k_2} \otimes \Theta_1) \, v_{it} \otimes r_{it}.
\end{align*}
We define $\Omega = \Var(s_{i,n})$ and the following objects, $\tilde{Q}_i^\delta =  Q_i^\delta  -  \, \E(Q_i^\delta | V_i)  \mathcal{M}_0(V_i)^{-1} M_i$, and 
$\Omega_0 = \Var \Big( [\delta_i \mu_i - \E(\mu \delta)] + \tilde{Q}_i^\delta   \dot{u}_i + \allowbreak \sum_{t = 1}^T  \E\left( \tilde{Q}_i^\delta \frac{ \partial g(V_i)}{\partial v_t}  | \, \xi_{it} \right) v_{it} \Big)$.

\begin{assumption}\label{assn:MSE_model}
	\
	\begin{enumerate}[noitemsep,nolistsep]
		\item \label{assn:MSE_model_approx} For each function $\lambda_a(.)$, column of $\lambda_{M}(.)$ or $\lambda_{k}(.)$, $\lambda_a(.)$ is continuously differentiable and
		there exists $\iota_{a}^K$ such that $|\lambda_a(V) - \iota_{a}^K p^K(V)|_1 =O(K^{-\gamma_3})$, and $(\iota_{aht}^L,\iota_{a\rho t}^L)$ such that as $L \to \infty$,
		$\E\left(||\E \left[ \lambda_a(V) \frac{\partial h^W(V)}{\partial v_{td}} | \xi_t \right] - \iota_{aht}^L r^L(\xi_t)||^2 \right) \to 0$ and
		$\E\left(||\E \left[ \rho_i^W \frac{\partial \lambda_a(V)}{\partial v_{td}} | \xi_t \right] - \iota_{a\rho t}^L r^L(\xi_t)||^2 \right) \to 0$.
		
		For all $t \leq T$, $b_t$ is continuous and there exists $\iota_{2t}^L$ such that $\E\left(|| \lambda_{bt}(\xi_t) - \iota_{bt}^L r^L(\xi_t)||^2 \right) \to 0$ as $L \to \infty$,
		
		\item \label{assn:MSE_model_rate} $b_2(K) K^{- \gamma_3} = o(1)$,
		
		\item \label{assn:MSE_model_var} $\E(Q_i^\delta||^2) < \infty$, $\E(|| \mu_i||^2) < \infty$, and there exists $C>0$ such that $\Omega_0 \geq C I_{d_x}$.
	\end{enumerate}
\end{assumption}

\begin{result}\label{rslt:as.var}
	Under Assumptions \ref{assn:idCFA}, \ref{assn:cv_g}, \ref{assn:LAE_model}' and \ref{assn:MSE_model}, $\Omega^{-1/2} \to_{n \to \infty} \Omega_0^{-1/2} $. Moreover, $|| \Lambda^M||$, $|| \Lambda^{k}||$, $||\Lambda^{bt}||$, $|| \Lambda^M (I_{(T-1)^2} \otimes \Theta) (H_t^M - \text{d}P_t^M)||$, and $|| \Lambda^{k} (I_{T-1} \otimes \Theta) (H_t^{k} - \text{d}P_t^{k})||$ are bounded. 
\end{result}

Result \ref{rslt:as.var} states that $\Omega_0$ is the asymptotic variance of $\sqrt{n} \mathcal{X}_0^{(G)} [\mathcal{G} - \mathcal{G}_0]$. It additionally guarantees that  Assumption  \ref{assn:LAE_model} (\ref{assn:LAE_model_matrixbounded}) holds.  The boundedness of the two last matrices is added for later results on asymptotic normality.  We thus write Assumption \ref{assn:LAE_model}' for Assumption \ref{assn:LAE_model} without its condition (\ref{assn:LAE_model_matrixbounded}). Thus, under Assumption \ref{assn:LAE_model}' and Assumption \ref{assn:MSE_model}, Equation (\ref{eq:linearEq}) on $\sqrt{n} \mathcal{X}_0^{(G)}  [\hat{\mathcal{G}} - \mathcal{G}_0] $ holds. Note that the condition $\Omega_0 \geq C I_{d_x}$ holds if for instance $\Var(\mu_i | X_i, Z_i, u_i, V_i) \geq C I_{d_x}$ for some $C > 0$, or if a similar condition holds on the conditional variance of $\dot{u_i}$, as is typically assumed.

\subsubsection{Asymptotic distribution of $\hat{\mu}$}\label{sec:as_miou_together}


Asymptotic normality of $\hat{\mu}^\delta$ is obtained from the decomposition in (\ref{eq:dvlptasNorm}) and asymptotic equivalence of  $\sqrt{n} \big[ \mathcal{X}_n(\hat{\mathcal{G}}) -  \mathcal{X}_n(\mathcal{G}_0) \big]$ to its linearization. The conditions under which the two terms are equivalent asymptotically are detailed in  Appendix \ref{sec:lineq_app}, one main condition being that of stochastic equicontinuity. To guarantee that it holds, we follow  \cite{clvk03} and impose smoothness conditions. In particular, 
for $\mathcal{S}_W$ a bounded subset of $\R^k$, we define for a function $g : \mathcal{S}_W \mapsto \R$, and $\varrho > 0$, the norm $||g||_{\infty,\varrho} = |g|_{ \left \lfloor{\varrho}\right \rfloor } + \max_{ |r| =  \left \lfloor{\varrho}\right \rfloor } \sup_{w \neq w'} \frac{|\partial^rg(w) - \partial^rg(w')|}{|| w - w' ||^{\varrho -  \left \lfloor{\varrho}\right \rfloor }}$. 

We define $\mathcal{C}_c^{\varrho}(\mathcal{S}_W)$ to be the set of continuous functions $g : \mathcal{S}_W \mapsto \R$ such that $||g||_{\infty,\varrho} \leq c$. The set $\mathcal{H}_{2t,c}^\varrho = \mathcal{C}_c^{\varrho}(\mathcal{S}_{\xi_t})^{k_2}$ will be the class of vector valued functions taking values in $\R^{k_2}$, each component of which lies in $\mathcal{C}_c^{\varrho}(\mathcal{S}_{\xi_t})$. Since $k$ and $\mathcal{M}$ are defined on the extended support $\mathcal{S}_V^\varsigma$, we define $\mathcal{H}_{\mathcal{M},c,c'}^\varrho = \mathcal{C}_c^{\varrho}(\mathcal{S}_{V}^\varsigma)^{(T-1)^2} \cap \{ g : \forall V \in \mathcal{S}_{V}^\varsigma , \, \lambda_{\min} (M_{g(V)})> c' \}$, where $M_{g(V)}$ is the matrix formed by the coefficients of $g(V)$, and $\mathcal{H}_{k,c}^\varrho = \mathcal{C}_c^{\varrho}(\mathcal{S}_{V}^\varsigma)^{T-1}$. Finally, for the entire vector of infinite dimensional parameters $\mathcal{G}$, we define the set $\mathcal{H}_{c,c'}^\varrho = \left( \times_{t \leq T}\mathcal{H}_{2t,c}^\varrho \right) \times \mathcal{H}_{\mathcal{M},c,c'}^\varrho \times \mathcal{H}_{k,c}^\varrho$.

\begin{assumption}\label{assn:SE}
	$\mathcal{G}_0 \in \mathcal{H}_{c,c'}^\varrho$ for some $(c,c') \in (\R_+^*)^2$ and $\varrho > \max(Td_2,d_z + d_1)/2$.
\end{assumption}

Condition (\ref{assn:LAE_model_approx}) of  Assumption \ref{assn:LAE_model}' needs to be modified to
\textit{``There exists $\gamma_1$ and $\beta_{t}^L$ such that for all $t \leq T$, $\sup_{\mathcal{S}_{\xi_t}} || g^{ent}(\xi_t) - \beta_{t}^{L \, \prime} r^L(\xi_t) || \leq C L^{- \gamma_1}$. There exists $\gamma_2$, $\pi_{M,st}^K$ and $\pi_{k,t}^K$ such that $ | \mathcal{M}_0^\varsigma(.)_{st} - p^K(.)' \, \pi_{M,st}^{K} |_1^\varsigma \leq C K^{- \gamma_2}$ and $| k_0^\varsigma(.)_t - p^K(.)' \, \pi_{k,t}^{K} |_1^\varsigma \leq C K^{- \gamma_2}$, for all $1 \leq s,t \leq T-1$''.}
This modification is a stronger assumption, changing the approximation rate to be over the $|.|_1$ norm instead of the sup norm. Assumption \ref{assn:LAE_model}'' is the modified version of Assumption \ref{assn:LAE_model}'. We can now state the main results of this section.

\begin{result}\label{rslt:linear_miou_num}
	Under Assumptions  \ref{assn:idCFA}, \ref{assn:cv_g}, \ref{assn:LAE_model}'', \ref{assn:MSE_model} and \ref{assn:SE}, assuming moreover that $a_1(L) \Delta_n = o(n^{-1/4})$ and $b_2(K) [K/n + K^{-2 \gamma_2} +  \Delta_n^2 b_2(K)^2]^{1/2} = o(n^{-1/4})$,  then 
	$$ [\hat{\mu}^\delta - \E(\mu_i \delta_i)] =  \frac{1}{\sqrt{n}} \sum_{i=1}^n s_{i,n} + o_{\mathbb{P}}(1).$$
\end{result}

\begin{assumption}\label{assn:as.normlt} 
	$\E[||\mu_i - \E(\mu)||^4] < + \infty$, $\E[||Q_i^\delta||^4] < \infty$ and $\Var(\delta_i) >0$. Also, $\E(|| v_{t}||^4|\xi_{t})$ and $\E(|| \dot{u}||^4|X,Z)$ are bounded on $\mathcal{S}_{\xi t}$ for all $t \leq T$ and on $\mathcal{S}_{X,Z}$ respectively.
\end{assumption}

\begin{result}\label{rslt:as.normlt}
	Under Assumptions  \ref{assn:idCFA}, \ref{assn:cv_g}, 
	\ref{assn:LAE_model}'', \ref{assn:MSE_model}, \ref{assn:SE} and \ref{assn:as.normlt}, assuming moreover that $a_1(L) \Delta_n = o(n^{-1/4})$ and $b_2(K)[K/n + K^{-2 \gamma_2} +  \Delta_n^2 b_2(K)^2]^{1/2} = o(n^{-1/4})$,
	$$\sqrt{n}  [\hat{\mu} - \E(\mu|\delta)] \to^d \mathcal{N}(0, \Phi^{-2} \Xi), $$
	where $\Xi = \Omega_0 +  \E( (\delta_i - \Phi) s_i) \E(\mu |\delta)' +  \E(\mu | \delta) \E( (\delta_i - \Phi) s_i)' + (\Phi - \Phi^2) \E(\mu | \delta)\E(\mu | \delta)'$.
\end{result}
\vspace{-12pt}

\section{Illustration}\label{sec:illus}

As an empirical exercise, we apply our method to a model of labor supply.  An important parameter of interest  is the elasticity of intertemporal substitution (EIS), or Frisch elasticity, which quantifies how labor supply responds to anticipated wage changes over time.
To estimate the EIS, the literature derives a labor supply response equation from a life-cycle labor supply model. Agents face uncertainty in future wage rates and interest rates, and choose paths of consumption, hours worked and possibly an asset distribution to maximize the expected discounted present value of lifetime utility under a dynamic budget constraint. A linear relationship is often obtained between the log of labor supply and the log of wage, see e.g \cite{mc85,f04}.
Within this framework, to estimate the EIS \cite{z97} studies the following model
\begin{equation}\label{eq:model_empirics1}
	\ln h_{it} = \alpha_i +  \mu \ln \omega_{it} + b'\chi_{it} + \epsilon_{it}, \qquad i=1..n, \ t=1..T,
\end{equation}
where $h_{it}$ is the number of hours worked annually by agent $i$, $\omega_{it}$ is the hourly wage, $\mu$ is the EIS and $\chi_{it}$ is composed of age and additional demographics (taste shifters). The additive fixed effect $\alpha_i$ depends on the initial marginal utility of wealth. 
The empirical exercise considers instead a version with heterogeneity and, using a reduced form approach, studies
\begin{equation}\label{eq:model_empirics}
	\ln h_{it} = \alpha_i +  \mu_i \ln \omega_{it} + b'\chi_{it} + \epsilon_{it}, \qquad i=1..n, \ t=1..T,
\end{equation}
where the parameter $\mu_i$ is now allowed to vary across agents and potentially covary with $\omega_{it}$ and $\chi_{it}$. In (\ref{eq:model_empirics1}) $\omega_{it}$ and $\epsilon_{it}$ are usually assumed correlated. The main reason is $\epsilon_{it}$ depends on unexpected shocks to the marginal utility of wealth that are realized at time $t$ and these shocks are correlated with realized wage at $t$. Another reason 
is that  wage is usually observed with error in survey data. 
This thus justifies applying the method developed in this paper to (\ref{eq:model_empirics}), and to do so, we will use the data  in \cite{z97}.


We now discuss how to obtain control variables and provide some support for Assumptions \ref{assn:idCFA} and \ref{assn:id_M_GLn}. We can rewrite (\ref{eq:model_empirics}) with first differences, $
\dot{\ln} \, h_{it} =  \mu_i \, \dot{\ln} \, \omega_{it} + b'\dot{\chi}_{it} + \dot{\epsilon}_{it}$.
For a panel with periods $t=0,..,T$, we consider the following conditions,
\begin{equation}
	\text{for all } t \leq T, \ \ln \omega_{it} = \gamma_0 + \gamma_1 \ln \omega_{it-1} + \eta_{it}, \text{ and } \{\dot{\epsilon}_{it},(\eta_{is})_{s=t}^{T} \} \,\indep \, \{ \omega_{i0}, (\eta_{is})_{s=1}^{t-1}, (\chi_{is})_{s=1}^{t} \}. \label{eq:wageAR_argUtWealthdpdce}
\end{equation}
These conditions are a simplification and chosen for illustrative purposes: the purpose  is to provide an intuitive framework in which Assumptions \ref{assn:idCFA} and \ref{assn:id_M_GLn} hold. 
The assumption that $\ln \omega_{it}$ follows an autoregressive process can be found in, e.g, \cite{hnr88} : (\ref{eq:wageAR_argUtWealthdpdce}) uses a simple AR(1) model.
The independence condition  is based on the fact that the first-difference $\dot{\epsilon}_{it} = \epsilon_{it+1} - \epsilon_{it}$ is composed of shocks unexpected by the agent prior to $t+1$. Thus 
as a white noise can be assumed  independent of prior values of $\omega_{is}$, $s \leq t-1$, and of past and present values of the exogenous demographics. It is potentially correlated with future values of these exogenous demographics as for instance, if $\chi_{it}$ includes an indicator for bad health  at period $t$ then it will be positively impacted by a positive shock to hours of work at period $t-1$. 
 
Define $v_{it} = \ln \omega_{it} - \E(\ln \omega_{it}|  \ln \omega_{i0})$, $V_{i} = (v_{it})_{1 \leq t \leq T}$, $W_i = (\ln \omega_{i1},..,\ln \omega_{iT})'$. Note that $v_{it} = \sum_{k=0}^{t-1} \gamma_1^k [\eta_{it-k} + \gamma_0]$ and that this relationship can be inverted so as to write $\eta_{it}$ as a function of $v_{i1},..,v_{it}$. Since the impact of $\chi_{it}$, $b$, is not heterogeneous, we use the framework detailed in Appendix \ref{sec:mix_homog_heterog} and  condition on additional instruments for $\chi_{it}$, $Z_i^{\chi} = (\chi_{i1}, \chi_{i0})$. Then under (\ref{eq:wageAR_argUtWealthdpdce}), we have

\vspace{-30pt}

\begin{align*}
	\text{for all } t \leq T, \ \E(\dot{\epsilon}_{it} | V_i, W_i,   Z_i^{\chi})& = \E\left(\E (\dot{\epsilon}_{it} | \eta_{i1},.., \eta_{iT}, \ln \omega_{i0},  Z_i^{\chi}) | V_i, W_i, Z_i^{\chi} \right)\\
	&=  \E\left( \E (\dot{\epsilon}_{it} | \eta_{it},.., \eta_{iT}) | V_i, \ln \omega_{i0},  Z_i^{\chi} \right) = \E (\dot{\epsilon}_{it} | \eta_{it},.., \eta_{iT})=: g_t(V_i)
\end{align*} 
where the second-to-last equality holds because $( \eta_{it},.., \eta_{iT})$ is a deterministic function of $V_i$. Note that the control function approach is directly on the first-differenced residual which is slightly different from  Assumption \ref{assn:idCFA} but does not change the two-step identification procedure.
As for Assumption \ref{assn:id_M_GLn},  Result \ref{rslt:IA_illus} states that it holds under very mild conditions on the support of $\ln \omega_{i0}$.

The dataset constructed in \cite{z97}
is described in Section 2.1 of the paper. It is a selected sample from the Survey Research Center subsample of the Panel Study of Income Dynamics composed of $532$ men aged $22$ to $55$, married and working at all periods of the panel. The demographics  $\chi_{it}$ are number of children, age and a dummy variable for bad health. We take $T=3$ and since we use an initial period $t=0$ to construct the control variables through $\ln \omega_{i0}$, the total number of periods in the panel is $4$. We use a panel of years $1979$ to $1982$ where period $1$ is year $1980$, period $T$ is year $1982$.
The estimation steps are as follows.
First, we estimate the control variables $v_{it}$ using the following specification, slightly  richer than described above
\vspace{-5pt}
$$\ln \omega_{it} = \lambda_{0} + \gamma_{1t} \ln \omega_{i0} + \gamma_{2t}' \chi_{i}^{\text{GC}} + v_{it},$$

\vspace{-5pt}
where $\chi_{i}^{\text{GC}}$ includes $\chi_{i1}$ and age$_{i1}^2$. We choose this simple linear specification adding only a quadratic in age to avoid the curse of dimensionality which potentially has a strong impact given our small sample size.
The subsequent steps are estimation of the vector $b$, the functions $g_t(.)$ for $t \leq T-1$ and the average partial effect $\E(\mu_i)$: more details are provided in the Online Appendix \ref{sec:app_illus} together with the estimation results.

\vspace{-6pt}

\section{Conclusion}

This paper proved that in a correlated random coefficient panel model, the average partial effect $\E(\mu_i)$ is identified when the strict exogeneity condition standard in the literature is relaxed to allow for time-varying endogeneity. The approach imposes the existence of a control function to control for the dependence between the regressors and the time-varying disturbance with first-stage variables function of instruments and regressors. Within-group and between-group operations are then combined to disentangle the heterogeneous impact of the regressors from the time-varying disturbances. An estimator of $\E(\mu_i | \det(\dot{X}_i' \dot{X}_i) > \delta_0)$ can be naturally constructed from the identification argument: we showed its asymptotic normality and computed its asymptotic variance.

We highlight two directions for future research, following the contribution of \cite{gp12}. First, it would be of interest to relax the condition $T > d_x +1$: long enough panels might not be available to identify average partial effects in models with multiple covariates and heterogeneous impacts.
A second direction for future work would address the dependence of the limit of the proposed estimator, $\E(\mu| \delta)$, on the constant $\delta_0$: $\delta_0$ is arbitrarily fixed and choosing a value when implementing the estimator can be an issue in practice. One could instead study the asymptotic properties of $\E(\mu_i | \det(\dot{X}_i' \dot{X}_i) > \delta_n)$ as $\delta_n \to 0$. 

\bibliography{bibCRC}

\appendix

\newpage


\begin{center}
	{\Large \textbf{PAPER APPENDIX}}
\end{center}

\vspace{-20pt}

\section{Combining Random Coefficients with Common Parameters}\label{sec:mix_homog_heterog}

If it is known to the researcher that the random coefficients associated to some covariates $l_{it} \in \R^{d_l}$ have a degenerate distribution, we propose a different procedure. 
Consider the model
\begin{align}\label{eq:modelwCsttCoeff}
	y_{it} = \alpha_i +  \, l_{it}^{ \prime} \, b + x_{it}^{\prime} \, \mu_i + \epsilon_{it},
\end{align}
where $ x_{it} \, = \, (x_{it}^{ex},x_{it}^{en}) \in \R^{d_x}$, $x_{it}^{ex} \in \R^{d_1}$ are exogenous regressors and $x_{it}^{en} \in \R^{d_2}$ allowed to be endogenous. The regressors $ l_{it} \, = \, (l_{it}^{ex},l_{it}^{en}) \in \R^{k_m}$, where $l_{it}^{ex} \in \R^{d_{l1}}$ is exogenous and $l_{it}^{en} \in \R^{d_{l2}}$ is endogenous,  have a homogeneous impact.
When the control variables are the residuals of the nonparametric regression of $x_{it}^{en}$, these extensions are useful for two reasons. First, if all coefficients are assumed heterogeneous, the procedure described in Section \ref{sec:maindid} requires $T$ to be at least $d_x + d_l + 2$ which is quite restrictive. Second,  the vector of control variables will be of dimension at least $(d_{l2} + d_2)(d_x + d_l + 2)$. Large dimension for $V$ is highly undesirable because of the curse of dimensionality, as the closed-form for $g$ involves nonparametric conditional expectations conditional on $V$.
We now present an alternative identification argument valid whenever  $T \geq d_x + 2$. The dimension of the conditioning set then does not have to exceed $d_2(d_x + 2)$ when $T = d_x + 2$.
We modify Assumption \ref{assn:idCFA} (\ref{assn:idCFA_CFA}) and impose 
\begin{equation}\label{assn:CFAmixHomog_heterog}
	\E( \epsilon_{it} | Z_i^L, X_i, V_i) \, = \, f_t(V_i),
\end{equation}
where $V_i$ is an identified function of the regressors $X_i$ and the instruments $Z_i$. $Z_i^L$ is composed of $L_i^{ex}$ and instruments for $L_i^{en}$. Defining $M_i$ and $Q_i$ are defined as before,
\begin{align}
	&M_i \dot{y}_i \, =  \, M_i \dot{L}_i b +\, M_i g(V_i)   + \, M_i \dot{u}_i , \textrm{ with } \E( M_i \dot{u}_i | Z_i^L, X_i , V_i) =0, \label{eq:id_b_whomog} \\
	&\Rightarrow \E(M_i \dot{y}_i | V_i) \,  = \, E(M_i \dot{L}_i | V_i) \,  b \, + \, \mathcal{M}(V_i) g(V_i) \nonumber \\
	&\Rightarrow M_i \, \mathcal{M}(V_i)^{-1} \E(M_i \dot{y}_i | V_i) \,  = \, M_i \, \mathcal{M}(V_i)^{-1}  E(M_i \dot{L}_i | V_i) \,  b \, + \, M_i g(V_i). \nonumber
\end{align}
 This suggests a modification of the procedure developed in \cite{r88} for the identification of $b$. Defining $\Delta \dot{y}_i \, = \, \dot{y}_i - \, \mathcal{M}(V_i)^{-1} \E(M_i \dot{y}_i | V_i)$ and $\Delta \dot{L}_i \, = \, \dot{L}_i -  \, \mathcal{M}(V_i)^{-1}  E(M_i \dot{L}_i | V_i)$, we obtain
$$M_i \, \Delta \dot{y}_i \, = \, M_i \, \Delta \dot{L}_i b + \, M_i \dot{u}_i.$$
Since $\E( Z_i^L \, M_i \dot{u}_i)\, = \, \E( Z_i^L \,  M_i \E(\dot{u}_i| Z_i^L, X_i , V_i)) \, = \, 0$ and using $M_i = M_i' = M_i^2$, we obtain
\begin{equation}\label{eq:idDeltaCsttCoeff_endog}
	b \, = \, \left(\E(   \Delta \dot{L}_i' M_i Z_i^{L}) \E(  Z_i^{L\, \prime}   M_i\Delta \dot{L}_i ) \right)^{-1} \E(   \Delta \dot{L}_i' M_i Z_i^{L}) \, \E( Z_i^{L \, \prime} \, M_i \Delta \dot{y}_i),
\end{equation}
under the assumption that $ \E(  Z_i^{L\, \prime}   M_i\Delta \dot{L}_i)$ is of full column rank.
As $b$ is now identified, identification of $g$ and $\E(\mu_i)$ is obtained by applying the results of Section \ref{sec:maindid} to $y_{it} - l_{it}' b$. 
If $d_x = 1$, i.e., only one regressor has heterogeneous impact, then we retrieve the condition $T \geq 3$. If  $v_{it}$ is scalar and $T = 3$, then the dimension of the conditioning set for the nonparametric regressions is $3$, independently of the number of regressors in $l_{it}$.

\vspace{-3pt}

\section{Another Approach if $T > d_x + 2$}\label{sec:don_idea}

The dimension of the conditioning set is also large if the number of periods is large (but fixed) even keeping the number of endogenous regressors  fixed. Since for identification,  all that one needs is $T \geq d_x + 2$, then if  $T > d_x + 2$ one can use a subset of $d_x + 2$ periods among the $T$ available. However, it is possible\footnote{I thank Donald Andrews for this suggestion.}
to use the $T$ periods without increasing the dimension of the conditioning set, under a slightly modified Assumption \ref{assn:idCFA}.
Assume  that $T > d_x + 2$ and denote $\mathcal{T}$ the set of subsets of $ \{ 1,\, .. \, , T \}$ of cardinality $d_x + 2$. The cardinality of $\mathcal{T}$ is $T \choose d_x + 2$. Consider $\tau \in \mathcal{T}$ a subset of $d_x + 2$ periods, write $\tau = (t_1, \, ... \, , t_{d_x + 2})$ where $t_1 < \, \cdots \, < t_{d_x + 2}$. We write with a superscript $\tau$ the vectors that are defined using only the periods in $\tau$. For instance, $V_i^{\tau} \, = \, (v_{i t_1}, \, .. \, , v_{i t_{d_x + 2}})$.
Then (\ref{eq:model}) implies
$ y_i^{\tau} = X_i^{\tau} \mu_i + \epsilon_i^{\tau}.$
\begin{assumption}\label{assn:CFAbigT}
	There exist a set of functions $(h_t^{\tau})_{\substack{ t \in \tau \\ \tau \in \mathcal{T}}}$ and identified functions $(C_t)_{t \leq T}$ such that, defining $v_{
		it} = C_t(x_{it}, z_{it}) \in \R^{d_v}$,
	$$ \forall \, \tau \in \mathcal{T}, \ \forall \,  t \in \tau, \  \E(\epsilon_{it} \, | \, X_i^{\tau}, V_i^{\tau}) \, = \, h_t^{\tau}(V_i^{\tau}).$$
\end{assumption}
Assumption  \ref{assn:idCFA}, i.e., $\E(\epsilon_{it} \, | \, X_i, V_i) \, = \, f_t(V_i)$, does not imply Assumption \ref{assn:CFAbigT}. 
For a given $\tau$ in $\mathcal{T}$, changing the definition of $g_t$ to $g_t^{\tau} = h_{t+1}^{\tau} - h_t^{\tau}$, identification of the vector of functions $g^{\tau}$ follows from the same first step if $\mathcal{M}^{\tau}(V^{\tau})$ is invertible.
Identification of $\E(\mu_i)$ follows from (\ref{eq:Miou}), which becomes
$\E(\mu_i)  \, =  \, \E(Q_i^{\tau} \dot{y}_i^{\tau} \, - \, Q_i^{\tau} g^{\tau}(V_i^{\tau})).$ 
Since this holds for any subset $\tau$, we can also write
$$
\E(\mu_i)  \, =  \, \frac{1}{{T \choose d_x + 2}} \, \sum_{\tau \in \mathcal{T}} \E(Q_i^{\tau} \dot{y}_i^{\tau} \, - \, Q_i^{\tau} g^{\tau}(V_i^{\tau})).
$$


\section{Complements to the Illustration}\label{sec:app_illus}

\medskip

\textbf{Theoretical Result: }

\medskip

The following result states that Assumption \ref{assn:id_M_GLn} holds in the model analyzed in Section \ref{sec:illus} under appropriate assumptions. Recall that the control variable is given by $v_{it} = \ln \omega_{it} - \E(\ln \omega_{it}|  \ln \omega_{i0})$.  The proof can be found at the end of Section \ref{sec:app_proofid}. 

\begin{result}\label{rslt:IA_illus}
	Let Assumption \ref{assn:id_finiteE} and (\ref{eq:wageAR_argUtWealthdpdce}) hold with	$\gamma_1 \notin \{0,1\}$. 
	Assume additionally that  $(v_{i1},.. ,v_{iT})'$ is continuously distributed on $\R^{T}$ and either:
	\begin{enumerate}[label=(\roman*)]
		\item $\ln \omega_{i0}$ has a discrete distribution with at least two support points,
		\item Or $\ln \omega_{i0}$ is continuously distributed and $\Int(\mathcal{S}_{\ln \omega_{i0}}) \neq \emptyset$.
	\end{enumerate}
	Then Assumption \ref{assn:id_M_GLn} holds.
\end{result}

\bigskip

\textbf{Estimation Results: }

\medskip

The estimator of $\E(\mu_i)$ uses nonparametric estimators of conditional expectation functions conditional on $V$. To estimate these nonparametric regression functions, we choose the same basis of approximating functions of $V$ (power series) and the same number of approximating terms for each of these functions. The exact choice of approximating functions is decided using a leave-one-out cross-validation (CV) criterion with criterion function the mean square forecast error of the random variable $M_i$. 
We report the CV values for some specifications in Table \ref{table:cv_values}. Our choice will be the power series basis of degree $2$.

To estimate $b$, we follow the  method developed in Section \ref{sec:mix_homog_heterog} of the Appendix. The added set of instruments is $Z_i^{\chi} = (\chi_{i1}, \chi_{i0}$, age$_{i1}^2$, age$_{i0}^2)$. Defining the differences $\Delta \dot{\ln} h_{i}$ and $\Delta \dot{\chi}_{i}$ as in Appendix \ref{sec:mix_homog_heterog}, and their estimators as $\widehat{\Delta \ln \dot{h}}_{i}$ and $\widehat{\Delta \dot{\chi}}_{i}$, the estimator for $b$ is 
$$\hat{b} = \left(\sum_{i=1}^n  \widehat{\Delta \dot{\chi}}_{i}^{\ \prime} Z_i   \left[ \sum_{i=1}^n  Z_i' Z_i \right]^{-1} \sum_{i=1}^n Z_i'\widehat{\Delta \dot{\chi}}_{i} \right)^{-1} \left(\sum_{i=1}^n  \widehat{\Delta \dot{\chi}}_{i}^{\ \prime} Z_i  \left[  \sum_{i=1}^n  Z_i' Z_i  \right]^{-1} \sum_{i=1}^n Z_i' \, \widehat{\Delta \ln \dot{h}}_{i} \right).$$

The third step is estimation of $g$. We first estimate $\mathcal{M}(.)$ and $k(.) = \E(M_i [\dot{\ln} h_{i} - \dot{\chi}_{i}'b] \, |V_i = .)$ using a series approximation and plugging in the estimate $\hat{b}$. Using the estimators $\hat{\mathcal{M}}(.)$ and $\hat{k}( .)$, our estimate of $g$ is $\hat{g} = \hat{\mathcal{M}}^{-1} \hat{k}$. 

The fourth and final step to obtain the estimate of  $\E(\mu_i)$  entails computing the sample analog of  the moment  $\E(Q_i [\dot{\ln} h_{i} - \dot{\chi}_{i}'b - g(V_i)]) $. This gives $\hat{\mu} = \frac{1}{n} \sum
_{i=1}^n Q_i [\dot{\ln} h_{i} - \dot{\chi}_{i}' \hat{b} - \hat{g}(\hat{V}_i)] $ reported in Table \ref{table:empirics_rslt}. 

The estimation results are reported below. For comparison purposes, we also include the 2SLS estimator as well as the FF estimator which are among those favored by Ziliak. We use the stacked instrument matrix suggested by the author, where for each time period we use the instruments of all past time periods. We note that  our estimator $\hat{\mu}$ (reported as CRC) is in the range of the values reported in \cite{z97} and those computed for the specific time periods we consider.

\bigskip

\begin{center}
	\begin{tabular}{c c} 
		
		Terms included & CV values \\ [0.5ex] 
		\hline\hline
		$(v_{it}, v_{it}^2)_{t \leq T}$, $v_{i1}v_{i2}$, $v_{i2}v_{i3}$  & 252  \\ 
		
		$(v_{it}, v_{it}^2)_{t \leq T}$ & 262 \\
		
		$(v_{it}, v_{it}^2, v_{it}^3)_{t \leq T}$ & 332 \\
		
		$(v_{it}, v_{it}^2, v_{it}^3)_{t \leq T}$, $v_{i2}v_{i3}$ & 278 \\
		
		$(v_{it})_{t \leq T}$, $v_{i1}v_{i2}$, $v_{i2}v_{i3}$ & 269 \\ 
		
		$(v_{it})_{t \leq T}$ & 264 \\ [1ex] 
		\hline
	\end{tabular}
	\captionof{table}{Cross-validation values}\label{table:cv_values}
\end{center}

\begin{center}
	\begin{tabular}{l c c c} 
		
		Hours worked (log) &  CRC  &2SLS&FF \\ [0.5ex] 
		\hline\hline
		Hourly wage (log) &  0.253 & 0.209 & 0.254 \\
		
		Number of children  & -0.0287 & 0.0943 & 0.0815 \\ 
		
		Age & 0.211 & -0.130 & -0.134 \\
		
		Age$^2$ & 0.0001 & 0.0014 & 0.0014\\
		
		Bad health indicator & -0.0254 & 0.0644 & 0.0918\\
		\hline
	\end{tabular}
	\captionof{table}{Results for $\hat{\mu}$ and $\hat{b}$}\label{table:empirics_rslt}
\end{center}

\newpage

\section{Proofs of Results in Sections \ref{sec:assn} and \ref{sec:app_illus}}\label{sec:app_proofid}  

We define $M(X)  = I_{T-1} - \dot{X} (\dot{X}' \dot{X})^{-1} \dot{X}'$  if $\dot{X}$ is of full rank or $ M = I - \dot{X}  \dot{X}^{ +}$ if not, where $\dot{X}^{ +}$ is the Moore Penrose inverse. In what follows, the $i$ indexing random variables is dropped whenever this does not hinder readability.

\begin{proof}[Proof of Result \ref{rslt:cns_g}:]
	We first state a more general statement of this result, then give a proof of this general result.
	\begin{result}
		Let Assumptions \ref{assn:idCFA} and \ref{assn:id_finiteE} hold.
		\begin{enumerate}
			\item If $V$ is continuously distributed, then $g$ is identified almost surely on $\mathcal{S}_V$ if and only if Assumption \ref{assn:id_M_GLn} holds.
			\item If $V$ has discrete support $\{\omega_1,\dots,\omega_l\}$ with corresponding probability weights $(p_1,\dots,p_l)$, then $g$ is identified on the support of $V$ if and only if a) for all $j \leq l$, $\mathcal{M}(\omega_j)$ is nonsingular, or b) for all $(\omega_{j_1},\dots,\omega_{j_k})$ satisfying  $\mathcal{M}(\omega_{j_s}) \neq \{0\}$ for all $s \leq k$, then for all set of nonzero vectors $(e(\omega_{j_1}),\dots,e(\omega_{j_k}))$ satisfying $e(\omega_{j_s}) \in \mathcal{M}(\omega_{j_s})$ for all $s \leq k$, $\sum_{s\leq k} p_{j_s} e(\omega_{j_s}) \neq 0$.
		\end{enumerate}
	\end{result}

	\textit{Proof of Case 1:}
	
	If $\mathcal{M}(V)$ is nonsingular $\mathbb{P}_V \, \text{a.s}$, then by the argument given in Section \ref{sec:maindid} the function $g$ is identified on a set of probability 1 on $\mathcal{S}_V$. We now show the other implication. We assume that IA does not hold and show that $g$ is not identified by constructing a function $\tilde{g}$ that differs from $g$ on a set of positive probability, and a random vector $\tilde{\mu}$ such that the dgp with  $\tilde{g}$ and $\tilde{\mu}$ as primitives generates the same distribution of observables as the dgp with the true value of these parameters does. Note that the only constraint on the function $g$ is the normalization $\E(g(V)) = 0$.
	
	If  IA does not hold, there exists a subset $N_V \subset \mathcal{S}_V$ such that $\mathbb{P}_V(N_V) >0$ and for all $V \in N_V$, $\mathcal{M}(V)$ is singular. 
	For all $V\in N_V$, there exists a vector  $e(V)\in \R^{T-1}$ such that $e(V) \neq 0$ and $\mathcal{M}(V)e(V)=0$. Given this function $e$, one can find a real-valued function $V \in N_V \mapsto \gamma(V) \in \R$ and a subset $\tilde{N}_V \subset N_V$ such that $\mathbb{P}_V(\tilde{N}_V) >0$, for all $V \in \tilde{N}_V$, $\gamma(V) \neq 0$ and $\E(\gamma(V)e(V))=0$. For instance, partition $N_V$ into $T$ subsets with all strictly positive $\mathbb{P}_V$-measure and on each of these $T$ subsets take $\gamma$ to be a constant to be determined. Write the vector of these constants $\vec{\gamma}$. The equality $\E(\gamma(V)e(V))=0$  is of the form $E\vec{\gamma}=0$ where $E$ is a matrix of size $T-1 \times T$: this linear equation is guaranteed to have a nonzero solution.
	We now define for all $V \in \tilde{N}_V$, $\tilde{e}(V) = \gamma(V) e(V)$. It satisfies $\tilde{e}(V)  \neq 0$ and $\mathcal{M}(V)\tilde{e}(V) =0$.
	\begin{align*}
		\mathcal{M}(V) \, \tilde{e}(V) \, = \, 0& \Rightarrow \,  \tilde{e}(V)' \, \mathcal{M}(V) \, \tilde{e}(V) \, = \, 0 \\
		& \Rightarrow \, \E( \, \tilde{e}(V)' \, M(X)'M(X)\tilde{e}(V) \, | \,V) \, = \, 0 \\
		& \Rightarrow \, \E( \, || M(X) \, \tilde{e}(V) ||^2 \, | \, V) \, = \, 0.
	\end{align*}
	Since $|| M(X) \, \tilde{e}(V) ||^2$ is  positive for all values of $X$ for a given $V$, this implies that  $ M(X) \, \tilde{e}(V)  = 0$ with probability $1$ on the support of $X$ conditional on $V$. That is, $M(X) \, \tilde{e}(V) \, = \, 0, \ \mathbb{P}_{X|V} \textrm{-a.s} \,$ which implies that $\tilde{e}(V) = \dot{X} (\dot{X}' \dot{X})^{-1} \dot{X}'\tilde{e}(V), \ \mathbb{P}_{X|V} \textrm{-a.s}$. 
	
	We define the distribution of a new random vector $(\tilde{Y},X)$ conditional on $V$. If $V \notin \tilde{N}_V$, take $(\tilde{Y},X)|V$ to have the same distribution as $(Y,X)|V$. If $V \in \tilde{N}_V$, define the random vector $\beta = (\dot{X}' \dot{X})^{-1} \dot{X}'\tilde{e}(V) \in \R^{T-1}$, then  $\tilde{e}(V) = \dot{X}\beta$. Define also the function $\tilde{g}(V) = g(V) + \tilde{e}(V)$ and the random vector $\tilde{\mu} = \mu - \beta$. Since $\tilde{e}(V)  \neq 0$, $\tilde{g}(V) \neq g(V)$ and $\tilde{\mu} \neq \mu$, and we also have  $\E(\tilde{g}(V)) = 0$. Define then  $\tilde{\dot{y}}_{t} = \dot{x}_{t}'\tilde{\mu} + \tilde{g}_t(V) + \dot{u}_t$, and the vector $(\tilde{Y},X)|V$ through $\tilde{y}_t = y_1 + \sum_{s\leq t-1} \tilde{\dot{y}}_s $. Note that $\tilde{\dot{y}}_{t} = \dot{y}_t$ and $\tilde{y}_t = y_t$, which implies that if $V \notin \tilde{N}_V$, the distribution of $(\tilde{Y},X)|V$  is also that of $(Y,X)|V$.
	
	\medskip
	
	\textit{Proof of Case 2:}
	
	We first assume that neither a) nor b) hold. Then there exists $(\omega_{j_1},\dots,\omega_{j_k})$ satisfying  $\mathcal{M}(\omega_{j_s}) \neq \{0\}$ for all $s \leq k$ and a set of nonzero vectors $(e(\omega_{j_1}),\dots,e(\omega_{j_k}))$ satisfying $e(\omega_{j_s}) \in \mathcal{M}(\omega_{j_s})$ for all $s \leq k$ and $\sum_{s\leq k} p_{j_s} e(\omega_{j_s}) = 0$. As in Case 1, it is thus possible to construct a function $\tilde{g}$ such that, if $j = j_s$ for some $s \leq k$, then $\tilde{g}(\omega_{j}) = g(\omega_{j}) + e(\omega_{j_s})$, otherwise $\tilde{g}(\omega_{j}) = g(\omega_{j})$. It is then guaranteed that $\E(\tilde{g}(V)) = 0$. As in the previous case also, we know that  for all  $s \leq k$, $e(\omega_{j_s}) = \dot{X}\beta, \ \mathbb{P}_{X|V = \omega_{j_s}} \textrm{-a.s}$ where $\beta$ is a random vector. Thus following the procedure described in Case 1, one can also construct $\tilde{\mu} \neq \mu$ and a vector $\tilde{Y}$ such that the distribution of $(\tilde{Y},X)|V=\omega_{j}$  is also that of $(Y,X)|V=\omega_{j}$ for all $j\leq l$. This implies that $g$ is not uniquely identified.

	We now look at the other implication. If a) holds, then as before by the argument given in Section \ref{sec:maindid}, the function $g$ is identified on a set of probability 1 on $\mathcal{S}_V=\{\omega_1,\dots,\omega_l\}$. If b) holds and $g$ is not identified, then there exists  $\tilde{g}(V)$ such that  $\tilde{g}(V) \neq g(V)  $, $\E(\tilde{g}(V)) \neq 0$ and $\mathcal{M}(\omega_s)g(\omega_s) = \E(M(X)Y|V=\omega_s)$ is equal to $\mathcal{M}(\omega_s)\tilde{g}(\omega_s)$ for all $s\leq k$. Define $e(\omega_s) = \tilde{g}(\omega_s) - g(\omega_s)$, then $\mathcal{M}(\omega_s)e(\omega_s) = 0$. Take $(\omega_{j_1},\dots,\omega_{j_k})$ the nonempty subset of support points such that $e(\omega_{j_s})$ is nonzero, then 
	\begin{align*}
		& \forall s \leq k, \ e(\omega_{j_s}) \neq 0, \  \mathcal{M}(\omega_s)e(\omega_s) = 0, \\
		& \text{and } \sum_{s\leq k} p_{j_s} e(\omega_{j_s}) = 0
	\end{align*}
	which is a contradiction with b).
\end{proof}

\begin{proof}[Proof of Result \ref{rslt:equiv_span}]
	Let us fix $V$. Recall that by the Proof of Result  \ref{rslt:cns_g}, $\mathcal{M}(V)$ is singular if and only if there exists a vector $e(V) \neq 0$ such that  $ M(X) \, e(V) = 0$ with probability $1$ on the support of $X$ conditional on $V$. Let us denote with $\tilde{\mathcal{S}}$ the set of $X$ such that  $ M(X) \, e(V) = 0$, then $\mathbb{P}_{X|V}(\tilde{\mathcal{S}})=1$ and $e(V) \in \bigcap_{X \in \tilde{\mathcal{S}}} \Span(\dot{X})$. This concludes the proof.
\end{proof}

\begin{proof}[Proof of Result \ref{rslt:discrete_IV}]
	Let us write $\underline{k}_V$ the number of points in the support of $\dot{X}$ conditional on  $V$, noting that  $\underline{k}_V \leq k_V$. We write $\mathcal{S}_{\dot{X}|V} = \{\dot{X}^{(1)}, ..., \dot{X}^{(\underline{k}_V)}\}$. Then by Result \ref{rslt:equiv_span},  $\mathcal{M}(V)$ is invertible if and only if $\ \bigcap_{k \leq \underline{k}_V} \Span(\dot{X}^{(k)}) = \{0\} $. The dimension of this intersection can be computed as follows,
	\begin{align*}
		\dim\left(\bigcap_{k \leq \underline{k}_V} \Span(\dot{X}^{(k)}) \right) & =  \sum_{k \leq \underline{k}_V} \rank(\dot{X}^{(k)}) - \rank {\small\begin{pmatrix}
				\dot{X}^{(1)} & \dot{X}^{(2)} &  & &  \\
				\dot{X}^{(1)} &  & \dot{X}^{(3)} & &  \\
				\vdots & & & \ddots & \\
				\dot{X}^{(1)} & & & & \dot{X}^{(\underline{k}_V)}
		\end{pmatrix}}\\
		& \geq  \underline{k}_V d_x - \min((\underline{k}_V - 1)(T-1),\ \underline{k}_V d_x )
	\end{align*}
	where we used the main theorem in \cite{t02} to obtain the first line and  the full column rank assumption for the second line. By this second line, if $\min((\underline{k}_V - 1)(T-1),\ \underline{k}_V d_x ) < \underline{k}_V d_x$, then $\ \bigcap_{k \leq \underline{k}_V} \Span(\dot{X}^{(k)})$ cannot be the trivial set $ \{0\} $. Moreover, $\min((\underline{k}_V - 1)(T-1),\ \underline{k}_V d_x ) < \underline{k}_V d_x$  if and only if $(\underline{k}_V - 1)(T-1) < \underline{k}_V d_x$, that is, if and only if $\underline{k}_V < \frac{T-1}{T-1 - d_x}$. The result follows.
\end{proof}


\begin{proof}[Proof of Result \ref{rslt:counterex_1cstt}]
	Consider first the case where $t_0 = 1$. $\mathcal{M}(V)$ is singular if and only if either there is only one point on the support of $\dot{X}$ given $V$, or if all draws are collinear. By assumption, there are at least two points on the support of $\dot{X}$ given $V$, $\dot{X}^{(1)}= \dot{m}(Z^{(1)},V)$ and $\dot{X}^{(2)}= \dot{m}(Z^{(2)},V)$ where $\dot{X}^{(1)} \neq \dot{X}^{(2)}$. Let us assume that they are collinear. By full rank assumption, 
	$\dot{X}^{(1)} \neq 0$ and $\dot{X}^{(2)}\neq 0$. Let us  write the vector of time differences as $\dot{X}^{(1)} = (a_1,a_2,...,a_{T-1})'$ and $\dot{X}^{(2)} =(a_1 - \delta,a_2,...,a_{T-1})'$ with $\delta \neq 0$. Since $\dot{X}^{(1)}$ and $\dot{X}^{(2)}$ are collinear, there exists $\lambda \notin \{0,1\}$ such that 
	$(a_1,a_2,...,a_{T-1}) = \lambda (a_1 - \delta,a_2,...,a_{T-1})$. This implies that $a_2=...=a_{T-1} = 0$ which means that $X^{(1)} = (X_1^{(1)}, X_1^{(1)} + a_1, X_1^{(1)} + a_1,..., X_1^{(1)}+a_1)$ and is thus of the form  $(a,b,b,..b)'$. 	
	The reciprocal implication follows from noticing that if  $X^{(1)}$ is of this form, then $X^{(2)}$ is too, and $\dot{X}^{(1)}$ and $\dot{X}^{(2)}$ are collinear.	
	The case where $t_0 = T$ follows from the same argument. 
	
	We now consider the case where $1 < t_0 < T$. Let us  write $\dot{X}^{(1)} = (a_1,a_2,...,a_{T-1})'$ and $\dot{X}^{(2)} =(a_1,..,a_{t_0-2},a_{t_0 - 1} + \delta,a_{t_0}-\delta,a_{t_0+1},...,a_{T-1})'$ with $\delta \neq 0$. As for case a), if  $\dot{X}^{(1)}$ and $\dot{X}^{(2)}$ are collinear then there exists $\lambda \notin \{0,1\}$ such that 
	$\dot{X}^{(1)} = \lambda \dot{X}^{(2)}$. This implies that $a_1 = ..=a_{t_0-2} = a_{t_0+1}=...=a_{T-1}=0$. Plus, 
	$\lambda (a_{t_0 - 1} + \delta)=  a_{t_0 - 1}$ and $\lambda(a_{t_0}-\delta) =  a_{t_0}$  together imply that $a_{t_0 - 1} = - a_{t_0}$. In terms of the components of $X^{(1)}$, this means that $X_{t_0 + 1}^{(1)} - X_{t_0 }^{(1)} = -(X_{t_0}^{(1)} -X_{t_0- 1}^{(1)} )$ thus $X_{t_0 + 1}^{(1)}=X_{t_0- 1}^{(1)}$. Thus $X^{(1)}$ can be written  $(b,..b,a,b,..b)'$ where $a$ is the $t_0^{\text{th}}$ component and $a\neq b$. The  reciprocal implication follows as in the first case.
\end{proof}

The following result will be used in the proof of Result \ref{rslt:evdok_z}.

\begin{result}\label{rslt:evdok}
	Fix $V \in \mathcal{S}_V$. Assume that for all $t \leq T-1$, there exists $X^{(t)} \in \mathcal{S}_{X| V}$ such that a) $x_{t+1}^{(t)}=x_{t}^{(t)}$, b) $\dot{X}^{(t)}$ is of full rank, and c) for all $\epsilon >0$, $\mathbb{P}_{X|V}\left(\mathcal{B}(X^{(t)},\epsilon)\right)>0$. Then $\mathcal{M}(V)$ is invertible.
\end{result}

\begin{proof}[Proof of Result \ref{rslt:evdok}]
	For each $t\leq T-1$,  write $e_t \in \R^{T-1}$ the vector whose $t^{th}$ component is $1$ and has $0$ everywhere else. Then,
	\begin{equation*}
		x_{t+1}^{(t)}=x_{t}^{(t)}\ \Leftrightarrow \dot{x}_{t}^{(t)}=0\ \Leftrightarrow e_t'\dot{X}^{(t)}=0 \ \Leftrightarrow e_t'M(\dot{X}^{(t)})e_t =1.
	\end{equation*}
	Fix $t \leq T-1$. $\dot{X}^{(t)}$ is of full column rank $d_x$ thus $\det(\dot{X}^{(t) \prime}\dot{X}^{(t)}) \, \neq 0$. The determinant function being continuous, there exists $\overline{\epsilon}>0$ such that $\forall \, \dot{X} \in \mathcal{B}(\dot{X}^{(t)},\overline{\epsilon})$, $Rank (\dot{X})= d_x$. Note that for notational simplicity, we use the same notation $\mathcal{B}$ independently of the dimension of the vector space.
	
	Take $c \in \R^{T-1}$ such that $\mathcal{M}(V) c = 0$. Then by the argument given in the proof of Result \ref{rslt:cns_g}, $M(X) \, c \, = \, 0, \ \mathbb{P}_{\dot{X}|V} \textrm{-a.s} \, $. Thus $M(X) \, c \, = \, 0$ for all  $\dot{X}$ in $\mathcal{B}(\dot{X}^{(t)},\overline{\epsilon})$ except on a set of $\mathbb{P}_{\dot{X}|V}$-measure $0$.
	By assumption, for all $\epsilon >0$, $\mathbb{P}_{X|V}\left(\mathcal{B}(X^{(t)},\epsilon)\right)>0$. Using equivalence of norms in finite dimension, the triangular inequality implies that for all $\epsilon >0$, $\mathbb{P}_{\dot{X}|V}\left(\mathcal{B}(\dot{X}^{(t)},\epsilon)\right)>0$. Thus for all $\epsilon \leq \overline{\epsilon}$,  there exists $\dot{X}^{(t, \epsilon)} \in \mathcal{B}(\dot{X}^{(t)},\epsilon)$ such that $M(\dot{X}^{(t, \epsilon)}) \, c \, = \, 0$.	
	Additionally, $\dot{X}$ is of full rank  for all $\dot{X}$ in $\mathcal{B}(\dot{X}^{(t)},\overline{\epsilon})$, thus $M(X) \, c$ is a continuous function of $\dot{X}$ on $\mathcal{B}(\dot{X}^{(t)},\epsilon)$.
	By continuity, letting $\epsilon$ converge to zero guarantees that $M(X^{(t)}) \, c = 0$. Moreover  $\dot{X}^{(t) \, \prime} e_t \, = \, 0$ implies $M(X^{(t)}) e_t = M(X^{(t)})' e_t = e_t$. Thus,
	\begin{align*}
		\forall t \leq T-1, \ M(X^{(t)}) \, c \, = \, 0 \, & \Rightarrow \, \forall t \leq T-1, \ e_t^{\, \prime} \,   M(X^{(t)}) \, c \, = \, 0, \\
		& \Rightarrow \, \forall t \leq T-1, \ e_t^{\, \prime} \, c \, = \, c_t \, = \, 0, \\
		& \Rightarrow \, c \, = \, 0.
	\end{align*}
	Hence $\mathcal{M}(V)$ is invertible.
\end{proof}

\begin{proof}[Proof of Result \ref{rslt:evdok_z}]
	Under the assumptions of the result, the assumptions of Result \ref{rslt:evdok} hold if we define  $X^{(t)} = \dot m (Z^{(t)},V)$.	
\end{proof}

\begin{proof}[Proof of Result \ref{rslt:florenstype}]
	We consider a sub-multiplicative matrix norm $||.||$ (e.g the Frobenius norm) and for any matrix $B$ and any $\delta >0$ write $\mathcal{B}(B,\delta)$ the ball of radius $\delta$ centered around $B$.
	The first step of the proof is to show that for all $\tilde{B} \subset \mathcal{B}(\dot{X},\gamma)$ such that $\mathbb{P}_{\dot{X}|V}\left(\tilde{B}\right)=\mathbb{P}_{\dot{X}|V}\left(\mathcal{B}(\dot{X},\gamma)\right)$, then
	\begin{equation}\label{eq:proof_emptyset}
		\bigcap_{W \in \tilde{B}} \ \text{Span}(W) =  \{0\}.
	\end{equation}
	Let  $\tilde{B} \subset \mathcal{B}(\dot{X},\gamma)$ such that $\mathbb{P}_{\dot{X}|V}\left(\tilde{B}\right)=\mathbb{P}_{\dot{X}|V}\left(\mathcal{B}(\dot{X},\gamma)\right)$. To obtain (\ref{eq:proof_emptyset}), first take $b \in  \bigcap_{W \in \tilde{B}} \ \text{Span}(W)$. We show that for $W_0 \in \Int \mathcal{B}(\dot{X},\gamma)$ of full column rank,  $b \in \text{Span}(W_0)$. Indeed, since $\mathbb{P}_{\dot{X}|V}\left(\tilde{B}\right)=\mathbb{P}_{\dot{X}|V}\left(\mathcal{B}(\dot{X},\gamma)\right)\subset \mathcal{S}_{\dot{X}|V}$, there exists a sequence $(W_n)_{n \in \N} \subset \tilde{B}$ such that $W_n \to W_0$ as $n \to \infty$ and $W_n$ is of full column rank for all $n \in \N$. For $n \geq 0$, since $W_n \in \tilde{B}$ there exists $\lambda_n \in \R^{d_x}$ such that $b = W_n\lambda_n$. This implies that $\lambda_n = (W_n' W_n)^{-1} W_n' b \to \lambda_0 = (W_0 W_0')^{-1}W_0' b$ as $n \to \infty$ and $b=W_0 \lambda_0 \in \text{Span}(W_0)$, which is the desired result. We now show how this allows us to obtain (\ref{eq:proof_emptyset}).
	Take 
	$b \in \bigcap_{W \in \tilde{B}} \ \text{Span}(W)$.
	Since $\dot{X}$ is of full column rank, 
	$b \in \text{Span}(\dot{X})$ and there exists $\lambda \in \R^{d_x}$ such that $b = \dot{X} \lambda$.
	Write $(e_1, \, ... , \, e_{T-1} )$ the canonical basis of $\R^{T-1}$. Since $d_x < T-1$, there exists $k \leq T-1$ such that $e_k \not\in \text{Span}(\dot{X})$. For $j \leq d_x$, consider the matrix $E_j$ of size $(T-1) \times d_x$ with all columns set to $0$ except for the $j$th column equal to $e_k$, and define the matrix $W_j = \dot{X} + \tilde{\gamma} E_j $ where $\tilde{\gamma} \neq 0$ is such that $W_j \in \Int \mathcal{B}(\dot{X},\gamma)$ and $W_j$ is of full column rank. Then for all $j\leq d_x$, $b \in \text{Span}(W_j)$ and  there exists $\lambda^{(j)} \in R^{d_x}$ such that $b = W_j \lambda^{(j)}$.
	For all $j\leq d_x$, we have 
	\begin{align*}
		& \dot{X} \lambda  = W_j \lambda^{(j)} = (\dot{X} + \tilde{\gamma} E_j) \lambda^{(j)} \\
		& \Rightarrow  \dot{X} (\lambda - \lambda^{(j)})/\tilde{\gamma} = E_j \lambda^{(j)} = \lambda_j^{(j)} e_k
	\end{align*}
	where $\lambda_j^{(j)} $ is the $j$th component of the vector $\lambda^{(j)} $. Since  $e_k \not\in \text{Span}(\dot{X})$, the previous equality implies that $\lambda_j^{(j)} =0$ and that $\lambda = \lambda^{(j)}$. This implies in turn that the $j$th component of the vector $\lambda$, $\lambda_j$, is also $0$.
	Since $\lambda_j = 0$ for all $j\leq d_x$, then $\lambda =0$ which implies $b = \dot{X} \lambda = 0$. This proves (\ref{eq:proof_emptyset}).
	
	The second step concludes using Result \ref{rslt:equiv_span}. Take a subset $\tilde{S}\subset \mathcal{S}_{X|V}$ such that  $\mathbb{P}_{X|V}\left(\tilde{S}\right)=1$. Write $\dot{\tilde{S}} = \{\dot{X} \, | \, X \in \tilde{S}\}$, then $\mathbb{P}_{\dot{X}|V}\left(\dot{\tilde{S}}\right)=1$. Write $\tilde{B} = \dot{\tilde{S}} \cap  \mathcal{B}(\dot{X},\gamma) $, then $\mathbb{P}_{\dot{X}|V}\left(\tilde{B}\right)=\mathbb{P}_{\dot{X}|V}\left(\mathcal{B}(\dot{X},\gamma)\right)$. Thus,	
	$$\bigcap_{W \in \tilde{S}} \text{Span}(\dot{W}) = \bigcap_{W \in \dot{\tilde{S}}} \text{Span}(W) \subset \bigcap_{W \in \tilde{B}} \ \text{Span}(W)  = \{0\},$$
	by Equation (\ref{eq:proof_emptyset}).
	By Result \ref{rslt:equiv_span}, $\mathcal{M}(V)$ is invertible.
\end{proof}

\begin{proof}[Proof of Corollary \ref{rslt:lin1ststage_csttA}]
	Consider the given value $V \in \mathcal{S}_V$. Equation (\ref{eq:lin1ststage_csttA}) implies that 
	$$\dot{X} = \dot{Z} A' + \dot{V}.$$
	We consider a sub-multiplicative matrix norm $||.||$ (e.g the Frobenius norm) and for any matrix $B$ and any $\delta >0$ write $\mathcal{B}(B,\delta)$ the ball of radius $\delta$ centered around $B$.
	By the full column rank and support assumptions,  there exists $\dot{Z} \in \mathcal{S}_{\dot{Z}|V}$ such that $\dot{X}= \dot{Z} A + \dot{V}$ is of full column rank and for some $\delta >0$, $\mathcal{B}(\dot{Z},\delta) \subset \mathcal{S}_{\dot{Z}|V=\bar{V} }$. Recall that  $\mathcal{B}(\dot{Z},\delta)  \subset \R^{ (T-1)\times d_x }$. 
	
	Let us show that for some $\gamma >0$, $\mathcal{B}(\dot{X},\gamma) \subset \mathcal{S}_{\dot{X}|V}$, then Result \ref{rslt:florenstype} will conclude the proof. Since $A$ is of full row rank, let $a$ be the norm of the transpose of its Moore-Penrose inverse, denoted $A^* = (AA')^{-1}A'$. Take $\gamma = \delta /a$. Let $\dot{X} + h \in \mathcal{B}(\dot{X},\gamma) $ and define  $\dot{Z}_h = \dot{Z} + h A^*$. Then $\dot{Z}_h \in \mathcal{B}(\dot{Z},\delta) $ since $||\dot{Z}_h  - \dot{Z} || = ||h A^* || \leq \gamma a = \delta$. Moreover, $\dot{Z}_h A'  + \dot{V} = h A^* A' +  \dot{Z} A' + \dot{V} = \dot{X} + h$. This implies that $\dot{X} + h \in  \mathcal{S}_{\dot{X}|V}$. Thus  $\mathcal{B}(\dot{X},\gamma) \subset \mathcal{S}_{\dot{X}|V}$. 	
\end{proof}

\begin{proof}[Proof of Result \ref{rslt:relax_SE}:]
	Since $x_{it}$ is scalar,
	$M_i  \,  = \, I - \dot{X}_i \dot{X}_i'/(\dot{X}_i' \dot{X}_i)$. Under Assumption \ref{assn:relax_SE}, for $t \geq 2$,
	\begin{align*}
		x_{it} =  \rho^{t-1} x_{i1} + \sum_{s=2}^t \rho^{t-s} \eta_{is}  \Rightarrow \ \  x_{i \, t+1} - x_{it}  = \rho^{t-1} (\rho - 1) x_{i1} + (\rho - 1) \sum_{s=2}^t \rho^{t-s}  \eta_{is} + \eta_{i \, t+1},
	\end{align*}
	therefore defining the two vectors $C_1 = [\rho - 1] (1, \rho,\, .. \, , \rho^{T-2})' \in \R^{T-1}$ and $C_2(V) = (\eta_{i2}, \, [\rho - 1] \eta_{i2} + \eta_{i3}, \, ..\, , \, \allowbreak [\rho-1] \sum_{s=2}^{T-1} \rho^{T-s-1}  \eta_{is} + \eta_{i \, T})' \in \R^{T-1},$ we can write
	\begin{equation}\label{eq:dotXseqEx}
		\dot{X}_i = x_{i1} C_1 + C_2(V_i).
	\end{equation}
	Let us fix  a given value $\bar{V} \in \mathcal{S}_V$. The draws of $\dot{X}$ from $\mathbb{P}_{\dot{X}|V =\bar{V}}$, as can be seen in (\ref{eq:dotXseqEx}), differ only in the value of $x_{1}$: these draws are the sum of two vectors, $C_2(\bar{V})$ which is fixed since the draws are conditional on $V = \bar{V}$, and $x_{1} C_1$  proportional to the constant vector $C_1$. Note that $\mathcal{S}_{x_1 |V = \bar{V}} = \mathcal{S}_{x_1}$ since $x_1 \indep V$.
	We use Result \ref{rslt:equiv_span} to show that $\mathcal{M}(\bar{V})$ is nonsingular almost surely.
	
	First, if $x_{i1}$ has discrete distribution with at least two support points $\underline{x_1} \neq \overbar{x_1}$, a subset $\tilde{\mathcal{S}}$  such that $\mathbb{P}_{X|V}\left(\tilde{\mathcal{S}}\right)=1$ satisfies $\tilde{\mathcal{S}}=\mathcal{S}_{x_1 |V = \bar{V}} = \mathcal{S}_{x_1}$. Since $x$ is scalar,  $\mathcal{M}(\bar{V})$ is nonsingular as long as $\underline{\dot{X}} = \underline{x_1}C_1 + C_2(\bar{V})$ and $\overbar{\dot{X}} = \overbar{x_1} C_1 + C_2(\bar{V})$ are not collinear. For these vectors to be collinear, since $C_1 \neq 0$, then either $C_1$ and $C_2(\bar{V})$ are proportional, or $C_2(\bar{V}) = 0$. Note that $C_2(\bar{V}) = 0$ implies $\bar{V}=0$, and one can show that if $C_1$ and $C_2(\bar{V})$ are proportional, this implies that $\bar{V} \in  \left\lbrace  b . \left( \begin{smallmatrix}1 \\ \vdots \\ 1
	\end{smallmatrix} \right)| b \in \R \right\rbrace =: \mathcal{D}$. Hence,  $\mathcal{M}(\bar{V})$ singular  implies $\bar{V} \in \mathcal{D}$. $\mathcal{D}$ is a subset of $\R^{T-1}$ with $\mathbb{P}_V$ measure $0$ since $V_i$ is continuously distributed on $\R^{T-1}$ by assumption. 
	
	Second, in the case where  $x_{i1}$ is continuously distributed and $\Int(\mathcal{S}_{x_1}) \neq \emptyset$, for any subset $\tilde{\mathcal{S}}$  such that $\mathbb{P}_{X|V}\left(\tilde{\mathcal{S}}\right)=1$ one can find  $(\underline{x_1},\overline{x_1}) \in \tilde{\mathcal{S}}$ such that $\underline{x_1} \neq \overbar{x_1}$. Then $\underline{\dot{X}} = \underline{x_1}C_1 + C_2(\bar{V})$ and $\overbar{\dot{X}} = \overbar{x_1} C_1 + C_2(\bar{V})$ are not collinear as long as $\bar{V} \notin \mathcal{D}$. Thus  $\ \bigcap_{X \in \tilde{\mathcal{S}}} \Span(\dot{X}) = \{0\} $  as long as $\bar{V} \notin \mathcal{D}$: similarly,  $\mathcal{M}(\bar{V})$ singular  implies $\bar{V} \in \mathcal{D}$.
	
	In either case, $\mathcal{M}(\bar{V})$  not invertible implies $\bar{V} \in \mathcal{D}$. The function $g$ is then identified over $\mathcal{S}_V \setminus \mathcal{D}$. However, $\mathbb{P}_V(\mathcal{S}_V \setminus \mathcal{D})=1$ and the second identification step formalized in Section \ref{sec:maindid}  identifies $\E(\mu_i)$ and $\E(\alpha_i)$.
\end{proof}

\begin{proof}[Proof of Result \ref{rslt:IA_illus}:]
	Recall $v_{it} = \ln \omega_{it} - \E(\ln \omega_{it}|  \ln \omega_{i0})$, $V_{i} = (v_{it})_{1 \leq t \leq T}$, $W_i = (\ln \omega_{i1},..,\ln \omega_{iT})'$. We also have as before	$\mathcal{M}(V) = \E(M_i|V_i=V)$ where  $M_i = I_{T-1} - \dot W_i(\dot W_i'\dot W_i)^{-1}\dot W_i'$.
	Note that $v_{it} = \sum_{k=0}^{t-1} \gamma_1^k [\eta_{it-k} + \gamma_0]$ and $\ln \omega_{it} = \gamma_1^t \ln \omega_{i0} + v_{it}$.
	This implies that, as in Result \ref{rslt:relax_SE}, we can write 
	$$
	\dot W_i = \ln \omega_{i0} \, E_1 + E_2(V_i),
	$$
	where $E_1 = (\gamma_1 - 1)(\gamma_1,..,\gamma_1^{T-1})$ and $E_2(V_i) = \dot V_i$.	
	The remainder of the proof is similar to that of  Result \ref{rslt:relax_SE}.
	Let us fix  a given value $\bar{V} \in \mathcal{S}_V$.
	
	Let us first consider the case where  $\ln \omega_{i0}$ has two points on its support, $\underline{w} \neq \overbar{w}$. Then  $\mathcal{M}(\bar{V})$ is nonsingular as long as $\underline{\dot{W}} = \underline{w}E_1 + E_2(\bar{V})$ and $\overbar{\dot{W}} = \overbar{w} E_1 + E_2(\bar{V})$ are not collinear. 
	For these vectors to be collinear, since $E_1 \neq 0$, then either $E_1$ and $E_2(\bar{V})$ are proportional, or $E_2(\bar{V}) = 0$. Note that $E_2(\bar{V}) = 0$ implies $\bar{V}=c \, 1_T$ for some $c \in R$ and that if $C_1$ and $C_2(\bar{V})$ are proportional,  then $\bar{V} = (\gamma_1,..,\gamma_1^{T})' + c \, 1_T$ for some $c \in \R$.
	Hence,  $\mathcal{M}(\bar{V})$ singular  implies $\bar{V} \in \{ c \, 1_T \, : c \in \R \} \cup \{ (\gamma_1,..,\gamma_1^{T})' + c \, 1_T \} := \mathcal{E}$. 
	
	We now look at the case where  $\ln \omega_{i0}$ is continuously distributed and $\Int(\mathcal{S}_{\ln \omega_{i0}}) \neq \emptyset$, for any subset $\tilde{\mathcal{S}}$  such that $\mathbb{P}_{W_i|V_i}\left(\tilde{\mathcal{S}}\right)=1$ one can find  $(\underline{w},\overline{w}) \in \tilde{\mathcal{S}}$ such that $\underline{w} \neq \overbar{w}$. Then $\underline{\dot{W}} = \underline{w}E_1 + E_2(\bar{V})$ and $\overbar{\dot{W}} = \overbar{w} E_1 + E_2(\bar{V})$ are not collinear as long as $\bar{V} \notin \mathcal{E}$. Thus  $\ \bigcap_{W \in \tilde{\mathcal{S}}} \Span(\dot{W}) = \{0\} $  as long as $\bar{V} \notin \mathcal{E}$: similarly,  $\mathcal{M}(\bar{V})$ singular  implies $\bar{V} \in \mathcal{E}$.
	
	In either case, $\mathcal{M}(\bar{V})$  not invertible implies $\bar{V} \in \mathcal{E}$, which is a subset of $\R^{T}$ with $\mathbb{P}_V$ measure $0$ since $V_i$ is continuously distributed on $\R^{T}$ by assumption. Thus $\mathcal{M}(V_i)$ is nonsingular $\mathbb{P}_{V_i}$a.s.
\end{proof}

\section{Complements to Section \ref{sec:IA}}\label{sec:app_struc_break}

We now cover an example with ``structural breaks'', i.e., time-variations in the specification of $m_t$ but the instrument does not vary over time. We assume  $d_x=1$ and that the instrument at the first time period is binary. To simplify exposition, we assume that $z_{1i} \indep V_i$.
Consider the case with one structural break, 
\begin{equation}\label{eq:struct_break}
	\text{(\ref{eq:1ststagenp}) holds,  }\ \forall Z \in \mathcal{S}_Z, \, z_{t} = z_{1},\ \ \forall \, t\leq t_0, \, m_t = m_{t_0},\ \ \forall \, t > t_0, m_t = m_{t_0 + 1}.
\end{equation}
and the case with two structural breaks,
\begin{equation}\label{eq:struct_break2}
	\text{(\ref{eq:1ststagenp}) holds,  }\ \forall Z \in \mathcal{S}_Z,\, z_{t} = z_{1},\ \, \forall \, t\leq t_0, \, m_t = m_{t_0}, \ \, \forall \, t_0<t\leq t_1, \, m_t = m_{t_1},\ \ \forall \, t > t_1, m_t = m_{t_1 + 1}.
\end{equation}
\begin{result}\label{rslt:structbreak}
	Fix $V \in \mathcal{S}_V$ and assume that $\dot{X}$ is of full column rank $\mathbb{P}_{\dot{X}|V} \, \text{a.s}$. 
	\begin{itemize}
		\item[\textit{(i)}]  If (\ref{eq:struct_break}) holds, then $\mathcal{M}(V)$ is singular.
		\item[\textit{(ii)}] If (\ref{eq:struct_break2}) holds and $  \mathcal{S}_{z_1|V} = \{\underline{z}, \overline{z}\}$ with $\underline{z} \neq \overline{z}$, then  $\mathcal{M}(V)$ is nonsingular if and only if $(m_{t_1}(\underline{z},V) - m_{t_0}(\underline{z},V)\, , \, m_{t_1 + 1}(\underline{z},V) - m_{t_1}(\underline{z},V))$ and $(m_{t_1}(\overline{z},V) - m_{t_0}(\overline{z},V)\, , \, m_{t_1 + 1}(\overline{z},V) - m_{t_1}(\overline{z},V))$ are noncollinear.
	\end{itemize}
\end{result}

\begin{proof}[Proof of Result \ref{rslt:structbreak}]
	The proof follows arguments similar to the proof of Result \ref{rslt:counterex_1cstt}.
\end{proof}

\section{Example: Production Function}\label{sec:app_struct_model}

This section provides a simplifying structural model where the mains assumptions of the paper hold. It focuses on production functions.

Many factors can generate heterogeneity in economic agents' productivity such as education or ability for workers, or technology adoption for firms. A literature has studied models where this heterogeneity, or type, is not known by the agent. The agent has a prior on its distribution and learns as she observes output realizations over time. See, e.g., \cite{j79,m84} for analyses of heterogeneity in worker productivity, and \cite{pe98} for an analysis of heterogeneity in firms profits. We consider such a framework, where the agent does not know her type and we look at the extreme case where there is no learning by the agent. The output is given by the following outcome equation
\begin{equation}\label{eq:modelProdFct}
	y_{it} \, = \, x_{it}' \mu_i + \epsilon_{it}.
\end{equation}
where the output  $y_{it}$ is realized after the choice of inputs $x_{it}$ is made by the agent. 
The random coefficients $\mu_i$ capture the heterogeneity in the returns to inputs. 	
The econometrician observes $(x_{it}, z_{it}, y_{it})_{t=1}^T$.
Outcome equations of the form (\ref{eq:modelProdFct}) are commonly used to describe production functions where returns to inputs are heterogeneous across agents but time invariant. Such production function can describe a firm or farm production function, with inputs being log of capital, labor and/or land. See for example \cite{mg88,gu22,s11}. It  can  also be used to model a child's education outcome, where inputs are any type of parental investments.

When $\mu_i$ has a degenerate distribution, the identification of production function parameters is challenged by the correlation between inputs $x_{it}$ and the residual $\epsilon_{it}$ as it includes a productivity shock observed by the agent prior to her choice of input. This problem is well-known and the subject of an extensive literature. It has been documented that for firms, where inputs are for instance capital and labor, this endogeneity cannot be captured by an additive fixed effect in models with constant returns, see a review in \cite{acf15}. 
We consider a simplified setup with the following assumptions:
\begin{enumerate}[label=(\roman*)]
	
	\itemsep-0.2em 
	\item The input $x_{it}$ is scalar.
	
	\item It is nondynamic: the realization $x_{it}$ does not impact the distribution of future outcomes. See \cite{acf15}. This excludes capital, which accumulates.
	
	\item Input cost is $C_t(x_{it},z_{it})$ where  $z_{it}$  is a cost shifter. The agent chooses $x_{it}$ to maximize the expected discounted value of future profits.

	\item $\epsilon_{it} = \omega_{it} + \eta_{it}$, where $\omega_{it}$ is a productivity term  observed by the agent prior to her choosing $x_{it}$ and  $\eta_{it}$ is a residual capturing either measurement error or unpredicted shock to productivity. It is not a state variable. Both are not observed by the econometrician.
	
	\item $\omega_{it}$ and $z_{it}$ follow Markov processes. We impose $\omega_{it} = h_t^\omega(\omega_{it-1},\xi_{it}^\omega)$ and $z_{it} = h_t^z(z_{it-1},\xi_{it}^z)$, where $(\xi_{it}^\omega, \xi_{it}^z) \indep  \{(\xi_{is}^\omega, \xi_{is}^z)_{s=2}^{t-1},\omega_{i1}, z_{i1}\} $. We also assume $\omega_{i1} \indep z_{i1}$ and for all $t \leq T$, $ \xi_{it}^\omega \indep \xi_{it}^z$. Thus, for all $t \leq T$, $\omega_{it} \indep z_{it}$.   We also impose for all $t \leq T$, $\eta_{it} \indep (z_{is},\omega_{is})_{s=1}^{T} $.
	
	
	\item We assume that for all $t \leq T$, $\omega_{it}$ is continuously distributed and its CDF is strictly increasing.
	
	\item The agent does not know  $\mu_i$. She has knowledge about its distribution but there is no learning. 
	
\end{enumerate}
By the independence assumptions above, the information set at  $t$ does not include past values of $y_{is}$ and need only be $(\omega_{it},z_{it})$. Write for simplicity $y_{it} = y(x_{it}, \mu_i,\epsilon_{it})$. We set the price to $1$.
At period $t$, the expected discounted value of future profits up to period $T$ 
is 
\begin{align*}
	V_t(\omega_{it},z_{it}) & = \max_{x_t} \E\left(\sum_{r=t}^{T} y(x_{r}, \mu,\epsilon_{r}) - C(x_r,z_r) \, | \omega_{it},z_{it}\right).
\end{align*}
This implies that there exists a function $G_t$ such that $x_{it} = G_t(\omega_{it},z_{it})$. We assume that $G_t$ is strictly monotonic in $\omega_t$. 
For agent $i$, define $v_{it} = F_{x_t | z_{t}}(x_{it}|z_{it})$  where $F$ indicates the conditional cdf.  Since  $\omega_{it} \indep z_{it}$, it is known that  $v_{it}= H_t(\omega_{it})$, see \cite{in09}, where $H_t = F_{\omega_t}$ is the pdf of $\omega_{it}$ and is strictly increasing.
Define $V_i=(v_{i1},..,v_{iT})'$ and $X_i=(x_{i1},..,x_{iT})'$, then
\begin{align*}
	\E (\epsilon_{it} |V_i, X_i) &= \E\left( \omega_{it} + \eta_{it} |V_i, X_i\right)= \E\left( \omega_{it} + \eta_{it} | \omega_{i1}, ..,\omega_{iT}, X_i\right) \\
	&=  \omega_{it} + \E\left( \E\left( \eta_{it} | \omega_{i1}, ..,\omega_{iT}, z_{i1}, ..,z_{iT} \right) | \omega_{i1}, ..,\omega_{iT}, X_i\right) = \omega_{it} =H_t^{-1}(v_{it}) =: f_t(V_i).
\end{align*}	
This proves that the control function approach assumption, Assumption 3.1, holds. In this example, there is a feedback effect as $\epsilon_{it}$ is correlated with future values of the input $x_{is}$ for $s \geq t$ through $\omega_{is}$. However the control function at time $t$ has only $v_{it}$ as argument: by the independence conditions on the future shocks to productivity $\left(\xi_{is'}^\omega\right)_{t\leq s'\leq s}$, this dependence is entirely captured by $\omega_{it}$ which has a one-to-one relation with $v_{it}$.

We now look at Assumption 3.3. Write $X_i = G(Z_i,V_i)$ and  $\dot{X}_i = \dot{G}(Z_i,V_i)$. Using the discussion in Section \ref{sec:minsupport}, since $Z_i \indep V_i$ the invertibility assumption, Assumption 3.3 will hold if for instance 
\begin{itemize}
	\item[(i)] $Z_i$ is discrete and has at least two points on its support, $\bar{Z}$ and $\underline{Z}$,
	\item[(ii)] $\dot{G}(\bar{Z},V_i)$ and $\dot{G}(\underline{Z},V_i)$ are not collinear $\mathbb{P}_{V_i}$ almost surely.
\end{itemize}	
Under these assumptions, if $T \geq 3$, the identification results obtained in the paper apply and the average returns to input are identified. 

\medskip

To be able to use the framework developed in the paper, a crucial assumption is that  $\mu_i$ is not in the information set of the agent. This implies that the source of unobserved heterogeneity  across agents  in the choice of $x_{it}$  is solely the scalar random variable $\omega_{it}$. It allows us to recover $v_{it}$, a transformation of $\omega_{it}$, which is a valid control variable. 
The timing and independence assumptions are also crucial. If instead $x_{it-1}$ is allowed to impact the distribution of variables at time $t$, then  $x_{it}$ would be a function of $x_{it-1}$, $\omega_{it}$ and $z_{it}$. To apply \cite{in09} and recover a one-to-one function of $\omega_{it}$, the condition $(x_{it-1},z_{it}) \indep \omega_{it}$ should replace $\omega_{it} \indep z_{it}$. Note that independence between $x_{it-1}$ and $\omega_{it}$ potentially does not allow for $\omega_{it}$ to be serially correlated, thus depending on the setting, either framework could be preferred.


\section{Definition of the trimming functions}\label{ref:trimmingapp}

The first trimming function we use is such that $\tau(V) \in \mathcal{S}_V$ and the $(d_2 (t-1) + d)^{\text{th}}$ component of $\tau(V)$ satisfies
\begin{equation}\label{eq:def_tau_consist}	
	\tau(V)_{(t-1)d_2 + d }  \, = \, \begin{cases} 
		v_{t,d}, & \text{ if } v_{t,d} \in [\underline{v}_{td};\bar{v}_{td}], \\
		\underline{v}_{td}, & \text{ if } v_{t,d} \leq \underline{v}_{td}, \\
		\bar{v}_{td}, & \text{ if } v_{t,d} \geq \bar{v}_{td},
	\end{cases}
\end{equation}
where $v_{t,d}$ is the d$^{th}$ component of $v_t$.

Define $\varsigma>0$, and $\tau_\varsigma : x \in \R \mapsto \varsigma (e^{-x^2/(2\varsigma^2) \, + \, x/\varsigma} - 1)$. Note that $\lim_{x \to - \infty} \tau_\varsigma(x) = - \varsigma$, $\lim_{x \to + \infty} \tau_\varsigma(x) = -  \varsigma$ and we also have $\tau_\varsigma(0) = 0$, $\tau_\varsigma'(0)=1$ and $\tau_\varsigma''(0)=0$.
For $V \in \R^{Td_2}$, the $(d_2 (t-1) + d)^{\text{th}}$ component of $\tau(V)$ is given by
\begin{equation}\label{eq:def_tau_asnorm}
	\tau(V)_{(t-1)k_2 + d }  \, = \, \begin{cases} 
		v_{td}, & \text{ if } v_{td} \in [\underline{v}_{td};\bar{v}_{td}], \\
		\underline{v}_{td} + \tau_\varsigma(v_{td} -  \underline{v}_{td}), &  \text{ if } v_{td} \leq \underline{v}_{td}, \\
		\bar{v}_{td} - \tau_\varsigma(\bar{v}_{td} - v_{td}), & \text{ if }  v_{td} \geq \bar{v}_{td}, 
	\end{cases}
\end{equation}
and define as before $\hat{V}_i = \tau(\tilde{V}_i)$. The support of $\tau$ is $\R^{Td_2}$ and we now have $\hat{V}_i \in \mathcal{S}_V^\varsigma = \bigtimes_{d \leq d_2, \, t \leq T} \, [\underline{v}_{td} - \varsigma ;\bar{v}_{td} + \varsigma]$. We will refer to $ \mathcal{S}_V^\varsigma $ as the ``extended support''.

Each component of $\tau$ is a twice differentiable function of $V$, implying that $\tau$ itself is twice continuously differentiable. Moreover for all $V \in \mathcal{S}_V$, $\partial \tau / \partial V = I_{Tk_2}$ which will imply that the derivative of a function $m$ composed with $\tau$ evaluated at $V$, $m(\tau(V))$, is equal to the derivative of $m(V)$ whenever $V \in \mathcal{S}_V$. On the extended support, that is for all $V \in \mathcal{S}_V^\varsigma,  |\partial \tau / \partial V| \leq C$ and $ |\partial^2 \tau / \partial V^2| \leq C$ for some constant $C$.

\clearpage

\section{Proof of Results in Section \ref{sec:consistency}}\label{sec:appConsistency}

We remind here of some notations and introduce new ones. We denote by $||.||_F$ the Frobenius norm (the canonical norm) in the space of matrices $\mathcal{M}_p(\R)$, and $||.||_2$ the matrix norm induced by $||.||$ on $\R^p$ (the spectral norm). 
We recall that for a given matrix $A \in \mathcal{M}_p(\R)$, $||A||_F \, = \, \left( \sum_{i,j \leq p} a_{ij}^2 \right)^{1/2} \, = \, \tr(A'A)^{1/2}$. To avoid tedious notations, we will regularly omit the subscript $F$, $||A||$ without index implies that the norm considered is the Frobenius norm. The index will be displayed when clarity requires it.
We define $\lambda_{\min}(A)$ to be the smallest eigenvalue of the matrix $A$ (when it has one), similarly $\lambda_{\max}(A)$, as well as $\lambda_1(A) \leq ... \leq \lambda_p(A)$ all the eigenvalues ranked by increasing order (when they exist).
We will use the following results. First, for all $A \in \mathcal{M}_p(\R), ||A||_2 \leq ||A||_F$. This inequality also holds for nonsquare matrices. Also, for $A$ a symmetric matrix, $||A||_2 \, = \, |\lambda_{\max}(A)|$ and $||A||_F^2 \, = \, \sum_{i = 1}^p \lambda_i(A)^2$.  By definition of $||.||_2$, $||A \, a|| \leq ||A||_2 \,  ||a||$.
In what follows, T will denote the triangular inequality, M the Markov inequality, CS indicates the use of the Cauchy Schwarz inequality, LLN the weak law of large numbers, $C$ a generic constant (whose value can change from one line to another), and we follow \cite{in09} in denoting with CM (for Conditional Markov) the result that if $\E\left(|a_n|\, \big| b_n \right) = \OP(r_n)$ then $|a_n| = \OP(r_n)$.

The proof of Result \ref{rslt:consistency} is split in 3 steps described each in Sections \ref{sec:cvMSE}, \ref{sec:proof2stepNP} and \ref{sec:proofconsist}.

\subsection{Sample mean square error for estimator of the control variables}\label{sec:cvMSE}

We use results from \cite{npv99}. We first obtain a bound on the convergence rate of the sample MSE of the control variables when their density is assumed bounded away from $0$.

\begin{assumption}\label{assn:NPV1stStep}
	There exists $\gamma_1 > 0$ and $a_1(L)$ such that $\sqrt{L/n} \ a_1(L) \xrightarrow[n \to \infty]{} 0$ and for all $t \leq T$,
	\begin{enumerate}[noitemsep,nolistsep]
		\item  $(x_{it}^{en},\xi_{it})$ is i.i.d over $i$, continuously distributed and $\Var(x_t^{en}|\xi_t)$ is bounded,
		
		\item \label{assn:NPV1stStep_matrix} There exists a $L \times L$ nonsingular matrix $\Gamma_{1 t}$ such that for $R^L(\xi_t) = \Gamma_{1 t} r^L(\xi_t)$, $\E(R^L(\xi_t) R^L(\xi_t)')$ has smallest eigenvalue bounded away from zero uniformly in $L$,
		
		\item \label{assn:NPV1stStep_approx} There exists $\beta_t^L$ such that $\sup_{\mathcal{S}_{\xi t}} || b_t(\xi_t) - \beta_t^{L \prime} \,  r^L(\xi_t)|| \leq C L^{- \gamma_1}$,
		
		\item \label{assn:NPV1stStep_bds} $\sup_{\mathcal{S}_{\xi t}} || R^L(\xi_t) || \leq a_1(L)$.
	\end{enumerate}
\end{assumption}

\begin{result}\label{rslt:NPV1stStep}
	Under Assumption  \ref{assn:NPV1stStep}, 
	\begin{align}
		& \frac{1}{n} \, \sum_{i=1}^n || \, V_{i} - \hat{V}_{i} \, ||^2  = O_{\mathbb{P}} \left( L / n + L^{- 2 \gamma_1}  \right) = O_{\mathbb{P}} (\Delta_n^2), \label{eq:V_sampleMSErate} \\
		& \max_i || \, V_{i} - \hat{V}_{i} \, || \, = \,  O_{\mathbb{P}}(a_1(L) \Delta_n). \label{eq:V_suprate}
	\end{align}
\end{result}

\begin{proof}[Proof of Result \ref{rslt:NPV1stStep}]
	Under Assumption \ref{assn:NPV1stStep}, using Theorem 1 of \cite{n97} and Lemma A1 of \cite{npv99}  (see e.g Equations A.3 and A.5), we have $\forall \, t \leq T$,
	$
	\frac{1}{n} \, \sum_{i=1}^n || \, v_{it} - \tilde{v}_{it} \, ||^2 = O_{\mathbb{P}} (\Delta_n^2)$, 
	as well as $
	\max_i || \, v_{it} - \tilde{v}_{it} \, || \, = \,  O_{\mathbb{P}}(a_1(L) \Delta_n)$, where $\Delta_n = \sqrt{L/n} + L^{- \gamma_1}$, and 
	\begin{equation}\label{eq:bt_suprate}
		\sup_{\mathcal{S}_{\xi t}} ||b_t - \hat{b}_t || = O_{\mathbb{P}}(a_1(L) \Delta_n)
	\end{equation}
	
	Define $\tilde{V}_i = (\tilde{v}_{i1}, ... , \tilde{v}_{iT})$. Since $|| \hat{V}_i - V_i || \leq ||\tilde{V}_i - V_i||$, the result applies.
\end{proof}

If for instance $b_t$ is continuously differentiable up to order $p$, writing $d_{\xi} = d_1 + d_z$, then Assumption \ref{assn:NPV1stStep} (\ref{assn:NPV1stStep_approx}) holds with $\gamma_1 \, = \, p / d_{\xi}$ for different choices of sieve basis.
Conditions satisfying Assumption \ref{assn:NPV1stStep} (\ref{assn:NPV1stStep_matrix}) typically require the support of $\xi_t$ to be bounded and the density of $\xi_t$ to be bounded away from $0$ on its support. 
This restriction is not desirable. Indeed, in applications where the density of the regressors goes to $0$ at the boundaries, regressors will be trimmed to consider only a subset of $\mathcal{S}_{\xi}$ where the density is bounded away from $0$. However, we are interested here in the average effect $\E(\mu)$ and counterfactuals involving population means. Trimming arbitrarily on regressors to estimate a conditional effect is contrary to this goal.
We therefore give  in Assumption \ref{assn:NPV1stStep0Dsty}
a set of conditions guaranteeing that (\ref{eq:V_sampleMSErate}) and (\ref{eq:V_suprate}) hold and allows for the density of the regressor to go to $0$ at the boundary of its support when the support is bounded.
To obtain these results, we follow \cite{in09} which develops an argument of \cite{a91} in assuming a polynomial lower bound on the rate of decrease of the density.

\begin{result}\label{rslt:NPV1stStep0Dsty}	
	Under Assumption \ref{assn:NPV1stStep0Dsty}, (\ref{eq:V_sampleMSErate}) and (\ref{eq:V_suprate}) hold.
\end{result}

\begin{proof}[Proof of Result \ref{rslt:NPV1stStep0Dsty}]
	Define $a_1(L) = L^{\alpha_1 + 1}$.
	Under these conditions, Lemma S.3 of \cite{in09} can be modified using \cite{a91} (See Equations 3.14 or A.40) to account for the fact that $\xi_t$ is not scalar. One obtains that since $r^L$ is a the power series basis of functions, there exists a nonsingular $L \times L$ matrix $\tilde{\Gamma}_{\xi t}$ such that for 
	$\tilde{r}^L(\xi_t) = \tilde{\Gamma}_{\xi t} r^L(\xi_t)$ then $\E(\tilde{r}^L(\xi_t) \tilde{r}^L(\xi_t)') \, = \, I_{L}$, implying that Assumption \ref{assn:NPV1stStep} (\ref{assn:NPV1stStep_matrix}) holds. One also obtain $\sup_{\mathcal{S}_{\xi}} || \tilde{r}^L(\xi) || \leq a_1(L)$ with $a_1(L) \, = \, C L^{\alpha + 1} $ as is required in Assumption \ref{assn:NPV1stStep} (\ref{assn:NPV1stStep_bds}). Thus, Assumption \ref{assn:NPV1stStep} is satisfied and Result \ref{rslt:NPV1stStep} applies.	
\end{proof}

We note that allowing for unbounded support is a desirable extension as well and is made possible using a method similar to \cite{cht05} and \cite{cht08}, but is outside the scope of this paper.

\subsection{Convergence rates of two-step series estimators}\label{sec:proof2stepNP}

We focus on the generic estimator of $h^W(V) =  \E(w_i | V_i = V)$ as defined in Section \ref{sec:def_estim}.

\begin{assumption}\label{assn:IN2stStep}
	\
	\begin{enumerate}[noitemsep,nolistsep]
		\item $e_i^{W*}$ is i.i.d and $\E((e_i^{W*})^2 \, | \, X,Z)$ is bounded on $\mathcal{S}_{X,Z}$,
		
		\item $h^W$ is Lipschitz on $\mathcal{S}_V$ and $\rho^W$ is bounded on $\mathcal{S}_{X,Z}$,
		
		\item \label{assn:IN2stStep_matrix} There exists a $K \times K$ nonsingular matrix $\Gamma_2$, such that for $P^K(V) = \Gamma_2 p^K(V)$, $\E(P^K(V) P^K(V)')$ has smallest eigenvalue bounded away from zero uniformly in $K$,
		
		\item \label{assn:IN2stStep_approx} There exists $\gamma_2$ and $\pi^{K}$ such that $\sup_{\mathcal{S}_{V}} | h^W(V) - p^K(V)' \, \pi_W^{K} | \leq C K^{- \gamma_2}$,
		
		\item \label{assn:IN2stStep_limitzero} For $\, \sup_{\mathcal{S}_{V}} || P^K(V) || \leq b_1(K)$ and $ \, \sup_{\mathcal{S}_{V}} || \partial P^K(V)/\partial V || \leq b_2(K)$, $\sqrt{K} \ b_2(K) \Delta_n \xrightarrow[n \to \infty]{} 0$ and 
		$\sqrt{K/n} \, b_1(K) \allowbreak \xrightarrow[n \to \infty]{} 0$. 
	\end{enumerate}
\end{assumption}

\begin{result}\label{rslt:IN2stStep} Under Assumption \ref{assn:NPV1stStep} and \ref{assn:IN2stStep},
	\begin{align}
		\int \left| \hat{h}^W(V) - h^{W}(V) \right|^2 dF(V) & \, = \, O_{\mathbb{P}} (K/n + K^{-2 \gamma_2} + \Delta_n^2 b_2(K)^2), \label{eq:mnM_MSErate} \\
		\sup_{V \in \mathcal{S}_{V}} | \hat{h}^W(V) - h^{W}(V) |  &  \, = \,  O_{\mathbb{P}} \left( b_1(K) (K/n + K^{-2 \gamma_2} +  \Delta_n^2 b_2(K)^2)^{1/2} \right). \label{eq:mnM_suprate}
	\end{align}
\end{result}
The additional term in the mean-squared error convergence rate compared to, e.g, \cite{npv99} or \cite{hlr18} (Section 3 of their Online Appendix) is $\Delta_n^2 b_2(K)^2$. It comes from the correlation between $e^W$ and $\hat{P} - P$.

\begin{proof}[Proof of Result \ref{rslt:IN2stStep}]
	Instead of applying the general results of Section 5.2 of the Appendix of \cite{hlr18}, we directly extend the proof of Theorem 12 of \cite{in09} (IN09 thereafter) because it uses lower level conditions similar to the ones we seek to impose.
	We adapt some of their claims to our model where $\E(e_i^W|X_i^{ex},X_i^{en},Z_i) \neq 0$.
	
	Define  $P = (p_1,\,.. \, , p_n)$, $Q = P P'/n$,  $\hat{Q} = \hat{P} \hat{P}'/n$, $\rho_i^W = \rho^W(X_i^{ex}, X_i^{en}, Z_i)$, as well as the vectors $e^W = (e_1^W,\,.. \, , e_n^W)'$, $\vec{\rho}^W = (\rho_1^W,\,.. \, , \rho_n^W)'$ and $e^{W*} = (e_1^{W*}, \, .. \, , e_n^{W*})'$. Note that $e^W = \vec{\rho}^W + e^{W*}$.
	Because the series estimator is unchanged by a linear transformation of the basis of functions, we can assume that $p_i(V_i) = P^K(V_i)$. As argued in \cite{n97}, we can assume without loss of generality  that under Assumption \ref{assn:IN2stStep} $\E(p_i^K p_i^{K \prime}) = I_K$. By construction, $\hat{V}_i \in \mathcal{S}_V$, and under Assumption \ref{assn:NPV1stStep}, (\ref{eq:V_sampleMSErate}) holds. 
	Therefore, as in Lemma S.5 of \cite{in09}, we have
	\begin{align}
		||Q - I_K|| &= \OP(b_1(K)\sqrt{K/n}), \\
		||P'e^W /n || & = \OP(\sqrt{K/n}), \label{eq:ratePu} \\
		||\hat{P} - P||^2/n & = \OP(b_2(K)^2 \Delta_n^2), \\
		||\hat{Q} - Q|| & = \OP(b_2(K)^2 \Delta_n^2 + \sqrt{K}b_2(K)\Delta_n).
	\end{align}
	Hence by Assumption \ref{assn:IN2stStep} (\ref{assn:IN2stStep_limitzero}), $||\hat{Q} - I_K|| = \oP(1)$ and as in Lemma S.6 of \cite{in09}, with probability going to $1$, $\lambda_{\min}(\hat{Q}) \geq C$ and $\lambda_{\min}(Q) \geq C$.
	
	We now show how the rate of convergence is impacted by the conditional mean dependence of $e^W$ on $(X,Z)$ by deriving the rate of $|| \hat{\pi}^W - \pi_W^{K}||$, where we recall $ \hat{\pi}^W = \hat{Q}^{-1} \hat{P} W$. We define $H^W = (h^W(V_1),\,.. \, , h^W(V_n))'$, $\hat{H}^W = (h^W(\hat{V}_1),\,.. \, , h^W(\hat{V}_n))'$, $\tilde{\pi}^W = \hat{Q}^{-1} \hat{P} \hat{H}^W /n$, $\bar{\pi}^W = \hat{Q}^{-1} \hat{P} H^W /n $. We decompose
	$$|| \hat{\pi}^W - \pi_W^{K}|| \leq \underbrace{|| \hat{\pi}^W - \bar{\pi}^W||}_{(A)}  + \underbrace{|| \bar{\pi}^W - \tilde{\pi}^W||}_{(B)}  + \underbrace{|| \tilde{\pi}^W - \pi_W^{K}|| }_{(C)}. $$
	The first term can in turn be decomposed as
	$(A) = \hat{Q}^{-1} \hat{P} \, [\vec{\rho}^W/n \, + \, e^{W*}/n ].$
	Since $(X_i, Z_i, e_i^{W*})$ are i.i.d, we have $\E(e_i^{W*} | X_1,Z_1,..\, , X_n, Z_n) = 0$, $\E((e_i^{W*})^2 | X_1,Z_1,..\, , X_n, Z_n) = \E((e_i^{W*})^2 | X_i,Z_i) \leq C$, and $\E(e_i^{W*} e_j^{W*} | X_1,Z_1,..\, , X_n, Z_n) = 0$. This gives
	\begin{align*}
		\E(||\hat{Q}^{1/2} \, \hat{Q}^{-1} \hat{P} e^{W*}/n ||^2 | X_1,Z_1,..\, , X_n, Z_n) & = \tr(\hat{Q}^{-1/2} \hat{P} \, \E(e^{W*} e^{W* \, \prime} | X_1,Z_1,..\, , X_n, Z_n) \, \hat{P}' \hat{Q}^{-1/2} )/n^2, \\
		& \leq C \tr( \hat{P}' (\hat{P} \hat{P}')^{-1} \hat{P})/n \leq C K/n.
	\end{align*}
	This implies by M that $\hat{Q}^{1/2} \, \hat{Q}^{-1} \hat{P} e^{W*}/n = \OP(\sqrt{K/n})$, and by $\lambda_{\min}(\hat{Q}) \geq C \ \wpa 1$,   that 
	$\hat{Q}^{-1} \hat{P} e^{W*}/n = \OP(\sqrt{K/n}).$ 
	This rate is the same as Lemma S.7 (i) of IN09 since $e^{W*}$ is  by definition conditionally mean-independent of the regressors generating $V$. 
	As for the second term appearing in $(A)$, we write
	$\hat{Q}^{-1} \hat{P} \vec{\rho}^W/n  = \hat{Q}^{-1} P \vec{\rho}^W/n + \hat{Q}^{-1} (P - \hat{P}) \vec{\rho}^W/n.$
	Since $\E(\rho_i^W|V_i)=0$ and $(\rho_i^W,V_i)$ is i.i.d, we know that as in (\ref{eq:ratePu}), $|| P \vec{\rho}^W/n||^2 = \OP (K/n)$. Therefore, by $\lambda_{\min}(Q) \geq C \ \wpa 1$,  $||\hat{Q}^{-1} P \vec{\rho}^W/n|| = \OP(\sqrt{K/n})$.
	
	Moreover, $ (P - \hat{P}) \rho^W/n = \frac{1}{n} \sum_{i=1}^n (p_i^K - \hat{p}_i^K) \rho_i^W $ and
	\begin{align*}
		\frac{1}{n} || \sum_{i=1}^n (\hat{p}_i^K - p_i^K) \rho_i^W || & \leq \frac{1}{n} \sum_{i=1}^n || (\hat{p}_i^K - p_i^K) \rho_i^W || \leq C \left( \frac{1}{n} \sum_{i=1}^n || (\hat{p}_i^K - p_i^K)||^2 \right)^{1/2} \left( \frac{1}{n} \sum_{i=1}^n | \rho_i^W |^2 \right)^{1/2}, \\
		& \leq C b_2(K) \left(\frac{1}{n} \sum_{i=1}^n || (\hat{V}_i - V_i)||^2 \right)^{1/2} ||\rho||_{\infty} \, = \, \OP(b_2(K) \Delta_n).
	\end{align*}
	This implies by $\lambda_{\min}(\hat{Q}) \geq C \ \wpa 1$ that $\hat{Q}^{-1} (P - \hat{P}) \vec{\rho}^W/n = \OP(b_2(K) \Delta_n)$. This gives a convergence rate for $(A)$, $|| \hat{\pi}^W - \bar{\pi}^W||^2 = \OP(K/n + K/n + b_2(K)^2 \Delta_n^2) = \OP (K/n + b_2(K)^2 \Delta_n^2)$.
	
	By Lemma S.7 (ii) and (iii) of IN09, $(B) = \OP(\Delta_n)$ since $h^W(.)$ is Lipschitz, and $(C) = \OP(K^{-\gamma_2})$ using Assumption \ref{assn:IN2stStep} (\ref{assn:IN2stStep_approx}). 
	This implies that 
	$$|| \hat{\pi}^W - \pi_W^{K}||^2 = \OP(K/n + b_2(K)^2 \Delta_n^2 + \Delta_n^2 + K^{-2\gamma_2}) = \OP(K/n + K^{-2\gamma_2} + b_2(K)^2 \Delta_n^2),$$
	which differs from the rate $(K/n + K^{-2\gamma_2} + \Delta_n^2)$ obtained in IN09. The structure of the proof showed that the extra term $b_2(K) \Delta_n$ comes from the correlation between $e^W$ and $\hat{P} - P$, which is nonzero because $\hat{P}$ is constructed using the estimated $\hat{V}$ which themselves depend on the covariates $(X^{ex}, X^{en},Z)$. Deriving a rate of convergence required linearizing the term $\hat{P} - P$.
	
	We can now conclude that
	\begin{align*}
		\int \left| \hat{h}^W(V) - h^{W}(V) \right|^2 dF_V(V) &  \leq \int \left| \hat{h}^W(V) - p^K(V)' \pi_W^{K}  \right|^2 dF_V(V) + \int \left| p^K(V)'  \pi_W^{K} - h^{W}(V) \right|^2 dF_V(V)\\
		& \leq  (\hat{\pi}^W - \pi_W^{K})' \, [\int p^K(V)p^K(V)'dF_V(V)] \, (\hat{\pi}^W - \pi_W^{K}) + C K^{- 2 \gamma_2}\\
		&   = O_{\mathbb{P}} (K/n + K^{-2 \gamma_2} + \Delta_n^2 b_2(K)^2),
	\end{align*}
	where the last line holds by the normalization $\E(Q)=I_K$. Moreover,
	\begin{align*}
		\sup_{V \in \mathcal{S}_{V}} | \hat{h}^W(V) - h^{W}(V) |  &  \, \leq \, \sup_{V \in \mathcal{S}_{V}} \left| \hat{h}^W(V) - p^K(V)' \, \pi_W^{K}  \right|  + \OP(K^{- \gamma_2}) \\
		&\, \leq \,  O_{\mathbb{P}} \left( b_1(K) (K/n + K^{-2 \gamma_2} +  \Delta_n^2 b_2(K)^2)^{1/2} \right). 
	\end{align*} 
\end{proof}

As in Section \ref{sec:cvMSE}, these results can be extended to regressors with density allowed to go to $0$ at the boundary of the support.

\begin{assumption}\label{assn:IN2stStep0Dsty}
	\
	\begin{enumerate}[noitemsep,nolistsep]
		\item $e_i^{W*}$ is i.i.d and $\E((e^{W*})^2 \, | \, X,Z)$ is bounded on $\mathcal{S}_{X,Z}$, 
		
		\item $ h^W$ is Lipschitz on $\mathcal{S}_V$ and $\rho^W$ is bounded on $\mathcal{S}_{X,Z}$,
		
		\item  \label{assn:IN2stStep0Dsty_density} $p^K(.)$ is the power series basis, and $\, \forall V \in \mathcal{S}_{V}, \ f_{V}(V) \, \geq \, \Pi_{d \leq k_2, \, t \leq T} \, (v_{t,d} - \underline{v}_{td})^{\alpha_2}(\bar{v}_{td} - v_{t,d})^{\alpha_2}$,
		
		\item \label{assn:IN2stStep0Dsty_approx} There exists $\gamma_2$ and $\pi_W^{K}$ such that $\sup_{\mathcal{S}_{V}} | h^W(V) -  \, \pi_W^{K \prime} p^K(V) | \leq C K^{- \gamma_2}$,
		
		\item \label{assn:IN2stStep0Dsty_limitzero} $b_1(K) \, = \, K^{\alpha_2 + 7/2} \Delta_n \to_{n \to \infty} 0$ and $b_2(K) \, = \,   K^{\alpha_2 + 3/2} / \sqrt{n} \, \to_{n \to \infty} 0$. 
	\end{enumerate}
\end{assumption}

\begin{result}\label{rslt:IN2stStep0Dsty}
	Under Assumptions \ref{assn:NPV1stStep0Dsty} and \ref{assn:IN2stStep0Dsty}, (\ref{eq:mnM_MSErate}) and (\ref{eq:mnM_suprate}) hold.
\end{result}

\begin{proof}[Proof of Result \ref{rslt:IN2stStep0Dsty}]
	As in the proof of Result \ref{rslt:NPV1stStep0Dsty}, define $b_1(K) \, = \,  K^{\alpha_2 + 1} $ and  $b_2(K) \, = \, K^{\alpha_2 + 3}$. Then Assumption \ref{assn:IN2stStep0Dsty} (\ref{assn:IN2stStep0Dsty_limitzero}) implies Assumption \ref{assn:IN2stStep} (\ref{assn:IN2stStep_limitzero}), and together with Assumption \ref{assn:IN2stStep0Dsty} (\ref{assn:IN2stStep0Dsty_density}), implies Assumption \ref{assn:IN2stStep} (\ref{assn:IN2stStep_matrix}).
\end{proof}

And for these results to apply to our choice of $w_i = (M_i)_{s,t}$ and $w_i = (M_i \dot{y}_i)_{t}$ for $1 \leq s,t \leq T-1$, we adapt Assumption \ref{assn:IN2stStep0Dsty} to the model primitives to obtain Assumption \ref{assn:IN2stStep0Dsty_model}.
Under Assumptions \ref{assn:idCFA}, \ref{assn:NPV1stStep0Dsty} and \ref{assn:IN2stStep0Dsty_model}, the convergence rate of $\hat{\mathcal{M}}$ and $\hat{k}$ in sup norm and mean square norms are therefore given by  (\ref{eq:mnM_MSErate}) and (\ref{eq:mnM_suprate}). Recall that $\hat{g}(V) \, = \,  \hat{\mathcal{M}}(V)^{-1} \, \hat{k}(V).$
The rate of convergence of $\hat{g}(.)$ is obtained using continuity arguments. 
Indeed, Assumption \ref{assn:cv_g} implies that $k \, = \, \mathcal{M} \, g$ is continuous and that $||m||_{\infty}$, $||\mathcal{M}||_{\infty}$ and $||g||_{\infty}$ exist. 
Under these conditions, the following MSE and sup norm rates are obtained for $\hat{g}$.

\begin{result}\label{rslt:rate_g}
	Under Assumptions \ref{assn:NPV1stStep0Dsty}, \ref{assn:IN2stStep0Dsty_model}, \ref{assn:cv_g}, assuming $b_1(K)^2 (K/n + K^{-2 \gamma_2} +  \Delta_n^2 b_2(K)^2) \to 0$,
	\begin{align*}
		\int \left|\left| \hat{g}(V) - g(V) \right|\right|^2 dF(V)  \, & = \, O_{\mathbb{P}} (K/n + K^{-2 \gamma_2} + \Delta_n^2 b_2(K)^2), \\
		||\hat{g}(V) - g(V) ||_{\infty} \, & = \, O_{\mathbb{P}} ( b_1(K) (K/n + K^{-2 \gamma_2} +  \Delta_n^2 b_2(K)^2)^{1/2}).
	\end{align*}
\end{result}

To prove Result \ref{rslt:rate_g}, we will need the two following Lemmas.

\begin{lemma}\label{lem:lambdaMinCty}
	Under Assumption \ref{assn:cv_g}, the function $ V \in \mathcal{S}_V \mapsto \lambda_{\min}(\mathcal{M}(V))$ is continuous.
\end{lemma}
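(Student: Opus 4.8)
The plan is to write $\lambda_{\min}(\mathcal{M}(\cdot))$ as the composition of the continuous matrix-valued map $V \mapsto \mathcal{M}(V)$ --- continuity of $\mathcal{M}$ on $\mathcal{S}_V$ being part of Assumption \ref{assn:cv_g} --- with the smallest-eigenvalue functional on the space of symmetric matrices, and then to check that this functional is (Lipschitz) continuous. First I would record that $\mathcal{M}(V)$ is symmetric for every $V\in\mathcal{S}_V$: each $M_i$ is an orthogonal projection matrix, so $M_i=M_i'$, and hence $\mathcal{M}(V)=\E(M_i\mid V_i=V)=\E(M_i'\mid V_i=V)=\mathcal{M}(V)'$. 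Consequently, by the Courant--Fischer (Rayleigh quotient) characterization, $\lambda_{\min}(\mathcal{M}(V))=\min_{||c||=1}c'\mathcal{M}(V)c$ for each $V$.

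Next I would prove that $A\mapsto\lambda_{\min}(A)$ is $1$-Lipschitz from the symmetric matrices in $\mathcal{M}_{T-1}(\R)$, equipped with $||\cdot||_2$, into $\R$. Fix symmetric $A,B$ and let $c^{\ast}$ be a unit vector with $(c^{\ast})'Bc^{\ast}=\lambda_{\min}(B)$. Then
\begin{equation*}
\lambda_{\min}(A)\ \le\ (c^{\ast})'Ac^{\ast}\ =\ (c^{\ast})'Bc^{\ast}+(c^{\ast})'(A-B)c^{\ast}\ \le\ \lambda_{\min}(B)+||A-B||_2 ,
\end{equation*}
so $\lambda_{\min}(A)-\lambda_{\min}(B)\le||A-B||_2$; interchanging $A$ and $B$ gives $|\lambda_{\min}(A)-\lambda_{\min}(B)|\le||A-B||_2\le||A-B||_F$, using $||\cdot||_2\le||\cdot||_F$.

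Composing the two facts, for $V,V'\in\mathcal{S}_V$ one gets $|\lambda_{\min}(\mathcal{M}(V))-\lambda_{\min}(\mathcal{M}(V'))|\le||\mathcal{M}(V)-\mathcal{M}(V')||_F$, and the right-hand side tends to $0$ as $V'\to V$ by continuity of $\mathcal{M}$ on $\mathcal{S}_V$; hence $V\mapsto\lambda_{\min}(\mathcal{M}(V))$ is continuous, as claimed. I do not expect a genuine obstacle in this argument: the only point requiring any care is the Lipschitz bound for the eigenvalue functional, which is immediate from its variational formula. Note also that the compactness of $\mathcal{S}_V$ and the invertibility of $\mathcal{M}(V)$ assumed in Assumption \ref{assn:cv_g} are not needed for continuity per se; they will be used only afterwards, together with this lemma, to conclude that $\lambda_{\min}(\mathcal{M}(\cdot))$ attains a positive minimum on $\mathcal{S}_V$ and is therefore bounded away from zero.
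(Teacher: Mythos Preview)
Your proof is correct and follows the same overall strategy as the paper: show that $\lambda_{\min}$ is Lipschitz on symmetric matrices, then compose with the continuous map $V\mapsto\mathcal{M}(V)$. The difference lies in how the Lipschitz bound is obtained. The paper cites the Hoffman--Wielandt inequality, $\sum_i|\lambda_i(A)-\lambda_i(B)|^2\le||A-B||_F^2$ for symmetric $A,B$, whereas you derive the bound directly from the Courant--Fischer variational formula $\lambda_{\min}(A)=\min_{||c||=1}c'Ac$. Your argument is more elementary and self-contained, and it delivers exactly the linear Lipschitz inequality $|\lambda_{\min}(\mathcal{M}(V))-\lambda_{\min}(\mathcal{M}(V'))|\le||\mathcal{M}(V)-\mathcal{M}(V')||_F$ that the paper actually uses downstream in the proof of Result~\ref{rslt:rate_g}. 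Hoffman--Wielandt has the advantage of controlling all eigenvalues simultaneously, but since only the smallest is needed here, nothing is lost by your route.
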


\begin{proof}[Proof of Lemma \ref{lem:lambdaMinCty}]
	For all $V$, $\mathcal{M}(V)$ is symmetric. Using the Weilandt and Hoffman inequality (see e.g Corollary 6.3.8 p408 in \cite{hj12}),
	for two values $V$ and $V'$,
	$$\sum_{i=1}^{T-1} \left| \, \lambda_i (\mathcal{M}(V)) - \lambda_i(\mathcal{M}(V')) \, \right|^2 \leq || \mathcal{M}(V) - \mathcal{M}(V') ||_F^2, $$
	where we index the eigenvalues $(\lambda_i)_{i = 1}^{T-1}$ by increasing order.
	This implies 
	\begin{equation}\label{eq:lambdaMinBound}
		\left| \,  \lambda_{\min}(\mathcal{M}(V)) - \lambda_{\min}(\mathcal{M}(V')) \right| \leq || \mathcal{M}(V) - \mathcal{M}(V') ||_F^{1/2}.
	\end{equation}
	Since $\mathcal{M}(.)$ is a continuous function, this concludes the argument. Note that the Lipschitz inequality (\ref{eq:lambdaMinBound}) will be used in the proof for the convergence rates.
\end{proof}

\begin{lemma}\label{lem:lambdaMinBound}
	Under Assumption \ref{assn:cv_g}, there exists  $c > 0$, such that for all $V \in  \mathcal{S}_V$, $\lambda_{\min}(\mathcal{M}(V)) \geq c $.
\end{lemma}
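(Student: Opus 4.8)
Lemma \ref{lem:lambdaMinBound} is an immediate corollary of Lemma \ref{lem:lambdaMinCty} together with the compactness of $\mathcal{S}_V$ assumed in Assumption \ref{assn:cv_g}. The plan is to argue by contradiction, or equivalently by the extreme value theorem.

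\textbf{Approach.} By Lemma \ref{lem:lambdaMinCty}, the map $\phi : V \in \mathcal{S}_V \mapsto \lambda_{\min}(\mathcal{M}(V))$ is continuous. By Assumption \ref{assn:cv_g}, $\mathcal{S}_V$ is a compact set, and $\mathcal{M}(V)$ is invertible for every $V \in \mathcal{S}_V$. Since each $\mathcal{M}(V)$ is a symmetric positive semidefinite matrix (being an average of orthogonal projection matrices, as recalled in Section \ref{sec:high_level_cond}), invertibility forces all its eigenvalues, and in particular $\lambda_{\min}(\mathcal{M}(V))$, to be strictly positive. Hence $\phi(V) > 0$ for every $V \in \mathcal{S}_V$.

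\textbf{Key step.} A continuous real-valued function on a nonempty compact set attains its infimum. Therefore there exists $V^\star \in \mathcal{S}_V$ with $\phi(V^\star) = \inf_{V \in \mathcal{S}_V} \phi(V) =: c$. Since $V^\star \in \mathcal{S}_V$, the previous paragraph gives $c = \phi(V^\star) > 0$. Consequently $\lambda_{\min}(\mathcal{M}(V)) \geq c > 0$ for all $V \in \mathcal{S}_V$, which is the claim.

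\textbf{Main obstacle.} There is essentially no obstacle here; the only point that requires a word of care is confirming that each $\mathcal{M}(V)$ is symmetric positive semidefinite so that invertibility indeed yields a strictly positive smallest eigenvalue rather than merely a nonzero one. This is exactly the computation $\mathcal{M}(V) = \E(M(X)'M(X)\mid V)$ carried out in Section \ref{sec:high_level_cond}, which shows $c'\mathcal{M}(V)c = \E(\|M(X)c\|^2 \mid V) \geq 0$ for all $c$. The remainder is a direct invocation of Lemma \ref{lem:lambdaMinCty} and the extreme value theorem on the compact set $\mathcal{S}_V$.
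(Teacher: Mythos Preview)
Your proof is correct and follows essentially the same route as the paper: invoke Lemma \ref{lem:lambdaMinCty} for continuity, use compactness of $\mathcal{S}_V$ so that the continuous map $V\mapsto\lambda_{\min}(\mathcal{M}(V))$ attains its infimum, and conclude from invertibility that this infimum is strictly positive. Your added remark that $\mathcal{M}(V)$ is symmetric positive semidefinite (so that invertibility forces $\lambda_{\min}>0$ rather than merely $\lambda_{\min}\neq 0$) makes explicit a step the paper leaves implicit.
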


\begin{proof}[Proof of Lemma \ref{lem:lambdaMinBound}]
	Under Assumption \ref{assn:cv_g}, $\mathcal{M}(V)$ is nonsingular for all  $V \in  \mathcal{S}_V$. This implies that for all  $V \in  \mathcal{S}_V$, $\lambda_{\min}(\mathcal{M}(V)) > 0 $. Since $\mathcal{S}_V$ is a compact set and the function $ V \mapsto \lambda_{\min}(\mathcal{M}(V))$ is continuous by Lemma \ref{lem:lambdaMinCty}, it has a minimum and reaches it. This minimum value cannot be $0$, hence $ \exists \, c > 0, \, \forall \, V \in  \mathcal{S}_V, \, \lambda_{\min}(\mathcal{M}(V)) \geq c $.
\end{proof}

\begin{proof}[Proof of Result \ref{rslt:rate_g}]
	Under Assumptions \ref{assn:NPV1stStep0Dsty} and \ref{assn:IN2stStep0Dsty}, writing $\Gamma_n^2$ and $\gamma_n$ respectively the mean square and sup norm rates of convergence, we have
	\begin{align*}
		\int \left|\left| \hat{k}(V) - k(V) \right|\right|^2 dF(V) \, & = \, O_{\mathbb{P}} (\Gamma_n^{2}), \ \sup_{V \in \mathcal{S}_{V}} ||  \hat{k}(V) - k(V) ||  \, = \,  O_{\mathbb{P}} \left( \gamma_n \right), \\
		\int \left|\left| \hat{\mathcal{M}}(V) - \mathcal{M}(V) \right|\right|_F^2 dF(V)  \, & = \, O_{\mathbb{P}} (\Gamma_n^{2}), \ \sup_{V \in \mathcal{S}_{V}}  \left|\left|  \hat{\mathcal{M}}(V) - \mathcal{M}(V)  \right|\right|_F \, = \,  O_{\mathbb{P}} \left( \gamma_n \right),
	\end{align*} 
	since the Frobenius norm and the Euclidean norm are square roots of the sum of squared elements, and this rate was obtained for each element of $\mathcal{M}(V)$ and $k(V)$.
	
	We write $\1_n = \1 \left( \min_{V \in \mathcal{S}_V}  \lambda_{\min}(\hat{\mathcal{M}}(V)) > \frac{c}{2} \right)$.
	Using (\ref{eq:lambdaMinBound}), we have
	\begin{align*}
		& \lambda_{\min}(\hat{\mathcal{M}}(V)) \,  > \, \lambda_{\min}(\mathcal{M}(V)) - \big |\big |  \, ||\hat{\mathcal{M}}(V) - \mathcal{M}(V)||_F  \big |\big |_{\infty}, \\
		& \Rightarrow \, \min_{V \in \mathcal{S}_V}  \lambda_{\min}(\hat{\mathcal{M}}(V)) \, > \, c - \big |\big |  \, ||\hat{\mathcal{M}}(V) - \mathcal{M}(V)||_F \big |\big |_{\infty} ,
	\end{align*}
	where the last implication uses Lemma \ref{lem:lambdaMinBound}.
	Hence  $\1_n \geq \1 \left( \big |\big |  \, ||\hat{\mathcal{M}}(V) - \mathcal{M}(V)||_F \big |\big |_{\infty} \leq c/2 \right)$. By Assumptions \ref{assn:NPV1stStep0Dsty} and \ref{assn:IN2stStep0Dsty},  $\gamma_n \to 0$ which implies
	$\1_n \, = \, 1 \ \wpa1$.
	
	To obtain the sup norm rate, we write
	\begin{align*}
		\forall \, V \in \mathcal{S}_V, \ g(V) - \hat{g}(V) & \, = \, \mathcal{M}(V)^{-1} \, k(V) - \hat{\mathcal{M}}(V)^{-1} \hat{k}(V) \\
		& \, = \, \mathcal{M}(V)^{-1} \,  \left[ \, \hat{\mathcal{M}}(V) -  \mathcal{M}(V) \, \right] \, \hat{\mathcal{M}}(V)^{-1} \, k(V) \, + \, \hat{\mathcal{M}}(V)^{-1} \, \left[ \, k(V) - \hat{k}(V) \, \right].
	\end{align*}
	Using T, the norm inequality and definition of induced norm, this gives
	\begin{align*}
		\1_n  ||  \hat{g}(V) - g(V) ||_{\infty} & \, \leq \, \1_n  \, \frac{1}{c}  \ \big | \big | \, || \hat{\mathcal{M}}(V) - \mathcal{M}(V)||_F \  \big |\big |_{\infty} \frac{2}{c} ||k||_{\infty} \, + \, \1_n \, \frac{2}{c} \ || k - \hat{k} ||_{\infty}.
	\end{align*}
	This implies that $||\hat{g}(V) - g(V) ||_{\infty} = O_{\mathbb{P}} (\gamma_n)$.
	To obtain the mean square error rate, we write
	\begin{align*}
		\1_n \, \int || & \hat{g}(V)  - g(V)||^2 dF(V) \\ & \, = \, \1_n \int \left|\left| \mathcal{M}(V)^{-1} \,  \left[ \, \hat{\mathcal{M}}(V) -  \mathcal{M}(V) \, \right] \, \hat{\mathcal{M}}(V)^{-1} \, k(V) \, + \, \hat{\mathcal{M}}(V)^{-1} \, \left[ \, k(V) - \hat{k}(V) \, \right] \right|\right|^2 dF(V) \\
		& \, \leq \,  \1_n  \ \frac{1}{c} \ \frac{2}{c} \  ||k||_{\infty} \, \int \left|\left| \hat{\mathcal{M}}(V) - \mathcal{M}(V) \right|\right|_F^2 dF(V) \, + \,  \1_n \, \frac{2}{c} \, \int \left|\left| \hat{k}(V) - k(V) \right|\right|^2 dF(V),
	\end{align*} 
	which implies 
	$ \int || \hat{g}(V)  - g(V)||^2 \, dF(V) \, = \, O_{\mathbb{P}} (\Gamma_n^{2})$.
\end{proof}

\subsection{Consistency of $\hat{\mu}$}\label{sec:proofconsist}

Equipped with the convergence rate results on the nonparametric estimators, we can now show consistency of $\hat{\mu}$.
Recall that, writing $\delta_i = \1(\det(\dot{X}_i' \dot{X}_i) > \delta_0)$ and  $Q_i^{\delta} = \delta_i Q_i $,
$$\hat{\mu}  \, = \, \frac{\sum_{i=1}^n Q_i^{\delta} \, [ \dot{y}_i - \, \hat{g}(\hat{V_i}) ]}{\sum_{i=1}^n \delta_i}.$$

\begin{proof}[Proof of Result \ref{rslt:consistency}]
	To prove consistency, we need to show that $\frac{1}{n} \sum_{i=1}^n  Q_i^{\delta} \, \hat{g}(\hat{V_i}) \, \to_{\mathbb{P}} \, \E (Q g(V) \delta)$. 
	Indeed then by the LLN,  we have $ \frac{1}{n} \sum_{i=1}^n \delta_i \to_{\mathbb{P}} \mathbb{P}(\det(\dot{X}_i' \dot{X}_i) > \delta)$, and by Assumption \ref{assn:finiteE} and the LLN, $\frac{1}{n} \sum_{i=1}^n  Q_i^{\delta} \,  \dot{y}_i \, \to_{\mathbb{P}} \, \E (Q \, \dot{y} \, \delta)$ also holds. Then consistency would follow from Equation (\ref{eq:Miou_delta}).
	
	To obtain  $\frac{1}{n} \sum_{i=1}^n  Q_i^{\delta} \, \hat{g}(\hat{V_i}) \, \to_{\mathbb{P}} \, \E (Q g(V) \delta)$, we decompose
	\begin{align*}
		\frac{1}{n}\sum_{i=1}^n  Q_i^{\delta}\, \hat{g}(\hat{V_i}) \, - \, \E (Q g(V) \delta)  & \, = \, \frac{1}{n} \sum_{i=1}^n Q_i^{\delta}  \, [ \hat{g}(\hat{V_i}) \, - \, g(V_i) ] \, + \, \frac{1}{n} \sum_{i=1}^n Q_i^{\delta} \, g(V_i) \, - \, \E (Q g(V) \delta) \\
		& \, := \, A_n \, + \, B_n.
	\end{align*}
	We have
	\begin{align*}
		|| A_n || \, & = \big | \big | \, \frac{1}{n} \sum_{i=1}^n  Q_i^{\delta}  \, [ \, \hat{g}(\hat{V_i}) - g(\hat{V_i}) \, ] + \frac{1}{n} \sum_{i=1}^n  Q_i^{\delta}  \, [ \, g(\hat{V_i}) - g(V_i) \, ] \, \big | \big | \\
		& \leq \, ||\hat{g}- g ||_{\infty} \, \frac{1}{n} \sum_{i=1}^n || Q_i^{\delta} || \, + \, C  \max_{i} || \hat{V}_i - V_i || \, \frac{1}{n} \sum_{i=1}^n || Q_i^{\delta} ||_2 \, \\
		& = \, O_{\mathbb{P}}( \gamma_n +  a_1(L) \Delta_n) \, \frac{1}{n} \sum_{i=1}^n || Q_i^{\delta} ||_2,
	\end{align*}
	where the first term in the inequality follows from $\hat{V}_i \in \mathcal{S}_V$ by design, and the second sum term follows from $g$ being continuously differentiable on a compact set, hence Lipschitz continuous on this set. The last equality follows from Equation (\ref{eq:V_suprate}) and Result (\ref{rslt:rate_g}). We assumed $\gamma_n \to 0$ and $a_1(L) \Delta_n \to 0$ as $n$ goes to infinity, thus we obtain $|| A_n || \, = \, o_{\mathbb{P}} (1)$.
\end{proof}

\clearpage

\section{Proofs of Results in Section \ref{sec:asymptoticnorm}}\label{app:proofAsnorm}

To comment further on the definitions in Section \ref{sec:trimming}, we note that if there exists a sequence of functions $(m_n)_{n \in \N}$ converging uniformly to $m^\varsigma$ on the extended support $\mathcal{S}_V^\varsigma$, the sequence of restrictions of $(m_n)_{n \in \N}$ on $\mathcal{S}_V$ converges uniformly to $m$.

As was the case with our previous definition of $\tau$, $|| \hat{V}_i - V_i || \leq ||\tilde{V}_i - V_i||$. This guaranteees that our results on the sup-norm convergence rates of the nonparametric two-step estimators $\hat{\mathcal{M}}$ and $\hat{k}$ and of their derivatives remain valid, provided some changes are made to the definition of the vector of basis functions $p^K(.)$ and to the approximation condition (\ref{assn:IN2stStep_approx}) of Assumption \ref{assn:IN2stStep}. First, $p^K(.)$ is defined on the extended support, and the bounds $b_1(K)$ and $b_2(K)$ are also defined as bounds on the sup norm over the extended support. Second, the approximation condition must be imposed on the extended functions $\mathcal{M}^\varsigma$ and $k^\varsigma$. Under these modified conditions, because the extended functions remain Lipschitz, the rates of convergence of the nonparametric two-step estimators to the extended functions are the same, the rate of convergence of $\hat{g}$ is unchanged and consistency of $\hat{\mu}$ holds.
As pointed out in Section \ref{sec:trimming}, we will not show that our asymptotic normality result applies to cases where the density of the regressors goes to zero on the boundaries of their support, as we did for consistency (see Assumption \ref{assn:NPV1stStep0Dsty} and \ref{assn:IN2stStep0Dsty}).
Indeed in contrast to the consistency proof, we will use rates on the sup norm of the nonparametric estimates as well as of their derivatives when the suprema are defined over the extended support. This rules out a direct application of the approach allowing the density of the regressors to go to zero on the boundaries of their support. This approach would require Condition (\ref{assn:IN2stStep0Dsty_density}) of Assumption \ref{assn:IN2stStep0Dsty} to hold on $\mathcal{S}_V^\varsigma$, which cannot be true if the density of the regressors is $0$ on the boundary of the original support. Computing rates on the extended support allowing for this case is beyond the scope of this paper. We therefore remain silent on the choice of the basis.

\subsection{Linearization}\label{sec:lineq_app}

Define a class of continuous functions $\mathcal{H}$ endowed with a pseudometric $||.||_{\mathcal{H}}$ such that $\mathcal{G} \in \mathcal{H}$. Arguments yielding asymptotic equivalence to its linearization typically require the following set of conditions. Define $\mathcal{H}_{\delta} = \{ \mathcal{G} \in \mathcal{H} : ||\mathcal{G} - \mathcal{G}_0|| \leq \delta \}$.

\begin{assumption} \label{assn:linearization}
	\
	\begin{enumerate}[noitemsep,nolistsep]
		\item \label{assn:linearization_SE} For all $\delta_n = o(1)$, $\sup_{||\mathcal{G}- \mathcal{G}_0||_{\mathcal{H}} \leq \delta_n } ||  \mathcal{X}_n(\mathcal{G}) - \mathcal{X}(\mathcal{G}) - \mathcal{X}_n(\mathcal{G}_0)|| = o_{\mathbb{P}}(n^{-1/2})$. 
		\item \label{assn:linearization_2ndorder} The pathwise derivative of $\mathcal{X}$ at $\mathcal{G}_0$ evaluated at $\mathcal{G} - \mathcal{G}_0$, $\mathcal{X}^{(G)}(\mathcal{G}_0) [\mathcal{G} - \mathcal{G}_0]$, exists in all directions $[\mathcal{G} - \mathcal{G}_0]$, and for all $\mathcal{G} \in \mathcal{H}_{\delta_n}$ with $\delta_n = o(1)$, $|| \mathcal{X}(\mathcal{G}) - \mathcal{X}^{(G)}(\mathcal{G}_0) [\mathcal{G} - \mathcal{G}_0]\, || \leq c ||\mathcal{G} - \mathcal{G}_0||_{\mathcal{H}}^2$, for some constant $c \geq 0$, 
		\item \label{assn:linearization_Rate} $||\hat{\mathcal{G}} - \mathcal{G}_0||_{\mathcal{H}} = o_{\mathbb{P}}(n^{-1/4})$.
	\end{enumerate}
\end{assumption} 

\begin{result}\label{rslt:X_linear}
	Under Assumption \ref{assn:linearization}, 
	$$ \sqrt{n} \big[ \mathcal{X}_n(\hat{\mathcal{G}}) -  \mathcal{X}_n(\mathcal{G}_0) \big] \, = \, \sqrt{n} \,  \mathcal{X}^{(G)}(\mathcal{G}_0) [\hat{\mathcal{G}} - \mathcal{G}_0] +\oP(1). $$
\end{result}

\begin{proof}[Proof of Result \ref{rslt:X_linear}]
	This proof is a special case of Theorem 2 in \cite{clvk03}. 
	By Assumption \ref{assn:linearization} (\ref{assn:linearization_Rate}), there exists $\delta_n = o(1)$ such that $\mathbb{P}( || \hat{\mathcal{G}} - \mathcal{G}_0 || > \delta_n) \to 0$. Take $\1_n = \1( || \hat{\mathcal{G}} - \mathcal{G}_0 || \leq \delta_n)$.
	$\1_n ||\mathcal{X}_n(\hat{\mathcal{G}}) -  \mathcal{X}_n(\mathcal{G}_0) - \mathcal{X}^{(G)}(\mathcal{G}_0) [\hat{\mathcal{G}} - \mathcal{G}_0]|| \leq \1_n||\mathcal{X}_n(\hat{\mathcal{G}}) - \mathcal{X}(\mathcal{G}) -  \mathcal{X}_n(\mathcal{G}_0) || + \1_n ||\mathcal{X}(\mathcal{G}) -  \mathcal{X}^{(G)}(\mathcal{G}_0) [\hat{\mathcal{G}} - \mathcal{G}_0]|| = \oP(n^{-1/2})$ by Assumption \ref{assn:linearization}. Hence $||\mathcal{X}_n(\hat{\mathcal{G}}) -  \mathcal{X}_n(\mathcal{G}_0) - \mathcal{X}^{(G)}(\mathcal{G}_0) [\hat{\mathcal{G}} - \mathcal{G}_0]|| = \oP(n^{-1/2})$.
\end{proof}

To apply Result \ref{rslt:X_linear} to our estimator, $\mathcal{H}$ and   $||.||_{\mathcal{H}}$ must be defined.
In the definitions at the beginning of Section \ref{sec:as_miou_together}, the choice of $\mathcal{H}$ is driven by the stochastic equicontinuity condition, that is, Condition (\ref{assn:linearization_SE}) of Assumption \ref{assn:linearization}, following \cite{clvk03} and the choice of $||.||_{\mathcal{H}}$ will be driven by Condition (\ref{assn:linearization_2ndorder}).

To write the decomposition of  $\mathcal{X}^{(G)}(\mathcal{G}_0)[\mathcal{G} - \mathcal{G}_0]$ at the end of Section \ref{sec:lineariz}, the functions $\lambda_{M}$, $\lambda_{k}$ and $(\lambda_{bt})_{t}$ are obtained by first computing the partial pathwise derivatives of $\chi$. Write
$\chi_0^{(k)}(W_i)[\tilde{k}]$, $(\chi_0^{(bt)}(W_i)[\tilde{b}_t])_{t \leq T}$ and $\chi_0^{(M)}(W_i)[\tilde{\mathcal{M}}] )$ to be the partial pathwise derivatives of $\chi$ with respect to $k$, $b_t$ and $\mathcal{M}$ (respectively) at the true value $\mathcal{G}_0$, evaluated (respectively) at $\tilde{k}$, $\left(\tilde{b}_t\right)_{t \leq T}$ and $\tilde{\mathcal{M}}$. One can show
\begin{align*}
	\chi_0^{(M)}(W_i)[\tilde{\mathcal{M}}] \,  : & =  \, - [ g_0(V_i)' \otimes (Q_i^\delta \mathcal{M}_0(V_i)^{-1}) ] \, \VecM(\tilde{\mathcal{M}}(V_i)),\\
	\chi_0^{(k)}(W_i)[\tilde{k}] \,  : & = \,  \chi^{(k)}(W_i, \mathcal{G}_0)[\tilde{k}] \,  = \,  Q_i^\delta \mathcal{M}_0(V_i)^{-1}\tilde{k}(V_i),\\
	\chi_0^{(bt)}(W_i)[\tilde{b}_t] \,  :  & = \, \chi^{(2t)}(W_i, \mathcal{G}_0)[\tilde{b}_t]  \, = \, - Q_i^\delta  \frac{\partial g_0}{\partial v_t}(V_i) \tilde{b}_t(\xi_{it}).
\end{align*}

\subsection{Linear Application of a Nonparametric Two-Step Sieve Estimator}\label{app:lin_galcase}

The proof of asymptotic normality of $\hat{\mu}$ relies on the analysis of a linear function applied to nonparametric two-step sieve estimator. We focus on the general model (\ref{eq:reg_GalCase}) and derive here results for the generic case of estimation of a linear function $a$ evaluated at $h^W$ where $h^W(v) \, = \, \E(W|V=v)$. We use the nonparametric two-step sieve estimator $\hat{h}^W$. We do not specify the sieve basis.

Functionals of nonparametric estimators have been widely studied for different types of nonparametric estimators (see, e.g, \cite{n94b} for kernel estimators and \cite{n97} for series estimators). However the linear functional here is also evaluated at the more complicated two-step nonparametric estimators constructed in the previous sections. Its asymptotic distribution cannot be derived directly from the aforementioned results. \cite{hlr18} derive asymptotic normality results for nonlinear functionals of two-step nonparametric sieves estimators when the sieve estimators are from a general class of nonlinear sieve regression estimators. Characterizing the finite sample variance, they provide a practical estimator arguing that the asymptotic variance does not always have an analytical form. We show with a different set of calculations giving lower-level conditions on the primitives that the asymptotic variance of our estimator can be obtained for a class of models where the orthogonality condition between the first and second stage does not hold.
They however do not specify a formula for the asymptotic variance, arguing that it might not exist. This is not an issue in our case and we derive using a different type of proof the asymptotic normality and asymptotic variance of our sieve estimators.

The estimator of $a(h^W)$ will be $a(\hat{h}^W)$, and the purpose of this section is to write, under general conditions on the random variables $W,V,e^{W*}$ and the functions $h^W$ and $\rho^W$, the term $\sqrt{n} (a(h^W) - a(\hat{h}^W))$ as $\frac{1}{\sqrt{n}} \sum_{i = 1}^n  \, s_{i,n}^W + o_{\mathbb{P}}(1)$. We will then apply the derived results to $\mathcal{X}_0^{(k)} [\hat{k} - k_0]$ and $\mathcal{X}_0^{(M)}[\hat{\mathcal{M}} - \mathcal{M}_0 ]$.
We consider the case where $w \in \mathbb{R}$, $V \in \R^{Td_2}$, $a(h) \in \mathbb{R}^{d_a}$.
Define $\rho_i^W = \rho^W(X_i,Z_i)$ and the following matrices,
\begin{align*}
	\overbar{H}_t^W & =  \frac{1}{n} \sum_{i=1}^n \hat{p}_i \, (\partial h_\varsigma^W(V_i) / \partial v_t \otimes r_{it}') & A \, & =  \, (a(p_{1K}),\, .. \, , a(p_{KK})), \\
	& =  \frac{1}{n} \sum_{i=1}^n \hat{p}_i \, (\partial h^W(V_i) / \partial v_t \otimes r_{it}'), & \overbar{\text{d}P}_t^W & = \frac{1}{n} \sum_{i=1}^n \rho_i^W \left( \frac{\partial p^K (V_i)}{\partial v_t} \otimes r_{it}' \right),\\
	H_t^W  &= \E[p_i  \, (\partial h_\varsigma^W(V_i) / \partial v_t \otimes r_{it}')] & \text{d}P_t^W & = \E(\rho_i^W \left( \frac{\partial p^K (V_i)}{\partial v_t} \otimes r_{it}' \right)),\\
	& =  \E[p_i  \, (\partial h^W(V_i) / \partial v_t \otimes r_{it}')],
\end{align*}
where $h_\varsigma^W$ is the functional extension of $h^W$ and where the equalities on the first two matrices hold because $h^\varsigma$ and $h$ are equal on $\mathcal{S}_V$.
Recall that by $||\tau(v_1) - \tau(v_2)|| \leq ||v_1 - v_2 ||$, under Assumption \ref{assn:NPV1stStep} we have $\frac{1}{n} \, \sum_{i=1}^n || v_{i} - \hat{v}_{i} ||^2 = O_{\mathbb{P}} \left( L / n + L^{- 2 \gamma_1}  \right) = O_{\mathbb{P}} (\Delta_n^2).$ 

\begin{assumption}\label{assn:linearAsEquivlt}
	\
	\begin{enumerate}[noitemsep,nolistsep]
		\item The data $W_i$ is i.i.d,
		
		\item  \label{assn:LAE_ALipsch} $||a(g)|| \leq C |g|_0$,
		
		\item \label{assn:LAE_fctn} $h^W$ is twice continuously differentiable with bounded first and second derivatives, and $\rho^W$ is bounded, 
		
		\item \label{assn:LAE_approx} There exists $\gamma_1$ and $\beta_{t}^L$ such that for all $t \leq T$, $\sup_{\mathcal{S}_{\xi_t}} || b_t(\xi_t) - \beta_{t}^{L \, \prime} r^L(x_t^{ex},z_t) || \leq C L^{- \gamma_1}$. There exists $\gamma_2$, $\pi_{W}^K$ such that $\sup_{\mathcal{S}_{V}^\varsigma} || h_\varsigma^W(v) - p^K(v)' \, \pi_{W}^{K} || \leq C K^{- \gamma_2}$,
		
		\item \label{assn:LAE_sievebasis}  For all $t \leq T$, there exists $\Gamma_{1t}$, a $L \times L$ nonsingular matrix such that for $R_t^L(\xi_t) = \Gamma_{1t} r_t^L(\xi_t)$, $\E(R_t^L(\xi_t) R_t^L(\xi_t)')$ has smallest eigenvalue bounded away from $0$ uniformly in $L$.
		There exists $\Gamma_2$, a $K \times K$ nonsingular matrix such that for $P^K(V) = \Gamma_2 p^K(V)$, $\E(P^K(V) P^K(V)')$ has smallest eigenvalue bounded away from $0$ uniformly in $K$, 
		
		\item \label{assn:LAE_matrixbounded} $||A||$ is bounded, 
		
		\item \label{assn:LAE_rate} For $|R_t^L(\xi_t)|_0 \leq a_1(L)$, $|P^K(V)|_0^\varsigma \leq b_1(K)$, $|P^K(V)|_1^\varsigma \leq b_2(K)$, $|P^K(V)|_2^\varsigma \leq b_3(K)$, we have
		$\sqrt{n} K^{- \gamma_2} = o(1)$, $\max(\sqrt{K}, \sqrt{L} b_2(K)) \, a_1(L) \sqrt{L/n}  = o(1)$, 
		$b_2(K) \sqrt{n} L^{-\gamma_1} \allowbreak = o(1)$, $b_2(K)[\sqrt{L/n} + L^{- \gamma_1}][K + L]  = o(1)$, $ b_3(K) [L /\sqrt{n} + \sqrt{n} L^{- 2\gamma_1}] = o(1)$, $b_2(K)^2 \sqrt{K} [\sqrt{L/n} + L^{- \gamma_1}] = o(1)$,
		
		\item \label{assn:LAE_varbounded} $\E(||v_{t}||^2|\xi_{t})$ and $\Var(e^{W*}|X,Z)$ are bounded on $\mathcal{S}_{\xi t}$ and $\mathcal{S}_{X,Z}$ respectively.
	\end{enumerate}
\end{assumption}

\begin{lemma}\label{lem:LAE}
	Under Assumptions \ref{assn:NPV1stStep} and \ref{assn:linearAsEquivlt},
	\begin{align*}
		\sqrt{n} [ a(\hat{h}^W) & - a(h^W)] =  \\
		& \frac{1}{\sqrt{n}} \sum_{i=1}^n  A \, \E(p_i p_i')^{-1} \, \left[  p_i e_i^W  +  \sum_{t=1}^T  ( H_t^W - \text{d}P_t^W)  (I_{k_2} \otimes \E(r_{it} r_{it}')^{-1}) \, (v_{it} \otimes r_{it})  \right] + \oP(1).
	\end{align*}
\end{lemma}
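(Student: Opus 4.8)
The plan is to use the linearity of $a$ to reduce the statement to a Bahadur‑type expansion of the sieve coefficient vector $\hat{\pi}^W$, and then to feed into that expansion the first–step influence of the generated control variables. Writing $\hat{Q} = \hat{P}\hat{P}'/n$ and decomposing $w_i = p_i'\pi_W^K + (h^W(V_i) - p_i'\pi_W^K) + e_i^W$, one has $a(\hat{h}^W) = A\hat{\pi}^W$, $a(h^W) = A\pi_W^K + \OP(K^{-\gamma_2})$ (using $\|a(g)\|\le C|g|_0$ and Assumption \ref{assn:linearAsEquivlt}(\ref{assn:LAE_approx})), and the exact identity
\[
A(\hat{\pi}^W - \pi_W^K) = A\hat{Q}^{-1}\tfrac1n\sum_i \hat p_i e_i^W \;+\; A\hat{Q}^{-1}\tfrac1n\sum_i \hat p_i(p_i - \hat p_i)'\pi_W^K \;+\; A\hat{Q}^{-1}\tfrac1n\sum_i \hat p_i\,(h^W(V_i) - p_i'\pi_W^K).
\]
Since $\sqrt{n}K^{-\gamma_2}=o(1)$, the contribution of $a(h^W)-A\pi_W^K$ is negligible, and so is the third term above (bound $\tfrac1n\sum_i\|A\hat{Q}^{-1}\hat p_i\|^2 = \tr(A\hat{Q}^{-1}A') = \OP(1)$ and multiply by $\sup|h^W - p^K(\cdot)'\pi_W^K| = O(K^{-\gamma_2})$). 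So only the first two terms, carrying the variance and the generated–covariate correction, need to be analysed.

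Next I would establish the first–step linear representation. From $x_{it}^2 = b_t(\xi_{it}) + v_{it}$, $\E(v_{it}|\xi_{it})=0$ and $\hat b_{it} = \hat\beta_t' r_{it}$ with $\hat\beta_t$ as in (\ref{eq:estimV}), a standard least–squares expansion gives $\hat v_{it} - v_{it} = -\tfrac1n\sum_j v_{jt}\,r_{jt}'(R_tR_t'/n)^{-1} r_{it} + \rho_{it}$, where the bias remainder $\rho_{it}$ is uniformly of order $L^{-\gamma_1}$ and contributes only higher‑order terms. The trimming is harmless because $\tau$ is the identity on $\mathcal{S}_V$, is $1$‑Lipschitz and twice differentiable with bounded derivatives, and $\max_i\|\tilde{V}_i - V_i\| = \oP(1)$ by Result \ref{rslt:NPV1stStep}, so $\hat V_i - V_i = \tilde V_i - V_i$ with probability tending to one away from the boundary and the boundary effect is of second order. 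A second–order Taylor expansion then yields $\hat p_i - p_i = \sum_{t=1}^T \big(\partial p^K(V_i)/\partial v_t\big)(\hat v_{it}-v_{it}) + R_i$, with $R_i$ controlled through $b_3(K)$ and $\tfrac1n\sum_i\|\tilde V_i - V_i\|^2 = \OP(\Delta_n^2)$, hence negligible under the rates of Assumption \ref{assn:linearAsEquivlt}(\ref{assn:LAE_rate}).

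The core step is then to substitute these two displays into the first two terms, replace the empirical Gram matrices $\hat{Q}$ and $R_tR_t'/n$ by $\E(p_ip_i')$ and $\E(r_{it}r_{it}')$ (legitimate under the eigenvalue normalizations $\Gamma_{1t},\Gamma_2$ of Assumption \ref{assn:linearAsEquivlt}(\ref{assn:LAE_sievebasis}) and the rate conditions), and to treat the resulting $1/n^2$ double sums by a Hájek/U‑statistic projection. The projection onto the ``first–step'' index vanishes because $\E(v_{it}r_{it}') = \E(\E(v_{it}\,|\,\xi_{it})r_{it}')=0$; the projection onto the other index produces, from $A\hat{Q}^{-1}\tfrac1n\sum\hat p_i(p_i-\hat p_i)'\pi_W^K$, the matrix $H_t^W = \E[p_i(\partial h^W(V_i)/\partial v_t\otimes r_{it}')]$ — here one uses $(\partial p^K(V_i)/\partial v_t)'\pi_W^K \approx \partial h^W(V_i)/\partial v_t$, controlled by the derivative–approximation condition — and, from the $\hat p_i - p_i$ piece hidden inside $A\hat{Q}^{-1}\tfrac1n\sum\hat p_i e_i^W$, the matrix $\text{d}P_t^W = \E(\rho_i^W(\partial p^K(V_i)/\partial v_t\otimes r_{it}'))$, obtained by conditioning on $(X_i,Z_i)$ so that $\E(e_i^W\mid X_i,Z_i)=\rho_i^W$. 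The latter term is nonzero precisely because the orthogonality condition fails; when $\rho^W\equiv 0$ it vanishes and the expansion reduces to the Newey–Powell–Vella form. The leftover ``pure variance'' part of the first term is $A\E(p_ip_i')^{-1}\tfrac1n\sum_i p_ie_i^W$. Finally, writing $\VecM$ of $(R_tR_t'/n)^{-1}\tfrac1n\sum_j r_{jt}v_{jt}'$ as $(I_{k_2}\otimes \E(r_{it}r_{it}')^{-1})\tfrac1n\sum_j (v_{jt}\otimes r_{jt})$ via Kronecker identities, and re‑indexing, assembles the three contributions into $\frac1{\sqrt n}\sum_i A\E(p_ip_i')^{-1}\big[p_i e_i^W + \sum_t (H_t^W - \text{d}P_t^W)(I_{k_2}\otimes\E(r_{it}r_{it}')^{-1})(v_{it}\otimes r_{it})\big]$.

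It remains to collect every remainder — the $\OP(K^{-\gamma_2})$ approximation term, the first–step bias $\rho_{it}$, the trimming and second–order Taylor remainders $R_i$, the diagonal ($i=j$) terms of the double sums, and the errors from replacing the two empirical Gram matrices — and to verify each is $\oP(n^{-1/2})$ using the rate conditions of Assumption \ref{assn:linearAsEquivlt}(\ref{assn:LAE_rate}). The main obstacle is exactly this bookkeeping paired with the U‑statistic projection: there is a proliferation of cross terms involving products of first– and second–step noise, each requiring a sharp moment bound in the Frobenius/spectral norms tracked through $a_1(L), b_1(K), b_2(K), b_3(K)$, and it is the simultaneous control of all of them — not any single delicate estimate — that makes the proof long; a secondary subtlety is justifying the interchange of expectation and pathwise differentiation implicit in identifying the surviving projections with $H_t^W$ and $\text{d}P_t^W$.
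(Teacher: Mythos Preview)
Your outline captures the architecture of the paper's argument: linearize via $a(\hat h^W)=A\hat\pi^W$, split into a variance piece $A\hat Q^{-1}\hat P e^W$ and a generated–covariate correction, Taylor expand $\hat p_i-p_i$ and $\hat v_{it}-v_{it}$, and recognize that the failure of $\E(e_i^W\mid X_i,Z_i)=0$ produces the extra $\text{d}P_t^W$ term absent in Newey--Powell--Vella. The roles you assign to $H_t^W$ and $\text{d}P_t^W$ are the correct ones.

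There is, however, one concrete gap under the stated hypotheses. Your decomposition $w_i = p_i'\pi_W^K + (h^W(V_i)-p_i'\pi_W^K)+e_i^W$ forces you, after Taylor expanding $(p_i-\hat p_i)'\pi_W^K$, to identify $(\partial p^K(V_i)/\partial v_t)'\pi_W^K$ with $\partial h^W(V_i)/\partial v_t$. You invoke a ``derivative--approximation condition'' for this, but Assumption~\ref{assn:linearAsEquivlt}(\ref{assn:LAE_approx}) controls only $\sup_{\mathcal{S}_V^\varsigma}|h_\varsigma^W - p^{K\prime}\pi_W^K|$, not its derivative, and no such derivative condition is among the lemma's hypotheses. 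The paper avoids this by decomposing instead as $W - \hat P'\pi_W^K = e^W + (\vec h_\varsigma^W - \vec{\hat h}_\varsigma^W) + (\vec{\hat h}_\varsigma^W - \hat P'\pi_W^K)$: the last bracket is killed directly by the sup--norm approximation (together with $\|A\hat Q^{-1/2}\|=\OP(1)$), and the middle bracket is Taylor expanded in $h^W$ itself, so $\partial h^W/\partial v_t$ appears from the outset and no comparison with $(\partial p^K/\partial v_t)'\pi_W^K$ is ever needed. Your second and third terms together in fact equal the paper's term (C) plus a piece bounded by $\sup_{\mathcal{S}_V^\varsigma}|h_\varsigma^W - p^{K\prime}\pi_W^K|$, so the fix is simply to expand $h^W$ rather than $p^{K\prime}\pi_W^K$.

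A second, purely stylistic, difference: the paper does not use a H\'ajek/U--statistic projection. It treats the double sums by direct matrix--norm bounds, writing (B) $=APe^W/\sqrt n + A(\hat Q^{-1}-I)Pe^W/\sqrt n + A\hat Q^{-1}(\hat P-P)e^{W*}/\sqrt n + A\hat Q^{-1}(\hat P-P)\vec\rho^W/\sqrt n$, showing the middle two are $\oP(1)$ (the second because $e^{W*}$ is conditionally mean zero given $(\vec x,\vec z)$), and expanding the last to obtain $\text{d}P_t^W$. Your projection framing would also work; the paper's route just keeps the bookkeeping in matrix form and avoids the $i=j$ diagonal accounting.
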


We use the proof techniques of Lemma 2 of \cite{npv99} to obtain this approximation. However as mentioned previously, an essential orthogonal condition they assumed, namely the conditional mean independence of $w - \E(w|V)$ of $(X,Z)$, does not hold in our model. For this reason we obtain an extra term depending on $\rho^W$, the term $\text{d}P_t^W$ which would be zero if $\rho^W(X,Z)=0$. Another difference is the summation over $t$ of the $H_t^W$ and $\text{d}P_t^W$ terms due to vector of control variables being composed of $T$ estimated residuals coming from $T$ different cross-section regressions.

\begin{proof}[Proof of Lemma \ref{lem:LAE}]
	
	As argued in the proof of Result \ref{rslt:IN2stStep}, under Assumption \ref{assn:linearAsEquivlt} (\ref{assn:LAE_sievebasis}), we can impose without loss of generality the normalization $\E(p_i p_i') \, = \, I_K$ and $\E(r_i r_i') \, = \, I_L$.
	
	We define $\Delta_{\partial P} = b_2(K) \sqrt{L/n}$, 
	$\Delta_Q = b_2(K)^2 \Delta_n^2 + \sqrt{K} b_2(K) \Delta_n + b_1(K) \sqrt{K/n}$, $\Delta_{Q1} = a_1(L) \sqrt{L/n}$ and $\Delta_H = b_2(K) \Delta_n \sqrt{L} + a_1(L) \sqrt{K/n}$. Recall that $\Delta_n^2 = L / n + L^{- 2 \gamma_1}$ and note that $b_1(K) \leq b_2(K) \leq b_3(K)$
	Under Assumption \ref{assn:linearAsEquivlt} (\ref{assn:LAE_rate}), $\sqrt{n} K^{- \gamma_2} = o(1)$ and
	\begin{align*}
		& \sqrt{K} \Delta_Q  = O(\sqrt{K} [\sqrt{K} b_2(K) \Delta_n + b_1(K) \sqrt{K/n}]) = O(K b_2(K) [\sqrt{L/n} + L^{- \gamma_1}]) = o(1), \\
		&\Delta_{\partial P} \sqrt{L}  = b_2(K) L / \sqrt{n} = o(1), \qquad b_2(K) \Delta_{Q1} \sqrt{L}  = b_2(K) a_1(L) L / \sqrt{n} = o(1), \\
		&b_2(K) \Delta_n  = o(1), \qquad \sqrt{L} \Delta_H  = O(b_2(K) L (\sqrt{L/n} + L^{- \gamma_1})+ a_1(L) \sqrt{K L/n}) =o(1), \\
		&\Delta_Q b_2(K)  = O\left(b_2(K)^2 \sqrt{K} [ L^{- \gamma_1} + \sqrt{L/n}] + b_2(K)^2 \sqrt{K/n} \right) = o(1),\\
		&\sqrt{n} b_3(K) \Delta_n^2  = b_3(K) [L /\sqrt{n} + \sqrt{n} L^{- 2\gamma_1}] = o(1).
	\end{align*}
	These results imply in particular that $\Delta_{\partial P} = o(1)$, $\Delta_Q = o(1)$, $\Delta_{Q1} = o(1)$ and $\Delta_H = o(1)$. All these rates will be used in the steps of the proof.

	We define for all $t \leq T$, $Q_{1t} = R_t R_t'/n$. As in the proof of Result \ref{rslt:IN2stStep}, we obtain $||Q_{1t} - I_L|| \, = \, O_{\mathbb{P}}(\Delta_{Q1}) = \oP(1)$, $t \leq T$, and $||\hat{Q} - I_K|| \, = \, O_{\mathbb{P}}(\Delta_Q) = \oP(1)$. This implies, as argued in NPV99, that the eigenvalues of $\hat{Q}$ are bounded away from $0 \wpa1$, therefore $||B \hat{Q}^{-1}|| \leq ||B|| \OP(1)$ and  $||B \hat{Q}^{-1/2}|| \leq ||B|| \OP(1)$ for any matrix $B$.
	Using 
	\begin{align*}
		||\overbar{H}_t^W - H_t^W||  \leq  & ||  \frac{1}{n} \sum_{i=1}^n  (\hat{p}_i - p_i) (\partial h_\varsigma^W(V_i) / \partial v_t \otimes r_{it}') || \\
		& \qquad + || \frac{1}{n} \sum_{i=1}^n p_i \, (\partial h^W(V_i) / \partial v_t \otimes r_{it}') - \E[p_i  \, (\partial h^W(V_i) / \partial v_t \otimes r_{it}')]||, 
	\end{align*}
	with  
	$$|| \frac{1}{n} \sum_{i=1}^n  (\hat{p}_i - p_i) (\partial h_\varsigma^W(V_i) / \partial v_t \otimes r_{it}') || \leq b_2(K) \Delta_n \sup_{\mathcal{S}_V^\varsigma} || \partial h_\varsigma^W ||  \tr(R_t R_t')^{1/2}/\sqrt{n} = \OP(b_2(K) \Delta_n \sqrt{L}),$$
	and 
	\begin{align*}
		\E(|| & \frac{1}{n}  \sum_{i=1}^n p_i \, (\partial h^W(V_i) / \partial v_t \otimes r_{it}') - \E[p_i  \, (\partial h^W(V_i) / \partial v_t \otimes r_{it}')]||^2) \\
		& \leq \E \left[ \tr \left( \frac{1}{n^2}  \sum_{i=1}^n p_i \, (\partial h^W(V_i) / \partial v_t \otimes r_{it}') \, (\partial h^W(V_i) / \partial v_t \otimes r_{it}')' p_i' \right) \right] \\
		& \leq C \frac{1}{n} \E( \tr( p_i r_i' r_i p_i')) \leq C a_1(L)^2 K/n,
	\end{align*}
	we obtain $||\overbar{H}_t^W - H_t^W|| = \OP(\Delta_H)$. Thus by Assumption \ref{assn:linearAsEquivlt} (\ref{assn:LAE_rate}),  $||\overbar{H}_t^W - H_t^W|| = \oP(1)$.
	Moreover, since $\rho^W(.)$ is a bounded function and $\E(r_{it}' r_{it}) = \E(\tr(I_L)) = L$, we have for all $t \leq T$,
	\begin{align*}
		\E(||\overbar{\text{d}P}_t^W - \text{d}P_t^W ||^2) & 
		\leq  \frac{C}{n^2}  \sum_{i=1}^n  \E \left[ \tr \left( \frac{\partial p^K (V_i)}{\partial v_t} \left( \frac{\partial p^K (V_i)}{\partial v_t} \right)' \  r_{it}' r_{it} \right) \right] \leq C b_2(K)^2 L/n,
	\end{align*}
	which implies by M that $||\overbar{\text{d}P}_t^W - \text{d}P_t^W || = \OP(\Delta_{\partial P}) = \oP(1)$.
	
	Since $a(.)$ is linear, $a(\hat{h}^W) \, = \, A \hat{\pi}_W$.  Using Assumption \ref{assn:linearAsEquivlt} (\ref{assn:LAE_ALipsch}),
	$$
	||a(p^K(.)' \, \pi_W^{K}) - a(h^W(.))|| \leq  C \sup_{\mathcal{S}_V} ||p^K(.)' \, \pi_W^{K} - h^W(.)|| \leq C \sup_{\mathcal{S}_V^\varsigma} ||p^K(.)' \, \pi_W^{K} - h_\varsigma^W(.)|| \leq C K^{-\gamma_2}, 
	$$
	which implies using Assumptions \ref{assn:linearAsEquivlt} (\ref{assn:LAE_rate}) that
	\begin{align*}
		\sqrt{n} [a(\hat{h}^W) - a(h^W)] \, & = \, \sqrt{n}  A [\hat{\pi}_W - \pi_W^K] \, + \, o_{\mathbb{P}}(1), \\
		& = \,   A  \hat{Q}^{-1} \hat{P} ( W - \hat{P}' \pi_W^K) /\sqrt{n} \, + \, o_{\mathbb{P}}(1).
	\end{align*}
	Since 
	$$|| A  \hat{Q}^{-1} \hat{P} (\vec{\hat{h}}_\varsigma^W - \hat{P}' \pi_W^K) || \leq  || A  \hat{Q}^{-1} \hat{P}||  ||\vec{\hat{h}}_\varsigma^W - \hat{P}' \pi_W^K|| \leq \sqrt{n} ||A \hat{Q}^{-1/2}|| \, \sqrt{n} \sup_{\mathcal{S}_V^\varsigma} ||p^K(.)' \, \pi_W^{K} - h^W(.)||,$$
	then $ A  \hat{Q}^{-1} \hat{P} (\hat{h}_\varsigma^W - \hat{P}' \pi_W^K) /\sqrt{n} = \oP(1)$ by $||A||$ bounded. Therefore  we obtain as in NPV99,
	\begin{equation}
		\sqrt{n}  (a(\hat{h}^W) - a(h^W)) \, = \,   \underbrace{ A \hat{Q}^{-1} \hat{P} e^W / \sqrt{n}}_{\text{(B)}} \, + \, \underbrace{ A \hat{Q}^{-1} \hat{P} (\vec{h}_\varsigma^W - \vec{\hat{h}}_\varsigma^W) / \sqrt{n}}_{\text{(C)}}  \, + \, o_{\mathbb{P}}(1).
	\end{equation}
	We first focus on the term (B), which we decompose
	\begin{equation*}
		\text{(B)} = AP'e^W / \sqrt{n} \, + \, \underbrace{A (\hat{Q}^{-1} - I) P e^W/ \sqrt{n}}_{(B1)} \, + \,  \underbrace{A \hat{Q}^{-1} (\hat{P} - P) e^{W*} / \sqrt{n}}_{(B2)}\, + \,  \underbrace{A \hat{Q}^{-1} (\hat{P} - P) \vec{\rho}^W / \sqrt{n}}_{(B3)}.
	\end{equation*}
	Since 
	$$
	\E(|| P e^W / \sqrt{n} ||^2) =  \tr[\E(P e^W (e^W)' P')]/n =  \tr[\E( \E(e^W (e^W)'|\vec{V}) P'P)]/n  \leq C \tr[I_k] = O(K)
	$$ 
	by Assumption \ref{assn:linearAsEquivlt} (\ref{assn:LAE_varbounded}), by M we have  $|| P e^W / \sqrt{n} || = \OP(K^{1/2})$.
	Therefore,
	\begin{align*}
		& ||(B1)||   \leq || A \hat{Q}^{-1} || \, ||I_K - \hat{Q}|| \, || P e^W / \sqrt{n} || = || A || \OP(1) \, ||I_K - \hat{Q}|| \, || Pe^W / \sqrt{n} || = \OP( \Delta_Q K^{1/2})\\
		& \Rightarrow (B1) = \oP (1),
	\end{align*}
	under Assumption \ref{assn:linearAsEquivlt} (\ref{assn:LAE_varbounded}).
	We now look at the extra terms (B2) and (B3), where we decomposed $e^W$ as $\rho^W + e^{W*}$ since $e^W$ itself is not conditionally mean independent of $(\hat{P} - P)$, while $e^{W*}$ is. Indeed, since $ \E(e^{W*} |\vec{x},\vec{z}) = 0$,  
	\begin{align*}
		\E(|| (\hat{P} - P)e^{W*} / \sqrt{n} ||^2|\vec{x},\vec{z}) & = 1/n \tr[\E((\hat{P} - P)e^{W*} (e^{W*})' (\hat{P} - P)'|\vec{x},\vec{z})] \\
		& = 1/n \tr[(\hat{P} - P) \, \E(e^{W*} (e^{W*})'|\vec{x},\vec{z} \,) \, (\hat{P} - P)'] \, \leq C/n || \hat{P} - P||^2.
	\end{align*}
	By the proof of Result \ref{rslt:IN2stStep}, $||\hat{P} - P||^2/n = \OP(b_2(K)^2 \Delta_n^2)$ (the difference is that now $b_2(K)$ is defined as the sup rate over the extended support $\mathcal{S}_V^\varsigma$) hence by CM, $|| (\hat{P} - P)e^{W*} / \sqrt{n} || = \OP (b_2(K) \Delta_n)$.
	\begin{align*}
		||A \hat{Q}^{-1}(\hat{P} - P) e^{W*} /\sqrt{n} || & \leq ||A \hat{Q}^{-1}|| \, ||(\hat{P} - P) e^{W*} || \, /\sqrt{n} =  ||A ||  \OP(1) ||(\hat{P} - P) e^{W*} || /\sqrt{n}\\
		& \Rightarrow (B2) = \OP (b_2(K) \Delta_n) = \oP(1).
	\end{align*}
	We now focus on (B3). We have $(\hat{P} - P) \vec{\rho}^W / \sqrt{n} = 1/ \sqrt{n} \sum_{i=1}^n (\hat{p}_i^K - p_i^K) \rho_i^W$ and a second order Taylor expansion gives 
	$$||(\hat{p}_i^K - p_i^K) -  \frac{\partial p^K (\tau(V_i))}{\partial V}  \frac{\partial \tau (V_i)}{\partial V} (\tilde{V}_i - V_i)|| \leq C b_3(K) ||\tilde{V}_i - V_i||^2,  $$
	which can be rewritten as $||(\hat{p}_i^K - p_i^K) -  \frac{\partial p^K (V_i)}{\partial v} (\tilde{V}_i - V_i)|| \leq C b_3(K) ||\tilde{V}_i - V_i||^2,  $ since $V_i \in \mathcal{S}_V$ and we chose $\tau$ so that its Jacobian matrix is the identity matrix on $\mathcal{S}_V$. Hence 
	\begin{align*}
		||A \hat{Q}^{-1} \frac{1}{\sqrt{n}} [(\hat{P} - P) \vec{\rho}^W - \sum_{i=1}^n \frac{\partial p^K (V_i)}{\partial V} (\tilde{V}_i - V_i) \rho_i^W] \, || & \leq C \OP(1)  b_3(K) \sum_{i=1}^n ||\tilde{V}_i - V_i||^2 / \sqrt{n} \\
		& = \OP(\sqrt{n} b_3(K) \Delta_n^2) = \oP(1),
	\end{align*}
	by Assumption \ref{assn:linearAsEquivlt} (\ref{assn:LAE_rate}). Therefore $(B3) = A \hat{Q}^{-1} \sum_{i=1}^n \frac{\partial p^K (V_i)}{\partial V} (\tilde{V}_i - V_i) \rho_i^W / \sqrt{n} + \oP(1)$. We can decompose
	\begin{align*}
		(-1) (\tilde{v}_{it} - v_{it}) \, & = \, \hat{\beta}_{t}' \,  r_{it}  - b_{it} \, = \, (Q_{1t}^{-1}R_t [X_t^{en \, \prime} - R_t' \beta_{t}^{L}]/n )' r_{it} \, + \, \beta_{t}^{L \, \prime} r_{it}- b_{it},  \\
		& = [  (Q_{1t}^{-1} R_t \vec{v}_t^{\, \prime}/n )' r_{it}] \, + \, [(Q_{1t}^{-1} R_t [\vec{b}_{t}^{\, \prime} - R_t' \beta_{t}^{L}]/n )' r_{it}  ] \, + \, [ \beta_{t}^{L \, \prime} r_{it}- b_{it} ],
	\end{align*}
	and then apply this decomposition to $(B.3)$, $(B.3) = -[ (B3.1) + (B3.2) + (B3.3)] +  \oP(1)$, where
	\begin{align*}
		(B3.1) = 
		& \, \sum_{t=1}^T A  \hat{Q}^{-1}  \sum_{i=1}^n  \rho_i^W \frac{\partial p^K (V_i)}{\partial v_t} \left[ Q_{1t}^{-1} R_t [\vec{b}_{t}^{\, \prime} - R_t' \beta_{t}^{L}]/n  \right]' r_{it} / \sqrt{n}  \,\\
		(B3.2) = 
		&  \,   \sum_{t=1}^T A  \hat{Q}^{-1}  \sum_{i=1}^n  \rho_i^W \frac{\partial p^K (V_i)}{\partial v_t}  [ \beta_{t}^{L \, \prime} r_{it}- b_{it} ] / \sqrt{n}    \\
		(B3.3) = 
		&    \,  \sum_{t=1}^T A  \hat{Q}^{-1}  \sum_{i=1}^n \rho_i^W \frac{\partial p^K (V_i)}{\partial v_t} \left[ Q_{1t}^{-1} R_t \vec v_t'/n \right]' r_{it} / \sqrt{n}
	\end{align*}
	The first term in this expression of $(B.3)$ can be rewritten
	\begin{align*}
		(B3.1) & 
		= \sum_{t=1}^T A  \hat{Q}^{-1}  \sum_{i=1}^n  \rho_i^W \frac{\partial p^K (V_i)}{\partial v_t} \left[ Q_{1t}^{-1} R_t [\vec{b}_{t}^{\, \prime} - R_t' \beta_{t}^{L}]/n  \right]' r_{it} / \sqrt{n}  \\
		& 
		= \sum_{t=1}^T A  \hat{Q}^{-1} \frac{1}{n} \sum_{i=1}^n  \rho_i^W \left( \frac{\partial p^K (V_i)}{\partial v_t} \otimes r_{it}' \right) \VecM( Q_{1t}^{-1} R_t [\vec{b}_{t}^{\, \prime} - R_t' \beta_{t}^{L}]) /\sqrt{n} \\
		& 
		= \sum_{t=1}^T A  \hat{Q}^{-1} \overbar{\text{d}P}_t^W \, (I_{d_2} \otimes Q_{1t}^{-1} R_t)  \VecM( \vec{b}_{t}^{\, \prime} - R_t' \beta_{t}^{L}) /\sqrt{n},
	\end{align*}
	(see e.g p282 \cite{amagnus05}) where $||  \VecM( \vec{b}_{t}^{\, \prime} - R_t' \beta_{t}^{L})   || \leq \sqrt{n} \, \sup_{\mathcal{S}_{\xi_t}} || b_t(.) - \beta_{t}^{L \, \prime} r^L(.) || \leq \sqrt{n} L^{- \gamma_1} $. 
	Defining, for $d \leq d_2$, the matrix $\overbar{\text{d}P}_{td}^W  = \frac{1}{n} \sum_{i=1}^n \rho_i^W \frac{\partial p^K (V_i)}{\partial v_{td}} r_{it}'$ where $v_{td}$ is the d$^{th}$ component of $v_t$, then $\overbar{\text{d}P}_t^W = (\overbar{\text{d}P}_{t1}^W, \,.. \, , \overbar{\text{d}P}_{td_2}^W)$, and we can write
	\begin{align*}
		||A \hat{Q}^{-1} & \overbar{\text{d}P}_t^W (I_{d_2} \otimes Q_{1t}^{-1} R_t) ||^2 = \sum_{d =1}^{d_2} ||A \hat{Q}^{-1} \overbar{\text{d}P}_{td}^W  Q_{1t}^{-1} R_t) ||^2 \\
		& =  \sum_{d =1}^{d_2} \tr( A \hat{Q}^{-1}  \overbar{\text{d}P}_{td}^W  Q_{1t}^{-1} R_t R_t' Q_{1t}^{-1} \overbar{\text{d}P}_{td}^{W \, \prime} \hat{Q}^{-1} A'  )
		= n \sum_{d =1}^{d_2} \tr( A \hat{Q}^{-1} \overbar{\text{d}P}_{td}^W  Q_{1t}^{-1} \overbar{\text{d}P}_{td}^{W \, \prime} \hat{Q}^{-1} A' ),
	\end{align*}
	and $  \overbar{\text{d}P}_{td}^W  Q_{1t}^{-1} \overbar{\text{d}P}_{td}^{W \, \prime} \,  =  \, \left( \frac{1}{n} \sum_{i=1}^n \rho_i^W  \frac{\partial p^K (V_i)}{\partial v_{td}} r_{it}' \right) \left( \frac{1}{n} \sum_{i = 1}^n r_{it} r_{it}' \right)^{-1}  \left( \frac{1}{n} \sum_{i=1}^n \rho_i^W  \frac{\partial p^K (V_i)}{\partial v_{td}} r_{it}' \right)'$. 
	
	By $R_t' (R_t R_t')^{-1} R_t$ being an orthogonal projection matrix, 
	$$
	\overbar{\text{d}P}_{td}^W  Q_{t}^{-1} \overbar{\text{d}P}_{td}^{W \, \prime} \leq \frac{1}{n}  \sum_{i=1}^n (\rho_i^W)^2  \frac{\partial p^K (V_i)}{\partial  v_{td}} \left( \frac{\partial p^K (V_i)}{\partial  v_{td}} \right)',
	$$
	implying $|| \overbar{\text{d}P}_{td}^W  Q_{t}^{-1/2} || \leq b_2(K)$ and
	\begin{align*}
		\tr( A \hat{Q}^{-1} \overbar{\text{d}P}_{td}^W  Q_{t}^{-1} \overbar{\text{d}P}_{td}^{W \, \prime} \hat{Q}^{-1} A' ) & \leq \frac{1}{n} \sum_{i=1}^n (\rho_i^W)^2 \tr( A \hat{Q}^{-1} \frac{\partial p^K (V_i)}{\partial  v_{td}} \left( \frac{\partial p^K (V_i)}{\partial  v_{td}} \right)' \hat{Q}^{-1} A'  ) \\
		& \leq \frac{1}{n} ||A \hat{Q}^{-1}||^2 \, \sum_{i=1}^n  (\rho_i^W)^2  ||  \frac{\partial p^K (V_i)}{\partial  v_{td}} ||^2 = \OP(1) \, b_2(K)^2,
	\end{align*}
	since $\rho^W(.)$ is bounded. Hence $||A \hat{Q}^{-1} \overbar{\text{d}P}_t^W (I_{d_2} \otimes Q_{1t}^{-1} R_t) ||^2 = \OP(n b_2(K)^2)$
	and we obtain
	by Assumption \ref{assn:linearAsEquivlt} (\ref{assn:LAE_rate}), 
	$$||(B3.1)|| = \OP(\sqrt{n} b_2(K) \sqrt{n} L^{-\gamma_1} / \sqrt{n}) = \OP(b_2(K) \sqrt{n} L^{-\gamma_1}) = \oP(1).$$
	Focusing now on the second term in the expression of $(B3)$,
	\begin{align*}
		||(B3.2)|| & = \big| \big| \sum_{t=1}^T A  \hat{Q}^{-1}  \sum_{i=1}^n  \rho_i^W \frac{\partial p^K (V_i)}{\partial v_t}  [ \beta_{t}^{L \, \prime} r_{it}- b_{it} ]  \big| \big| / \sqrt{n} \leq || A \hat{Q}^{-1} || n b_2(K) \, C L^{- \gamma_1}  / \sqrt{n} \\
		& = \OP(1) b_2(K)  \sqrt{n} L^{- \gamma_1} = \oP(1),
	\end{align*}
	again by Assumption \ref{assn:linearAsEquivlt} (\ref{assn:LAE_rate}). This implies that $(B3) = - (B3.3) + \oP(1)$, with
	\begin{align*}
		(B3.3) =  \sum_{t=1}^T A  \hat{Q}^{-1} \overbar{\text{d}P}_t^W  (I_{d_2} \otimes Q_{1t}^{-1} )  \VecM( R_t \vec v_t') /\sqrt{n}.
	\end{align*}
	First, by $\E(||v_{it}||^2|\xi_{it} = \xi_t)$ bounded, $||R_t \vec v_t' /\sqrt{n} || = \OP(\sqrt{L})$, which gives
	\begin{align*}
		|| \sum_{t=1}^T A  \hat{Q}^{-1} &\overbar{\text{d}P}_t^W  [ I_{d_2} \otimes Q_{1t}^{-1} - I_{d_2L}]  \VecM( R_t \vec v_t') /\sqrt{n} ||, \\
		& \leq ||A  \hat{Q}^{-1} || \, \sum_{t = 1}^T || \overbar{\text{d}P}_t^W  (I_{d_2} \otimes Q_{1t}^{-1/2})|| \,  ||I_{d_2} \otimes Q_{1t}^{-1/2}|| \, ||I_{d_2} \otimes (I_L - Q_{1t})|| \, ||R_t \vec v_t' /\sqrt{n} ||, \\
		& \leq \OP(1) C b_2(K) \OP(1) \Delta_{Q1} \sqrt{L} = \OP(b_2(K) \Delta_{Q1} \sqrt{L} ) = \oP(1),
	\end{align*}
	by Assumption \ref{assn:linearAsEquivlt} (\ref{assn:LAE_rate}). Similarly
	\begin{align*}
		|| \sum_{t=1}^T A  \hat{Q}^{-1} (\overbar{\text{d}P}_t^W - \text{d}P_t^W) \VecM( R_t \vec v_t') /\sqrt{n} || & \leq C ||A  \hat{Q}^{-1} || \, ||\overbar{\text{d}P}_t^W - \text{d}P_t^W || \,  ||R_t \vec v_t' /\sqrt{n} ||\\
		& = \OP(\Delta_{\partial P} \sqrt{L}) = \oP(1).   
	\end{align*}
	Finally, we write $\text{d}P_{td}^W  = \E\left( \rho_i^W \partial p^K (V_i) /\partial v_{td} \, r_{it}' \right)$, as well as  $v_{itd}$ the d$^{th}$ component of $v_{it}$ and $\vec{v}_{td} = (v_{1td}, \, .. , \, v_{ntd})'$. Then
	\begin{align*}
		\E(|| & \text{d}P_{td}^W R_t \vec{v}_{td}' / \sqrt{n} ||^2 )  = \tr \left( \text{d}P_{td}^W \E \left(  R_t \E( \vec{v}_{td} \vec{v}_{td}' | \vec{x}^{ex}, \vec{z}) R_t' \right) \text{d}P_{td}^{W \, \prime} \right) /n 
		\leq C \tr \left( \text{d}P_{td}^W \E \left(  R_t R_t' \right) \text{d}P_{td}^{W \, \prime} \right) /n \\
		& \leq C  \, || \text{d}P_{td}^W \text{d}P_{td}^{W \, \prime} ||  = C \E\left( \rho_i^W \partial p^K (V_i) /\partial v_{td} \, r_{it}' \right)\E(r_{it} r_{it}')^{-1} \E\left( \rho_i^W r_{it} \partial p^K (V_i)' /\partial v_{td}  \right) \\
		& \leq C \E( (\rho_i^W)^2 \partial p^K (V_i) /\partial v_{td} (\partial p^K (V_i) /\partial v_{td})') \leq C b_2(K)^2 ,
	\end{align*}
	where the second to last inequality follows from taking the orthogonal projection matrix argument to the limit.
	This implies that $||\text{d}P_{td}^W R_t \vec{v}_{td}' / \sqrt{n} || = \OP(b_2(K))$, and 
	\begin{align*}
		|| \sum_{t=1}^T A  (\hat{Q}^{-1} - I_{d_2}) \text{d}P_t^W \VecM( R_t \vec{v}_t') /\sqrt{n} || & \leq \sum_{t=1}^T \sum_{d = 1}^{d_2}  ||A \hat{Q}^{-1} (I_{d_2} - \hat{Q}) \text{d}P_{td}^W  R_t \vec{v}_{td}') /\sqrt{n} || \\
		&  \leq  ||A \hat{Q}^{-1}|| \, \OP(\Delta_Q) \OP(b_2(K)) = \OP(\Delta_Q b_2(K)) = \oP(1),
	\end{align*}
	by Assumption \ref{assn:linearAsEquivlt} (\ref{assn:LAE_rate}). We can now write 
	\begin{align*}
		(B3.3) = \frac{1}{\sqrt{n}} A \sum_{t=1}^T  \text{d}P_t^W \VecM( R_t \vec{v}_t^{\, \prime}) + \oP(1) =  \frac{1}{\sqrt{n}} \sum_{i=1}^n  A \sum_{t=1}^T  \text{d}P_t^W \, v_{it} \otimes r_{it} \, + \, \oP(1),
	\end{align*}
	where the term appearing in the sum over $n$, were the weight matrices not normalized, would become 
	$  \sum_{t \leq T} A   \E(p_i p_i')^{-1} \allowbreak \text{d}P_t^W \, ( I_{d_2} \otimes \E(r_{it} r_{it}')^{-1} ) \, v_{it} \otimes r_{it}$. Adding all terms appearing in $(B)$, one obtains,
	\begin{align*}
		\text{(B)} = \frac{1}{\sqrt{n}} \sum_{i=1}^n  A \left[  p_i e_i^W  \, - \, \sum_{t=1}^T  \text{d}P_t^W \, (v_{it} \otimes r_{it})  \right] + \oP(1).
	\end{align*}
	
	The remaining term in the expression of $\sqrt{n} (a(\hat{h}^W) - a(h^W))$ is $(C) = A \hat{Q}^{-1} \hat{P} (\vec{h}_\varsigma^W - \vec{\hat{h}}_\varsigma^W) / \sqrt{n}.$
	This term is similar to the second term in equation $(A.16)$ of NPV99, p598, where the regression function is becomes $h_\varsigma^W$. Since $v \mapsto h_\varsigma^W(\tau(v))$ is by composition twice continuously differentiable and has bounded second order derivative on the extended support, one obtains using NPV99 
	\begin{equation*}
		\text{(C)} = \frac{1}{\sqrt{n}}  \sum_{i=1}^n A \sum_{t = 1}^T H_t^W (v_{it} \otimes r_{it})  + \oP (1),
	\end{equation*}
	adapting to the fact that $h_\varsigma^W$ is here function of $T$ generated covariates instead of one and using $\sqrt{n}L^{-\gamma_1}$, $\sqrt{n} b_1(K) \Delta_n^2$, $\sqrt{L} \Delta_{Q1}$, $\sqrt{K} \Delta_Q$ and $\sqrt{L} \Delta_H$ converge to zero as $n$ goes to infinity.
	Note that, absent the normalization of the weight matrices, the term  summed over $n$  in the previous equation would become
	$ A \, \E(p_i p_i')^{-1} \sum_{t = 1}^T H_t^W (I_{d_2} \otimes \E(r_{it} r_{it}')^{-1}) \,(v_{it} \otimes r_{it})  $.
\end{proof}

Assumption \ref{assn:linearAsEquivlt} (\ref{assn:LAE_matrixbounded}) is a condition on the generic functional $a$ applied to the elements of the approximating basis.  The functionals appearing in $\hat{\mu}$ are derived from the linearization of $\mathcal{X}$. They all take the form of an expectation $a(h^W) = \int \lambda_a(v) h^W(v) dF_V(v)$. This is exactly the mean square continuity condition of \cite{n97}, which he shows is sufficient to obtain $\sqrt{n}$ asymptotic normality of linear functionals of linear sieve estimators. We similarly exploit properties implied by this specification of $a$ 
to show that Condition (\ref{assn:LAE_matrixbounded}) of Assumption \ref{assn:linearAsEquivlt} holds and to then obtain the total asymptotic variance matrix of $\sqrt{n} \mathcal{X}_0^{(G)}  [\hat{\mathcal{G}} - \mathcal{G}_0]$. This will be used to show Result \ref{rslt:as.var}.

\begin{assumption}\label{assn:MSE}
	\
	\begin{enumerate}[noitemsep,nolistsep]
		\item \label{assn:MSE_lambda_int} There exists a function $\lambda_a : \R^{Td_2} \mapsto \R^{d_a}$ such that $a(h^W) = \int \lambda_a(v) h^W(v) dF_V(v)$,
		
		\item \label{assn:MSE_approx} There exists $\iota_{a}^K$ such that $|\lambda_a(V) - \iota_{a}^K p^K(V)|_1 =O(K^{-\gamma_3})$, and $(\iota_{aht}^L,\iota_{a\rho t}^L)$ such that as $L \to \infty$,
		$\E\left(||\E \left[ \lambda_a(V) \frac{\partial h^W(V)}{\partial v_{td}} | \xi_t \right] - \iota_{aht}^L r^L(\xi_t)||^2 \right) \to 0$ and
		$\E\left(||\E \left[ \rho_i^W \frac{\partial \lambda_a(V)}{\partial v_{td}} | \xi_t \right] - \iota_{a\rho t}^L r^L(\xi_t)||^2 \right) \to 0$, 
		
		\item \label{assn:MSE_rate} $b_2(K) K^{- \gamma_3} = o(1)$,
		
		\item \label{assn:MSE_var} $\Var(e_i^W | V)$ is bounded on $\mathcal{S}_{V}$.
	\end{enumerate}
\end{assumption}

Define $\tilde{\lambda}_a(v) = A \, \E(p_i p_i')^{-1} \, p^K(v)$, 
$\tilde{\lambda}_{a,td}^{\partial}(\xi_t) = A \, \E(p_i p_i')^{-1} H_{td}^W   \E(r_{it} r_{it}')^{-1} \, r^L(\xi_t)$ and $^{\partial}\tilde{\lambda}_{a,td}(\xi_t) =  A \, \E(p_i p_i')^{-1} \text{d}P_{td}^W  \allowbreak \E(r_{it} r_{it}')^{-1} \,  r^L(\xi_t)$. Assumption \ref{assn:linearAsEquivlt}' is Assumption \ref{assn:linearAsEquivlt} without Condition (\ref{assn:LAE_model_matrixbounded}).

\begin{lemma}\label{lem:MSE_cv}
	Under Assumption \ref{assn:linearAsEquivlt}' and \ref{assn:MSE}, as $K,L \to \infty$,
	$ \E( || e^W ( \tilde{\lambda}_a(V) - \lambda_a(V))||^2 ) \to 0$,
	
	\noindent $ \E\left( || v_{td} \left[\tilde{\lambda}_{a,td}^{\partial}(\xi_t) -  \E\left(\lambda_a(V) \frac{\partial h^W(V)}{\partial v_{td}}| \xi_t \right)\right] ||^2 \right) \to 0$ and $ \E\left( ||v_{td} \left( ^{\partial}\tilde{\lambda}_{a,td} -  \E \left[ \rho^W \frac{\partial \lambda_a(V)}{\partial v_{td}} | \xi_t \right]\right) ||^2\right) \to 0$.
\end{lemma}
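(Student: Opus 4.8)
The plan is to prove the three mean-square convergence statements one at a time, each by adding and subtracting the relevant finite-dimensional sieve projection and controlling the resulting pieces using Assumption \ref{assn:MSE} together with the matrix-norm tools introduced in Section \ref{sec:cvrate_np}. For the first claim, I would write $\tilde\lambda_a(v) - \lambda_a(v) = \bigl(A\,\E(p_ip_i')^{-1} - \iota_a^K\bigr)p^K(v) + \bigl(\iota_a^K p^K(v) - \lambda_a(v)\bigr)$. The second bracket is $O(K^{-\gamma_3})$ in sup norm by Assumption \ref{assn:MSE}(\ref{assn:MSE_approx}), hence negligible in $L^2(F_V)$ after multiplying by $e^W$ (which has bounded conditional variance by \ref{assn:MSE}(\ref{assn:MSE_var})). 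For the first bracket, note $a(h^W)=\int\lambda_a h^W dF_V$ means $A = a(p^K(\cdot)')$ acts as $\int \lambda_a(v) p^K(v)' dF_V(v)$ on the span; combined with $\E(p_ip_i') = \Gamma_2^{-1}\E(P^KP^{K\prime})\Gamma_2^{-\prime}$ having eigenvalues controlled via Assumption \ref{assn:linearAsEquivlt}(\ref{assn:LAE_sievebasis}), the quantity $A\,\E(p_ip_i')^{-1}$ is exactly the $L^2(F_V)$-projection coefficient of $\lambda_a$ onto the sieve, so $\bigl(A\,\E(p_ip_i')^{-1} - \iota_a^K\bigr)p^K$ is the projection error of $\lambda_a$, which tends to $0$ in $L^2(F_V)$ since $\lambda_a$ is already approximated at rate $K^{-\gamma_3}$. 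Multiplying by $e^W$ and taking expectations, using boundedness of $\Var(e^W\mid V)$, gives the first limit.

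For the second and third claims the structure is the same but the relevant object is a conditional expectation given $\xi_t$ rather than a pointwise function, so the projection happens in the $r^L$-basis. For the second, I would add and subtract $\iota_{aht}^L r^L(\xi_t)$: the gap between $\E\bigl[\lambda_a(V)\partial_{v_{td}}h^W(V)\mid\xi_t\bigr]$ and $\iota_{aht}^L r^L(\xi_t)$ vanishes in $L^2$ by Assumption \ref{assn:MSE}(\ref{assn:MSE_approx}), and multiplying by $v_{td}$ keeps it negligible because $\E(\|v_t\|^2\mid\xi_t)$ is bounded (Assumption \ref{assn:linearAsEquivlt}(\ref{assn:LAE_varbounded})). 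The remaining term is $v_{td}\bigl(\tilde\lambda_{a,td}^\partial(\xi_t) - \iota_{aht}^L r^L(\xi_t)\bigr)$; here I need that $A\,\E(p_ip_i')^{-1}H_{td}^W\E(r_{it}r_{it}')^{-1}$ is, up to vanishing error, the $r^L$-projection coefficient of $\E\bigl[\lambda_a(V)\partial_{v_{td}}h^W(V)\mid\xi_t\bigr]$. This follows by substituting the definitions $H_{td}^W = \E[p_i\,(\partial_{v_{td}}h^W(V_i)\otimes r_{it}')]$ and $A\,\E(p_ip_i')^{-1} \approx \tilde\lambda_a$ in $L^2(F_V)$ from the first part, then using iterated expectations to collapse $\tilde\lambda_a(V_i)\,\partial_{v_{td}}h^W(V_i)$ against $r_{it}$ to its conditional mean given $\xi_{it}$, and finally inverting $\E(r_{it}r_{it}')$ whose spectrum is controlled by Assumption \ref{assn:linearAsEquivlt}(\ref{assn:LAE_sievebasis}). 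The third claim is identical in form with $h^W$ replaced by $\lambda_a$, $H_{td}^W$ by $\text{d}P_{td}^W$, and the extra factor $\rho_i^W$ which is bounded by Assumption \ref{assn:linearAsEquivlt}(\ref{assn:LAE_fctn}), so the same projection-error argument applies using the second approximation condition in Assumption \ref{assn:MSE}(\ref{assn:MSE_approx}).

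The main obstacle I anticipate is making rigorous the claim that $A\,\E(p_ip_i')^{-1}$ (and the analogous $r^L$-coefficient expressions) genuinely coincide with the $L^2$-projection coefficients rather than merely resembling them: one must handle the difference between $A = \int\lambda_a p^{K\prime}dF_V$ acting on the (possibly large) sieve space and the target $\lambda_a$ carefully, tracking how the approximation error $K^{-\gamma_3}$ and the smallest-eigenvalue bounds interact, and ensure that the cross terms produced by $e^W$, $v_{td}$, and $\rho^W$ — none of which is orthogonal to the sieve in the way the classical argument would assume — remain $o(1)$. The rate condition $b_2(K)K^{-\gamma_3} = o(1)$ in Assumption \ref{assn:MSE}(\ref{assn:MSE_rate}) is presumably what is needed to absorb the derivative-basis growth against the $\lambda_a$-approximation error in the second and third statements; I would keep an eye on exactly where that enters. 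The actual bookkeeping — expanding $\|\cdot\|^2$, applying Cauchy--Schwarz, and invoking boundedness of the various conditional moments — is routine once the projection identification is in place.
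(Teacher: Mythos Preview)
Your proposal is correct and follows essentially the same route as the paper's proof: the key observation that $\tilde\lambda_a(v)=A\,\E(p_ip_i')^{-1}p^K(v)$ is precisely the $L^2(F_V)$-projection of $\lambda_a$ onto the sieve (so its error is dominated by $\|\iota_a^K p^K-\lambda_a\|_{L^2}=O(K^{-\gamma_3})$), followed by the analogous $r^L$-projection decomposition for the second and third statements using iterated expectations and the boundedness of $\partial h^W/\partial v_{td}$ and $\rho^W$. Your instinct about where $b_2(K)K^{-\gamma_3}=o(1)$ enters is exactly right: in the third statement the paper bounds $\E\bigl\|\bigl(A\,\E(p_ip_i')^{-1}-\iota_a^K\bigr)\partial p^K/\partial v_{td}\bigr\|^2\le b_2(K)^2\,\|\E[(\lambda_a-\iota_a^K p^K)p^{K\prime}]\|^2=O(b_2(K)^2K^{-2\gamma_3})$, which is the one place the derivative-basis growth must be absorbed by the $\lambda_a$-approximation rate.
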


Note that Condition (\ref{assn:MSE_approx}) is a sup norm rate condition on both $\lambda_a$ and $\partial \lambda_a / \partial V$, stronger than only assuming $\E(||\lambda_a(V) - \iota_{a}^K p^K(V)||^2) \to 0$ as is assumed in \cite{n97} (Assumption 7) to obtain $\sqrt{n}$ asymptotic normality. Loosely speaking, because the $\text{d}P_t^W$ includes derivatives of the vector of basis functions, left multiplication of $\text{d}P_t^W$ by the matrix $A$, which is the matrix of expectations of $\lambda_a$ multiplied by the functions, will yield under Condition  (\ref{assn:MSE_approx}) an approximation of the derivative of $\lambda_a$.
This will then appear in the asymptotic variance of our estimator when applied to the specific functionals.

\begin{proof}[Proof of Lemma \ref{lem:MSE_cv}]
	
	By Assumption \ref{assn:linearAsEquivlt} (\ref{assn:LAE_sievebasis}), we can assume wlog that $\E(p_i p_i') = I_K$ and $ \E(r_{it} r_{it}') = I_L$. Note that $\tilde{\lambda}_a(v) = A \, p^K(v) = \E\left(\lambda_a(V)p^K(V)'\right)\, p^K(v)$ is the mean square projection of $\lambda_a$ on the functional space spanned by $p^K$. As in the proof of Theorem 3 in Newey (1997), this implies that $\E( || \tilde{\lambda}_a(V) - \lambda_a(V)||^2 ) \leq \E( ||\iota_{a}^K p^K(V) - \lambda_a(V)||^2)$, which gives	
	$\E( || e^W ( \tilde{\lambda}_a(V) - \lambda_a(V))||^2 ) \leq C K^{- \gamma_3} \to 0$, using Assumption \ref{assn:MSE} (\ref{assn:MSE_rate}) and (\ref{assn:MSE_var}).

	Following NPV99, writing $\tilde{\lambda}_{a,td}^{\partial}(\xi_t) = A \, H_{td}^W r^L(\xi_t) = \E\left(\tilde{\lambda}_a(V) \frac{\partial h^W(V)}{\partial v_{td}} \otimes r^L(\xi_t)'\right)r^L(\xi_t)$, and since $\E\left( [\tilde{\lambda}_a(V) - \lambda_a(V)] \frac{\partial h^W(V)}{\partial v_{td}} \otimes r^L(\xi_t)'\right)r^L(\xi_t)$ is the mean square projection of the function $\E\left( \left(\tilde{\lambda}_a(V) - \lambda_a(V)\right) \frac{\partial h^W(V)}{\partial v_{td}}|\xi_t\right)$ on the functional space spanned by $r^L$, by properties of projection we have
	\begin{align*}
		\E\left( \Big| \Big| \tilde{\lambda}_{a,td}^{\partial}(\xi_t) \right. & \left. - \E \left( \lambda_a(V) \frac{\partial h^W(V)}{\partial v_{td}} \otimes r^L(\xi_t)'\right)r^L(\xi_t) \Big| \Big|^2 \right) \\
		& \leq  \E\left( ||[\tilde{\lambda}_a(V) - \lambda_a(V)] \frac{\partial h^W(V)}{\partial v_{td}} ||^2\right) \leq  C \E\left( ||[\tilde{\lambda}_a(V) - \lambda_a(V)] ||^2\right)  \to 0,
	\end{align*}
	where the last inequality holds by Assumption \ref{assn:linearAsEquivlt} (\ref{assn:LAE_fctn}). 
	
	Since $\E \left( \lambda_a(V) \frac{\partial h^W(V)}{\partial v_{td}} \otimes r^L(\xi_t)'\right)r^L(\xi_t)$ is the mean square projection of 
	$\E \left[ \lambda_a(V) \frac{\partial h^W(V)}{\partial v_{td}} | \xi_t \right]$, then
	\begin{align*}
		\E\left( \Big| \Big| \E \left( \lambda_a(V) \frac{\partial h^W(V)}{\partial v_{td}} \otimes r^L(\xi_t)'\right)r^L(\xi_t) \right.  &  \left. - \E \left[ \lambda_a(V) \frac{\partial h^W(V)}{\partial v_{td}} | \xi_t \right]\Big| \Big|^2 \right) \\
		&  \leq \E\left( || \iota_{aht}^L r^L(\xi_t) - \E \left[ \lambda_a(V) \frac{\partial h^W(V)}{\partial v_{td}} | \xi_t \right]||^2\right) \to 0.
	\end{align*}
	This implies $ \E\left( || \tilde{\lambda}_{a,td}^{\partial}(\xi_t) -  \E\left(\lambda_a(V) \frac{\partial h^W(V)}{\partial v_{td}}| \xi_t \right) ||^2 \right) \to 0$, and by Assumption \ref{assn:linearAsEquivlt} (\ref{assn:LAE_varbounded}), 
	$$\E\left( || v_{td} \left[ \tilde{\lambda}_{a,td}^{\partial}(\xi_t) -  \E\left(\lambda_a(V) \frac{\partial h^W(V)}{\partial v_{td}}| \xi_t \right) \right] ||^2 \right) \to 0.$$

	For the third result, we need
	\begin{align*}
		\E\left( || \frac{ \partial\tilde{ \lambda}_a(V)}{\partial v_{td}} -  \frac{\partial \lambda_a(V) }{\partial v_{td}} ||^2\right) \leq & 2  \E\left( ||  \E(\lambda_a(V) p^K(V)')\frac{\partial p^K(V)}{\partial v_{td}} \, - \, \iota_{a}^K \frac{\partial p^K(V)}{\partial v_{td}}  ||^2\right)\\
		& \qquad + 2 \E\left( || \iota_{a}^K \frac{\partial p^K(V)}{\partial v_{td}} \, - \, \frac{\partial \lambda_a(V) }{\partial v_{td}} ||^2\right),
	\end{align*}
	where the second term in the sum converges to $0$ by Assumption \ref{assn:MSE} (\ref{assn:MSE_approx}) and \ref{assn:MSE_rate}. The first term is
	\begin{align*}
		\E\left(\Big| \Big|  \E\left( [\lambda_a(V) - \iota_{a}^K p^K(V)] p^K(V)'\right)\frac{\partial p^K(V)}{\partial v_{td}} \,  \Big| \Big|^2 \right) & \leq b_2(K)^2  ||\E( [\lambda_a(V) - \iota_{a}^K p^K(V) ] p^K(V)') ||^2 \\
		& = O(b_2(K)^2 K^{-2 \gamma_3}) \to 0,
	\end{align*}
	where the last equality is obtained by the same argument as in the previous proof. 
	This implies that $\E( || \frac{ \partial\tilde{ \lambda}_a(V)}{\partial v_{td}} -  \frac{\partial \lambda_a(V) }{\partial v_{td}} ||^2) \to 0$.
	Since $^{\partial}\tilde{\lambda}_{a,td}(\xi_t) =  A \,  \text{d}P_{td}^W   \,  r^L(\xi_t) = \E\left( A \rho^W \left( \frac{\partial p^K (v)}{\partial v_{td}} \otimes r_{t}(\xi_t)' \right)\right) r^L(\xi_t),$ we have by property of MSE projection,
	\begin{align*}
		\E\left( \Big| \Big| ^{\partial}\tilde{\lambda}_{a,td}(\xi_t) \right. & \left. - \E \left(  \rho^W  \frac{\partial \lambda_a(V) }{\partial v_{td}} \otimes r^L(\xi_t)'\right)r^L(\xi_t) \Big| \Big|^2 \right) \leq 
		\E\left( |\rho^W|^2 \ \Big| \Big|  \frac{ \partial\tilde{ \lambda}_a(V)}{\partial v_{td}} -  \frac{\partial \lambda_a(V) }{\partial v_{td}} \Big| \Big|^2\right) \to 0
	\end{align*}
	by Assumption \ref{assn:linearAsEquivlt} (\ref{assn:LAE_fctn}) and the result obtained above.
	Moreover, 
	\begin{align*}
		\E\left( \Big| \Big|  \E \left(  \rho^W  \frac{\partial \lambda_a(V) }{\partial v_{td}} \otimes r^L(\xi_t)'\right)r^L(\xi_t) \right. &  \left. - \E \left[\rho^W \frac{\partial \lambda_a(V)}{\partial v_{td}}  | \xi_t \right]\Big| \Big|^2 \right) \\
		&  \leq \E\left( || \iota_{a\rho t}^L r^L(\xi_t) - \E \left[ \rho^W \frac{\partial \lambda_a(V)}{\partial v_{td}}  | \xi_t \right]||^2\right) \to 0,
	\end{align*}
	which implies
	$ \E\left( || ^{\partial}\tilde{\lambda}_{a,td} -  \E \left[ \rho^W \frac{\partial \lambda_a(V)}{\partial v_{td}} | \xi_t \right] ||^2\right) \to 0$ and  $ \E\left( ||v_{td} ( ^{\partial}\tilde{\lambda}_{a,td} -  \E \left[ \rho^W \frac{\partial \lambda_a(V)}{\partial v_{td}} | \xi_t \right]) ||^2\right) \to 0$.
\end{proof}

\subsection{Proofs of Results in Section \ref{sec:asympt_lin_model}}

We first introduce some more notations. We define $\vec{b}_t = (b_{1t}, \, ... , b_{nt}) =  (b_t(\xi_{1t}), \, ... , b_t(\xi_{nt}))$, $\vec{v}_t = (v_{1t},\,.. , \, v_{nt})$, $\vec{V} = (V_1,.. \, , V_n)$, $\vec{x} = (X_1,..\, , X_n)$ and similarly $\vec{z}$. For the results in Section \ref{app:lin_galcase}, we also define the vector $\vec{h}_\varsigma^W  = (h_\varsigma^W(V_1),\, .. \, ,h_\varsigma^W(V_n)) = (h^W(V_1),\, .. \, , h^W(V_n))$ since $V_i \in \mathcal{S}_V \, \forall i \leq n$, and the vector $\vec{\hat{h}}_\varsigma^W = \allowdisplaybreaks (h_\varsigma^W(\hat{V}_1),\, .. \, , h_\varsigma^W(\hat{V}_n))$.

\begin{proof}[Proof of Result \ref{rslt:LAE_model}]
	
	We first focus on the functional $ \mathcal{X}_0^{(2t)}[b_t] = \int_\xi \lambda_{bt}(\xi_{t})b_{t}(\xi_{t}) d F_{\xi_t}(\xi_t) $. \cite{n97} shows in the proof of Theorem 2 (equation (A.7) p164 and the subsequent text) that if $||\mathcal{X}_0^{(bt)}[b_t] || \leq C |b_t|_0$, $\sqrt{n} L^{- \gamma_1} \to 0$, $\Delta_{Q1} = a_1(L) \sqrt{L/n} \to 0$, and $|| \Lambda^{bt}||$ is bounded, then
	$\sqrt{n} \mathcal{X}_0^{(bt)}(\hat{b}_t - b_{0t}) =  \Lambda^{bt} \sum_{i=1}^n r_{it} \otimes v_{it} / \sqrt{n} + \oP(1) $. 
	For all $t \leq T$, under Assumption \ref{assn:LAE_model} (\ref{assn:LAE_model_fctn}) and (\ref{assn:LAE_model_mmt}),  $||\mathcal{X}_0^{(bt)}[b_t] || = ||\E(Q_i  \frac{\partial g}{\partial v_t}(V_i)  b_t(\xi_{it})) || \leq C |b_t|_0$.
	The other required conditions hold by Assumption \ref{assn:LAE_model}.
	
	We now check that the conditions of Assumption \ref{assn:linearAsEquivlt} hold for the functionals $\mathcal{X}_0^{(k)}$ and $\mathcal{X}_0^{(M)}$ applied to the two-step estimators $\hat{k}$ and $\hat{\mathcal{M}}$.
	Under Assumption \ref{assn:cv_g} we have by Lemma \ref{lem:lambdaMinBound} that $\lambda_{\min}(\mathcal{M}(V)) \geq C$.  Together with Assumption \ref{assn:LAE_model} (\ref{assn:LAE_model_fctn}) and (\ref{assn:LAE_model_mmt}), this guarantees that $|| \mathcal{X}_0^{(k)}[\tilde{k}] || = || \E(Q_i \mathcal{M}_0(V_i)^{-1} \tilde{k}(V_i)) ||  \leq C |\tilde{k}|_0$ and $ ||\mathcal{X}_0^{(M)}[\tilde{\mathcal{M}}] || = || \E(Q_i \mathcal{M}_0(V_i)^{-1} \tilde{\mathcal{M}}(V_i) g_0(V_i) ||\leq C |\tilde{\mathcal{M}}|_0$. Hence Assumption \ref{assn:linearAsEquivlt} (\ref{assn:LAE_ALipsch}) holds for each functional. 
	
	Moreover  $\rho^M(X_i, Z_i)  = M_i - \mathcal{M}_0(V_i)$ where $M_i  =  I - \dot{X}_i (\dot{X}_i' \dot{X}_i)^{-1} \dot{X}_i' \, $ if $\dot{X}_i$ is of full rank, or $ M_i  =  I - \dot{X}_i  \dot{X}_i^{ +}$ if not, with $\dot{X}_i^{ +}$ is the Moore Penrose inverse. In either case, $||M_i||_2 \leq 1$ implying $||M_i||_F \leq C$ and $||\mathcal{M}_0(V_i)||_F = ||\E(M_i |V_i)||_F \leq C$, ensuring that $\rho^M$ is a bounded function.
	By the same argument, Assumption  \ref{assn:LAE_model} (\ref{assn:LAE_model_fctn}), (\ref{assn:LAE_model_mmt}) and (\ref{assn:LAE_model_varbounded}), $\rho^{k}(X_i, Z_i)  = [M_i - \mathcal{M}_0(V_i)] g_0(V_i)$ is uniformly bounded. 
	By Assumption \ref{assn:idCFA} and \ref{assn:id_M_GLn}, $k_0(V) =  \mathcal{M}_0^{-1}(V) g_0(V)$ therefore by  Assumption \ref{assn:LAE_model}  (\ref{assn:LAE_model_fctn}), $k$ is twice continuously differentiable, implying that Assumption  \ref{assn:linearAsEquivlt} (\ref{assn:LAE_fctn}) holds  for each functional. Also, $\E(|| e^{M*}||^2|X,Z) = 0$ for all $(X,Z)$, and by Assumption  \ref{assn:LAE_model} (\ref{assn:LAE_model_varbounded}), for all $(X,Z)$, $\E(|| e^{k*}||^2|X,Z) = \E( || M \dot{u} ||^2|X,Z) \leq C$, ensuring that Assumption \ref{assn:linearAsEquivlt} (\ref{assn:LAE_varbounded}) holds for each functional. The other conditions of Assumption \ref{assn:linearAsEquivlt} are direct consequences of Assumption \ref{assn:LAE_model}.
\end{proof}

\begin{proof}[Proof of Result \ref{rslt:as.var}]
	
	Under Assumptions \ref{assn:LAE_model}' and \ref{assn:MSE_model}, Conditions (\ref{assn:MSE_lambda_int}), (\ref{assn:MSE_approx}) and (\ref{assn:MSE_rate}) of Assumption \ref{assn:MSE} holds for $w_i = (M_i \dot{y}_i)_t$ and $w_i = (M_i)_{st}$ associated respectively with $\lambda_k^t$ and $\lambda_M^{s,t}$. We showed  that Assumption \ref{assn:LAE_model} implied that $\rho^M$ and $\rho^k$ are bounded, as well as  $\E(|| e^{M*}||^2|X,Z)$ and $\E(|| e^{k*}||^2|X,Z)$: this implies that for all $V$, $\Var(e^M | V) \leq C$ and $\Var(e^{k} | V) \leq C$. Condition (\ref{assn:MSE_var}) of Assumption \ref{assn:MSE} is also satisfied for our choices of $w_i$, hence we can apply Lemma \ref{lem:MSE_cv}.
	
	We now use Equation (\ref{eq:linearEq}) to construct the asymptotic variance. Define 
	\begin{align*}
		s_i =   [\delta_i \mu_i - \E(\mu \delta)] & +   Q_i^\delta  \dot{u}_i - \lambda_M(V_i)   \VecM(e_i^M) -  \lambda_{k}(V_i) e_i^{k} \\
		& + \sum_{t = 1}^T \left(\E \left[  \frac{\partial \lambda_M(V_i)}{\partial v_{t}} \right.\VecM(\rho_i^M) | \xi_{it} \right] -  \E\left[\lambda_M(V_i) \frac{\partial \mathcal{M}_0(V_i)}{\partial v_{t}}| \xi_{it} \right]  \\
		&  \qquad \qquad \left. + \E \left[ \frac{\partial \lambda_{k}(V_i)}{\partial v_{t}}  \rho_i^{k}| \xi_{it} \right] - \E\left[\lambda_{k}(V_i) \frac{\partial k_0(V_i)}{\partial v_{t}}| \xi_{it} \right]  - \lambda_{bt}(\xi_{it}) \right)\, v_{it},
	\end{align*}
	where, by a convenient abuse of notation, we denote with $  \frac{\partial \lambda_M(V)}{\partial v_{t}} \VecM(\rho_i^M)$ the sum $\sum_{j \leq (T-1)^2} \rho_{i,j}^M \frac{\partial \lambda_M^j(V)}{\partial v_{t}}$ with $\rho_{i,j}^M$ the $j^{th}$ component of the vector $\VecM(\rho^M(X_i, Z_i))$, and similarly for $\lambda_{k}$. We will, in a later step of this proof, simplify the formula for $s_i$.
	
	We conveniently decompose the difference $s_{i,n} - s_i$ as
	\begin{align*}
		s_{i,n} - s_i \ = & \ \lambda_M(V_i)  \VecM(e_i^M) - \Lambda^M (I_{(T-1)^2} \otimes \Theta) \,  \VecM(e_i^M)  \otimes p_i \\
		& + \lambda_{k}(V_i) e_i^{k} -   \Lambda^{k} (I_{T-1} \otimes \Theta) \, e_i^{k} \otimes p_i  \\
		& + \sum_{t = 1}^T \E\left[\lambda_M(V_i) \frac{\partial \mathcal{M}_0(V_i)}{\partial v_{t}}| \xi_{it} \right] v_{it} - \Lambda^M (I_{(T-1)^2} \otimes \Theta) H_t^M  (I_{k_2} \otimes \Theta_1) \, v_{it} \otimes r_{it}\\ 
		& + \sum_{t = 1}^T  \E\left[\lambda_{k}(V_i) \frac{\partial k_0(V_i)}{\partial v_{t}}| \xi_t \right] v_{it} - \Lambda^{k} (I_{T-1} \otimes \Theta) H_t^{k}(I_{k_2} \otimes \Theta_1) \, v_{it} \otimes r_{it} \\
		& + \sum_{t = 1}^T  \Lambda^M (I_{(T-1)^2} \otimes \Theta) \text{d}P_t^M  (I_{k_2} \otimes \Theta_1) \, v_{it} \otimes r_{it} -  \E \left[  \frac{\partial \lambda_M(V_i)}{\partial v_{t}} \VecM(\rho_i^M) | \xi_{it} \right] v_{it}  \\
		& + \sum_{t = 1}^T  \Lambda^{k} (I_{T-1} \otimes \Theta) \text{d}P_t^{k} (I_{k_2} \otimes \Theta_1) \, v_{it} \otimes r_{it} - \E \left[  \frac{\partial \lambda_{k}(V_i)}{\partial v_{t}} \rho_i^{k} | \xi_{it} \right]  \\
		& +  \sum_{t = 1}^T \lambda_{bt}(\xi_{it}) v_{it}  - \Lambda^{bt} (I_{k_2} \otimes \Theta_1) \, v_{it} \otimes r_{it},
	\end{align*}
	where each line in this sum is of one of the three types of elements analyzed in Lemma \ref{lem:MSE_cv}, except for the last line. $\E(||s_{i,n} - s_i||^2)$ is bounded by the sum of the expected squared norms of the elements of each line up to a multiplicative constant. To show that it converges to $0$ as $n$ goes to infinity, we use the fact that Assumption \ref{assn:MSE} holds for each $\lambda_a$, where $\lambda_a$ is a column of either $\lambda_{M}$ or $\lambda_{k}$.
	By Assumption \ref{assn:MSE_model} (\ref{assn:MSE_model_approx}) and Assumption \ref{assn:LAE_model} (\ref{assn:LAE_model_varbounded}), the expected squared norm of the term in the last line also converges to $0$ as $n \to \infty$.
	
	These arguments imply that $\E(||s_{i,n} - s_i||^2) \to 0$. By the proof of Result \ref{rslt:LAE_model} and Assumption \ref{assn:MSE_model} (\ref{assn:MSE_model_approx}), the functions multiplying the residuals appearing in the definition of $s_i$ are all bounded. Together with Assumption \ref{assn:MSE_model} (\ref{assn:MSE_model_var}), this guarantees $\E([s_i'c]^2) < \infty$. For a constant vector $c \in \R^{d_x}$, 
	$|c' [ \E(s_{in}s_{in}') - \E(s_{i}s_{i}')] c | \leq  \E([s_{in}'c - s_{i}'c]^2) + 2 \E([s_i'c]^2)^{1/2} \E([s_{in}'c - s_{i}'c]^2)^{1/2}$. Hence, $|c' [ \E(s_{in}s_{in}') - \E(s_{i}s_{i}')] c | \to 0$ for all $c$, implying $\E(s_{in}s_{in}') - \E(s_{i}s_{i}')  \to 0$. That is, $\Omega  \to \Var(s_i)$ as $n \to \infty$.
	
	We can now simplify the formula for $s_i$ using the primitives of the model. Indeed, note that
	\begin{align*}
		& \lambda_M(V_i)  \VecM(e_i^M) + \lambda_{k}(V_i) e_i^{k} \, = \, \E(Q_i^\delta | V_i) \mathcal{M}_0(V_i)^{-1} M_i \dot{u}_i ,\\
		& \lambda_M(V_i) \frac{\partial \mathcal{M}_0(V_i)}{\partial v_{t}}+ \lambda_{k}(V_i) \frac{\partial k_0(V_i)}{\partial v_{t}}  \, = \, \E(Q_i^\delta | V_i) \, \frac{ \partial g_0(V_i)}{\partial v_t},\\
		&   \frac{\partial \lambda_M(V_i)}{\partial v_{t}} \VecM(\rho_i^M) +  \frac{\partial \lambda_{k}(V_i)}{\partial v_{t}} \rho_i^{k}  \, = \, - \E(Q_i^\delta | V_i)  \mathcal{M}_0(V_i)^{-1} (M_i -  \mathcal{M}_0(V_i)) \frac{ \partial g_0(V_i)}{\partial v_t},
	\end{align*}
	and since
	$ \lambda_{bt}(\xi_{it}) = - \E\left(Q_i^\delta \frac{ \partial g_0(V_i)}{\partial v_t}  \,  | \, \xi_{it} = \xi_t \right)$,
	we obtain  
	\begin{align*}
		s_i  = &  [\delta_i \mu_i - \E(\mu \delta)] + Q_i^\delta  \dot{u}_i - \E(Q_i^\delta | V_i) \mathcal{M}_0(V_i)^{-1} M_i \dot{u}_i \\
		& + \sum_{t = 1}^T  \E\left( \left[ Q_i^\delta  -   \E(Q_i^\delta | V_i)  \mathcal{M}_0(V_i)^{-1} M_i \right] \frac{ \partial g_0(V_i)}{\partial v_t}  | \xi_{it} \right)   v_{it},\\
		= &   [\delta_i \mu_i - \E(\mu \delta)] + \tilde{Q}_i^\delta   \dot{u}_i +  \sum_{t = 1}^T  \E\left( \tilde{Q}_i^\delta \frac{ \partial g_0(V_i)}{\partial v_t}  | \, \xi_{it} \right)  \, v_{it}.
	\end{align*}
	Thus $\Var(s_i) = \Omega_0$ and $\Omega  \to_{n \to \infty} \Omega_0$. By $\Omega_0 \geq C I_{d_x}$, we obtain $\Omega^{-1/2}  \to_{n \to \infty} \Omega_0^{-1/2} \leq C^{-1/2} I_{k_x}$.
	
	\medskip
	
	We again use Lemma \ref{lem:MSE_cv} to prove that $||\Lambda^M||$, $||\Lambda^{k}||$, and $||\Lambda^{bt}||$ are bounded. Indeed, since wlog we can assume $\Theta_1 = I_L$, we have $|| \Lambda^{bt}||^2  = \tr( \Lambda^{bt}  \Lambda^{bt \, \prime} ) = \tr( \Lambda^{bt} (I_{k_2} \otimes \Theta_1)  \Lambda^{bt \, \prime}  )$. Using the notation of Lemma \ref{lem:MSE_cv} with $\tilde{\lambda}_{bt}^j(\xi) = \E(\lambda_{bt}^j(\xi_{it})r^L(\xi_{it})') \, r^l(\xi)$, this gives
	$
	|| \Lambda^{bt}||^2 
	= \tr \left( \sum_{j \leq T-1} \E\left[  \tilde{\lambda}_{bt}^j(\xi_{it})  \tilde{\lambda}_{bt}^j(\xi_{it})' \right]   \right)
	$.
	However, by Lemma \ref{lem:MSE_cv}, we know that $\E(||\tilde{\lambda}_{bt}^j(\xi_{it}) - \lambda_{bt}^j(\xi_{it})||^2)  \to 0$, under Assumption \ref{assn:MSE_model}. The same reasoning we used for $\E(s_{in}s_{in}') - \E(s_{i}s_{i}')$ applies, and since $\lambda_{bt}^j(.)$ is a bounded function, we obtain
	$\E\left(\tilde{\lambda}_{bt}^j(\xi_{it})\tilde{\lambda}_{bt}^j(\xi_{it})'\right) \to_{n \to \infty} \E\left(\lambda_{bt}^j(\xi_{it})\lambda_{bt}^j(\xi_{it})'\right)$.
	Therefore 
	$$
	|| \Lambda^{bt}||^2 \to_{n \to \infty} \tr\left(\sum_{j \leq T-1} \E\left(\lambda_{bt}^j(\xi_{it})\lambda_{bt}^j(\xi_{it})'\right) \right) \leq C.
	$$
	Hence $|| \Lambda^{bt}||^2$ is bounded. 
	
	The same arguments applied to the functions $\lambda_M$, $\lambda_{k}$, as well as to
	$$
	\E \left[  \frac{\partial \lambda_M(V_i)}{\partial v_{t}} \VecM(\rho_i^M) | \xi_{it} \right], \ \E\left[\lambda_M(V_i) \frac{\partial \mathcal{M}_0(V_i)}{\partial v_{t}}| \xi_{it} \right], \ \E \left[ \frac{\partial \lambda_{k}(V_i)}{\partial v_{t}}  \rho_i^{k}| \xi_{it} \right] \text{ and }\E\left[\lambda_{k}(V_i) \frac{\partial k_0(V_i)}{\partial v_{t}}| \xi_{it} \right],
	$$ 
	would imply that $||\Lambda^M||$, $|| \Lambda^{k}||$, $|| \Lambda^M (I_{(T-1)^2} \otimes \Theta) (H_t^M - \text{d}P_t^M)||$, and $|| \Lambda^{k} (I_{T-1} \otimes \Theta) (H_t^{k} - \text{d}P_t^{k})||$ are bounded.
\end{proof}

\subsection{Proofs of Results in Section \ref{sec:as_miou_together}}

The first remaining  step in obtaining asympotic normality is applying Result \ref{rslt:X_linear}, i.e, check that  Assumption \ref{assn:linearization} holds. Condition (\ref{assn:linearization_SE}) of  Assumption \ref{assn:linearization} is a stochastic equicontinuity condition. We follow Section 4 in \cite{clvk03} (CLVK thereafter) in our choice of the space $\mathcal{H}$, as they establish easy-to-check conditions implying stochastic equicontinuity in some spaces, and take $\mathcal{H}$ to be $\mathcal{H}_{c,c'}^\varrho$, defined in Section \ref{sec:as_miou_together}.
Our choice of the norm on $\mathcal{H}$, $||.||_{\mathcal{H}}$, is justified by Condition (\ref{assn:linearization_2ndorder}). The functional $\mathcal{X}$ is a function of $\mathcal{M}$, $k$ and $(b_t)_{t \leq T}$, where $\mathcal{M}$ and $k$ are composed with $(b_t)_{t \leq T}$. These compositions imply, as was clear in the computations, that the linearization will involve the first order partial derivatives of $\mathcal{M}_0$ and $k_0$. It also implies that the difference between $\mathcal{X}(\mathcal{G})$ and $\mathcal{X}(\mathcal{G}) - \mathcal{X}^{(G)}(\mathcal{G}_0) [\mathcal{G} - \mathcal{G}_0] $ can be easily controlled by, among other terms, the distance between first order partial derivatives of these functions.
A natural norm on $\mathcal{H}_{c,c'}^\varrho$ is therefore $||\mathcal{G}||_{\mathcal{H}} \, = \,  \sum_{j=1}^{(T-1)^2}|\mathcal{M}_j - \mathcal{M}_{0,j}|_{1}^\varsigma + \sum_{j = 1}^{T-1}|k_j - k_{0,j}|_{1}^\varsigma   +  \sum_{t \leq T} \sum_{j =1}^{d_2} ||b_{t,j} - b_{0,t,j}||_{\infty} $ where by an abuse of notation $\mathcal{M}_j$ is the $j^{th}$ component of $\VecM(\mathcal{M})$.

\begin{result}\label{rslt:stochEqderiv}
	Defining $\mathcal{H} = \mathcal{H}_{c,c'}^\varrho$ and $||.||_{\mathcal{H}} $ as described, if Assumption \ref{assn:LAE_model}' (\ref{assn:LAE_model_fctn}) and (\ref{assn:LAE_model_mmt}), Assumption \ref{assn:MSE_model} (\ref{assn:MSE_model_var}) and Assumption \ref{assn:SE} hold, then Assumption \ref{assn:linearization} (\ref{assn:linearization_SE}) and (\ref{assn:linearization_2ndorder}) hold.
\end{result}

\begin{proof}[Proof of Result \ref{rslt:stochEqderiv}]
	We start by showing that the stochastic equicontinuity condition, Condition (\ref{assn:linearization_SE}), holds. Lemma 1 of CLVK shows that if $(W_i)_{i =1}^n$ is i.i.d, Assumption \ref{assn:linearization} (\ref{assn:linearization_SE}) holds if : (A) the class $\mathcal{F} \, = \,  \{\chi(W, \mathcal{G})\, : \, \mathcal{G} \in \mathcal{H}_{c,c'}^\varrho \}$ is $\mathbb{P}$-Donsker, i.e it satisfies $\int_0^{\infty} \sqrt{\log N_{[]}(\epsilon, \mathcal{F}, || .||_{L_2(P)})} d\epsilon < \infty$, where $N_{[]}(\epsilon, \mathcal{F}, || .||_{L_2(P)})$ is the covering number with bracketing,
	and if (B) $\chi(., \mathcal{G})$ is $L_2(P)$ continuous at $\mathcal{G}_0$, that is, $\E(|| \chi(W_i, \mathcal{G}) - \chi(W_i, \mathcal{G}_0)||^2) \to 0$ as $||\mathcal{G} - \mathcal{G}_0||_{\mathcal{H}} \to 0$. We now check that each of these conditions is satisfied under our assumptions.
	
	Condition (A):
	We use $j,l$ to index components of vectors. As in CLVK, it is enough to prove that $\mathcal{F}_l \, = \,  \{\chi_l(W, \mathcal{G})\, : \, \mathcal{G} \in \mathcal{H}_{c,c'}^\varrho \}$ is $\mathbb{P}$-Donsker for each component $l$ of $\chi(.)$.
	Recall that $\chi(W_i, \mathcal{G}) \, = \, Q_i^\delta \, \mathcal{M} \left( \tau \left[ (x_{it}^{en} - b_t(\xi_{it}))_{t \leq T}\right] \right)^{-1} \allowbreak k\left( \tau \left[  (x_{it}^{en} - b_t(\xi_{it}))_{t \leq T} \right] \right)$. We examine $\chi(W_i, \mathcal{G}) - \chi(W_i, \mathcal{G}_0)$ and write, by an abuse of notation and only in this proof, $V_i = (x_{it}^{en} - b_t(\xi_{it}))_{t \leq T}$ and $V_{0,i} = (x_{it}^{en} - b_{0,t}(\xi_{it}))_{t \leq T}$. Note that $V_0 = \tau(V_0)$. We decompose 
	\begin{align}
		\chi(W_i, \mathcal{G})&  - \chi(W_i, \mathcal{G}_0) = Q_i^\delta  \mathcal{M}(\tau[V])^{-1} [k(\tau[V]) - k_0(\tau[V])]  \nonumber \\
		&  + Q_i^\delta  \mathcal{M}(\tau[V])^{-1} [\mathcal{M}_0(\tau[V]) - \mathcal{M}(\tau[V]) ] \mathcal{M}_0(\tau[V])^{-1}k_0(V_0)   \label{eq:decompoz_SE} \\
		& +  Q_i^\delta  \mathcal{M}_0(\tau[V])^{-1} [\mathcal{M}_0(V_0) - \mathcal{M}_0(\tau[V]) ] \allowbreak \mathcal{M}_0(V_0)^{-1}k_0(V_0) \nonumber \\
		& +  Q_i^\delta \mathcal{M}(\tau[V])^{-1} [k_0(\tau[V]) - k_0(V_0)]. \nonumber
	\end{align}
	Since $(\mathcal{G}, \mathcal{G}_0) \in \mathcal{H}_{c,c'}^\varrho \times \mathcal{H}_{c,c'}^\varrho$, the norms of each functional and its first order derivatives are bounded. Moreover the derivatives of $\tau$ are bounded. This implies that $||\mathcal{M}_0(V_0) - \mathcal{M}_0(\tau[V])|| \leq c ||V_0 - V ||$, and the same result holds for $k$.  Hence, using (\ref{eq:decompoz_SE}), $|\chi_l(W_i, \mathcal{G}) - \chi_l(W_i, \mathcal{G}_0)| \leq ||\chi(W_i, \mathcal{G}) - \chi(W_i, \mathcal{G}_0)|| \leq  C ||Q_i^\delta|| \, ( \sum_{j=1}^{(T-1)^2}|\mathcal{M}_j - \mathcal{M}_{0,j}|_0^\varsigma + \sum_{j = 1}^{T-1}|k_j - k_{0,j}|_0^\varsigma  +  \sum_{t \leq T} \sum_{j =1}^{d_2} ||b_{t,j} - b_{0,t,j}||_{\infty} ) $, where the constant $C$ depends on $c$ and $c'$.
	By Assumption \ref{assn:SE}, $\E(||Q_i^\delta||^2) < \infty$ which implies by 
	the proof of Theorem 3 of CLVK that 
	$$ N_{[]}(\epsilon, \mathcal{F}, || .||_{L_2(P)}) \leq  N\left(\epsilon/c^Q, \mathcal{C}_c^{\varrho}(\mathcal{S}_{V}^\varsigma), || .||_{\infty}\right)^{(T-1)^2 + T-1}   \Pi_{t \leq T} N\left(\epsilon/c^Q, \mathcal{C}_c^{\varrho}(\mathcal{S}_{\xi_t}), || .||_{\infty}\right)^{d_2},$$
	where $N(\epsilon, \mathcal{C}_c^{\varrho}(\mathcal{S}_W), || .||_{\infty})$ denotes the covering number of the class $\mathcal{C}_c^{\varrho}(\mathcal{S}_W)$, and $c^Q = 2 [(T-1)^2 + T-1 + Td_2] \,  \E(||Q_i^\delta||^2)$, is the size of the brackets constructed in CLVK.
	
	It is known that for $\mathcal{S}_W$ a bounded subset of $\R^k$, $\log N(\epsilon, \mathcal{C}_c^{\varrho}(\mathcal{S}_W), || .||_{\infty}) \leq \epsilon^{- k/\varrho}$. By Assumption \ref{assn:SE}, $\varrho > \max(Td_2,d_z + d_1)/2$, which implies that $\mathcal{F}_j$ is $\mathbb{P}$-Donsker. Therefore, Condition (A) is satisfied.
	
	Condition (B) : By $\E(||Q_i^\delta||^2) \leq C$ and using once more the decomposition given by (\ref{eq:decompoz_SE}), 
	$\E(|| \chi(W_i, \mathcal{G}) - \chi(W_i, \mathcal{G}_0)||^2) \leq C ||\mathcal{G} - \mathcal{G}_0||_{\mathcal{H}}^2$ which gives the wanted result.
	
	\medskip
	
	We now show that Assumption \ref{assn:linearization} (\ref{assn:linearization_2ndorder}) holds. This condition is on the remainder of the linearization, $|| \mathcal{X}(\mathcal{G}) - \mathcal{X}^{(G)}(\mathcal{G}_0) [\mathcal{G} - \mathcal{G}_0]\, ||$. Note that 
	\begin{align*}
		\mathcal{X} &(\mathcal{G}) - \mathcal{X}^{(G)}(\mathcal{G}_0) [\mathcal{G} - \mathcal{G}_0] \\
		= & \E( Q_i^\delta  \mathcal{M}(\tau[V])^{-1} [k(\tau[V]) - k_0(\tau[V])]) - \E( Q_i^\delta \mathcal{M}_0(V_0)^{-1} [k(V_0) - k_0(V_0)])\\
		& + \E(Q_i^\delta \mathcal{M}(\tau[V])^{-1} [k_0(\tau[V]) - k_0(V_0)]) - \E( Q_i^\delta \mathcal{M}_0(V_0)^{-1} \frac{\partial k_0}{\partial V_0}(V_0) [V - V_0]) \\
		& + \E( Q_i^\delta  \mathcal{M}(\tau[V])^{-1} [\mathcal{M}_0(\tau[V]) - \mathcal{M}(\tau[V]) ] \mathcal{M}_0(\tau[V])^{-1}k_0(V_0)) \\
		& \qquad  \qquad \qquad \qquad \qquad \qquad -  \E(Q_i^\delta \mathcal{M}_0(V_0)^{-1} \, [\mathcal{M}_0(V_0) - \mathcal{M}(V_0)] \mathcal{M}_0(V_0)^{-1} k_0(V_0))
		\\
		& + \E(Q_i^\delta  \mathcal{M}_0(\tau[V])^{-1} [\mathcal{M}_0(V_0) - \mathcal{M}_0(\tau[V]) ] \mathcal{M}_0(V_0)^{-1}k_0(V_0) \\
		& \qquad \qquad \qquad \qquad \qquad \qquad - \E([ k_0(V_0)' \mathcal{M}_0(V_0)' \otimes (Q_i^\delta \mathcal{M}_0(V_0)^{-1}) ] \, \VecM(\frac{\partial \mathcal{M}_0}{\partial V_0}(V_0)) [V_0 - V]).
	\end{align*}
	We use this decomposition and we bound each line separately. We show here how to find upper bounds for the first and second lines, both of which will be less than $||\mathcal{G} - \mathcal{G}_0||_{\mathcal{H}}^2$ up to a multiplicative constant. The upper bounds for the third and forth lines of this decomposition can be obtained in a similar fashion. By the triangular inequality, this will give $|| \mathcal{X}(\mathcal{G}) - \mathcal{X}^{(G)}(\mathcal{G}_0) [\mathcal{G} - \mathcal{G}_0]\, || \leq C ||\mathcal{G} - \mathcal{G}_0||_{\mathcal{H}}^2$, as desired.
	First,
	\begin{align*}
		\big| \big| \E( &Q_i^\delta  \mathcal{M}(\tau(V))^{-1} [k(\tau[V]) - k_0(\tau(V))]) - \E( Q_i^\delta \mathcal{M}_0(V_0)^{-1} [k(V_0) - k_0(V_0)])  \big| \big|  \\
		& = \big| \big| \E( Q_i^\delta  \mathcal{M}(\tau[V])^{-1} [\mathcal{M}_0(\tau[V]) - \mathcal{M}(\tau[V]) ] \mathcal{M}_0(\tau[V])^{-1} [k(\tau[V]) - k_0(V)])  \big| \big| \\
		& \qquad \qquad  \qquad +  \big| \big| \E( Q_i^\delta  \mathcal{M}_0(\tau[V])^{-1} [\mathcal{M}_0(V_0) - \mathcal{M}_0(\tau[V]) ] \mathcal{M}_0(V_0)^{-1} [k(\tau[V]) - k_0(\tau[V])])  \big| \big|\\
		& \qquad \qquad  \qquad  \qquad \qquad \qquad \qquad + \big| \big| \E( Q_i^\delta \mathcal{M}_0(V_0)^{-1} [(k- k_0)(\tau[V]) - (k - k_0)(V_0)])  \big| \big| \\
		& \leq C \E(||Q_i^\delta||) \left( (\sum_{j=1}^{(T-1)^2}|\mathcal{M}_j - \mathcal{M}_{0,j}|_0^\varsigma)(\sum_{j = 1}^{T-1}|k_j - k_{0,j}||_0^\varsigma) \right. \\
		& \qquad \qquad  \qquad +   ( \sum_{t \leq T} \sum_{j =1}^{d_2} ||b_t - b_{0,t}||_{\infty} )( \sum_{j = 1}^{T-1}|k_j - k_{0,j}|_0^\varsigma  ) \\
		& \qquad \qquad  \qquad \qquad \qquad \qquad \qquad + \left. (  \sum_{j = 1}^{T-1}|k_j - k_{0,j}|_1^\varsigma )(  \sum_{t \leq T} \sum_{j =1}^{d_2} ||b_{t,j} - b_{0,t,j}||_{\infty}) \right) \leq C ||\mathcal{G} - \mathcal{G}_0||_{\mathcal{H}}^2.
	\end{align*}
	As for the second line of the decomposition of $\mathcal{X}(\mathcal{G}) - \mathcal{X}^{(G)}(\mathcal{G}_0) [\mathcal{G} - \mathcal{G}_0] $, we write
	\begin{align*}
		\big| \big| \E( & Q_i^\delta \mathcal{M}(\tau[V])^{-1} [k_0(\tau(V)) - k_0(V_0)]) - \E( Q_i^\delta \mathcal{M}_0(V_0)^{-1} \frac{\partial k_0}{\partial V_0}(V_0) [V - V_0])\big| \big|  \\
		& = \big| \big|  \E( Q_i^\delta  \mathcal{M}(\tau[V])^{-1} [\mathcal{M}(V_0) - \mathcal{M}(\tau[V]) ] \mathcal{M}(V_0)^{-1} [k_0(\tau[V]) - k_0(V_0)])  \big| \big|  \\
		& \qquad \qquad  \qquad + \big| \big|  \E( Q_i^\delta  \mathcal{M}(V_0)^{-1} [\mathcal{M}_0(V_0) - \mathcal{M}(V_0) ] \mathcal{M}_0(V_0)^{-1} [k_0(\tau(V)) - k_0(V_0)] ) \big| \big| \\
		&  \qquad \qquad  \qquad  \qquad \qquad \qquad \qquad + \big| \big|  \E( Q_i^\delta  \mathcal{M}_0(V_0)^{-1} [k_0(\tau[V]) - k_0(V_0)  - \frac{\partial k_0}{\partial V_0}(V_0) [V - V_0] ]) \big| \big|  \\
		& \leq C \E(||Q_i^\delta||) \left( ( \sum_{t \leq T} \sum_{j =1}^{d_2} ||b_{t,j} - b_{0,t,j}||_{\infty} )^2 \right. \\
		& \qquad \qquad  \qquad +   ( \sum_{t \leq T} \sum_{j =1}^{d_2} ||b_{t,j} - b_{0,t,j}||_{\infty} )(\sum_{j=1}^{(T-1)^2}|\mathcal{M}_j - \mathcal{M}_{0,j}|_0^\varsigma)\\
		& \qquad \qquad  \qquad \qquad \qquad \qquad \qquad + \left. ( \sum_{t \leq T} \sum_{j =1}^{d_2} ||b_{t,j} - b_{0,t,j}||_{\infty} )^2  \right) \leq C ||\mathcal{G} - \mathcal{G}_0||_{\mathcal{H}}^2,
	\end{align*}
	where the inequality for the third term in this equation holds by Assumption \ref{assn:LAE_model} (\ref{assn:LAE_model_fctn}), by the Jacobian of $\tau$ being the identity matrix when evaluated at $V_0$ (since $V_0 \in \mathcal{S}_V$) and by the second order derivative of $\tau$ being bounded.
\end{proof}

To apply Result \ref{rslt:X_linear}, it remains to check that  Condition (\ref{assn:linearization_Rate})  of Assumption \ref{assn:linearization} holds. It is a condition on the convergence rate of the estimators $\hat{b}_t$, $\hat{k}$ and $\hat{\mathcal{M}}$. The rate of convergence of $||\hat{b}_{t,j} - b_{0,t,j}||_{\infty}$ for all $(t,j)$ is given by Equation (\ref{eq:V_suprate}), see the Proof of Result \ref{rslt:NPV1stStep}. The rates of convergence of $|\hat{k}_j - k_{0,j}|_1^\varsigma$ and $|\hat{\mathcal{M}}_j - \mathcal{M}_{0,j}|_1^\varsigma$ are given by the following Corollary on  the rate of convergence of the first order partial derivative of the two-step nonparametric estimator.
Its proof follows from the proof of  Result \ref{rslt:IN2stStep}.

\begin{corollary}\label{coro:IN2stStep_deriv}
	Under Assumption \ref{assn:NPV1stStep} and \ref{assn:IN2stStep}, if $\sup_{\mathcal{S}_{V}} | \partial h^W(V) - \pi_W^{K \prime} \partial p^K(V) | \allowbreak \leq C K^{- \gamma_2}$,
	$$\sup_{V \in \mathcal{S}_{V}} | \partial \hat{h}^W(V) - \partial h^{W}(V) | = \OP \left( b_2(K) (K/n + K^{-2 \gamma_2} +  \Delta_n^2 b_2(K)^2)^{1/2} \right).$$
\end{corollary}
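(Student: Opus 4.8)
The plan is to piggy-back on the proof of Result \ref{rslt:IN2stStep}, using that differentiation commutes with the linear sieve combination. Since $\hat h^W(V) = p^K(V)'\hat\pi^W$ and $\hat\pi^W$ is a fixed vector once the sample is given, one has exactly $\partial \hat h^W(V) = (\hat\pi^W)'\partial p^K(V)$, so differentiating the estimator introduces no new approximation error. I would therefore decompose, with $\pi_W^K$ the approximating coefficient of Assumption \ref{assn:IN2stStep} (\ref{assn:IN2stStep_approx}),
\begin{equation*}
\partial \hat h^W(V) - \partial h^W(V) = (\hat\pi^W - \pi_W^K)'\partial p^K(V) + \bigl( (\pi_W^K)'\partial p^K(V) - \partial h^W(V) \bigr).
\end{equation*}
The second (bias) term is $O(K^{-\gamma_2})$ uniformly on $\mathcal S_V$ by the added hypothesis $\sup_{\mathcal S_V}| \partial h^W(V) - (\pi_W^K)'\partial p^K(V)| \le C K^{-\gamma_2}$.

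For the first (estimation) term I would pass to the well-conditioned basis $P^K = \Gamma_2 p^K$ of Assumption \ref{assn:IN2stStep} (\ref{assn:IN2stStep_matrix}): since $\Gamma_2$ is constant, $\partial p^K = \Gamma_2^{-1}\partial P^K$, hence
\begin{equation*}
(\hat\pi^W - \pi_W^K)'\partial p^K(V) = \bigl( \Gamma_2^{-\prime}(\hat\pi^W - \pi_W^K) \bigr)'\partial P^K(V),
\end{equation*}
so $\sup_{\mathcal S_V}\bigl\| (\hat\pi^W - \pi_W^K)'\partial p^K(V)\bigr\| \le b_2(K)\,\bigl\|\Gamma_2^{-\prime}(\hat\pi^W - \pi_W^K)\bigr\|$ by Assumption \ref{assn:IN2stStep} (\ref{assn:IN2stStep_limitzero}). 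It then remains to control the coefficient norm, for which I would reuse the identity $\int |p^K(V)'(\hat\pi^W - \pi_W^K)|^2 dF(V) = (\hat\pi^W - \pi_W^K)'\Gamma_2^{-1}\E\bigl[P^K(V)P^K(V)'\bigr]\Gamma_2^{-\prime}(\hat\pi^W - \pi_W^K)$, the lower eigenvalue bound of Assumption \ref{assn:IN2stStep} (\ref{assn:IN2stStep_matrix}), and the elementary bound $\int |p^K(V)'(\hat\pi^W - \pi_W^K)|^2 dF(V) \le 2\int|\hat h^W(V) - h^W(V)|^2 dF(V) + C K^{-2\gamma_2}$. Result \ref{rslt:IN2stStep} then gives $\bigl\|\Gamma_2^{-\prime}(\hat\pi^W - \pi_W^K)\bigr\| = \OP\bigl( (K/n + K^{-2\gamma_2} + \Delta_n^2 b_2(K)^2)^{1/2}\bigr)$.

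Combining the two terms yields $\sup_{\mathcal S_V}|\partial\hat h^W(V) - \partial h^W(V)| = \OP\bigl( b_2(K)(K/n + K^{-2\gamma_2} + \Delta_n^2 b_2(K)^2)^{1/2}\bigr) + O(K^{-\gamma_2})$, and since $b_2(K)$ is bounded away from zero (it dominates the gradient norm of a growing vector of basis functions) and $K^{-\gamma_2} \le (K^{-2\gamma_2})^{1/2}$, the last term is absorbed into the first, giving the claimed rate. I do not expect a genuine obstacle: the only point that needs care — and it is exactly the one already handled in the proof of Result \ref{rslt:IN2stStep} — is the detour through $P^K$ (the norm of $\partial p^K$ itself is not controlled, only that of $\partial P^K$) together with the bookkeeping of the extra generated-regressor term $\Delta_n^2 b_2(K)^2$, which enters through the correlation between $e^W$ and $\hat P - P$. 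In short, the corollary is Result \ref{rslt:IN2stStep} with $p^K$ replaced by $\partial p^K$ and $b_1(K)$ by $b_2(K)$.
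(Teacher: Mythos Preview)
Your proposal is correct and matches the paper's intended argument: the paper states only that the corollary ``follows from the proof of Result~\ref{rslt:IN2stStep}'', and what you have written is precisely that reconstruction --- reuse the coefficient rate $\|\hat\pi^W-\pi_W^K\|=\OP\bigl((K/n+K^{-2\gamma_2}+\Delta_n^2 b_2(K)^2)^{1/2}\bigr)$ derived there, then bound the sup norm of the derivative by $b_2(K)$ times that rate plus the new bias term. The only cosmetic difference is that the paper, in proving Result~\ref{rslt:IN2stStep}, simply assumes without loss of generality that $p^K=P^K$ (invariance of the series estimator under linear transformations), which spares the explicit $\Gamma_2^{-\prime}$ bookkeeping you carry out; either route gives the same bound.
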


The conditions required to apply this corollary must be adapted to the extended support, as we did for other results. Assumption \ref{assn:LAE_model}' already includes most of these conditions, specifying an approximation rate of $\mathcal{M}_0$ and $k_0$ over the extended support and defining the rates $b_1$, $b_2$ and $b_3$ as bounds on sup-norms of derivatives of $P^K$ defined over the extended support. Assumption \ref{assn:LAE_model}'' includes the only needed modification, adding a control on the approximation rate of the first order partial derivatives. Result \ref{rslt:X_linear} thus applies under Assumptions  \ref{assn:idCFA}, \ref{assn:cv_g}, \ref{assn:LAE_model}'', \ref{assn:MSE_model} and \ref{assn:SE}.

\begin{proof}[Proof of Result \ref{rslt:as.normlt}]
	Define $\Phi =  \mathbb{P}(\det(\dot{X}_i' \dot{X}_i) > \delta_0)^{-1}$ and $\phi_n = \frac{1}{n} \sum_{i=1}^n \delta_i$. The estimator of the average effect $\E(\mu| \delta)$ is
	$\hat{\mu}  \,  =  \, \hat{\mu}^\delta / \phi_n$. We define $\Sigma_0 = \Var \left( (s_i',\delta_i)' \right)$. Since $\Omega_0 > 0$ by Assumption \ref{assn:MSE_model}, then Assumption \ref{assn:as.normlt} guarantees that $\Sigma_0 > 0$.
	We decompose
	\begin{align}
		\sqrt{n}  \phi_n [\hat{\mu} - \E(\mu|\delta)] & = \sqrt{n}   [\hat{\mu}^\delta -  \E(\mu \delta)] + \sqrt{n} \frac{ \E(\mu \delta)}{\Phi} [\Phi - \phi_n], \nonumber \\
		& = \sqrt{n} \, \left( I_{d_x} ; \E(\mu | \delta) \right) \,  \left[ \left( \begin{smallmatrix}\hat{\mu}^\delta  \\ \phi_n
		\end{smallmatrix} \right) -  \left( \begin{smallmatrix} \E(\mu \delta)  \\ \Phi
		\end{smallmatrix} \right) \right],
		\label{eq:mu_decompoz_ratio}
	\end{align}
	where $ \left( I_{d_x} ; \E(\mu | \delta) \right)$ is of size $d_x \times (d_x + 1)$. 
	
	We first show
	\begin{equation}\label{eq:as_norm_vector}
		\sqrt{n} \Sigma_0^{-1/2} \left[ \left( \begin{smallmatrix}\hat{\mu}^\delta  \\ \phi_n
		\end{smallmatrix} \right) -  \left( \begin{smallmatrix} \E(\mu \delta)  \\ \Phi
		\end{smallmatrix} \right) \right] \to^d  \mathcal{N}(0, I_{d_x +1}) .
	\end{equation}

	By Result \ref{rslt:linear_miou_num}, $\sqrt{n} \left[ \left( \begin{smallmatrix}\hat{\mu}^\delta  \\ \phi_n
	\end{smallmatrix} \right) -  \left( \begin{smallmatrix} \E(\mu \delta)  \\ \Phi
	\end{smallmatrix} \right) \right]   = \frac{1}{\sqrt{n}}  \sum_{i=1}^n \left( \begin{smallmatrix}  s_{i,n}  \\ \delta_i - \Phi
	\end{smallmatrix} \right) + o_{\mathbb{P}}(1)$.
	
	Define $\Sigma_n = \Var((s_{i,n}',\delta_i - \Phi)')$.
	We obtain the asymptotic distribution in two steps. We prove first that $\sqrt{n} \Sigma_n \left[ \left( \begin{smallmatrix}\hat{\mu}^\delta  \\ \phi_n
	\end{smallmatrix} \right) -  \left( \begin{smallmatrix} \E(\mu \delta)  \\ \Phi
	\end{smallmatrix} \right) \right]  \to^d  \mathcal{N}(0, I_{d_x +1}) $. We show in a second step that $\Sigma_n \to \Sigma_0$, which will yield the desired result. We follow \cite{npv99} in proving a Lindeberg condition for $ c ' \Sigma_n (s_{i,n}',\delta_i - \Delta)'$  for any constant vector $c \in \R^{d_x + 1}$ such that $||c|| = 1$. More precisely if for all such $c$, $\frac{1}{\sqrt{n}} c' \Sigma_n^{-1/2} \sum_{i=1}^n (s_{i,n}',\delta_i - \Phi)' \to^d \mathcal{N}(0,1)$, this first result will be a consequence the Cram\'er-Wold theorem.
	Write $S_{i,n} = c' \Sigma_n (s_{i,n}',\delta_i - \Phi)'$, then $\E(S_{i,n})=0$ and $\Var(S_{i,n}) = 1$.  Asymptotic normality is a consequence of the CLT, provided that the Lindeberg condition holds for $S_{i,n}$, i.e, for any $\epsilon > 0$, $\E( S_{i,n}^2 \1(|S_{i,n}| > \epsilon \sqrt{n})) \to 0 $.
	Note that by $\rho^M$ and $\rho^k$ bounded under Assumption \ref{assn:LAE_model},  $\E(\dot{u}_i |X_i, Z_i) = 0$ and $\E[|| \dot{u}_i||^4 | X_i = X, Z_i = Z] \leq C$ for all $(X,Z)$, then $\E[|| e_i^{k}||^4 | V_i = V] \leq C$ and $\E[|| \VecM(e_i^{M})||^4 | V_i = V] \leq C$.
	Fix $\epsilon > 0$. We normalize $\Theta = I_K$ and $\Theta_1 = I_L$, and obtain
	\begin{align*}
		n \epsilon^2 \E(& S_{i,n}^2  \1(|S_{i,n}| > \epsilon \sqrt{n})) \ \leq \ \E(S_{i,n}^4 \1(|S_{i,n}| > \epsilon \sqrt{n})) \ \leq \ \E(S_{i,n}^4 ), \\
		&\leq  C \left(  \, \E[||\mu_i - \E(\mu)||^4] +  \, \E[||Q_i^\delta  \dot{u}_i||^4] +  ||\Lambda^M ||^4 \, \E[||\VecM(e_i^M)  \otimes p_i||^4] + || \Lambda^{k} ||^4 \, \E[||e_i^{k}  \otimes p_i||^4] \right. \\
		& \ \ \ + \left. \sum_{t \leq T} || [\Lambda^M (H_t^M - \text{d}P_t^M) + \Lambda^{k} (H_t^{k} - \text{d}P_t^{k}) +  \Lambda^{bt}] \, ||^4 \, \E[||v_{it} \otimes r_{it}||^4] + \E[|| \delta_i - \Phi ||^4] \right).
	\end{align*}
	We can bound $\E(||v_{it} \otimes r_{it}||^4) = \E(||r_{it}||^4 ||v_{it}^4||) \leq C \E(||r_{it}||^4) $ by Assumption \ref{assn:as.normlt}, and $\E(||r_{it}||^4) \leq a_1(L)^2 \tr(\E(r_{it}' r_{it})) \allowbreak = a_1(L)^2 L$. Similarly,  by Assumption \ref{assn:as.normlt}, $\E(||e_i^{k}  \otimes p_i||^4) = O(b_1(K)^2 K)$ and $\E(||\VecM(e_i^M)  \otimes p_i||^4) = O(b_1(K)^2 K)$. Therefore, by Result \ref{rslt:as.var}, $n \epsilon^2 \E( S_{i,n}^2  \1(|S_{i,n}| > \epsilon \sqrt{n})) = O(b_1(K)^2 K +  a_1(L)^2 L)$.
	
	Assumption \ref{assn:LAE_model} (\ref{assn:LAE_model_rate}) implies $\Delta Q = o(1)$ and $\Delta Q_1 = o(1)$, in turn implying $\sqrt{K/n} \, b_1(K) \to 0$ and $\sqrt{L/n} \, a_1(L) \to 0$. Therefore the condition  $\E( S_{i,n}^2 \1(|S_{i,n}| > \epsilon \sqrt{n})) \to 0 $ holds.
	
	\medskip
	
	The second step to obtain (\ref{eq:as_norm_vector})  requires $\Sigma_n \to \Sigma_0$. This is a consequence ofof the proof of Result \ref{rslt:as.var}.	
	Now we can use (\ref{eq:mu_decompoz_ratio}) with  (\ref{eq:as_norm_vector}) to obtain by a delta method argument
	$$\sqrt{n}  \phi_n [\hat{\mu} - \E(\mu|\delta)] \to^d \mathcal{N} \left( 0,  \left( I_{d_x} ; \E(\mu | \delta) \right) \Sigma_0 \left( I_{d_x} ; \E(\mu | \delta) \right)' \right), $$
	hence 
	$\sqrt{n} [\hat{\mu} - \E(\mu|\delta)] \to^d \mathcal{N} \left( 0, \Phi^{-2} \left( I_{d_x} ; \E(\mu | \delta) \right) \Sigma_0 \left( I_{d_x} ; \E(\mu | \delta) \right)' \right). $
\end{proof}

\section{Monte Carlo Simulations}

We explore the properties of our multi-step estimator with Monte Carlo simulations when the model is a specific case of Model (\ref{eq:model_for_estim}) studied in the asymptotic analysis. All tables and figures are displayed in the Appendix.
More specifically, the outcome equation and specification for the random coefficients, covariates, instruments and time-varying disturbances of our data generating process (dgp) are
\begin{align*}
	&y_{it}  =  \, x_{it}^{ex} \, \mu_i^{ex} + x_{it}^{en} \, \mu_i^{en}  + \alpha_i + c \left[ \underbrace{f_t(V_i) + u_{it}}_{= \, \epsilon_{it}}\right] , \  i=1..n, \ t \leq T,\\
	&\mu_i^{ex} \, = \, \mathcal{U}[0,1],  \ \mu_i^{en} \, = \, \mathcal{U}[0,1], \ \mu_i^{ex} \indep \mu_i^{en}\\
	&\text{and } \, \forall \,  t \leq T, \ x_{it}^{ex} \, = \, \sqrt{\mu_i^{ex}} \, \tilde{x}_{it}^{ex}, \quad z_{it} \, = \, \sqrt{\mu_i^{en}} \, \tilde{z}_{it} , \\ 
	& \qquad \text{ with }\tilde{x}_{it}^{ex} \sim \, \mathcal{U}[0,10], \quad \tilde{z}_{it} \sim \, \mathcal{U}[0,10],  \quad \tilde{x}_{it}^{ex} \indep \tilde{z}_{it},\\
	&\ v_{it} \sim \, \mathcal{U}[-0.5,0.5], \ u_{it}  \sim \, \mathcal{U}[-0.5,0.5],\\
	& \ x_{it}^{en} \, = \, (x_{it}^{ex} + z_{it})^{1/2} + v_{it},
\end{align*}
where $(\tilde{x}_{it}^{ex},\tilde{z}_{it}, v_{it}, u_{it})$ is i.i.d over time. 
Since the identification argument requires $T$ to be at least $4$, we will thus use $T=4$ time periods in our simulations.
The dgp for $\alpha_i$ does not need to be specified as $\alpha_i$ is differenced out in the analysis and we focus here on the average partial effect of $(x_t^{ex}, x_t^{en})$, $\E(\mu) = (0.5,0.5)'$. Note that in this dgp, $x_t^{ex}$ is correlated with $\mu^{ex}$ while it is independent of $\mu^{en}$. On the other hand, $x_t^{en}$ is correlated with both $\mu^{ex}$ and $\mu^{en}$. This dependence on time invariant heterogeneity creates persistence in the regressors.

We impose $f_t$ to be a function of only $v_{it}$ and consider two choices, namely $s_t(v_{it})= \sin(3 v_{it})$ and $q_t(v_{it})= 8 (v_{it})^2 -1$. Both take values approximately between $-1$ and $1$: the image of $s_t$ on $[-0.5,0.5]$ is $[-0.9975,0.9975]$ and the image of $q_t$ is exactly $[-1,1]$. However these two functions have different shapes on their support: $s_t$ is a decreasing odd function on $[-0.5,0.5]$ while $q_t$ is symmetric and has a minimum at $v =0$. See a plot of $g_t$ on Figure \ref{fig:plotg}. The remaining parameter $c$ will be either $c_1 = 1$ or $c_2 =2$. The purpose of considering different cases for both $f_t$ and $c$ is to illustrate the validity of the nonparametric approach 
for different function specifications and degrees of variation in the nuisance parameters. 
To understand better the variations allowed by these dgps, some calculations show that when $f_t = s_t$,  $\Var(\mu^{ex} x^{ex}) / \Var(c_2 \epsilon_{it}) = 7.75$, and when $f_t = q_t$ the ratio is $\Var(\mu^{ex} x^{ex}) / \Var(c_2 \epsilon_{it}) = 1.95$.

As we explain in Section \ref{sec:assn_CRC_matrix}, the parameter of interest $\E(\mu)$ is not necessarily regularly identified even in a case without endogeneity. For the design studied here, some simulation evidence\footnote{There are to our knowledge no simple dgps with closed-form expressions for $\det(\dot{X}'\dot{X})$ when $x_{t}$ is two-dimensional. See \cite{gp12}	for an example with a closed-form expression when $x_{t}$ is scalar.} indicates that it is likely that $\det(\dot{X}'\dot{X})$ does not have finite expectation. Computing an exact integral is not possible here so we instead approximate the expectation with a simulation average. Figure \ref{fig:nonregId} plots in log scale the sample averages of $\det(\dot{X}'\dot{X})$ for simulations of growing size, as well as the averages across simulations of same size. The simulation average noticeably increases with sample size, which suggests that $\E \left(\det(\dot{X}'\dot{X}) \right) = \infty$ and $\E(\mu)$ is not regularly identified. 
We thus proceed by selecting a subpopulation with sufficient time-variation of the regressors, that is, panel units $i$ such that $\det(\dot{X_i}'\dot{X_i}) < \delta = 0.01$. The average partial effect for this subpopulation, $\E(\mu | \delta)$, is identified by (\ref{eq:Miou_delta}) and we obtain with simulations $\E(\mu | \delta) = (0.5012,0.5004)'$.


The identification argument and the asymptotic analysis require Assumptions (\ref{assn:id_M_GLn}) and (\ref{assn:cv_g}) respectively, on the invertibility of the matrix-valued function $\mathcal{M}$ on the support of the random variable $V$. Although there are no closed-form expression for this function, we use simulations to examine the behavior of $\det(\mathcal{M}(.))$ on the support of $V$.
Precisely, a grid of $[-0.7,0.7]^4$ is constructed. Because $V$ is independent of $(X^{ex}, Z)$, a sample of size $n=2000$ of $(X_i^{ex}, Z_i, \mu_i^{ex}, \mu_i^{en})$ is drawn and for each value $\tilde{V}=(\tilde{v}_t)_{t \leq T}$ on the grid, $X_i^{en}$ is constructed as $((x_{it}^{ex} + z_{it})^{1/2} + \tilde{v}_t)_{t \leq T}$ and the value of $\det(\mathcal{M})$ is computed as the determinant of the sample average of $I-X_i(X_i'X_i)^{-1}X_i'$. 
Note that we look at an extended domain. According to these simulations, the minimizers of $\det(\mathcal{M})$ on $[-0.5,0.5]^4$ are (up to simulation error) the two corners $(-0.5, 0.5, -0.5,0.5)$ and $(0.5,-0.5,0.5,-0.5)$. Because the argument of $\det(\mathcal{M})$ is 4-dimensional, examining graphically its behavior is a hard exercise. Nevertheless in Figure \ref{fig:detM}, we plot variations of $\det(\mathcal{M}(\tilde{V}))$ as $\tilde{V}$ varies around  $(-0.5, 0.5, -0.5,0.5)$  along each of the four dimensions separately, on  $[-0.7,0.7]^4$ thus allowing $\tilde{V}$ to be outside of the support of $V$. The results for the other minimizer are very similar. The graphs show the function to be strictly monotonic around the minimum points. Since the function has to be positive, being the determinant of a positive symmetric matrix, this simulation evidence suggests that the determinant is bounded away from $0$.

We analyze here results of $R = 500$ simulations of each of the dgps for two different sample sizes, $n = 2000$ and $n = 5000$. We use multivariate B-spline basis for both estimation of the conditional expectation of $x^{en}$ conditional on $(x^{ex},z)$, and estimation of the functions $\mathcal{M}(.) = \E(M_i |V_i = .)$ and $k(.) = \E(M_i \dot{y}_i | V_i = .)$. The conditional expectation of $x^{en}$ is used to construct the generated covariates $\hat{V}_i$.
Recall that $g(V) = \mathcal{M}^{-1}(V) k(V)$.  For each of the simulation draws $r$, an estimate $\hat{g}^r$ of
the function $g$ is computed and the estimators $\hat{\mu}^{ex\, r}$ and $\hat{\mu}^{en \, r}$ of the average partial effects $\E(\mu^{ex}|\delta)$ and $\E(\mu^{en}|\delta)$ are obtained following the second step of our estimation procedure.

To choose the number of elements in the sieve basis, the econometrician will compute the cross-validation (CV) criteria of both the nonparametric estimator of $\E(x_{it}^{en}|x_{it}^{ex},Z_{it})$ and the functions $k$ and $\mathcal{M}$. We give in Table \ref{tab:RMSE_CV} the values of the leave-one-out CV values for various combinations of nonparametric estimators and for each dgp. 
Estimation of $V$ has little impact on the estimation of $k$ and $\mathcal{M}$. 
A third order multivariate B-spline basis with $0$ knots is thus chosen to generate the variables $\hat{V}$. 
To estimate the function $\mathcal{M}(.) = \E(M_i |V_i = .)$, we use the second order multivariate B-spline basis. Because we consider two different control functions $q_t$ and $s_t$, three estimators for $k(.) = \E(M_i \dot{y}_i | V_i = .)$ will be examined: a second order multivariate B-spline basis with $0$ knots written Spl$(2,0)$, a second order multivariate B-spline basis with $1$ knots written Spl$(2,1)$ and a third order multivariate B-spline basis with $0$ knots written Spl$(3,0)$.
Note that only Spl$(3,0)$ includes quadratic terms such as $(v_{it})^{en}$.

Tables \ref{tab:mu1RMSE} and  \ref{tab:mu2RMSE} report the obtained values of the squared root mean squared error (RMSE) of $\hat{\mu}$ for each of the dgps, sample sizes $n$ and choice of estimators for $k$.
The values of the RMSE are generally higher for $\mu_2$, the random coefficient multiplying the endogenous regressor and correlated with it.
To illustrate our asymptotic result, Figures \ref{fig:histmu_1} and \ref{fig:histmu_2} show smoothed histograms of the obtained estimates of $\hat{\mu}$. 
It is noticeable that the variance of the estimator of $\E(\mu_i^{en})$ is larger than the variance of the estimator of $\E(\mu_i^{ex})$.
Comparing the density of $\hat{\mu}^{en}$ for the two different control functions, one can see that when $f_t=s_t$ the bias is increased while the variance is smaller, which is consistent with the RMSE values of these two cases being similar for $n=5000$.
We also plot the estimation results for $\hat{g}$, to illustrate that the estimator is able to retrieve the different shapes imposed by the dgp when enough sieve terms are included. Figure \ref{fig:plotg} reports the first coordinate of the pointwise average of the estimates $\bar{g}_1(V) = \sum_{r \leq R} \hat{g}^r(V)/R$ as well as the $5^{\text{th}}$ and $95^{\text{th}}$ quantiles of the first coordinate of $\hat{g}^r$, written $g_1^{5}(V)$ and $g_1^{95}(V)$ respectively for each value of $V$. The nonparametric estimator for $k$ is Spl$(3,0)$.
Since $V$ is 4-dimensional, the plot is reported for a particular variation of $V$, namely $V = (v_1, 0, 0, 0)$ where $v_1$ varies  on 80 \% of the support as we note the presence of a boundary effect.

\newpage

\bigskip

\bigskip

\begin{table}[h]
	\centering
	$
	\begin{array}{ccccccccc}
		\hline &\multicolumn{4}{c}{f_t = s_t}  &  \multicolumn{4}{c}{f_t = q_t} \\
		& \multicolumn{2}{c}{c_1} & \multicolumn{2}{c}{c_2} & \multicolumn{2}{c}{c_1}& \multicolumn{2}{c}{c_2}\\
		& k & \mathcal{M} & k & \mathcal{M} & k & \mathcal{M} & k & \mathcal{M}\\
		\hline \hline \text{Spl}(3,0),\ \text{Spl}(2,0)\ & 0.426 & 0.284 & 0.848 & 0.281 & 0.483 & 0.285 & 0.918 & 0.281\\
		\text{Spl}(3,0),\ \text{Spl}(2,1)\ &0.433 & 0.289 & 0.859 & 0.286 & 0.434 & 0.289 & 0.836 & 0.287\\ 
		\text{Spl}(3,0), \ \text{Spl}(2,2)\ &0.471 & 0.314 & 0.937 & 0.31 & 0.47 & 0.312 & 0.942 & 0.321\\ 
		\text{Spl}(3,0),\ \text{Spl}(3,0)\ &0.435 & 0.291 & 0.862 & 0.287 & 0.43 & 0.289 & 0.835 & 0.287\\ 
		\text{Spl}(3,0),\ \text{Spl}(3,1)\ &0.474 & 0.329 & 0.97 & 0.316 & 0.479 & 0.315 & 0.909 & 0.311\\ 
		\text{Spl}(3,0),\ \text{Spl}(4,0)\ &0.49 & 0.341 & 1.05 & 0.353 & 0.494 & 0.321 & 0.93 & 0.312\\
		\text{Spl}(3,0),\ \text{Spl}(4,1)\ &1.96 & 1.72 & 4.09 & 1.32 & 1.07 & 0.626 & 2.22 & 0.624\\
		\text{Spl}(4,0),\ \text{Spl}(2,0)\ &0.426 & 0.284 & 0.847 & 0.281 & 0.483 & 0.285 & 0.919 & 0.281\\
		\text{Spl}(4,0),\ \text{Spl}(2,1)\ &0.433 & 0.289 & 0.861 & 0.287 & 0.435 & 0.289 & 0.84 & 0.288\\
		\text{Spl}(4,0),\ \text{Spl}(2,2)\ &0.469 & 0.317 & 0.941 & 0.31 & 0.462 & 0.309 & 0.905 & 0.306\\
		\text{Spl}(4,0),\ \text{Spl}(3,0)\ &0.435 & 0.29 & 0.863 & 0.288 & 0.431 & 0.29 & 0.839 & 0.288\\
		\text{Spl}(4,0),\ \text{Spl}(3,1)\ &0.475 & 0.317 & 0.98 & 0.317 & 0.47 & 0.314 & 0.919 & 0.308\\
		\text{Spl}(4,0),\ \text{Spl}(4,0)\ &0.497 & 0.323 & 1.05 & 0.332 & 0.496 & 0.32 & 0.942 & 0.31\\
		\text{Spl}(4,0),\ \text{Spl}(4,1)\ &1.33 & 0.631 & 3.24 & 0.676 & 1.03 & 0.73 & 1.68 & 0.535 \\
		\hline
	\end{array}
	$
	\caption{Cross validation RMSE values\\
		\small \sl Values obtained for one simulation of $n=2000$. The first column indicates first the sieve estimators used to contruct the generated covariates $\hat{V}$, then the sieve estimator of either $k$ or $\mathcal{M}$.}
	\label{tab:RMSE_CV}
\end{table}

\begin{table}
	\centering
	$
	\begin{array}{cccccc}
		\hline& & \multicolumn{2}{c}{f_t = s_t}  &  \multicolumn{2}{c}{f_t = q_t} \\
		& &c_1 & c_2 & c_1 & c_2\\
		\hline\hline \multirow{3}{*}{n = 2000}&\text{Spl}(2,0)& 0.58 & 0.53 & 0.061 & 0.084\\ 
		&\text{Spl}(2,1)&0.5 & 0.66 & 0.097 & 0.13\\ 
		&\text{Spl}(3,0)&0.52 & 0.63 & 0.28 & 0.15\\
		\hline \multirow{3}{*}{n = 5000}&\text{Spl}(2,0)& 0.069 & 0.13 & 0.035 & 0.064\\ 
		&\text{Spl}(2,1)&0.075 & 0.092 & 0.029 & 0.12\\ 
		&\text{Spl}(3,0)&0.072 & 0.1 & 0.037 & 0.11\\
		\hline
	\end{array}
	$
	\caption{RMSE table for $\mu_1$}
	\label{tab:mu1RMSE}
\end{table}

\begin{table}
	\centering
	$
	\begin{array}{cccccc}
		\hline& &\multicolumn{2}{c}{f_t = s_t}  &  \multicolumn{2}{c}{f_t = q_t} \\
		& &c_1 & c_2 & c_1 & c_2\\
		\hline\hline \multirow{3}{*}{n = 2000}&\text{Spl}(2,0)&2.4 & 0.64 & 0.25 & 0.21\\ 
		&\text{Spl}(2,1)&2.1 & 0.59 & 0.45 & 0.33\\
		&\text{Spl}(3,0)&2.3 & 0.58 & 1.4 & 0.34\\
		\hline \multirow{3}{*}{n = 5000}&\text{Spl}(2,0)&0.16 & 0.31 & 0.18 & 0.092\\ 
		&\text{Spl}(2,1)&0.16 & 0.22 & 0.088 & 0.17\\ 
		&\text{Spl}(3,0)&0.16 & 0.23 & 0.11 & 0.18\\
		\hline
	\end{array}
	$
	\caption{RMSE table for $\mu_2$}
	\label{tab:mu2RMSE}
\end{table}

\begin{figure}[h]
	\centering
	\captionsetup{justification=centering}
	\captionsetup{width=.95\linewidth}
	\includegraphics[scale=0.5]{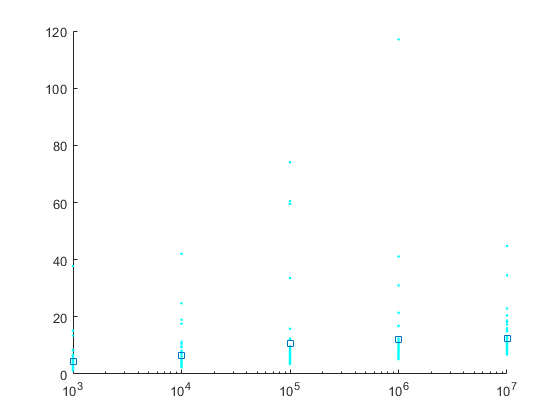}
	\caption[First]{ Sample averages of $\det(\dot{X}'\dot{X})$\\
		\small \sl The x-axis indicates sample sizes, there are 50 simulations for each sample size. Averages across simulations for each sample size are in dark blue, their values are $4.28$, $6.63$, $10.6$, $12.1$ and $12.5$. }
	\label{fig:nonregId}
\end{figure}

\begin{figure}[h]
	\centering
	\captionsetup{justification=centering}
	\captionsetup{width=.95\linewidth}
	\includegraphics[scale=0.41]{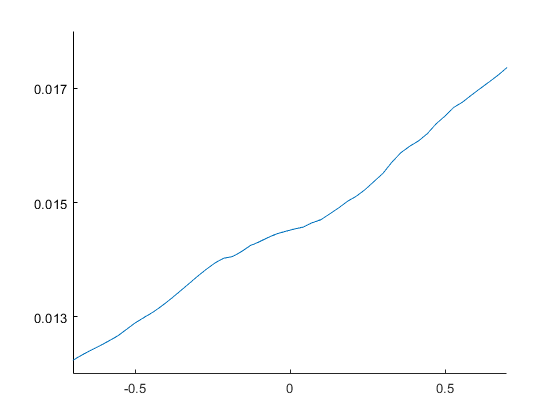}
	\includegraphics[scale=0.41]{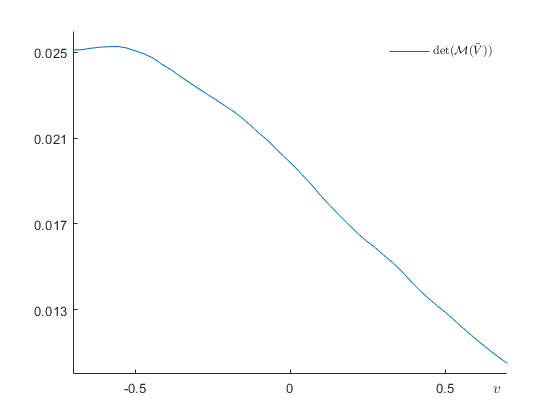}\\
	\includegraphics[scale=0.41]{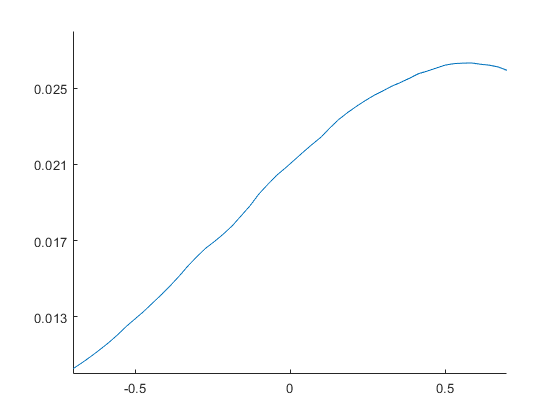}
	\includegraphics[scale=0.41]{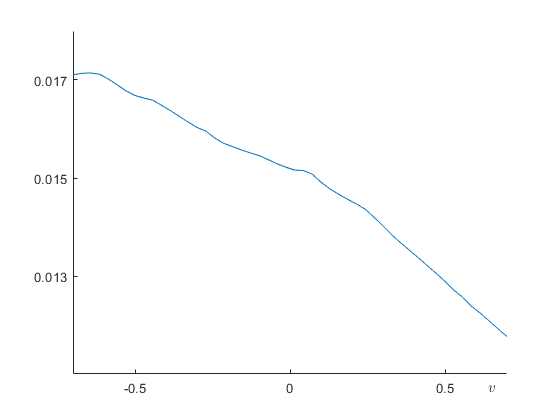}	
	\caption[First]{ Variations of $\det(\mathcal{M}(\tilde{V}))$  \\
		\small \sl Variations of $\det(\mathcal{M}(\tilde{V}))$ for $v \in [-0.7, 0.7]$ and $\tilde{V} = (v,0.5,-0.5,0.5)$ (top left), $\tilde{V} = (-0.5,v,-0.5,0.5)$ (top right),  $\tilde{V} = (-0.5,0.5,v,0.5)$ (bottom left), and  $\tilde{V} = (-0.5,0.5,-0.5,v)$ (bottom right).}
	\label{fig:detM}
\end{figure}

\begin{figure}[h]
	\centering
	\captionsetup{justification=centering}
	\captionsetup{width=.95\linewidth}
	\includegraphics[scale=0.35]{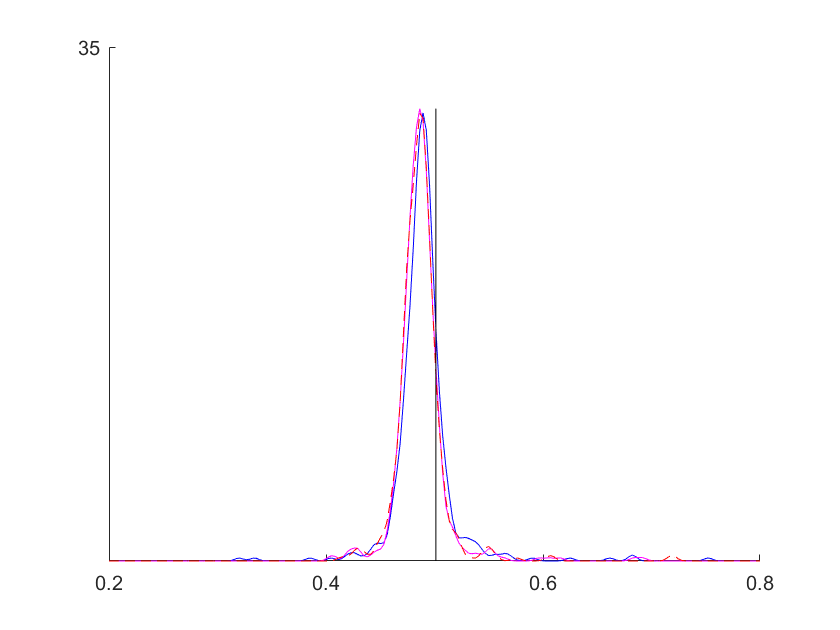}
	\includegraphics[scale=0.35]{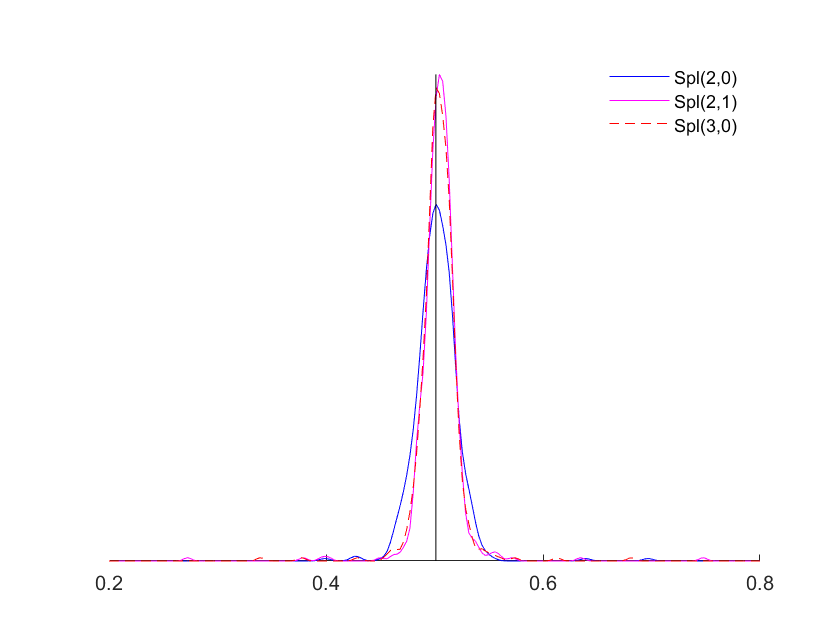}
	\caption[First]{  Density of the estimates for $\mu_1$  \\
		\small \sl Smoothed histogram of the simulated $\hat{\mu}^{ex}$, for $c=c_1$ and $n=5000$. On the left hand side, $f_t = s_t$ and the right hand side $f_t = q_t$. The true value $\E(\mu^{ex}|\delta)$ is shown in black.}
	\label{fig:histmu_1}
\end{figure}

\begin{figure}[h]
	\centering
	\captionsetup{justification=centering}
	\captionsetup{width=.95\linewidth}
	\includegraphics[scale=0.35]{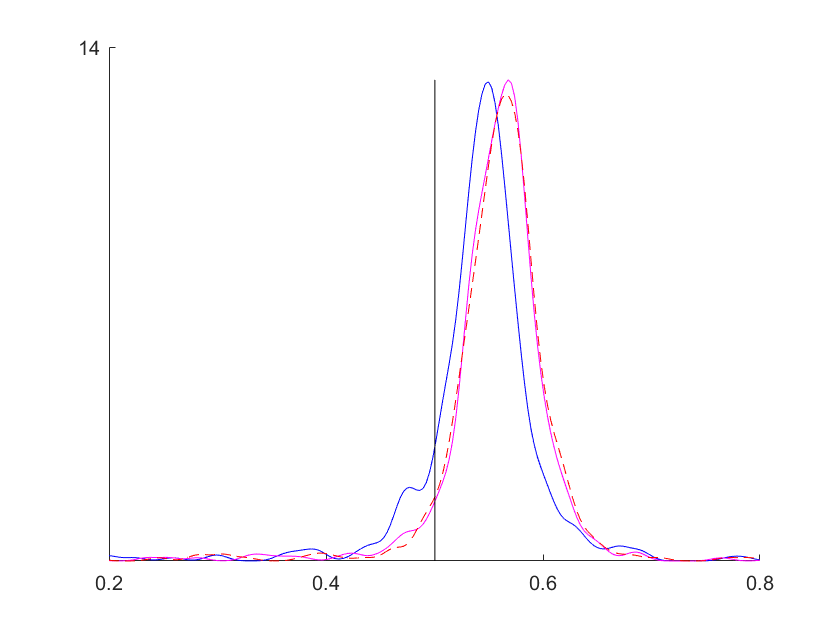}
	\includegraphics[scale=0.35]{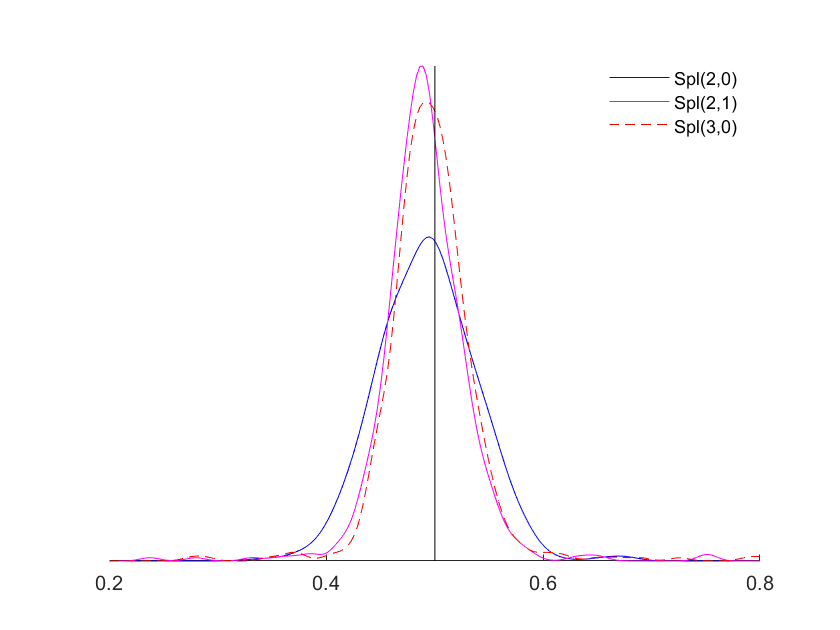}
	\caption[First]{  Density of the estimates for $\mu_2$  \\
		\small \sl Smoothed histogram of the simulated $\hat{\mu}^{en}$, for $c=c_1$ and $n=5000$. On the left hand side, $f_t = s_t$ and the right hand side $f_t = q_t$. The true value $\E(\mu^{en}|\delta)$ is shown in black.}
	\label{fig:histmu_2}
\end{figure}

\begin{figure}[t]
	\centering
	\captionsetup{justification=centering}
	\captionsetup{width=.95\linewidth}
	\includegraphics[scale=0.35]{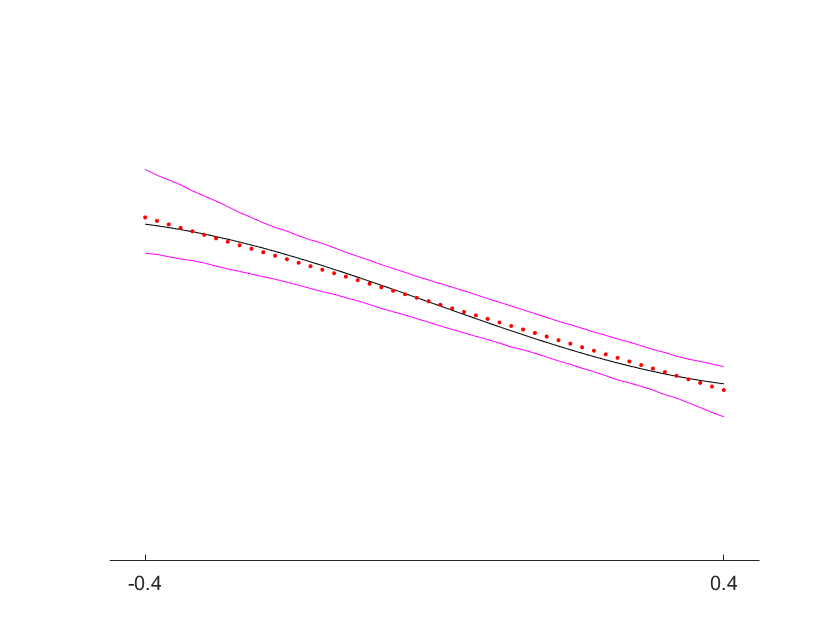}
	\includegraphics[scale=0.35]{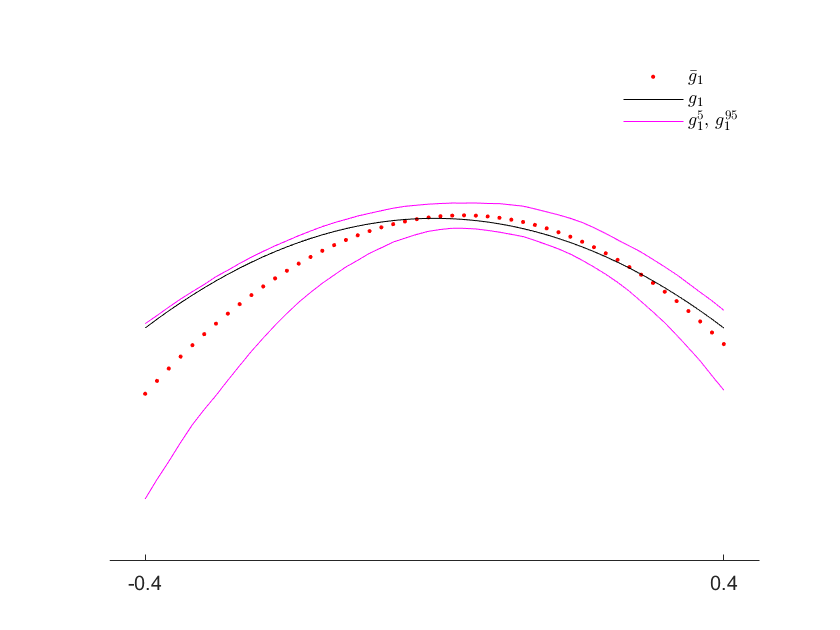}
	\caption[First]{ Nonparametric estimation of $g_1$  \\
		\small \sl Plot of the pointwise average $\bar{g}_1$, the 90 percent MC confidence bands $g_1^{5}$ and $g_1^{95}$ and  the true value $g_1$, for $c=c_1$ and $n=5000$. On the left hand side, $f_t = s_t$ and the right hand side $f_t = q_t$.}
	\label{fig:plotg}
\end{figure}

\end{document}